\documentclass[11pt]{article}
\usepackage{fullpage}

\usepackage[T1]{fontenc}
\usepackage[english]{babel}

\usepackage{amsmath,amssymb,amsthm,mathtools,esint}
\usepackage{tikz,graphicx}
\usetikzlibrary{arrows.meta}
\usetikzlibrary{decorations.markings}
\usetikzlibrary{calc}
\usepackage{graphicx,subcaption}
\usepackage{enumerate}
\usepackage{dsfont}
\usepackage{verbatim}
\usepackage{hyperref}

\theoremstyle{plain}
\newtheorem*{theorem*}{Theorem}
\newtheorem{theorem}{Theorem}[section] 
\newtheorem{lemma}[theorem]{Lemma}
\newtheorem{proposition}[theorem]{Proposition}
\newtheorem{corollary}[theorem]{Corollary}

\newtheorem*{assumption*}{Assumption}
\newtheorem{example}[theorem]{Example}
\theoremstyle{definition}
\newtheorem{definition}[theorem]{Definition}
\newtheorem{remark}[theorem]{Remark}

\numberwithin{equation}{section}

\mathtoolsset{showonlyrefs}

\newcommand{\eps}{\varepsilon}

\newcommand{\ZZ}{\mathbb{Z}}
\newcommand{\RR}{\mathbb{R}}
\newcommand{\CC}{\mathbb{C}}

\newcommand{\Ordo}{\mathcal{O}}

\newcommand{\EE}{\mathbb{E}}
\newcommand{\PP}{\mathbb{P}}

\DeclareMathOperator{\re}{Re}

\DeclareMathOperator{\sgn}{sgn}
\renewcommand{\d}{\,\mathrm{d}}
\renewcommand{\i}{\mathrm{i}}
\newcommand{\e}{\mathrm{e}}
\newcommand{\map}{\Psi}

  \tikzset{
  compass/.pic = {
    \foreach[count=\i,evaluate={\m=div(\i-1,4);\a=90*\i-45*(\m+1)}] \d in {NE,NW,SW,SE,E,N,W,S}{
      \filldraw[pic actions,rotate=\a,scale=.7+.3*\m] (0,0) -- (45:1)--(0:3) node[scale=3, transform shape,rotate=-90,above]{\d};
      \filldraw[pic actions,fill=white,rotate=\a,scale=.7+.3*\m] (0,0) -- (-45:1)--(0:3)--cycle;
    };
  }
}
  \tikzset{
  compassOp/.pic = {
    \foreach[count=\i,evaluate={\m=div(\i-1,4);\a=90*\i-45*(\m+1)}] \d in {NW,NE,SE,SW,W,N,E,S}{
      \filldraw[pic actions,rotate=\a,scale=.7+.3*\m] (0,0) -- (45:1)--(0:3) node[scale=3, transform shape,rotate=-90,above]{\d};
      \filldraw[pic actions,fill=white,rotate=\a,scale=.7+.3*\m] (0,0) -- (-45:1)--(0:3)--cycle;
    };
  }
}

\DeclareMathOperator{\Log}{Log}

\DeclareMathOperator{\adj}{adj}
\DeclareMathOperator{\one}{\mathds{1}}

\DeclareMathOperator{\convHull}{ConvexHull}

\graphicspath{ {Images/} }

\title{Crystallization of the Aztec diamond}

\author{Tomas Berggren\footnote{Department of Mathematics, Royal Institute of Technology (KTH), Lindstedsvägen 25., SE-100 44 Stockholm, Sweden. E-mail: tobergg@kth.se}
	\and Alexei Borodin\footnote{Department of Mathematics, Massachusetts Institute of Technology, 77 Massachusetts Ave., Cambridge, MA 02139, USA. E-mail: borodin@math.mit.edu}}

\date{}

\begin{document}

\maketitle

\begin{abstract} 
We consider dimer models on growing Aztec diamonds, which are certain domains in the square lattice, 
with edge weights of the form~$\nu(\,\cdot\,)^\beta$, where~$\nu(\,\cdot\,)$ is a doubly periodic function on the edges of the lattice and~$\beta$ is an inverse temperature parameter. 

We prove that in the zero-temperature ($\beta\to\infty$) limit, and for generic values of~$\nu(\,\cdot\,)$, these dimer models undergo crystallization: The limit shape converges to a piecewise linear function called the tropical limit shape, and the local fluctuations are governed by the Gibbs measures with the slope dictated by the tropical limit shape for high enough values of~$\beta$. 

We also show that the tropical limit shape and the tropical arctic curve (consisting of ridges of the crystal) are described in terms of a tropical curve and a tropical action function on that curve, which are the tropical analogs of the spectral curve and the action function that describe the finite-temperature models. The tropical curve is explicit in terms of the edge weights, and the tropical action function is a solution of Kirchhoff's problem on the tropical curve. 
\end{abstract}

\tableofcontents

\section{Introduction}

\subsection{Preface} 
Planar dimer models, a subject of active research since the work of Kasteleyn~\cite{Kas61} and Temperley--Fisher~\cite{TF61} in the early 1960's, have recently seen significant advancements in understanding models with periodic edge weights, particularly for the Aztec diamond. The present paper introduces a temperature parameter into the models and investigates the asymptotic behavior of the Aztec diamond dimer covers in the zero-temperature limit. 

Our motivation came for the 1970's and 1980's physics literature, where it was suggested that planar dimer models or, more generally, Solid-On-Solid (SOS) models were relevant for describing the experimental phenomenon of \emph{roughening transition} in equilibrium crystals, see, \emph{e.g.}, Nienhuis--Hilhorst--Bl\"{o}te~\cite{NHB84}, Rottman--Wortis~\cite{RW84}, and references therein. Crystals are known to have a smooth boundary consisting of facets at low temperatures. As the temperature increases, the facets gradually turn into a curved boundary that is more rough. See, \emph{e.g.}, Balibar--Alles--Parshin~\cite{BAP05} for a detailed exposition in the case of helium. 

Our initial goal was to see how the roughening transition manifested itself in dimer models. One might \emph{a priori} expect that the temperature parameter should induce a transition from a frozen state (the delta measure on a single dimer cover with the highest weight) in the zero-temperature limit, to complete randomness (the uniform measure) in the infinite temperature limit. Previously, a detailed description of such a transition was provided in two~$1$-parameter families of lozenge tilings of a hexagon by Charlier--Duits--Kuijlaars--Lenells in~\cite{CDKL19} and by Charlier in~\cite{Cha20a}. Such settings may also offer interesting scaling limits. In particular, Mason~\cite{Mas22}, see also Chhita~\cite{Chh12} and Berestycki--Haunschmid-Sibitz~\cite{BHS22}, argued that the free fermion sine-Gordon field arises in a scaling limit that can be viewed as sending the temperature to infinity. Another  scaling limit that corresponds to the temperature being sent to zero allows one to access the frozen--smooth boundary; a transition of this kind was analyzed in a two-periodic Aztec diamond by Johansson--Mason~\cite{JM23}.

In our recent work~\cite{BB23}, we were able to analyze the dimer models on the Aztec diamond (a specific type of domains in the square lattice) for arbitrary (generic) doubly periodic edge weights. Our analysis substantially relied on an algebraic geometric formalism that was originally developed in the groundbreaking works of Kenyon--Okounkov--Sheffield~\cite{KOS06} and Kenyon--Okounkov~\cite{KO06}. In particular, they established a correspondence between dimer models on weighted periodic graphs and their \emph{spectral data} consisting of a \emph{spectral curve} as well as a point on its Jacobian. Notably, they showed that these spectral curves are always \emph{Harnack} -- a class of curves that had been previously identified and studied by Mikhalkin~\cite{Mik00}. In~\cite{BB23}, for the Aztec diamond we were able to describe the phase separating \emph{arctic} curve, the limit shape, and the local fluctuations of the dimer model in terms of the spectral curve and a new element of the geometric data termed the \emph{action function}. Very recently, some of our results were extended, by different methods, to quasi-periodic edge weights by Boutillier--de Tili\`ere~\cite{BdT24} and Bobenko--Bobenko--Suris~\cite{BBS24, BB24}, and the work by Bobenko--Bobenko~\cite{BB24} also included the case of the lozenge tilings of a hexagon.

The main goal of the present paper is to describe the zero-temperature limit of the Aztec diamond dimer model with arbitrary generic doubly periodic edge weights. One can view this limit as \emph{crystallization} -- the limit shape becomes piecewise linear and its curved part disappears, cf. Figure~\ref{fig:varying_temp}. The arctic curve and the limit shape are again described via geometric data, but now the algebraic curve is replaced by a \emph{tropical curve} and the action function by its tropical analog, which is a harmonic function on the tropical curve with certain boundary conditions. That is, tropical geometry assumes the role of algebraic geometry. To us, the similarities between the use of the algebraic geometry in the analysis of the finite-temperature models and the use of the tropical geometry in the analysis of their zero-temperature limit are striking.

\begin{figure}[t]
    \begin{center}
     \includegraphics[angle=90, scale=.23]{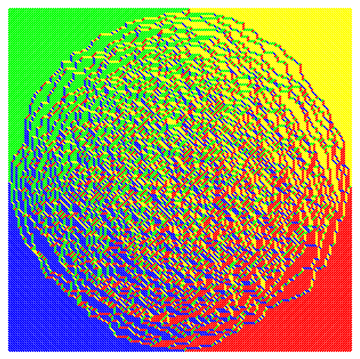}
     \quad
     \includegraphics[angle=90, scale=.23]{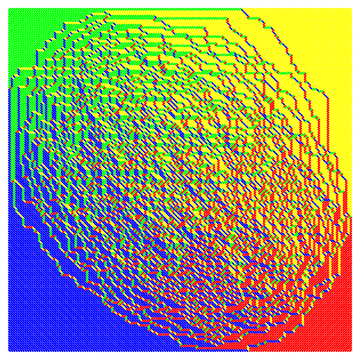}
     \quad
     \includegraphics[angle=90, scale=.23]{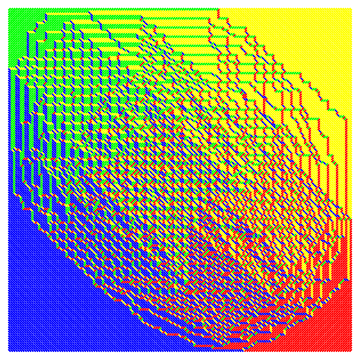}
     \quad
     \includegraphics[angle=90, scale=.23]{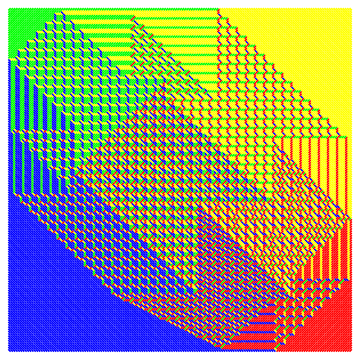}
    \end{center}
   \caption{Random samplings of the Aztec diamond of size~$N=150$ and with edge weights of periodicity~$k=4$ and~$\ell=4$. The temperature parameter~$\beta$ is, from left to right:~$\beta=0$,~$\beta=1$,~$\beta=2$,~$\beta=40$.
   \label{fig:varying_temp}}
   \end{figure}

We also consider the zero-temperature limit of the ergodic translation-invariant Gibbs measures that describe local fluctuations.  
For generic edge weights, local configurations will freeze in this limit. However, it may happen that the limit shape still tends to a piecewise linear limit, yet the local Gibbs measures have a nontrivial limiting behavior. We provide an explicit formula for the correlations of the Gibbs measures in the zero-temperature limit. 

Let us now describe our results in more detail. 

\subsection{The dimer model and its zero-temperature limit}
In this paper, we investigate the zero-temperature limit of the periodically weighted Aztec diamond dimer model. The \emph{Aztec diamond graph}~$G_\text{Az}$ is a subgraph of the square lattice as illustrated in Figure~\ref{fig:height_function}. A precise definition is provided in Section~\ref{sec:measures}. A \emph{dimer cover}, also known as \emph{perfect matching} of a graph is a subset of its edges, with elements called \emph{dimers}, such that each vertex is covered by exactly one dimer. A dimer model is a probability measure on the set of all possible dimer covers of the graph. See Kenyon~\cite{Ken09} and Gorin~\cite{Gor21} for surveys of the planar dimer models.  

To define our probability measures on the dimer covers of Aztec diamonds, we let~$\nu$ be a doubly periodic edge weight function, that is, a positive periodic function of the edges of~$G_\text{Az}$. Let us denote its vertical and horizontal periods by~$k$ and~$\ell$, respectively. Further, we introduce an inverse temperature parameter~$\beta>0$. The probability measures of interest to us have the form
\begin{equation}\label{eq:intro:model_beta}
\PP_{\text{Az},\beta}(\mathcal D_\text{Az})=\frac{1}{Z_\beta}\prod_{e\in \mathcal D_\text{Az}}\nu(e)^\beta=\frac{1}{Z_\beta}\,\e^{\beta \sum_{e\in \mathcal D_\text{Az}}\log \nu(e)},
\end{equation}
where the product is over all edges~$e$ of a dimer cover~$\mathcal D_\text{Az}$ of the Aztec diamond and~$Z_\beta$ is the \emph{partition function}. Figure~\ref{fig:varying_temp} shows samples from such probability measures with the same weight function but different values of~$\beta$. For~$\beta=1$, and, hence, for all finite~$\beta>0$, this model was asymptotically analyzed in~\cite{BB23} in the limit when the size of the Aztec diamond tends to infinity. The limiting object was described in terms of an associated \emph{spectral curve} together with an \emph{action function}. In this paper, we focus on the zero-temperature limit,~$\beta \to \infty$. Analogously to the finite~$\beta$ case, we will describe the limiting object using an associated \emph{tropical curve} and a \emph{tropical action function}.

 \begin{figure}[t]
 \begin{center}
  \begin{subfigure}[c]{0.22\textwidth}
\includegraphics[scale=.05]{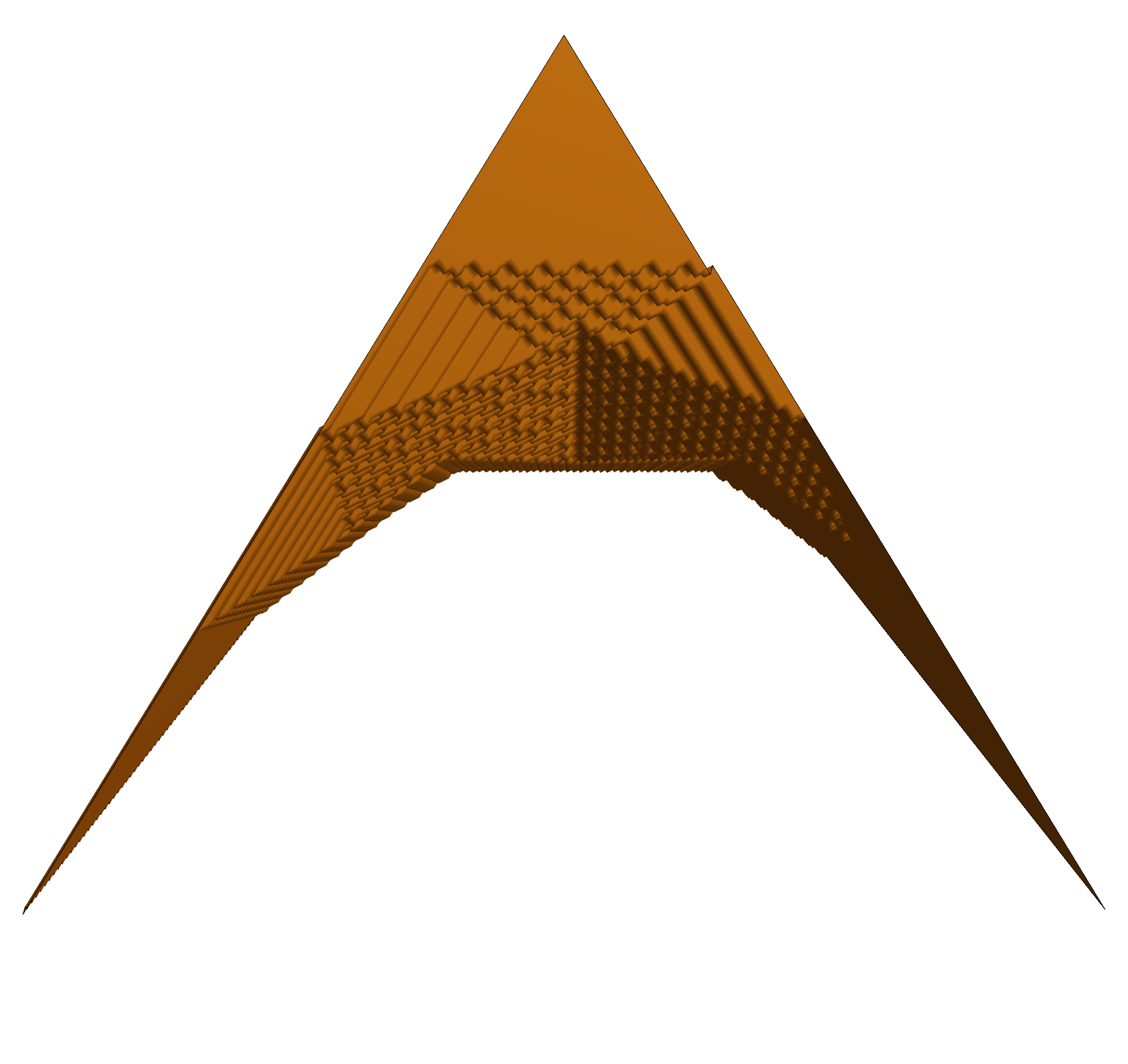}
\end{subfigure}
\quad
  \begin{subfigure}[c]{0.22\textwidth}
\includegraphics[scale=.05]{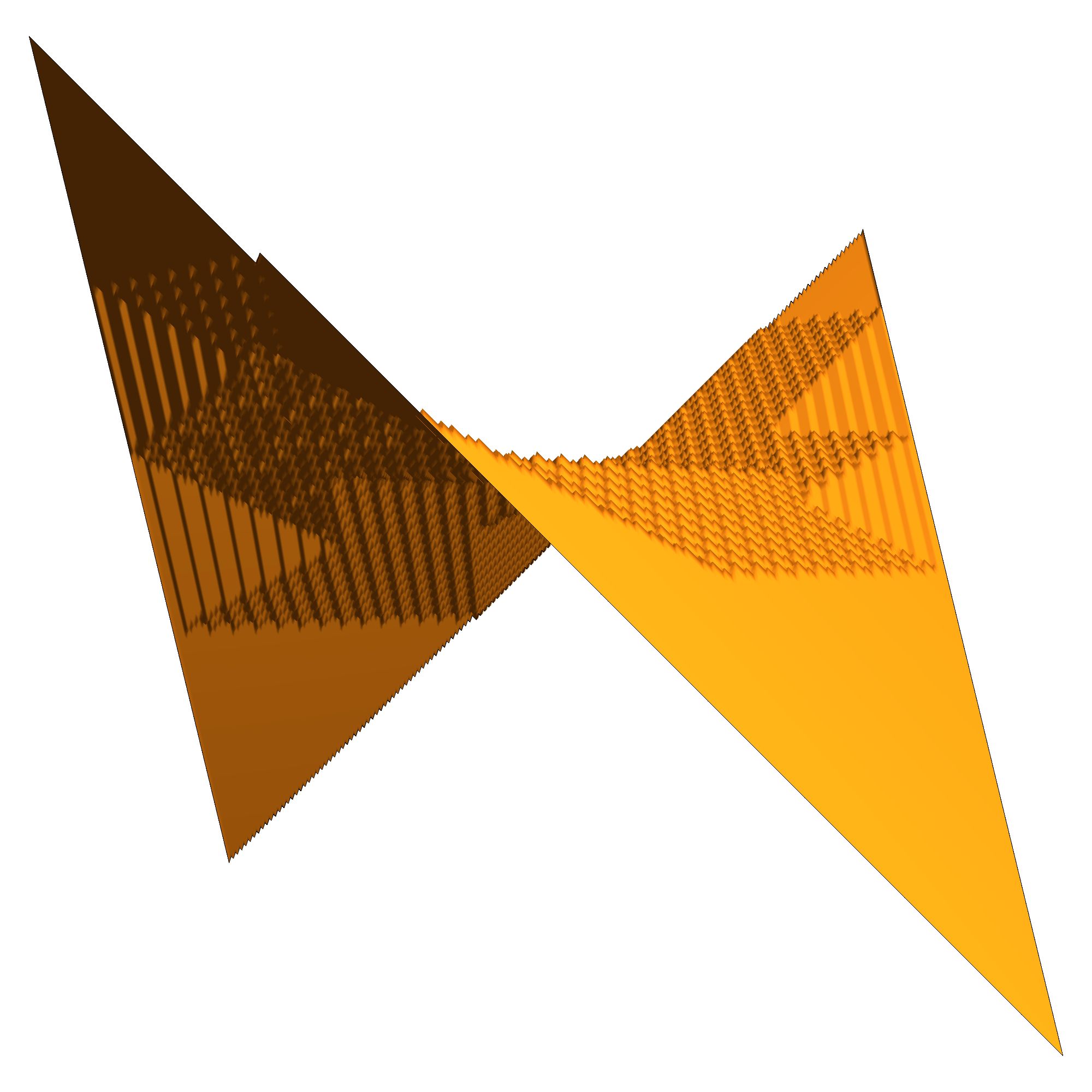}
\end{subfigure}
\quad
  \begin{subfigure}[c]{0.22\textwidth}
\includegraphics[scale=.05]{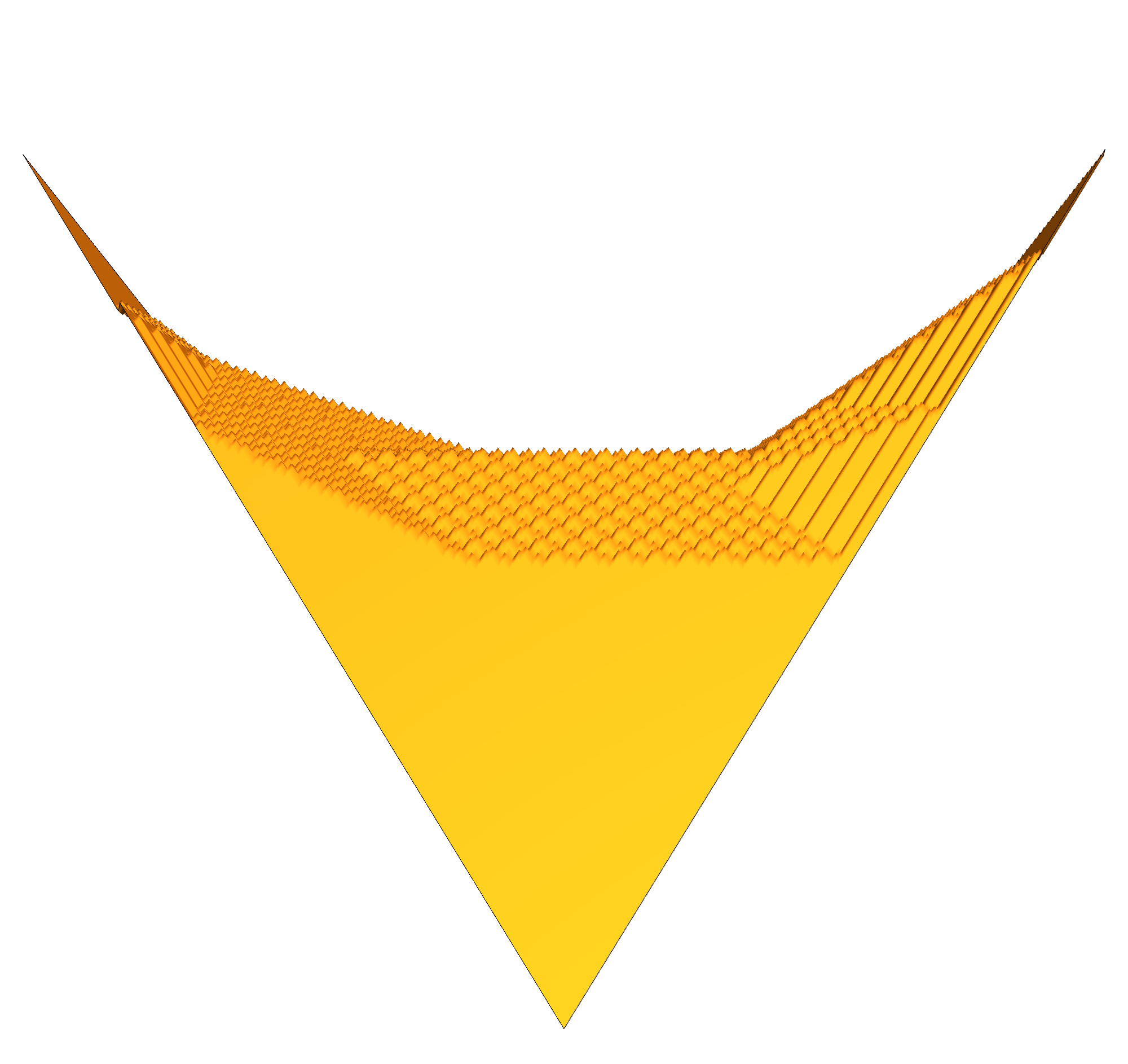}
\end{subfigure}
\quad
  \begin{subfigure}[c]{0.22\textwidth}
\includegraphics[scale=.05]{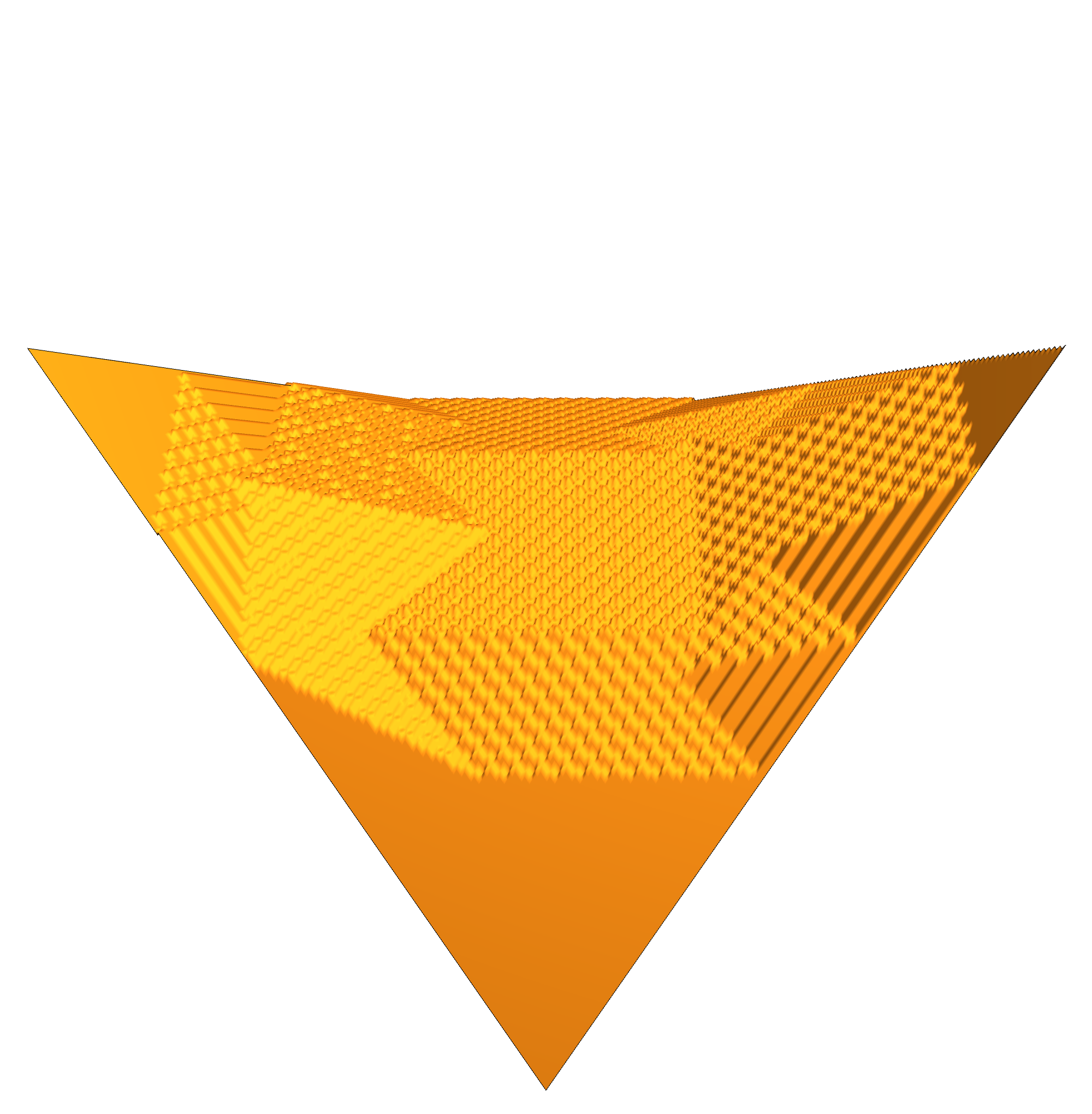}
\end{subfigure}
 \end{center}
\caption{The height function associated with a dimer cover of the Aztec diamond from four different perspectives. The weight function is the same as in Figure~\ref{fig:varying_temp} and~$\beta=40$.
\label{fig:height_function_orange}}
\end{figure}

More concretely, for finite values of~$\beta$, the \emph{limit shape}~$\bar h_\beta$, which is the large-size (deterministic) limit of the random height function associated with dimer covers distributed according to~\eqref{eq:intro:model_beta}, was explicitly described in~\cite{BB23} as follows. It was given in terms of the action function~$F_\beta$, or rather its differential~$\d F_\beta$ on the spectral curve. For~$(u,v)\in D_\text{Az}$, where~$D_\text{Az}$ is the limiting rectangular region of appropriately scaled Aztec diamonds in~$\mathbb{R}^2$, we proved that 
\begin{equation}\label{eq:intro:limit_shape}
\bar h_\beta(u,v)=\frac{1}{k\ell}\frac{1}{2\pi \i}\int_{\gamma_{u,v}}\d F_\beta+1,
\end{equation}  
where~$\gamma_{u,v}$ is a symmetric (with respect to conjugation) simple contour in the spectral curve, depending on the point~$(u,v)\in D_\text{Az}$. See Section~\ref{sec:finite_beta_results} and Figure~\ref{fig:curves_height_function} below for more details.

Following the convention of~\cite{BB23}, in our situation the \emph{Newton polygon}~$N(P)\subset \mathbb{R}^2$ is the rectangle with vertices~$(0,0)$, $(-\ell,0)$,~$(-\ell,k)$ and~$(0,k)$. Set~$\mathcal N=N(P)\cap \ZZ^2$. Then,~$\nabla \bar h_\beta(u,v)+(0,k)\in N(P)$ for all~$(u,v)\in D_\text{Az}$, and~$\bar h_\beta(u,v)+(0,k)\in \mathcal N$ if and only if~$(u,v)$ is in a smooth or frozen region; these types of regions correspond to the facets of the limit shape. For~$\mu\in \mathcal N$, we denote the region consisting of the points~$(u,v)\in D_\text{Az}$ with~$\nabla \bar h_t(u,v)+(0,k)=\mu$ by~$R_{\beta,\mu}$.

We are interested in the crystallization of the Aztec diamond that arises as the temperature tends to zero (or~$\beta\to\infty$), see the right-most image in Figure~\ref{fig:varying_temp} and Figure~\ref{fig:height_function_orange}. As~$\beta\to \infty$, the spectral curve, or rather its \emph{amoeba}~$\mathcal A_\beta$, tends to a \emph{tropical curve} that we denote by~$\mathcal A_t$. The tropical curve can be viewed as a graph embedded into~$\RR^2$ so that its edges are represented by line segments of rational slopes, and we discuss its structure in Section~\ref{sec:intro:tropical_arctic} below. See Figure~\ref{fig:amoeba_tropical_tentacles} for an example of the amoeba~$\mathcal A_\beta$ and the limiting tropical curve~$\mathcal A_t$. The differential~$\d F_\beta$ is an \emph{imaginary normalized differential}, cf. Krichever~\cite{Kri14}, and the tropical limit of such differentials were obtained by Lang~\cite{Lan20}, see also Grushevsky-Krichever-Norton~\cite{GKN19}. We denote its~$\beta\to\infty$ limit as~$\d F_t$ and call the underlying function~$F_t$ on the tropical curve the \emph{tropical action function}. This function also depends on~$(u,v)$ as parameters, and for each~$(u,v)\in D_\text{Az}$ its differential is a~$1$-form on~$\mathcal A_t$ that looks as follows:
\begin{equation}\label{eq:intro:action_function_tropical}
\d F_t(\,\cdot\,;u,v)=k(1+\ell u)\d y(\,\cdot\,)-\ell(1+kv)\d x(\,\cdot\,)-\d f_t(\,\cdot\,),
\end{equation}
where~$\d f_t$ is a~$1$-form on~$\mathcal A_t$ that does not depend on~$(u,v)$. This~$1$-form can be characterized as the unique exact~$1$-form on~$\mathcal A_t$ with (fairly simple) boundary conditions determined from the corresponding periods of the pre-limit differentials~$\d F_\beta$. See Section~\ref{sec:tropical_action} below for details.

 \begin{figure}[t]
 \begin{center}
     \begin{tikzpicture}[scale=1]
    \draw (0,0) node {\includegraphics[angle=180, scale=.4]{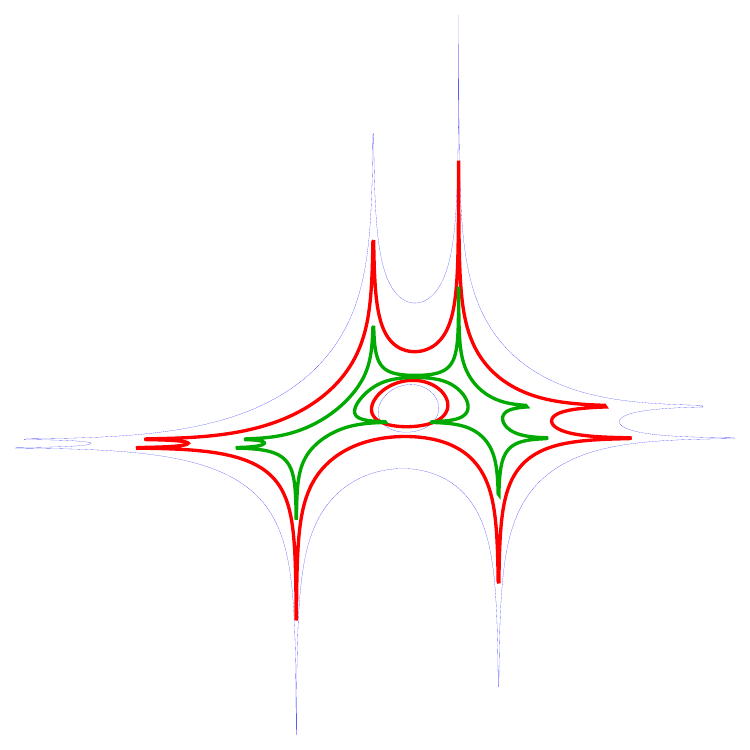}};
    \draw (-.138,.26) node {\includegraphics[scale=.285]{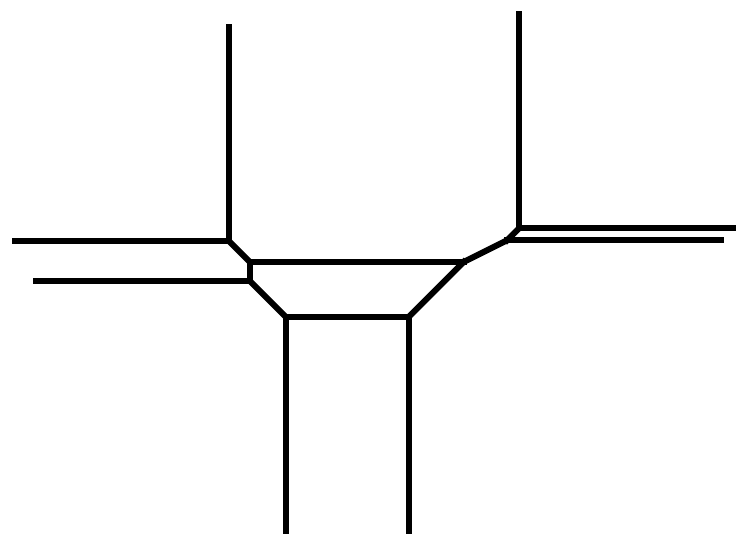}};
  \end{tikzpicture}
 \end{center}
\caption{The amoebas~$\mathcal A_\beta$, with~$\beta=1$,~$\beta=1.5$, and~$\beta=3$, with the corresponding tropical curve~$\mathcal A_t$ on top. The asymptotes of the tentacles in the amoeba are the unbounded parts of the tropical curve. 
\label{fig:amoeba_tropical_tentacles}}
\end{figure}

Given~$\mu\in \mathcal N$, we define~$R_\mu\subset D_\text{Az}$ as the interior of the set of points~$(u,v)$ for which there exists~$\beta_0=\beta_0(u,v)$ such that~$(u,v)\in R_{\beta,\mu}$ for all~$\beta>\beta_0$. In Definition~\ref{def:regions} below, the sets~$R_\mu$ are instead defined through zeros of~$\d F_t$, and the equivalence of these definitions follows from Theorem~\ref{thm:tropical_limit_arctic_curve}. For~$(u,v)\in R_\mu$, we define, cf.~\eqref{eq:intro:limit_shape},
\begin{equation}\label{eq:intro:tropical_limit_shape}
\bar h_t(u,v)=\frac{1}{k\ell}\sum_{e\in \Gamma_\mu}\d F_t(\eta(e);u,v)+1,
\end{equation}
where~$\Gamma_\mu$ is a subset of the edges of~$\mathcal A_t$ described in Definition~\ref{def:discrete_curve}, see also Figure~\ref{fig:discrete_curve}, and~$\eta(e)$ is the primitive (with coprime coordinates) integer vector parallel to~$e$.

Before stating our first result, proved in Corollary~\ref{cor:tropical_limit_shape} below, we recall that the tropical curve~$\mathcal A_t$ is said to be \emph{smooth} if~$\mathcal A_t$ has~$2k\ell$ vertices, all of degree three (see Definition~\ref{def:t-curve}). This property is discussed in detail in Section~\ref{sec:subdivision}, where we prove that in the~$k\ell$-dimensional space of all possible edge weights~$\log \nu (\,\cdot\,)$,~$\mathcal A_t$ is a smooth tropical curve outside of a (subset of a) finite number of hyperplanes. 
\begin{theorem}\label{thm:intro:tropical_limit_shape}
If~$\mathcal A_t$ is a smooth tropical curve, and~$(u,v)\in R_\mu$ for some~$\mu\in \mathcal N$, then
\begin{equation}
\lim_{\beta\to\infty} \bar h_\beta(u,v)=\bar h_t(u,v).
\end{equation}
\end{theorem}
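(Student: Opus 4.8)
The plan is to pass to the limit $\beta\to\infty$ directly in the contour-integral representation~\eqref{eq:intro:limit_shape} of $\bar h_\beta$ and to match it with the edge-sum~\eqref{eq:intro:tropical_limit_shape} defining $\bar h_t$. As both sides carry the prefactor $\tfrac{1}{k\ell}$ and the additive constant $1$, the whole statement reduces to the single identity
\[
\lim_{\beta\to\infty}\frac{1}{2\pi\i}\int_{\gamma_{u,v}}\d F_\beta=\sum_{e\in\Gamma_\mu}\d F_t(\eta(e);u,v).
\]
I would establish this by separately controlling the two objects being integrated and integrated over: the imaginary normalized differential $\d F_\beta$ and the symmetric contour $\gamma_{u,v}$ on the spectral curve.

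First I would pin down the contour. Fix $(u,v)\in R_\mu$. By the definition of $R_\mu$, combined with Theorem~\ref{thm:tropical_limit_arctic_curve} (which identifies the region via the zeros of $\d F_t$ and shows it agrees with the stabilized $R_{\beta,\mu}$), the point lies in the interior of a single facet of $\bar h_\beta$ for all $\beta>\beta_0(u,v)$, so the homology class of $\gamma_{u,v}$ on the spectral curve is constant on this range of $\beta$. Projecting $\gamma_{u,v}$ through the amoeba map and using $\mathcal A_\beta\to\mathcal A_t$ (cf. Figure~\ref{fig:amoeba_tropical_tentacles}), this loop collapses onto the one-dimensional tropical curve and, by the smoothness hypothesis forcing all vertices to be trivalent, it runs transversally to the vertices along precisely the edge set $\Gamma_\mu$ prescribed in Definition~\ref{def:discrete_curve}.

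Next I would invoke the tropical degeneration of the differential. Since $\d F_\beta$ is an imaginary normalized differential (cf.~\cite{Kri14}), its $\beta\to\infty$ limit is governed by the results of Lang~\cite{Lan20} and Grushevsky--Krichever--Norton~\cite{GKN19}: along the degenerating family the form concentrates on thin neighborhoods of the edges of $\mathcal A_t$, and the contribution of one passage of the contour along an edge $e$ converges to the pairing $\d F_t(\eta(e);u,v)$ of the limiting tropical $1$-form~\eqref{eq:intro:action_function_tropical} with the primitive direction $\eta(e)$. Here the $(u,v)$-dependence is carried entirely by the explicit linear terms $k(1+\ell u)\,\d y-\ell(1+kv)\,\d x$, whose edge contributions are elementary, while the $(u,v)$-independent piece $\d f_t$ is the stabilized limit of the corresponding periods of $\d F_\beta$. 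Summing the edge contributions along $\Gamma_\mu$ and discarding the arcs away from the tropical curve, which contribute $o(1)$, yields the displayed identity and hence the theorem.

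The hard part will be the uniformity of this second step as the spectral curve itself degenerates: the ovals of the Harnack curve collapse to the lattice points dual to the facets and the simple poles of $\d F_\beta$ migrate out along the tentacles of the amoeba, so one must simultaneously control $\int_{\gamma_{u,v}}\d F_\beta$ across the pinching necks and near the escaping poles and show that the difference between the genuine contour integral and the combinatorial edge-sum tends to zero. Matching the homology class of $\gamma_{u,v}$ with the combinatorial cycle $\Gamma_\mu$ for exactly the points $(u,v)\in R_\mu$ is the other delicate point, and it is here that the equivalence of the two descriptions of $R_\mu$ furnished by Theorem~\ref{thm:tropical_limit_arctic_curve} is essential.
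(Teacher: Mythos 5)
Your skeleton matches the paper's proof of Corollary~\ref{cor:tropical_limit_shape}: reduce to the identity $\lim_{\beta\to\infty}\frac{1}{2\pi\i}\int_{\gamma_{u,v}}\d F_\beta=\sum_{e\in\Gamma_\mu}\d F_t(\eta(e);u,v)$, use Theorem~\ref{thm:tropical_limit_arctic_curve} to place $(u,v)$ in $R_{\beta,\mu}$ for all $\beta>\beta_0(u,v)$, and invoke Lang's degeneration results. But the central analytic step is wrong as you state it, and the idea you are missing is exactly what makes the paper's proof short. The finite quantity $\d F_t(\eta(e);u,v)$ arises as the limit of the \emph{period of $\d F_\beta$ around the neck cycle} $\gamma_{\beta,e}$ encircling the pinching tube over the edge $e$; this is \eqref{eq:limit_b-cycles}, i.e.\ \cite[Theorem 3]{Lan20}. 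It is \emph{not} the ``contribution of one passage of the contour along an edge'': an arc of $\gamma_{u,v}$ whose projection travels along or through a tube contributes an integral whose real part diverges linearly in $\beta$, of order $\beta\, l(e)\,\d F_t(\eta(e);u,v)$ --- this is precisely why the normalization $\beta^{-1}$ appears in \eqref{eq:limit_harmonic_amoeba} (\cite[Theorem 1]{Lan20}). Those divergent real parts cancel only between the two conjugation-symmetric halves of the closed loop, so a naive edge-by-edge localization of $\int_{\gamma_{u,v}}\d F_\beta$ does not converge termwise. Your geometric picture is also off: by Definition~\ref{def:discrete_curve}, $\Gamma_\mu$ consists of the edges \emph{crossed} by a dual path from $\mu$ to $\mu_0$, i.e.\ the necks that $\gamma_{u,v}$ traverses transversally, not edges along which the contour runs.

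The paper's mechanism, absent from your outline, is homological: since $\d F_\beta$ is meromorphic with poles only at the angles, the loop $\gamma_{u,v}$ can be deformed, without crossing any pole, into the union $\bigcup_{e\in\Gamma_\mu}\gamma_{\beta,e}$ of neck cycles (see Figure~\ref{fig:discrete_curve}). The integral of a meromorphic form over homologous cycles in the complement of its poles is equal, so this step is \emph{exact} at every finite $\beta$, giving
\begin{equation}
\bar h_\beta(u,v)=\frac{1}{k\ell}\sum_{e\in \Gamma_\mu}\frac{1}{2\pi \i}\int_{\gamma_{\beta,e}}\d F_\beta+1,
\end{equation}
after which \eqref{eq:limit_b-cycles} applies verbatim to each term. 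This is why the ``hard part'' you defer --- uniform control across pinching necks, near the escaping poles, and the comparison between the contour integral and the combinatorial edge sum --- never arises: there is no error term to estimate. As written, your proposal both asserts a localization statement that is false (single neck passages do not converge to the tropical pairing) and leaves the actual analytic work open, so it does not constitute a proof.
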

If~$\mathcal A_t$ is a smooth tropical curve, then~$\cup_{\mu\in \mathcal N} \overline{R_\mu}=D_\text{Az}$, which means that~\eqref{eq:intro:tropical_limit_shape} defines a continuous piecewise-linear function on all of~$D_\text{Az}$. Furthermore, the \emph{tropical arctic curve}, the union of polygonal boundaries of~$R_\mu$ over all~$\mu\in \mathcal N$, can be described in terms of the tropical action function~$F_t$, as we are about to describe. 

Note that the zero-temperature limit cannot produce a piecewise linear shape for all possible values of edge weights as, for example, in the uniform case of all weights equal to 1 the model is simply independent of~$\beta$. It would be very interesting to understand what can happen to the limit shape in the~$\beta\to\infty$ limit if~$\mathcal A_t$ is not a smooth tropical curve, and we leave this question for future research. 

In~\cite{BB23}, the rough region and the arctic curve came with a homeomorphism between the closure of the rough region and the amoeba. In the zero-temperature limit, the rough region disappears, as does the interior of the amoeba, so such a map does not make literal sense. Instead, the map~$\map_t$ defined below plays a similar role in the description of the tropical arctic curve. 

The tropical action function~$F_t$ is a continuous piecewise linear function on~$\mathcal A_t$ such that, for each vertex~$(x,y)\in \mathcal A_t\subset \RR^2$, there exists a plane~$\Pi=\Pi((x,y);u,v)$ in~$\mathbb{R}^3$ that contains the graph of~$F_t$ in a neighborhood of~$(x,y)$. Let us denote the vertices of~$\mathcal A_t$ by~$V(\mathcal A_t)$. For each vertex~$\mathrm v\in V(\mathcal A_t)$, there is a unique~$(u,v)\in \overline{D_\text{Az}}$ such that the plane~$\Pi(\mathrm v;u,v)$ is parallel to the~$xy$-plane. This defines a map~$\mathrm v\mapsto (u,v)$ from~$V(\mathcal A_t)$ to~$\overline{D_\text{Az}}$ (the map evaluates the gradient of the action function, somewhat similarly to the Legendre transform, yet our function is nonconvex), and its image consists of the vertices of the tropical arctic curve. 

This map is naturally expressed through the function~$f_t$ instead of~$F_t$, cf.~\eqref{eq:intro:action_function_tropical}. 
Indeed, for each vertex~$\mathrm v\in V(\mathcal A_t)$, let the real numbers~$\d_x f_t(\mathrm v)$ and~$\d_y f_t(\mathrm v)$ be such that
\begin{equation}
\d f_t(\eta(e))=(\d_x f_t(\mathrm v),\d_y f_t(\mathrm v))\cdot \eta(e),
\end{equation}
for each edge~$e$ adjacent to~$\mathrm v$, where~$\eta(e)$ is a primitive vector parallel to~$e$, and the dot in the right hand side stands for the dot product. The existence of these (uniquely defined) `partial derivatives' is a consequence of the balance condition~\eqref{eq:balancing_differential} satisfied by~$\d f_t$ and the fact that~$\mathcal A_t$ is a smooth tropical curve. The following theorem completely describes the tropical arctic curve; it is a combination of Proposition~\ref{prop:vertex_map_well_defined} and Theorem~\ref{thm:arctic_curve_tropical} below.
\begin{theorem}\label{thm:intro:tropical_arctic_curve}
Assume that~$\mathcal A_t$ is a smooth tropical curve. The image of the map~$\map_t:V(\mathcal A_t)\to \overline{D_\text{Az}}$ defined by
\begin{equation}
\map_t(\mathrm v)=\frac{1}{k\ell}(\d_y f_t(\mathrm v),-\d_x f_t(\mathrm v))-\frac{1}{k\ell}(k,\ell)
\end{equation}
is the set of vertices of the tropical arctic curve. Furthermore, the tropical arctic curve itself is the union over all pairs of adjacent vertices~$\mathrm v, \mathrm v'\in V(\mathcal A_t)$ of the line segments between~$\map_t(\mathrm v)$ and~$\map_t(\mathrm v')$.
\end{theorem}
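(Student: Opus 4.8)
The plan is to treat the two parts of the statement separately: first that $\map_t$ is well-defined with image in $\overline{D_\text{Az}}$ (Proposition~\ref{prop:vertex_map_well_defined}), and then that this image is exactly the vertex set of the tropical arctic curve and that adjacent vertices of $\mathcal A_t$ are joined by straight arctic edges (Theorem~\ref{thm:arctic_curve_tropical}).

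For well-definedness I would first use that smoothness forces every vertex $\mathrm v$ to be trivalent, with outgoing primitive edge directions $\eta_1,\eta_2,\eta_3$ that are balanced ($\eta_1+\eta_2+\eta_3=0$) and such that any two of them form a $\ZZ$-basis of $\ZZ^2$. Solving the two equations $\d f_t(\eta_i)=(\d_xf_t(\mathrm v),\d_yf_t(\mathrm v))\cdot\eta_i$ for $i=1,2$ determines $(\d_xf_t(\mathrm v),\d_yf_t(\mathrm v))$ uniquely, and the balance condition~\eqref{eq:balancing_differential} on $\d f_t$ (so that $\d f_t(\eta_1)+\d f_t(\eta_2)+\d f_t(\eta_3)=0$), together with $\eta_1+\eta_2+\eta_3=0$, makes the $i=3$ equation automatic; hence the partial derivatives exist and are unique. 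Next I would identify the $(u,v)$ for which $\Pi(\mathrm v;u,v)$ is parallel to the $xy$-plane: rewriting $\d F_t(\eta(e);u,v)$ from~\eqref{eq:intro:action_function_tropical} as $(-\ell(1+kv)-\d_xf_t(\mathrm v),\,k(1+\ell u)-\d_yf_t(\mathrm v))\cdot\eta(e)$, the gradient of $F_t$ at $\mathrm v$ vanishes exactly when both components are zero, which solves to the displayed formula for $\map_t(\mathrm v)$; this shows the verbal and the displayed definitions coincide. Finally, to place $\map_t(\mathrm v)$ in $\overline{D_\text{Az}}$ I would use the prescribed boundary values of $\d f_t$ on the unbounded tentacles of $\mathcal A_t$ -- dictated by the edges of the Newton rectangle $N(P)$ -- together with the balance condition to bound $\d_xf_t(\mathrm v)$ and $\d_yf_t(\mathrm v)$ within precisely the range cutting out the rectangle $\overline{D_\text{Az}}$.

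For the identification with the arctic curve, the key observation is that $\d F_t(\eta(e);u,v)$ is affine-linear in $(u,v)$, so for each edge $e$ of $\mathcal A_t$ the locus $L_e=\{(u,v):\d F_t(\eta(e);u,v)=0\}$ is a straight line. Differentiating~\eqref{eq:intro:tropical_limit_shape} in $(u,v)$ shows that on each $R_\mu$ the limit-shape gradient is constant and equals $\mu-(0,k)$ (since $\Gamma_\mu$ is fixed there), so the $R_\mu$ are genuine facets; by Definition~\ref{def:regions} the transition from $R_\mu$ to the facet indexed by an adjacent lattice point $\mu'$ occurs exactly when the sign of $\d F_t$ on the dual edge $e$ flips, i.e.\ across $L_e$, so the common boundary of $R_\mu$ and $R_{\mu'}$ lies on $L_e$, where $e$ is dual to the subdivision edge $[\mu,\mu']$. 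Because $\map_t(\mathrm v)$ makes the full gradient of $F_t$ vanish at $\mathrm v$, it lies on $L_e$ for every edge $e$ incident to $\mathrm v$; hence the three lines $L_{e_1},L_{e_2},L_{e_3}$ at a trivalent vertex all pass through the single point $\map_t(\mathrm v)$, which is therefore the common corner of the three facets dual to the triangle at $\mathrm v$. Consequently the boundary between $R_\mu$ and $R_{\mu'}$ is exactly the segment of $L_e$ between the triple points $\map_t(\mathrm v)$ and $\map_t(\mathrm v')$ at the endpoints $\mathrm v,\mathrm v'$ of $e$, which yields both assertions at once.

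I expect the main obstacle to be the middle step of the second part: rigorously extracting from Definition~\ref{def:regions} that the combinatorial type of the zero set of $\d F_t(\,\cdot\,;u,v)$ is constant on each $R_\mu$ and changes across a single line $L_e$ precisely by passing to the dual-adjacent lattice point, and then assembling these local transitions into a globally consistent tiling of $D_\text{Az}$ by the facets $R_\mu$ with the claimed adjacency graph. This requires matching the duality between faces, edges, and vertices of $\mathcal A_t$ and lattice points, edges, and triangles of the subdivision of $N(P)$ against the sign combinatorics of $\d F_t$, and invoking $\cup_{\mu\in\mathcal N}\overline{R_\mu}=D_\text{Az}$ (valid for smooth $\mathcal A_t$) to exclude gaps and overlaps; the straight-segment structure and the vertex formula then follow cleanly from the linearity established above.
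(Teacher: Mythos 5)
Your overall architecture matches the paper's (a well-definedness statement corresponding to Proposition~\ref{prop:vertex_map_well_defined}, then a segment-by-segment identification corresponding to Theorem~\ref{thm:arctic_curve_tropical}), and the first half of your first part — trivalence, solving for $(\d_x f_t(\mathrm v),\d_y f_t(\mathrm v))$, and identifying $\map_t(\mathrm v)$ as the unique $(u,v)$ annihilating the gradient of $F_t$ at $\mathrm v$ — is exactly right. But both places where you defer the real work are missing the same ingredient, and it is the heart of the matter: the zero-counting statement, Lemma~\ref{lem:critical_points}, which says in particular that for $(u,v)\in\RR^2\backslash\overline{D_\text{Az}}$ the form $\d F_t$ can have \emph{no double e-zeros} (the parity of the number of local extrema of $F_t$ along each component $\mathcal A_{t,\mu}$, read off from the residues~\eqref{eq:action_function_residues}, forces $\sum_\mu Z_\mu=2k\ell$ outside the closed rectangle, saturating the count given by the $2k\ell$ vertices of the smooth tropical curve). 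Your proposed substitute for the inclusion $\map_t(V(\mathcal A_t))\subset\overline{D_\text{Az}}$ — bounding $\d_x f_t(\mathrm v)$ and $\d_y f_t(\mathrm v)$ directly from the boundary residues and the balance condition — is not a proof: the needed bound $(\d_x f_t(\mathrm v),\d_y f_t(\mathrm v))\in[-\ell,0]\times[0,k]$ at \emph{every} vertex is a pointwise gradient maximum principle for the Kirchhoff problem, which is equivalent to the statement being proved and does not follow from the vertex-wise balance relation alone; it is true here only because of the global topological count. The paper instead argues by contradiction: if $\mathrm v$ were a triple v-zero for some $(u,v)\notin\overline{D_\text{Az}}$, then perturbing $(u,v)$ along one of the lines $L_e$ through $\map_t(\mathrm v)$ (Lemma~\ref{lem:line_e-zeros}) produces a double e-zero at points arbitrarily close by and still outside $\overline{D_\text{Az}}$, contradicting Lemma~\ref{lem:critical_points}.

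The same fact is what closes the gap you yourself flag in the second part. The two points $\map_t(\mathrm v)$ and $\map_t(\mathrm v')$ cut the line $L_e$ into three pieces, and on consecutive pieces the e-zero at $e$ alternates between simple and double; by Definition~\ref{def:regions} together with the count~\eqref{eq:number_zero}, a point of $D_\text{Az}$ lies on the arctic curve precisely when $\d F_t$ has a double e-zero or a triple v-zero, so everything hinges on whether the pattern along $L_e$ is (simple, double, simple) or (double, simple, double). Your ``consequently'' silently assumes the first alternative; a priori the arctic portion of $L_e$ could be the two unbounded rays rather than the bounded middle segment. The second pattern is excluded exactly because $L_e$ is unbounded, hence leaves $\overline{D_\text{Az}}$, where double e-zeros are forbidden. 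There is also a smaller inaccuracy in your transition argument: the sign of $\d F_t(\eta(e);u,v)$ flips whenever $(u,v)$ crosses $L_e$, but this marks a boundary between $R_\mu$ and $R_{\mu'}$ only on the double e-zero part; crossing $L_e$ at a simple e-zero, the two endpoint vertices become v-zeros with respect to \emph{different} components, and no facet boundary is present there. Without Lemma~\ref{lem:critical_points}, or an equivalent global count of simple v-zeros and simple e-zeros against the number of vertices of $\mathcal A_t$ (which is also what guarantees that a point with only simple zeros lies in exactly one $R_\mu$, i.e.\ that your facets tile without overlap), neither half of your argument can be completed.
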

A dual expression for~$\map_t$ is discussed in Section~\ref{sec:intro:tropical_arctic} below.

We now turn our attention to local fluctuations. In~\cite{BB23}, it was proved that for fixed~$\beta$, the local fluctuations of~\eqref{eq:intro:model_beta} in a neighborhood of~$(u,v)\in D_\text{Az}$ converge to the (unique) ergodic translation-invariant Gibbs measure with slope~$\nabla \bar h_\beta(u,v)$. As one might expect, the slopes of these local Gibbs measures line up with those of the tropical limit shape for sufficiently large~$\beta$. This is the content of the following theorem, which is a reformulation of Corollary~\ref{cor:local_limit} in the text.
\begin{theorem}\label{cor:intro:local_limit}
Assume that~$\mathcal A_t$ is a smooth tropical curve and let~$(u,v)\in R_\mu\subset D_\text{Az}$ for some slope~$\mu \in \mathcal N$. Then there exists~$\beta_0=\beta_0(u,v)$ such that for any~$\beta>\beta_0$ there is a (macroscopic) neighborhood of~$(u,v)$ in which at the lattice scale~$\PP_{\text{Az},\beta}$ converges weakly, as the size of the Aztec diamond tends to infinity, to the ergodic translation-invariant Gibbs measure with slope~$\mu$.
\end{theorem}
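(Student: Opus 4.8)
The plan is to derive the statement directly from the finite-$\beta$ local limit theorem of \cite{BB23} once one knows that, for $\beta$ large, the point $(u,v)$ lies in the facet $R_{\beta,\mu}$ on which the limit-shape gradient is frozen. The substantive ingredient is the identification of the tropically-defined region $R_\mu$ with the $\beta\to\infty$ stabilization of the finite-$\beta$ facets; granting this, the rest is bookkeeping.

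First I would invoke the equivalence of the two descriptions of $R_\mu$—the one of Definition~\ref{def:regions} through the zeros of $\d F_t$ and the introductory one through eventual membership in $R_{\beta,\mu}$—which is exactly the content of Theorem~\ref{thm:tropical_limit_arctic_curve}. For $(u,v)\in R_\mu$ this produces a threshold $\beta_0=\beta_0(u,v)$ with $(u,v)\in R_{\beta,\mu}$, equivalently $\nabla\bar h_\beta(u,v)+(0,k)=\mu$, for every $\beta>\beta_0$. Fixing such a $\beta$, the facet $R_{\beta,\mu}$ is an open region of $D_\text{Az}$ on which $\nabla\bar h_\beta$ is the constant $\mu-(0,k)$; hence $(u,v)$ has a macroscopic neighborhood $U\subset R_{\beta,\mu}$ with $\nabla\bar h_\beta\equiv\mu-(0,k)$ throughout $U$.

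On $U$ the slope is constant, so the \cite{BB23} local limit theorem applies with a single limiting measure: at the lattice scale $\PP_{\text{Az},\beta}$ converges weakly to the unique ergodic translation-invariant Gibbs measure of slope $\nabla\bar h_\beta=\mu-(0,k)$, which under the shift by $(0,k)$ relating the height gradient to the Newton-polygon slope is precisely the Gibbs measure indexed by $\mu$. Since the neighborhood is allowed to depend on $\beta$, this proves the theorem. The one step needing care—and the main obstacle from my side, since Theorem~\ref{thm:tropical_limit_arctic_curve} is assumed—is confirming that $(u,v)$ lies in the \emph{interior} of $R_{\beta,\mu}$ (rather than on its boundary with the rough region or an adjacent facet) for all $\beta>\beta_0$, so that the slope is genuinely locally constant and the \cite{BB23} statement can be read off on the whole macroscopic neighborhood rather than at a single point; here the openness built into the definition of $R_\mu$ together with the structure of the finite-$\beta$ arctic curve from \cite{BB23} is what I would use.
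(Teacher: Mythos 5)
Your proposal is correct and follows essentially the same route as the paper: the paper's proof of Corollary~\ref{cor:local_limit} consists precisely of invoking Theorem~\ref{thm:tropical_limit_arctic_curve} to get $(u,v)\in R_{\beta,\mu}$ for all $\beta>\beta_0$, and then citing the finite-temperature local limit theorem (Theorem~\ref{thm:finite_local_limit}, i.e.\ Corollary 4.26 of \cite{BB23}) to conclude convergence to the Gibbs measure indexed by $\mathcal A_{\beta,\mu}$, whose slope is $\mu$. Your residual worry about $(u,v)$ sitting in the interior of $R_{\beta,\mu}$ is already handled by the definitions: $R_{\beta,\mu}$ excludes the arctic curve (higher-order zeros of $\d F_\beta$) by construction, and Theorem~\ref{thm:finite_local_limit} only requires the single point $(u,v)$ to avoid the arctic curve, so no separate openness argument is needed.
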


In other words, this theorem says that for any point of the Aztec diamond that ends up inside a facet of a certain slope in the zero-temperature limit, the local fluctuations of the pre-limit dimer models around that point will be described by the Gibbs measures of the same slope for any fixed positive temperature, as long as that temperature is sufficiently small. It is then natural to look into the zero-temperature limit of the Gibbs measures.

Given the exponential behavior of ratios of the edge weights as~$\beta\to \infty$, it seems plausible that the Gibbs measure would concentrate on a single dimer cover, thus eliminating local fluctuations altogether. While this is indeed true generically, there are cases when this does not happen, even though the Aztec limit shape may still tend to a piecewise linear limit, cf. Remark~\ref{rem:multiple_maximizers} and Figure~\ref{fig:multiple_maximizers} below. 

Let us introduce a few new objects. Let~$G_1$ be the \emph{fundamental domain} of our dimer model, which is the smallest non-repeating part of~$G_\text{Az}$ wrapped into a two-dimensional (discrete) torus. Each dimer cover of~$G_1$ has an associated \emph{slope}~$\mu=(\mu_1,\mu_2)\in \mathcal N$ formally defined in~\eqref{eq:energy_slope}; it corresponds to the slope of the height function of the (periodic) lift of this cover to~$\mathbb{Z}^2$. Further, define the \emph{tropical surface tension}~$\mathcal E^*:\mathcal N\to \RR_{>0}$ by
\begin{equation}\label{eq:intro:surface_tension}
\mathcal E^*(\mu)=\max_{\mathcal D}\left\{\sum_{e\in \mathcal D}\log \nu(e)\right\},
\end{equation}
where~$\nu(\,\cdot\,)$ came from~\eqref{eq:intro:model_beta}, and the maximum is taken over all dimer covers of~$G_1$ with the given slope~$\mu$. For~$\mu \in \mathcal N$, let~$G_{1,\mu}$ be the unweighted (equivalently, uniformly weighted) subgraph of~$G_1$ consisting of edges that participate in any of the dimer covers~$\mathcal D$ that attain the maximum in~\eqref{eq:intro:surface_tension}. We denote the lift of~$G_{1,\mu}$ to~$\mathbb{Z}^2$ by~$G_\mu$; this is a periodic, possibly disconnected subgraph of the square lattice. For example, if~\eqref{eq:intro:surface_tension} has a unique maximizer, then~$G_{\mu}$ consists of the (disjoint) edges of~$\mathbb{Z}^2$ that project to the edges of that maximizer when the lattice is wrapped around the torus.

Let~$K_{G_\mu}$ be a \emph{Kasteleyn matrix} of the graph~$G_\mu$ and~$K_{G_{1,\mu}}$ be the corresponding \emph{magnetically altered Kasteleyn} matrix of~$G_{1,\mu}$. We define a probability measure~$\PP_\mu$ on dimer covers~$\mathcal D$ of~$G_\mu$ in terms of its edge occupancy probabilities. For any~$p\ge 1$ and edges~$\tilde e_s=\tilde{\mathrm w}_s\tilde{\mathrm b}_s$,~$s=1,\dots,p$, in~$G_\mu$ that join white and black vertices~$\tilde{\mathrm w}_s$ and~$\tilde{\mathrm b}_s$, those probabilities are given by
\begin{equation}\label{eq:intro:gibbs_zero}
\PP_\mu\left[\tilde e_1,\dots,\tilde e_p\in \mathcal D\right]=\left(\prod_{s=1}^p \left(K_{G_\mu}\right)_{\tilde{\mathrm w}_s\tilde{\mathrm b}_s}\right)\det \left(\left(K_{G_\mu}^{-1}\right)_{\tilde{\mathrm b}_s\tilde{\mathrm w}_{s'}}\right)_{1\leq s,s'\leq p},
\end{equation}
where
\begin{equation}\label{eq:intro:inverse_kasteleyn_zero}
\left(K_{G_\mu}^{-1}\right)_{\tilde{\mathrm b}\tilde{\mathrm w}}=\frac{1}{(2\pi\i)^2}\int_{|z|=1}\int_{|w|=1}\left(K_{G_{1,\mu}}(z,w)^{-1}\right)_{\mathrm b\mathrm w}\frac{z^{n'-n}}{w^{m'-m}}\frac{\d w}{w}\frac{\d z}{z},
\end{equation}
$\mathrm b$ and~$\mathrm w$ are the projections of~$\tilde{\mathrm b}$ and~$\tilde{\mathrm w}$ in~$G_{1,\mu}$, and~$(n'-n,m'-m)$ is the displacement between~$\tilde{\mathrm b}$ and~$\tilde{\mathrm w}$ in~$G_\mu\subset \ZZ^2$. We show that such a probability measure~$\PP_\mu$ exists and is unique. 

If all connected components of~$G_\mu$ are finite\footnote{It seems likely that this should always happen when the zero-temperature limit shape is piecewise linear, or at least when the surface tension~\eqref{eq:intro:surface_tension} is strictly concave, but we have not been able to verify that so far.}, then~$\PP_\mu$ is simply the product of the uniform dimer models on the connected components of~$G_\mu$, and ~\eqref{eq:intro:inverse_kasteleyn_zero} consists of copies of the actual inverses of finite Kasteleyn matrices of those connected components. The following statement does not rely on such finiteness, however (Theorem~\ref{thm:gibbs_limit} in the text).
\begin{theorem}\label{thm:intro:gibbs_limit}
Let~$\mu\in\mathcal N$ and assume that~$\mathcal E^*$ is strictly concave at~$\mu$ (see Definition~\ref{def:concave}). Then the ergodic translation-invariant Gibbs measure with slope~$\mu$ converges weakly, as~$\beta\to\infty$, to the probability measure~$\PP_\mu$. 
\end{theorem}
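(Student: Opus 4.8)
The plan is to establish the weak convergence by showing convergence of all edge occupancy probabilities, which by inclusion--exclusion and the determinantal structure of dimer measures reduces to controlling the entries of the inverse Kasteleyn matrix of the magnetically altered fundamental domain as $\beta\to\infty$. The starting point is the known expression for the ergodic Gibbs measure of slope $\mu$ at finite $\beta$: its edge probabilities are given by a formula of the same shape as~\eqref{eq:intro:gibbs_zero}--\eqref{eq:intro:inverse_kasteleyn_zero}, but with the full weighted Kasteleyn matrix $K_{G_{1},\beta}(z,w)$ carrying the weights $\nu(e)^\beta$ and with the magnetic coordinates $(z,w)$ tuned so that the corresponding slope is $\mu$. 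So first I would write down this finite-$\beta$ Gibbs measure explicitly and identify the correct magnetic field (equivalently, the correct point on the amoeba boundary / tropical curve) that selects slope $\mu$.

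The key analytic step is a saddle-point / rescaling analysis of the weighted Kasteleyn matrix as $\beta\to\infty$. Writing each weight as $\e^{\beta\log\nu(e)}$, the entries of $K_{G_1,\beta}(z,w)$ are exponentially large or small in $\beta$, and the strict concavity of $\mathcal E^*$ at $\mu$ (Definition~\ref{def:concave}) is exactly the hypothesis that pins down a \emph{unique} dominant balance: among all terms in the (finite) determinant defining $\det K_{G_1,\beta}(z,w)$, the slope-$\mu$ dimer covers achieving the maximum in~\eqref{eq:intro:surface_tension} dominate, and their total contribution has exponential growth rate $\mathcal E^*(\mu)$. I would factor out $\e^{\beta\mathcal E^*(\mu)}$ from both $K_{G_1,\beta}$ and its cofactors, so that the surviving $O(1)$ part of $K_{G_1,\beta}(z,w)^{-1}$ is precisely the magnetically altered \emph{unweighted} inverse $K_{G_{1,\mu}}(z,w)^{-1}$ built only from the edges in $G_{1,\mu}$; the subdominant edges contribute factors $\e^{-c\beta}\to 0$. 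Strict concavity is what guarantees that the magnetic tuning does not drift and that no competing slope reaches the same exponential order, so the limit of the selected point $(z,w)$ is well defined and lands on the segment of the tropical curve corresponding to $\mu$.

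Once the pointwise (in $(z,w)$) limit of the integrand is identified, I would pass to the limit inside the double contour integral~\eqref{eq:intro:inverse_kasteleyn_zero}. Here I need a dominated-convergence argument: the integrand must be uniformly bounded on $|z|=|w|=1$ for large $\beta$, which again follows from the dominant-balance estimate (the ratio of cofactor to determinant stays bounded because both are governed by the same exponential rate $\e^{\beta\mathcal E^*(\mu)}$), and no poles cross the contours of integration in the limit. This yields convergence of each single-entry inverse Kasteleyn value $(K_{G_\mu}^{-1})_{\tilde{\mathrm b}\tilde{\mathrm w}}$, and together with convergence of the prefactors $(K_{G_\mu})_{\tilde{\mathrm w}\tilde{\mathrm b}}$ (the edge weights renormalize to $1$ on $G_\mu$, to $0$ off it), the full determinantal formula~\eqref{eq:intro:gibbs_zero} converges. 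Weak convergence of the measures then follows because a Gibbs measure on dimer covers is determined by its finite cylinder (edge-occupancy) probabilities.

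The main obstacle I expect is the uniformity of the saddle analysis in the magnetic variables $(z,w)$ as they range over the unit torus, especially near the boundary of the amoeba where the finite-$\beta$ point degenerates onto the tropical curve. Establishing that the dominant-balance factorization holds uniformly --- and in particular that the relevant minors of the Kasteleyn matrix do not have exponentially small determinants that would spoil the $O(1)$ limit of the ratio --- is the delicate part, and it is precisely here that strict concavity of $\mathcal E^*$ at $\mu$ must be used, rather than mere concavity. I would also need to verify the well-posedness claim, namely that $\PP_\mu$ defined by~\eqref{eq:intro:gibbs_zero}--\eqref{eq:intro:inverse_kasteleyn_zero} is a consistent family of probabilities (nonnegativity and normalization of the determinantal minors); this should follow from realizing $\PP_\mu$ as the genuine limit of bona fide probability measures, so that the required positivity and consistency are inherited in the limit.
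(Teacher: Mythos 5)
Your overall strategy---expand the determinantal correlations, identify the exponentially dominant dimer configurations via strict concavity, factor out $\e^{\beta\mathcal E^*(\mu)}$, and pass to the limit inside the double contour integral, finally recovering the measure from its cylinder probabilities---is the same as the paper's proof of Theorem~\ref{thm:gibbs_limit}. However, the two steps you treat as automatic consequences of ``dominant balance'' are exactly where the paper needs nontrivial combinatorial input, and as written your plan has genuine gaps there.

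First, the entry-wise factorization of $K_{G_1,\beta}(z,w)^{-1}$ does not work. A cofactor $\adj K_{G_1,\beta}(z,w)_{\mathrm b'\mathrm w}$ expands over dimer covers of the \emph{vertex-deleted} graph $G_{\mathrm b',\mathrm w}$, and the maximal energy of such a cover, even after adding one edge weight $\log\nu(e)$, is not comparable to $\mathcal E^*(\mu)$ term by term: an individual product $\left(K\right)_{\mathrm w\mathrm b}\left(K^{-1}\right)_{\mathrm b'\mathrm w}$ can be exponentially large in $\beta$ while another factor in the same permutation cycle is exponentially small. This is why the paper never takes limits of single entries; it expands each determinant into cyclic products \eqref{eq:correlation_kernel_product} and applies Lemma~\ref{lem:sum_dimers} iteratively to recombine the vertex-deleted covers together with the connecting edges $e_s=\mathrm b_s\mathrm w_s$ into genuine dimer covers of $G_1$, whose total energy is then bounded by $p\,\mathcal E^*(\mu)$, with equality exactly on configurations supported in $G_{1,\mu}$. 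Without this recombination lemma (or a substitute), your claim that the surviving $O(1)$ part of the inverse is $K_{G_{1,\mu}}(z,w)^{-1}$ is unjustified. Relatedly, you implicitly use that every dimer cover of $G_{1,\mu}$ is a maximizer of slope $\mu$, so that the surviving terms assemble into the \emph{unweighted} model on $G_{1,\mu}$; this is Lemma~\ref{lem:maximizers_all}, proved by a multiweb edge-coloring argument, and it is one of the places where strict concavity actually enters---it is not automatic from the definition of $G_{1,\mu}$.

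Second, your dominated-convergence step needs the limit of the rescaled denominator to be nonvanishing on the torus $|z|=|w|=1$, and you only assert this (``no poles cross the contours''). Limits of zero-free polynomials can acquire zeros, so this requires proof. The paper's resolution is Proposition~\ref{prop:non-zero_polynomial}: $P_\mu(z,w)=\tau Z_{1,\mu}z^{\mu_1}w^{\mu_2}$ is a monomial, hence zero-free on the unit torus; the proof is a homology argument (unions of two dimer covers of $G_{1,\mu}$ contain only loops of trivial homology), which again rests on Lemma~\ref{lem:maximizers_all} and strict concavity. You correctly flag this as the delicate point, but flagging it is not closing it. Your final paragraph on well-posedness of $\PP_\mu$ (positivity and consistency inherited from the limit, uniqueness from cylinder probabilities) does match the paper's argument.
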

If there is a unique maximizer~$\mathcal D$ of~$G_1$ with slope~$\mu\in \mathcal N$, then we show that~$\mathcal E^*$ is strictly concave (though the converse is not necessarily true). Theorem~\ref{thm:intro:gibbs_limit} then implies that the Gibbs measures of that slope converge to the delta measure on the dimer cover of~$\mathbb{Z}^2$ obtained by lifting~$\mathcal D$. 

More generally, we prove that the surface tension~$\mathcal E^*$ is strictly concave, and thus Theorem~\ref{thm:intro:gibbs_limit} applies, on the complement to a finitely many hyperplanes in the space of edge weights. 

One may hope that the same measures~$\PP_\mu$ should arise as local limits as both~$\beta$ and the size of the Aztec diamond tend to infinity. Such double limit transitions are beyond the scope of the present paper, and we hope to look into this question in the future.

\subsection{A weighted discrete Laplacian and the tropical arctic curve}\label{sec:intro:tropical_arctic}
Our next goal is to provide a precise description of the tropical arctic curve and the tropical limit shape~$\bar h_t$. Along the way we will also give a characterization of the tropical action function~$F_t$, from~\eqref{eq:intro:action_function_tropical}.
Tropical geometry provides a natural language for this task, and we refer to Mikhalkin~\cite{Mik05} for an introduction to the subject and more details on the notions used below.

 \begin{figure}[t]
 \begin{center}
 \begin{subfigure}[c]{0.23\textwidth}
\includegraphics[scale=.29]{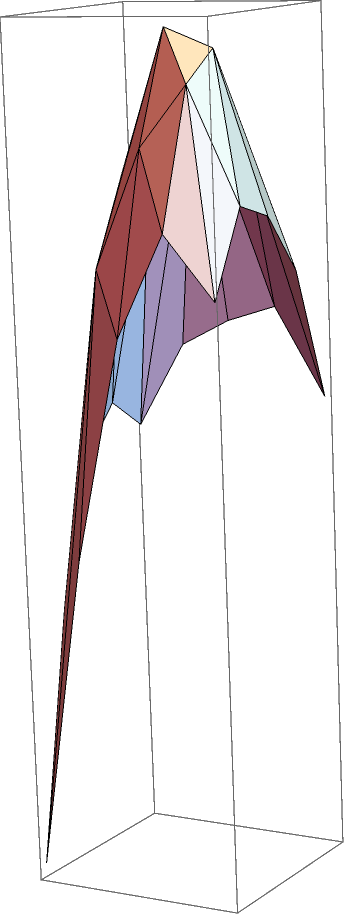}
 \end{subfigure}
\begin{subfigure}[c]{0.23\textwidth}
\includegraphics[scale=.25]{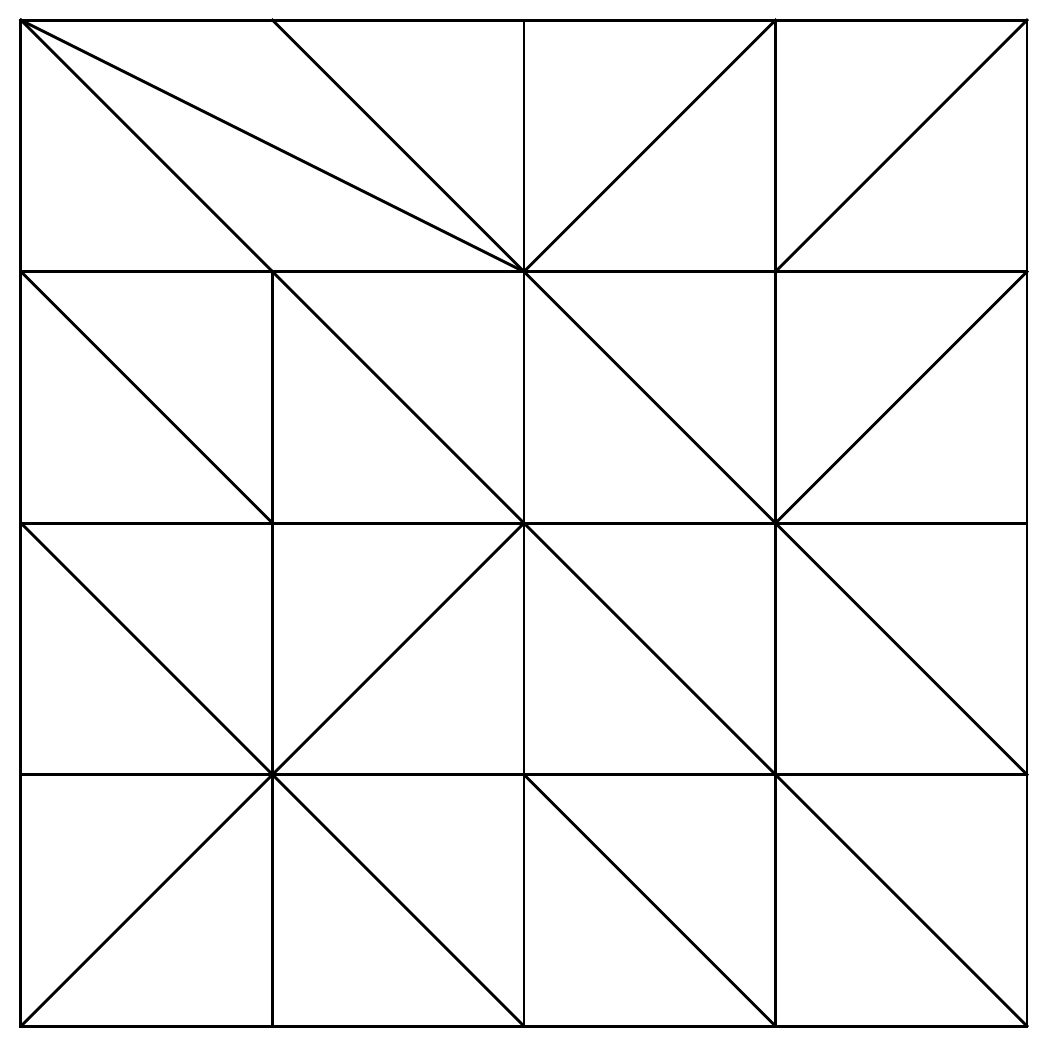}
 \end{subfigure}
 \qquad \qquad
\begin{subfigure}[c]{0.23\textwidth}
\includegraphics[scale=.25, angle=180]{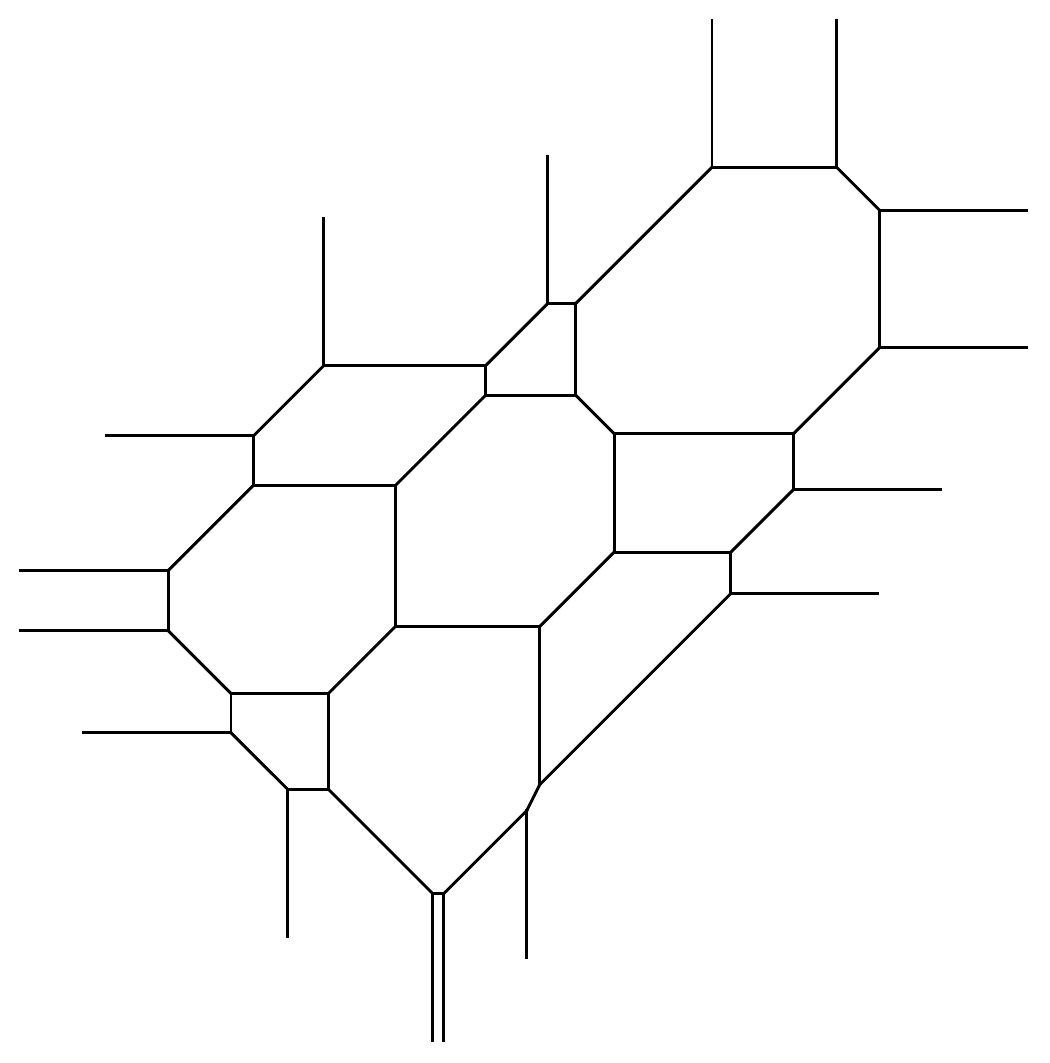}
 \end{subfigure}
 \end{center}
\caption{The extended polyhedral domain~$\tilde N(P_t)$ (left), the subdivision~$N_S(P_t)$ of the Newton polygon (middle) and the corresponding tropical curve~$\mathcal A_t$ (right). The periodicity is~$k=4$ and~$\ell=4$. 
\label{fig:extended_newton_amoeba}}
\end{figure}

Given the tropical surface tension~$\mathcal E^*$ of~\eqref{eq:intro:surface_tension}, we define a \emph{tropical characteristic polynomial} as
\begin{equation}
P_t(x,y)=\max_{\mu=(\mu_1,\mu_2)\in \mathcal N}\left\{\mu_1 x+\mu_2 y+\mathcal E^*(\mu)\right\}.
\end{equation}
The tropical curve~$\mathcal A_t$ is the set of points~$(x,y)\in \RR^2$ where~$P_t$ is not smooth, see the right panel of Figure~\ref{fig:extended_newton_amoeba} for an example.

A somewhat more direct way of accessing~$\mathcal A_t$ from~$\mathcal E^*$ involves the notion of an \emph{extended polyhedral domain} defined as
\begin{equation}\label{eq:intro:extended_polyhedral_domain}
\tilde N(P_t)=\convHull \left\{(\mu,s) \in \RR^3:\mu \in \mathcal N, s\leq \mathcal E^*(\mu)\right\}.
\end{equation}
It naturally projects to the Newton polygon~$N(P)$, and the projection defines a polygonal subdivision~$N_S(P_t)$ of~$N(P)$ by declaring that each bounded face of~$\tilde N(P_t)$ projects to a face in~$N_S(P_t)$, see the left and the middle panels of Figure~\ref{fig:extended_newton_amoeba}. Let us extend the function~$\mathcal E^*$ from~$\mathcal N=N(P)\cap \mathbb{Z}^2$ to all of~$N(P)$ so that the graph of~$\mathcal E^*$ coincides with the top boundary of~$\tilde N(P_t)$. Then the negative gradient~$-\nabla \mathcal E^*$ maps the faces of~$N_S(P_t)$ bijectively to the vertices of~$\mathcal A_t$. Furthermore, two vertices in~$\mathcal A_t$ are connected by a line segment~$e$ if and only if the corresponding faces in~$N_S(P_t)$ are adjacent, and that line segment is orthogonal to the edge~$e^*$ separating the two faces. A vertex in~$\mathcal A_t$ is adjacent to an unbounded edge~$e$ orthogonal to an edge~$e^*$ in~$N_S(P_t)$ if~$e^*$ is a boundary edge. We denote the set of bounded and unbounded edges of~$\mathcal A_t$ by~$E(\mathcal A_t)$ and~$L(\mathcal A_t$), respectively, and the corresponding edges in~$N_S(P_t)$ by~$E(\mathcal A_t)^*$ and~$L(\mathcal A_t)^*$. We further distinguish the sets of unbounded edges (or \emph{leaves}) in~$L(\mathcal A_t)$ that correspond to the left-most, bottom-most, right-most, and top-most edges by~$L_i(\mathcal A_t)$ and~$L_i(\mathcal A_t)^*$,~$i=1,2,3,4$, respectively. That is, leaves from~$L_1(\mathcal A_t)$ point to the left, leaves from~$L_2(\mathcal A_t)$ point downwards, and so on. Section~\ref{sec:tropical_amoeba} below contains a more detailed description. 

The above correspondence or \emph{duality} between~$N_S(P_t)$ and~$\mathcal A_t$ induces a positive function on the edges in~$E(\mathcal A_t)^*$. Indeed, each edge~$e\in E(\mathcal A_t)$ comes with a length~$l(e)$ defined by the relation~$e=l(e)\eta(e)$, where~$\eta(e)$ is the primitive (with coprime coordinates) `tangent' vector parallel to~$e$. We then transfer this function to a function~$l^*$ on~$E(\mathcal A_t)^*$ by setting~$l^*(e^*)=l(e)$ whenever~$e^*\in E(\mathcal A_t)^*$ is dual to~$e\in E(\mathcal A_t)$. 

Given a continuous piecewise linear function~$g_t^*$ on the edges of~$N_S(P_t)$\footnote{That is,~$g_t^*$ is linear on each edge and continuous at each vertex where the edges meet.}, we let~$\d g_t^*(\eta(e^*))\in\mathbb R$ be its differential (or slope), where~$\eta(e^*)$ is the vector parallel to the edge~$e^*$ that is obtained by the~$\pi/2$ counterclockwise rotation from~$\eta(e)$. The action of the \emph{weighted discrete Laplacian}~$\Delta_l$ on ~$g_t^*$ is then defined by
\begin{equation}\label{eq:intro:balancing_dual}
(\Delta_lg_t^*)(\mu)=\sum_{e^*\sim \mu}l^*(e^*)\d g_t^*(\eta(e^*)),
\end{equation}
where~$\mu$ is a vertex in~$N_S(P_t)$, and the sum runs over all edges~$e^*$ that are adjacent to~$\mu$ with~$\eta(e^*)$ oriented away from~$\mu$.  

 \begin{figure}[t]
 \begin{center}
\begin{subfigure}[c]{0.25\textwidth}
\includegraphics[scale=.25, trim={10cm 0 10cm 0}, clip]{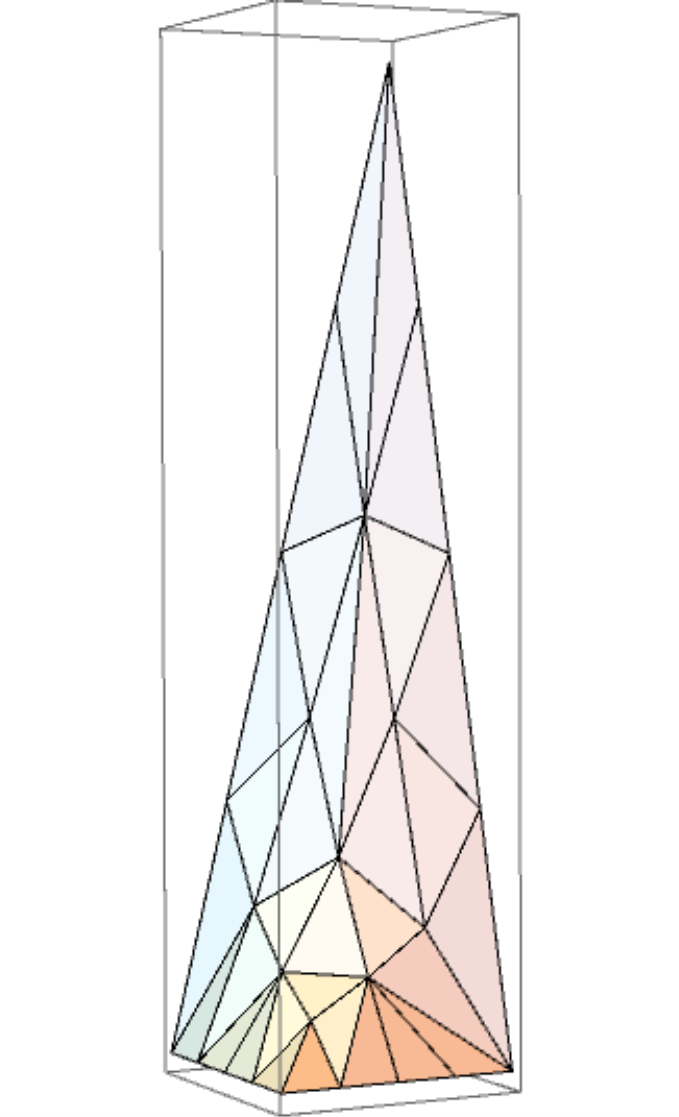}
 \end{subfigure}
\begin{subfigure}[c]{0.25\textwidth}
\includegraphics[scale=.25, angle=180]{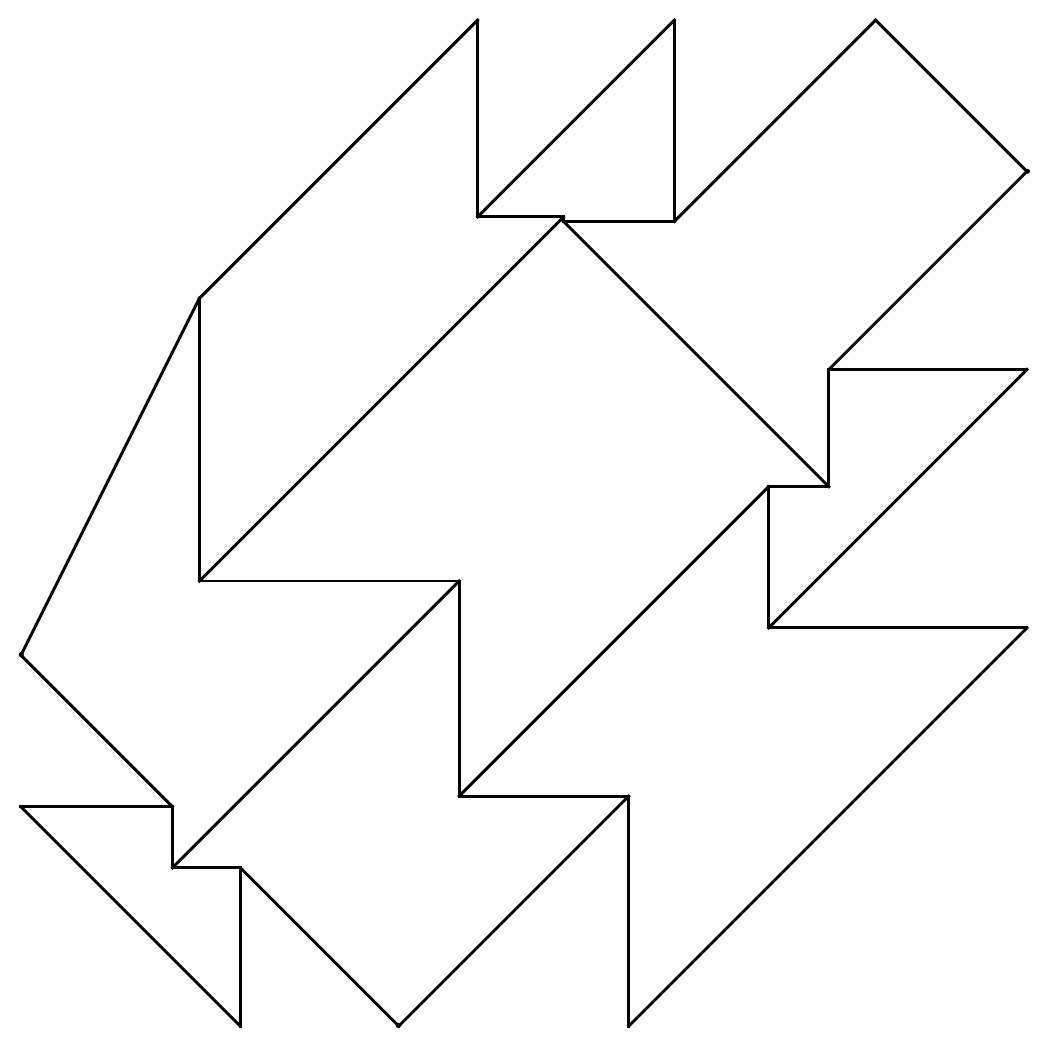}
 \end{subfigure}
 \qquad
 \qquad
\begin{subfigure}[c]{0.25\textwidth}
\includegraphics[scale=.25, angle=180]{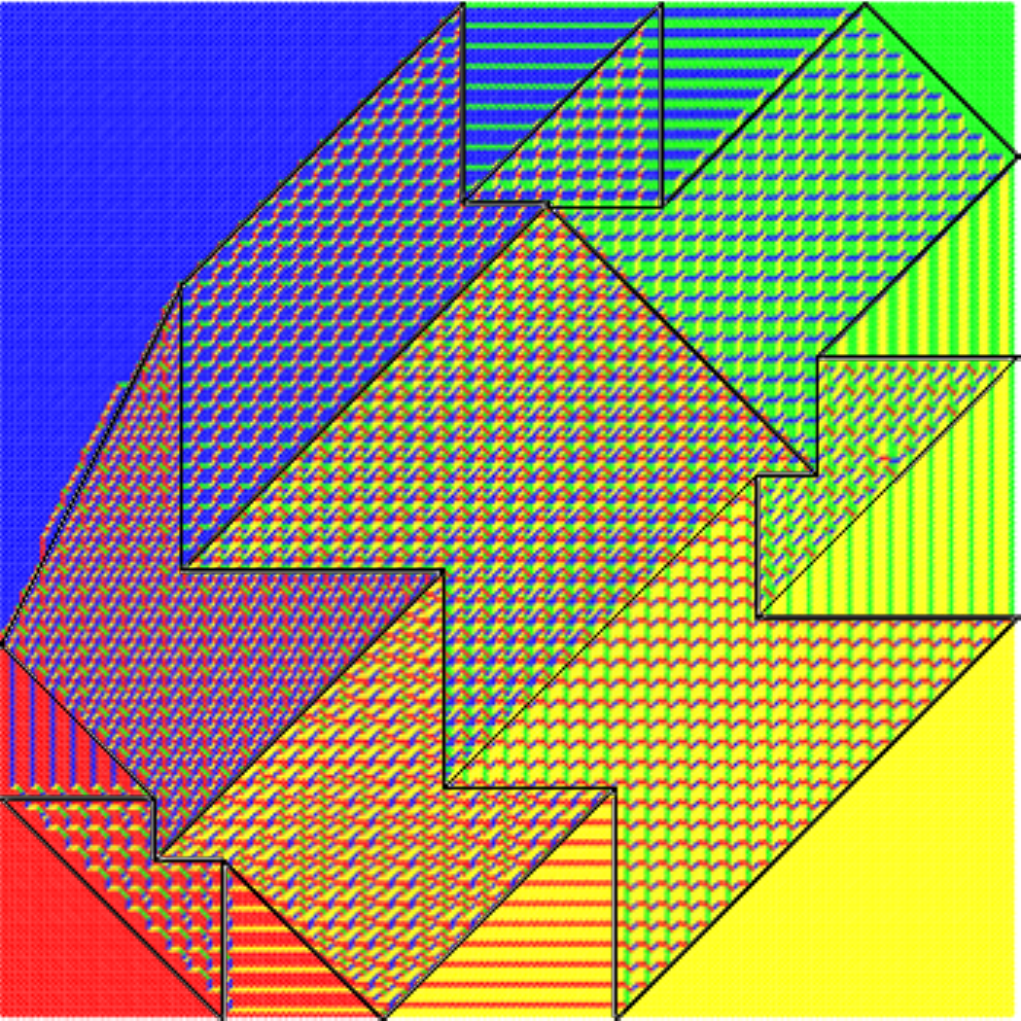}
 \end{subfigure}
 \end{center}
\caption{The graph of the function~$f_t^*$ (left), the arctic curve (middle) and the arctic curve together with a random sampling (right). The weight function is the same as in Figures~\ref{fig:varying_temp} and~\ref{fig:extended_newton_amoeba}. The tropical arctic curve is the union over~$N(P)\backslash \mathcal N$ of the Clarke subdifferential of~$f^*$.
\label{fig:harmonic_arctic}}
\end{figure}

Kirchhoff's theorem implies that there exists a continuous piecewise linear function~$f_t^*$ on the edges of~$N_S(P_t)$, unique up to an additive constant, such that
\begin{equation}\label{eq:intro:kirchhoff}
\begin{cases}
\Delta_l (f_t^*)\equiv 0, & \text{on the inner vertices of $\mathcal N$},\\
\d f_t^*(\eta(e^*))=-\ell, & e^*\in L_1(\mathcal A_t)^*, \\
\d f_t^*(\eta(e^*))=k, & e^*\in L_2(\mathcal A_t)^*, \\
\d f_t(\eta(e^*))=0, & e^*\in L_i(\mathcal A_t)^*, \quad i=3,4.
\end{cases}
\end{equation}
The function~$f_t^*$ is dual to the function~$f_t$ in~\eqref{eq:intro:action_function_tropical}, in the sense that~$\d f_t^*(\eta(e^*))=\d f_t(\eta(e))$, and this equality determines~$f_t$ from~$f_t^*$ and \emph{vise versa} (up to irrelevant additive constants). Thus, the tropical action function~\eqref{eq:intro:action_function_tropical} is also fully determined.

Our running (and generically satisfied) assumption that~$\mathcal A_t$ is a smooth tropical curve implies that the function~$f_t^*$ extends to a continuous piecewise linear function on~$N(P)\subset \RR^2$, see the left panel of Figure~\ref{fig:harmonic_arctic} for an example of its graph. On each face~$\mathrm v^*$ of~$N_S(P_t)$ it is linear; let~$\nabla f_t^*(\mathrm v^*)$ denote its gradient there. For all~$(s,t)\in N(P)$, we further let~$\partial f_t^*(s,t)$ be the Clarke subdifferential of~$f_t^*$, which is the convex hull of all possible limits~$\lim_{(s',t')\to (s,t)}\nabla f_t^*(s',t')$. 

The following statement is a reformulation of Theorem~\ref{thm:intro:tropical_arctic_curve} in terms of~$f_t^*$ instead of~$f_t$, and it is a combination of Proposition~\ref{prop:vertex_map_gradient} and Corollary~\ref{cor:tropical_arctic_gradient} below.
\begin{theorem}\label{thm:intro:tropical_arctic_curve_dual}
Assume that~$\mathcal A_t$ is a smooth tropical curve. The map~$\map_t:V(\mathcal A_t)\to \overline{D_\text{Az}}$ is given by
\begin{equation}
\map_t(\mathrm v)=\frac{1}{k\ell}\nabla f_t^*(\mathrm v^*)-\frac{1}{k\ell}(k,\ell),
\end{equation}  
and the arctic curve is the set
\begin{equation}\label{eq:intro:arctic_curve}
\bigcup_{(s,t)\in N(P)\backslash \mathcal N}\left(\frac{1}{k\ell} \partial f_t^*(s,t)-\frac{1}{k\ell}(k,\ell)\right).
\end{equation}  
\end{theorem}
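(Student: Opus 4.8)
The plan is to treat this statement as a dictionary translation of Theorem~\ref{thm:intro:tropical_arctic_curve} from the language of $f_t$ into that of $f_t^*$, using the duality $\d f_t^*(\eta(e^*))=\d f_t(\eta(e))$. There are two assertions to establish: the closed-form expression for the vertex map $\map_t$, and the identification of the tropical arctic curve~\eqref{eq:intro:arctic_curve} with a union of Clarke subdifferentials of $f_t^*$. Both will follow once the gradient of $f_t^*$ on a face is related to the partial derivatives of $f_t$ at the dual vertex.

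First I would derive the identity relating the (constant) gradient $\nabla f_t^*(\mathrm v^*)$ of $f_t^*$ on the face $\mathrm v^*$ dual to a vertex $\mathrm v$ to the partial derivatives $\d_x f_t(\mathrm v),\d_y f_t(\mathrm v)$. Writing $\nabla f_t^*(\mathrm v^*)\cdot \eta(e^*)=\d f_t^*(\eta(e^*))=\d f_t(\eta(e))=(\d_x f_t(\mathrm v),\d_y f_t(\mathrm v))\cdot \eta(e)$ for every edge $e$ adjacent to $\mathrm v$, and substituting for $\eta(e^*)$ the fixed $\pi/2$ rotation of $\eta(e)$, both sides become linear forms in $\eta(e)$. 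Since $\mathcal A_t$ is smooth, the vertex $\mathrm v$ has degree three and the adjacent tangent vectors $\eta(e)$ span $\RR^2$, so matching coefficients pins down $\nabla f_t^*(\mathrm v^*)$ as the image of $(\d_x f_t(\mathrm v),\d_y f_t(\mathrm v))$ under the $90^\circ$ rotation dual to the one relating $\eta(e^*)$ and $\eta(e)$, i.e.\ $\nabla f_t^*(\mathrm v^*)=(\d_y f_t(\mathrm v),-\d_x f_t(\mathrm v))$ with the orientation conventions fixed earlier. Substituting this into the formula of Theorem~\ref{thm:intro:tropical_arctic_curve} immediately gives $\map_t(\mathrm v)=\frac{1}{k\ell}\nabla f_t^*(\mathrm v^*)-\frac{1}{k\ell}(k,\ell)$.

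For the arctic curve, I would exploit the duality between $N_S(P_t)$ and $\mathcal A_t$ together with smoothness. Smoothness means $N_S(P_t)$ is a unimodular triangulation, so its vertex set is exactly $\mathcal N$ and each face is an empty lattice triangle; consequently $N(P)\setminus \mathcal N$ decomposes into the relative interiors of the $2$-faces and of the edges of the subdivision. On the interior of a face $\mathrm v^*$ the function $f_t^*$ is affine, so $\partial f_t^*(s,t)=\{\nabla f_t^*(\mathrm v^*)\}$ and the normalized subdifferential is the single arctic vertex $\map_t(\mathrm v)$. On the relative interior of an interior edge $e^*$ separating faces $\mathrm v^*,{\mathrm v'}^*$, the Clarke subdifferential is $\conv\{\nabla f_t^*(\mathrm v^*),\nabla f_t^*({\mathrm v'}^*)\}$, which after the affine normalization is exactly the segment $[\map_t(\mathrm v),\map_t(\mathrm v')]$; here $\mathrm v,\mathrm v'$ are precisely the adjacent vertices of $\mathcal A_t$ joined by the bounded edge $e$ dual to $e^*$. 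On a boundary edge of $N(P)$ only one face is adjacent, so the subdifferential is again a single arctic vertex and no new segment appears. Taking the union over all of $N(P)\setminus \mathcal N$ therefore reproduces exactly the vertices $\map_t(\mathrm v)$ and the segments $[\map_t(\mathrm v),\map_t(\mathrm v')]$ of Theorem~\ref{thm:intro:tropical_arctic_curve}, establishing~\eqref{eq:intro:arctic_curve}.

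The main obstacle I anticipate is the bookkeeping in this last step: confirming that the piecewise-linear extension of $f_t^*$ to all of $N(P)$ is well defined (this is where smoothness is essential, ruling out lattice points in the relative interiors of edges or faces and guaranteeing the extension), and carefully matching the three cell types---face interiors, interior edges, and boundary edges---with the vertices and bounded edges of $\mathcal A_t$ under duality, so that the boundary leaves produce no spurious segments and no genuine segment of the arctic curve is omitted. Everything else reduces to a direct substitution once the gradient identity $\nabla f_t^*(\mathrm v^*)=(\d_y f_t(\mathrm v),-\d_x f_t(\mathrm v))$ is in hand.
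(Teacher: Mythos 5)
Your proof is correct and takes essentially the same route as the paper: the gradient identity $\nabla f_t^*(\mathrm v^*)=(\d_y f_t(\mathrm v),-\d_x f_t(\mathrm v))$ substituted into the $f_t$-formulation is exactly how the paper obtains Proposition~\ref{prop:vertex_map_gradient} from Proposition~\ref{prop:vertex_map_well_defined}, and your face/interior-edge/boundary-edge analysis of the Clarke subdifferential combined with the segment description of Theorem~\ref{thm:arctic_curve_tropical} is exactly how Corollary~\ref{cor:tropical_arctic_gradient} is obtained. If anything, you are more explicit than the paper (the boundary edges of $N(P)$, and why smoothness forces the vertex set of $N_S(P_t)$ to equal $\mathcal N$); the one loose point, namely which of the two $\pi/2$ rotations carries $(\d_x f_t(\mathrm v),\d_y f_t(\mathrm v))$ to $\nabla f_t^*(\mathrm v^*)$ under the stated counterclockwise convention for $\eta(e^*)$, is an orientation-convention ambiguity already present in the paper's own text.
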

See Figure~\ref{fig:harmonic_arctic} for an example of~$f_t^*$ as well as the corresponding tropical arctic curve. 

We thus see that the tropical arctic curve and limit shape are determined by a solution to a Kirchhoff problem -- a system of linear equations with~$(k+1)(\ell+1)$ unknowns. The corresponding system of equations for the function~$f_t$ can be found at the end of Section~\ref{sec:tropical_functions} and the beginning of Section~\ref{sec:tropical_action} below. See, \emph{e.g.}, Grushevsky-Krichever-Norton~\cite[Section 1]{GKN19} for a discussion on Kirchhoff's problem. 

There is a certain similarity in the relation between the surface tension~$\mathcal E^*$ and the tropical curve~$\mathcal A_t$, and that between the solution~$f_t^*$ to the Kirchhoff problem and the tropical arctic curve, illustrated in Figures~\ref{fig:extended_newton_amoeba} and~\ref{fig:harmonic_arctic}, respectively. In particular, both can be expressed in terms of the Clarke subdifferential, see Remark~\ref{rem:arctic_tropical_curve} for details.

Also, the map~$\map_t$ of Theorems~\ref{thm:intro:tropical_arctic_curve} and~\ref{thm:intro:tropical_arctic_curve_dual} 
between the tropical curve~$\mathcal A_t$ and the tropical arctic curve, although not, generally speaking, a bijection, 
has the property of slope preservation, similarly to its finite-temperature precursor. That is, for any two adjacent vertices~$\mathrm v$ and~$\mathrm v'$ of~$\mathcal A_t$, the line segment between~$\map_t(\mathrm v)$ and~$\map_t(\mathrm v')$ is parallel to the edge connecting~$\mathrm v$ and~$\mathrm v'$.  Consequently, if~$R_\mu\subset D_{\text{Az}}$ is the polygon corresponding to a facet of slope~$\mu$, then the angles in~$R_\mu$ and the angles in the polygon~$\mathcal A_{t,\mu}$ of the tropical curve~$\mathcal A_t$ are related by~$\theta_{\mathrm v}=n_{\mathrm v}\pi-\theta_{\mathrm v}'$, with~$n_{\mathrm v}\in \{1,2\}$ if~$\map_t$ is bijective on the vertices of~$\mathcal A_{t,\mu}$, or by a limiting version of this relation if it is not. See Section~\ref{sec:arctic_curve_properties} below for the details. 

We conclude this section by stating an explicit formula from Corollary~\ref{cor:limit_shape_newton} below, that expresses the tropical limit shape~$\bar h_t$ in terms of the solution to Kirchhoff's problem~$f_t^*$: For~$(u,v)$ in a macroscopic region~$R_\mu$ of the Aztec diamond of slope~$\mu \in \mathcal N$, using the notation~$\mu_0=(0,k)\in \mathcal N$, we have 
\begin{equation}
\bar h_t(u,v)=\left(u+\frac{1}{\ell},v+\frac{1}{k}\right)\cdot (\mu-\mu_0)+\frac{1}{k\ell}(f_t^*(\mu)-f_t^*(\mu_0))+1.
\end{equation}
For an equivalent formula in terms of~$f_t$, see~\eqref{eq:tropical_height_function} and the discussion after Corollary~\ref{cor:tropical_limit_shape}.

\subsection*{Outline of the paper}
The rest of the paper is organized as follows. Section~\ref{sec:background} provides the necessary background, including definitions of the models, associated objects, and previous results in the finite temperature setting (Sections~\ref{sec:graph}-\ref{sec:finite_beta_results}), followed by a brief introduction to tropical geometry (Sections~\ref{sec:tropical_geometery}-\ref{sec:tropical_limit}). Our main results are presented in Section~\ref{sec:tropical_global_results}. The main object in the proofs is the tropical action function introduced in Section~\ref{sec:tropical_action} and, with the tropical action function at our disposal, we define the tropical arctic curve in the same section. We then prove in Section~\ref{sec:tropical_limit_results} that the tropical arctic curve, as well as the tropical limit shape, indeed are the limits of their finite temperature counterparts. In Sections~\ref{sec:arctic_curve_properties} and~\ref{sec:dual_representation}, we offer a detailed description of the tropical arctic curve and a dual representation of the objects discussed earlier in Section~\ref{sec:tropical_global_results}. Moving beyond the Aztec diamond, Section~\ref{sec:zero_gibbs} examines the zero-temperature limit of the translation-invariant ergodic Gibbs measures. Finally, Section~\ref{sec:subdivision} establishes the convexity of the tropical surface tension and verifies that our assumptions are satisfied for generic values of the edge weights.

\subsection*{Acknowledgments}
We are indebted to Grigory Mikhalkin for multiple very helpful conversation on the subject of tropical geometry. We are also very grateful to Lionel Lang for clarifications on his work~\cite{Lan20}, and to Sunil Chhita and Christophe Charlier for sharing their software implementations of the domino-shuffling sampling algorithm. TB was partially supported by the Knut and Alice Wallenberg Foundation grant KAW 2019.0523 and by A.~Borodin's Simons Investigator grant and the research was conducted during TB's affiliation with MIT. AB was partially supported by the NSF grant DMS-1853981, and the Simons Investigator program.



\section{Background}\label{sec:background}

\subsection{Weighted graphs}\label{sec:graph}
We consider the square lattice rotated by~$\pi/4$ with the vertices colored in black and white, depending on parity, and denote the resulting bipartite graph by~$G=(B,W,E)$. We enumerate the black vertices~$B$ by~$\mathrm b_{\ell m+i,k n+j}$ and the white vertices~$W$ by~$\mathrm w_{\ell m+i,k n+j}$, for~$m,n\in \ZZ$,~$i=0,\dots,\ell-1$, and~$j=0,\dots,k-1$. The parameters~$k$ and~$\ell$ are fixed positive integers that encode the periodicity of the model. The enumeration is such that the edges~$E$ are given by
\begin{equation}
\mathrm b_{x,y}\mathrm w_{x,y}, \quad \mathrm b_{x,y+1}\mathrm w_{x,y}, \quad
\mathrm b_{x+1,y+1}\mathrm w_{x,y}, \quad \text{and} \quad \mathrm b_{x+1,y}\mathrm w_{x,y}, \quad \text{with} \quad (x,y)=(\ell m+i,k n+j),
\end{equation}
and are called West, South, East, and North, respectively. See Figure~\ref{fig:magnetic_weights}. We also associate positive edge weights and Kasteleyn signs to~$G$. The edge weights are given by a periodic function~$\nu:E\to \RR_{>0}$. Their~$(k,\ell)$-periodicity is such that 
\begin{equation}
\nu(\mathrm b_{\ell m+i,k n+j}\mathrm w_{\ell m+i,k n+j})=\nu(\mathrm b_{i,j}\mathrm w_{i,j}),
\end{equation}
and similarly for the South, East and North edges. The Kasteleyn sign~$\sigma=\sigma(e)$ is defined as~$-1$ on the North edges and~$1$ otherwise.  
\begin{remark}
In~\cite{BB23} the notation
\begin{multline}
\nu(\mathrm b_{i,j}\mathrm w_{i,j})=\gamma_{j+1,i+1}, \quad \nu(\mathrm b_{i,j+1}\mathrm w_{i,j})=\alpha_{j+1,i+1}, \quad \nu(\mathrm b_{i+1,j+1}\mathrm w_{i,j})=\beta_{j+1,i+1},  \\
\text{and} \quad \nu(\mathrm b_{i+1,j}\mathrm w_{i,j})=\delta_{j+1,i+1},
\end{multline}
was used.
\end{remark}

\begin{figure}
\begin{center}
\begin{tikzpicture}[scale=1.6]
\foreach \x in {0,...,4}
{\foreach \y in {0,...,2}
{\draw (\x,\y) node[circle,draw=black,fill=black,inner sep=2pt]{};
}
}

\foreach \x in {1,...,4}
{\foreach \y in {1,2}
{\draw (\x-1,\y-1)--(\x,\y);
\draw (\x-1,\y)--(\x,\y-1);
\draw (\x-.5,\y-.5) node[circle,draw=black,fill=white,inner sep=2pt]{};
}
\foreach \y in {0}
{\draw (\x-.5,\y-.5)--(\x,\y);
\draw (\x-1,\y)--(\x-.5,\y-.5);
\draw (\x-.5,\y-.5) node[circle,draw=black,fill=white,inner sep=2pt]{};
}
\foreach \y in {3}
{\draw (\x-1,\y-1)--(\x-.5,\y-.5);
\draw (\x-.5,\y-.5)--(\x,\y-1);
\draw (\x-.5,\y-.5) node[circle,draw=black,fill=white,inner sep=2pt]{};
}
}

\foreach \y in {0,1,2}
{\foreach \x in {0,1,2,3}
{\draw (\x,\y) node[right] {$\mathrm{b}_{\x,\y}$};
\draw (\x+.5,\y+.5) node [below] {$\mathrm{w}_{\x,\y}$};
}
}

\foreach \y in {0,1,2}
{\draw (4,\y) node [right] {$\mathrm{b}_{0,\y}$};
}

\foreach \x in {0,1,2,3}
{
\draw (\x+.5,-.5) node[below] {$\mathrm{w}_{\x,2}$};
}

\foreach \x in {1,2,3,4}
{
 \draw (\x+.4-1.1,-.35) node {$z^{-1}$};
 \draw (\x+.85-1.1,-.05) node {$z^{-1}$};
}

{
 \draw (0.9+3.,3-.65) node {$w$};
 \draw (0.9+3.,2-.4) node {$w$};
 \draw (0.9+3.,2-.65) node {$w$};
 \draw (0.9+3.,1-.4) node {$w$};
 \draw (0.9+3.,1-.65) node {$w$};
  \draw (.9+3.,-.4) node {$w$};
}

 \draw [<-](-.25,-.25)--(4.25,-.25);
 \draw (-.27,-.25) node[below] {$\gamma_u$};
  \draw [<-](3.75,-.75)--(3.75,2.75);
 \draw (3.75,-.75) node[right] {$\gamma_v$};
 
 \pic[scale=.2, rotate=-45] at (5.2,2.) {compass};
 \end{tikzpicture}
\end{center}
\caption{The fundamental domain with~$k=3$,~$\ell=4$. \label{fig:magnetic_weights}}
\end{figure}
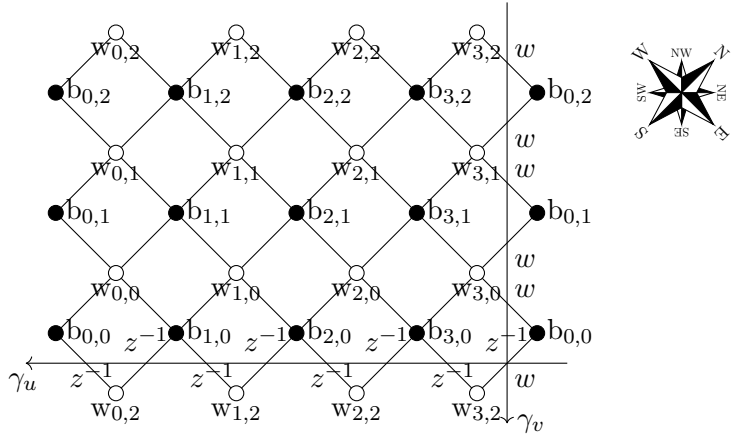

We embed the fundamental domain of the weighted graph~$G$ in the torus by identifying the vertices~$\mathrm b_{\ell m+i,kn+j}$ and~$\mathrm w_{\ell m+i,kn+j}$ with~$\mathrm b_{i,j}$ and~$\mathrm w_{i,j}$, for all~$m,n\in \ZZ$, and we denote the resulting graph on the torus by~$G_1=(B_1,W_1,E_1)$. We further define two oriented simple loops on the torus~$\gamma_u$ and~$\gamma_v$ such that they intersect the edges~$\mathrm b_{i,0}\mathrm w_{i',k-1}$,~$i,i'=0,\dots,\ell-1$, and~$\mathrm b_{0,j'}\mathrm w_{\ell-1,j}$,~$j,j'=0,\dots,k-1$, respectively. The orientation is taken so that~$\gamma_u$ has black vertices to the right and~$\gamma_v$ has black vertices to the left. Following~\cite{KOS06}, we introduce \emph{magnetic fields} by multiplying the edge weights of the edges that intersect~$\gamma_u$ by~$w\in \CC$, and multiplying by~$z^{-1}\in \CC$ the edge weights for edges that intersect~$\gamma_v$. See Figure~\ref{fig:magnetic_weights}.

Finally, the Aztec diamond is the subgraph of~$G$, denoted by~$G_\text{Az}=(B_\text{Az},W_\text{Az}, E_\text{Az})$, consisting of the vertices,~$\mathrm b_{\ell m+i,k n+j}$ with~$\ell m+i=0,\dots,k\ell N$ and~$k n+j=0,\dots,k\ell N-1$,~$\mathrm w_{\ell m+i,k n+j}$ with~$\ell m+i=0,\dots,k\ell N-1$ and~$k n+j=-1,0,\dots,k\ell N-1$. Here~$k\ell N$ is the size of the Aztec diamond which is taken to be divisible by~$k$ and~$\ell$ for simplicity. 

We embed~$G_\text{Az}$ into~$\RR^2$ so that the midpoints of the faces of~$G_\text{Az}$ (including the boundary faces) are 
\begin{multline}\label{eq:aztec_faces}
\{\left(2(\ell m+i),2(k n+j)\right):\ell m+i,k n+j=0,\dots,k\ell N\} \\
\cup\{\left(2(\ell m+i)+1,2(k n+j)+1\right):\ell m+i,k n+j=0,\dots,k\ell N-1\}.
\end{multline}
Moreover, we scale the graph by letting
\begin{equation}
u=u_N=-\frac{m}{k\ell N}, \quad v=v_N=-\frac{n}{k\ell N},
\end{equation}
so that~$(u,v)\in \overline{D_\text{Az}}$ where
\begin{equation}\label{eq:aztec_scaled}
D_\text{Az}=\left(-\frac{1}{\ell},0\right)\times \left(-\frac{1}{k},0\right) \subset \RR^2.
\end{equation}
\begin{remark}
The above definitions of~$u$ and~$v$ are chosen to match the definition of~$(u,v)$ in~\cite[Remark 4.10]{BB23}.
\end{remark}

\subsection{Finite temperature models}\label{sec:measures}
In this paper, we introduce a temperature parameter to two probabilistic systems, the Aztec diamond dimer model and its local limits, the translation-invariant Gibbs measures on~$G$. Let~$\beta>0$ be a large parameter, which we think of as the inverse temperature of the system. 

A dimer cover of the Aztec diamond~$G_\text{Az}$ is a subset of~$E_\text{Az}$, with elements called dimers, so that each vertex in~$G_\text{Az}$ is covered by exactly one dimer. The probability measure is defined on the space of all dimer covers~$\mathcal D_\text{Az}$ of the Aztec diamond of size~$k\ell N$ and is given by
\begin{equation}\label{eq:model_beta}
\PP_{\text{Az},\beta}(\mathcal D_\text{Az})=\frac{1}{Z_\beta}\prod_{e\in \mathcal D_\text{Az}}\nu(e)^\beta =\frac{1}{Z_\beta}\,\e^{\beta \mathcal E(\mathcal D_\text{Az})},
\end{equation}
where~$\mathcal E(\mathcal D_\text{Az})=\sum_{e\in \mathcal D_\text{Az}}\log\nu(e)$ is the \emph{energy} of the dimer cover and~$Z_\beta=\sum_{\mathcal D_\text{Az}}\e^{\beta \mathcal E(\mathcal D_\text{Az})}$ is the \emph{partition function} with the sum running over all dimer covers of the Aztec diamond. The probability measure is a \emph{determinantal point process} and can be expressed in terms of the \emph{Kasteleyn matrix} and its inverse, see~\cite{Ken97, Ken04}. The Kasteleyn matrix~$K_{\text{Az},\beta}:\CC^{B_\text{Az}}\to \CC^{W_\text{Az}}$ is defined by
\begin{equation}\label{eq:kasteleyn_aztec}
\left(K_{\text{Az},\beta}\right)_{\mathrm w\mathrm b}=\one_{\mathrm w\mathrm b\in E_\text{Az}} \left(\nu(\mathrm w\mathrm b)\right)^\beta\sigma(\mathrm w\mathrm b),
\end{equation}
where~$\sigma$ is a \emph{Kasteleyn sign}. We follow~\cite{BB23} and take~$\sigma(e)=-1$ for all North edges~$e=\mathrm w\mathrm b$ and~$\sigma(e)=1$ otherwise.
Given edges~$e_m=\mathrm w_m\mathrm b_m\in E_\text{Az}$,~$m=1,\dots,p$, 
\begin{multline}\label{eq:prob_beta}
\PP_{\text{Az},\beta}\left[e_1,\dots,e_p\in \mathcal D_\text{Az}\right]=\left(\prod_{m=1}^p\left(K_{\text{Az},\beta}\right)_{\mathrm w_{m}\mathrm b_{m}}\right)\det \left(\left(K_{\text{Az},\beta}^{-1}\right)_{\mathrm b_m\mathrm w_{m'}}\right)_{1\leq m,m'\leq p} \\
=\det \left(\left(K_{\text{Az},\beta}\right)_{\mathrm w_{m'}\mathrm b_{m'}}\left(K_{\text{Az},\beta}^{-1}\right)_{\mathrm b_m\mathrm w_{m'}}\right)_{1\leq m,m'\leq p}.
\end{multline}

To describe the translation-invariant Gibbs measures we define~$K_{G,\beta}:\CC^{B_G}\to \CC^{W_G}$ and~$K_{G_1,\beta}:\CC^{B_{G_1}}\to \CC^{W_{G_1}}$ by
\begin{equation}\label{eq:kasteleyn_gibbs}
\left(K_{G,\beta}\right)_{\mathrm w\mathrm b}=\one_{\mathrm w\mathrm b\in E_G} \nu(\mathrm w\mathrm b)^\beta\sigma(\mathrm w\mathrm b) 
\quad \text{and} \quad
\left(K_{G_1,\beta}(z,w)\right)_{\mathrm w\mathrm b}=\one_{\mathrm w\mathrm b\in E_{G_1}} \nu(\mathrm w\mathrm b)^\beta\sigma(\mathrm w\mathrm b)\frac{w^{\mathrm w\mathrm b\wedge \gamma_u}}{z^{\mathrm w\mathrm b\wedge \gamma_v}}, 
\end{equation}
where~$\mathrm w\mathrm b\wedge \gamma$ is~$1$ if~$\mathrm w\mathrm b$ intersect the curve~$\gamma$ and~$0$ otherwise, for~$\gamma=\gamma_u,\gamma_v$. The infinite matrix~$K_{G_\beta}$ does not have a unique inverse. For each point~$(x,y)\in \RR^2$ there is an inverse, which we denote by~$K_{G,\beta,(x,y)}^{-1}$, whose matrix elements are given by
\begin{equation}
\left(K_{G,\beta,(x,y)}^{-1}\right)_{\mathrm b_{\ell m+i,k n+j}\mathrm w_{\ell m'+i',k n'+j'}}=\frac{1}{(2\pi\i)^2}\int_{|z|=\e^{\beta x}}\int_{|w|=\e^{\beta y}}\left(K_{G_1,\beta}(z,w)^{-1}\right)_{\mathrm b_{i,j}\mathrm w_{i',j'}}\frac{z^{n'-n}}{w^{m'-m}}\frac{\d w}{w}\frac{\d z}{z}.
\end{equation}
Note that we have scaled~$(x,y)$ with~$\beta$ in the paths of integration. As~$\beta\to \infty$ this will be the right scaling to consider and it will be consistent with the scaling of the amoeba in Section~\ref{sec:finite_spectral_curve}. The translation-invariant ergodic Gibbs measure~$\PP_{\beta,(x,y)}$, indexed by~$(x,y)$, is a measure on dimer covers~$\mathcal D$ of~$G$, such that, given edges~$e_m=\mathrm w_m\mathrm b_m\in E$,~$m=1,\dots,p$, the joint edge probabilities are given by 
\begin{equation}\label{eq:gibbs_beta}
\PP_{\beta,(x,y)}\left[e_1,\dots,e_p\in \mathcal D\right]=\det \left(\left(K_{G,\beta,(x,y)}\right)_{\mathrm w_{m'}\mathrm b_{m'}}\left(K_{G,\beta,(x,y)}^{-1}\right)_{\mathrm b_m\mathrm w_{m'}}\right)_{1\leq m,m'\leq p}.
\end{equation}
It is more natural to index the Gibbs measures by~$(x,y)\in \mathcal A_\beta$, where~$\mathcal A_\beta$ is the associated amoeba, as defined in Section~\ref{sec:finite_spectral_curve}. Indeed, for~$(x,y)$ and~$(x',y')$ in the same component of the complement of the amoeba, the associated Gibbs measures are equal,~$\PP_{\beta,(x,y)}=\PP_{\beta,(x',y')}$.

\subsection{The height function}\label{sec:height_function}
The \emph{height function} is defined on the faces of~$G_\text{Az}$, or at their midpoints~\eqref{eq:aztec_faces}, as follows: For two faces~$\mathrm f$ and~$\mathrm f'$ in~$G_\text{Az}$, we define the height function~$h$ for a dimer cover~$\mathcal D_\text{Az}$ so that
\begin{equation}\label{eq:height_difference_aztec}
h(\mathrm f')-h(\mathrm f)=\sum_{e=\mathrm w\mathrm b}(\pm)\left(1_{e\in \mathcal D_\text{Az}}-1_{e\in \text{North}}\right),
\end{equation}
where the sum runs over the edges intersecting a path of the dual graph to~$G_\text{Az}$ going from~$f$ to~$f'$, and the sign is~$+$ if the path intersects the edge~$e$ with the white vertex on the right, and~$-1$ if the white vertex is on the left. Here~$\text{North}$ is the set of all North edges in~$E_\text{Az}$. This determines~$h$ up to an additive constant. We fix the constant by setting~$h(0,0)=0$. See Figure~\ref{fig:height_function}. It is not difficult to see that going around a vertex does not change the value of~$h$, which shows that~$h$ is a well-defined function.

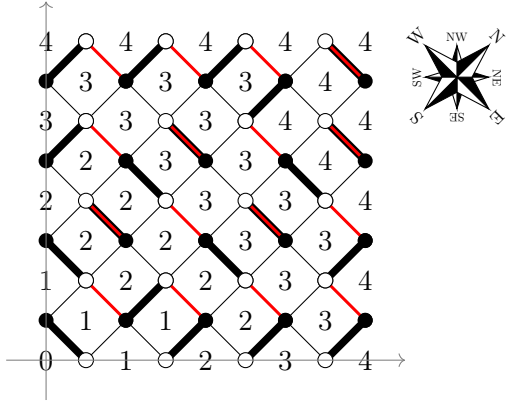
\begin{figure}
\begin{center}
\begin{tikzpicture}[scale=.75, rotate=-45]

\foreach \x in {0,1,2,3}
\foreach \y in {0,...,3}
{\draw (-3.5+\x+\y,.5-\x+\y) rectangle (-2.5+\x+\y,1.5-\x+\y);
}
\foreach \x/\y in {-4/0,-3/1,-2/2,-3/-1,0/-2}
{ 
\draw[line width = 1mm] (\x+.5,\y+.5)--(\x+.5,\y+1.5);
\draw (\x+.5,\y+.5) node[circle,fill,inner sep=2pt]{};
\draw (\x+.5,\y+1.5) node[circle,draw=black,fill=white,inner sep=2pt]{};
}

\foreach \x/\y in {1/-2,2/-1,3/0,2/1,-1/2}
{
\draw[line width = 1mm] (\x+.5,\y+.5)--(\x+.5,\y+1.5);
\draw (\x+.5,\y+.5) node[circle,draw=black,fill=white,inner sep=2pt]{};
\draw (\x+.5,\y+1.5) node[circle,fill,inner sep=2pt]{};
}

\foreach \x/\y in {-1/-3,-2/-2,0/2,0/0,-2/0}
{
\draw[line width = 1mm] (\x+.5,\y+.5)--(\x+1.5,\y+0.5);
\draw (\x+.5,\y+.5) node[circle,fill,inner sep=2pt]{};
\draw (\x+1.5,\y+.5) node[circle,draw=black,fill=white,inner sep=2pt]{};
}

\foreach \x/\y in {-2/1,0/1,0/3,-1/4,-2/-1}
{
\draw[line width = 1mm] (\x+.5,\y+.5)--(\x+1.5,\y+0.5);
}

\foreach \x/\y in {-1/-2,0/-1,1/0,2/1, -2/-1,-1/0,0/1,1/2, -3/0,-2/1,-1/2,0/3, -4/1,-3/2,-2/3,-1/4}
{
\draw[line width = .4mm, red] (\x+.5,\y+.5)--(\x+1.5,\y+0.5);
\draw (\x+.5,\y+.5) node[circle,draw=black,fill=white,inner sep=2pt]{};
\draw (\x+1.5,\y+.5) node[circle,fill,inner sep=2pt]{};
}

\foreach \x in {0,1,2,3,4}
{\draw (\x,\x-3) node {\x};
\draw (-\x,5-\x) node {4};
}
\foreach \y in {1,2,3}
{\draw (-\y,\y-3) node {\y};
\draw (\y,5-\y) node {4};
}
\draw (0,-2) node {1};
\draw (0,-1) node {2};
\draw (0,0) node {2};
\draw (0,1) node {3};
\draw (0,2) node {3};
\draw (0,3) node {4};
\draw (0,4) node {4};

\draw (-1,-1) node {2};
\draw (-1,0) node {2};
\draw (-1,1) node {3};
\draw (-1,2) node {3};
\draw (-1,3) node {3};

\draw (1,-1) node {1};
\draw (1,0) node {2};
\draw (1,1) node {3};
\draw (1,2) node {3};
\draw (1,3) node {4};

\draw (-2,0) node {2};
\draw (-2,1) node {3};
\draw (-2,2) node {3};

\draw (2,0) node {2};
\draw (2,1) node {3};
\draw (2,2) node {3};

\draw (-3,1) node {3};

\draw (3,1) node {3};

 \draw [gray,->](-.5,-3.5)--(4.5,1.5);
 \draw [gray,->](.5,-3.5)--(-4.5,1.5);
 
 \pic[scale=.2, rotate=-45] at (1.6,5.7) {compass};
\end{tikzpicture}
\end{center}
\caption{A dimer cover of the Aztec diamond of size~$4$ together with the values of the height function. The black edges are the dimers in the dimer cover, and the red edges are the reference set~$\text{North}$. The union of the dimers and the reference set form the level curves of the height function. The coordinate axes are the coordinate axes of~$\RR^2$ and show how the Aztec diamond is embedded in~$\RR^2$. \label{fig:height_function}}
\end{figure}

\subsection{The characteristic polynomial and the tropical surface tension}\label{sec:characteristic_polynomial}
Given a dimer cover~$\mathcal D$ of~$G_1$, we define its \emph{energy} and \emph{slope} by 
\begin{equation}\label{eq:energy_slope}
\mathcal E(\mathcal D)=\sum_{e\in \mathcal D} \log \nu(e)\in \RR_{> 0} \quad \text{and} \quad \mu(\mathcal D)=\sum_{e\in \mathcal D}(-e\wedge \gamma_u,e\wedge \gamma_v)\in \ZZ^2,
\end{equation}
where the notation~$e\wedge \gamma$ is as in Section~\ref{sec:graph}. We write~$\mu(\mathcal D)=(\mu_1(\mathcal D),\mu_2(\mathcal D))$.

Following~\cite{KOS06}, we define the \emph{characteristic polynomial} as
\begin{equation}\label{eq:characteristic_polynomial_def}
P_\beta(z,w)=\det K_{G_1}(z,w).
\end{equation}
This is a polynomial in~$w$ and~$z^{-1}$. Using Leibniz's formula for determinants, we can expand~$P_\beta$ as a sum over dimer covers of~$G_1$. Indeed, if we let~$s=s(\mathcal D)\in S_{k\ell}$ be the permutation in the~$k\ell$th symmetric group corresponding to the dimer cover~$\mathcal D$, we get that
\begin{equation}\label{eq:characteristic_polynomial_finite}
P_\beta(z,w)=\sum_{\mathcal D}\sgn(s(\mathcal D))\left(\prod_{e\in \mathcal D}\sigma(e)\right)\e^{\beta \mathcal E(\mathcal D)}z^{\mu_1(\mathcal D)}w^{\mu_2(\mathcal D)}.
\end{equation}
The sign~$\sgn(s(\mathcal D))\prod_{e\in \mathcal D}\sigma(e)$ can be expressed in terms of~$\mu(\mathcal D)$, see~\cite[Proposition 3.1]{KOS06}, but it will not be necessary for our purposes.

Given the characteristic polynomial, we define the Newton polygon~$N(P)$ as the convex hull of all possible slopes of dimer covers of~$G_1$. That is, we set
\begin{equation}
\mathcal N=\left\{\mu=(\mu_1,\mu_2)\in \ZZ^2:a_\mu\neq 0, \,P(z,w)=\sum_{\mu}a_\mu z^{\mu_1}w^{\mu_2}\right\},
\end{equation}
and define~$N(P)$ as the convex hull of~$\mathcal N$. Explicitly,~$N(P)$ is the rectangle with vertices~$(0,0)$, $(-\ell,0)$,~$(-\ell,k)$ and~$(0,k)$. We are particularly interested in a specific subdivision of the Newton polygon which will be discussed later, see Section~\ref{sec:tropical_amoeba}. Moreover, it will be convenient to define the disjoint sets~$\mathcal F$,~$ \mathcal Q$,~$\mathcal S$, where~$\mathcal F$ consists of the corners of~$\mathcal N$,~$\mathcal Q=\partial \mathcal N\backslash \mathcal F$, and~$\mathcal S=\mathcal N^\circ$, see the right image in Figure~\ref{fig:amoeba_tropical_amoeba}. In particular,
\begin{equation}
\mathcal N=\mathcal F\sqcup\mathcal Q\sqcup \mathcal S.
\end{equation}

The subdivision of~$N(P)$ is defined through the maximal energy for each slope~$\mu \in \mathcal N$. We set
\begin{equation}\label{eq:tropical_surface_tension}
\mathcal E^*(\mu)=\max_{\mu(\mathcal D)=\mu}\{\mathcal E(\mathcal D)\},
\end{equation}
where the maximum is taken over all dimer covers of~$G_1$ with slope~$\mu$. We will refer to~$\mathcal E^*$ as the \emph{tropical surface tension}. We justify the name by showing in Appendix~\ref{appendix:surface_tension} that the surface tension for the finite~$\beta$ model tends to~$\mathcal E^*$ as~$\beta \to \infty$.

\subsection{The spectral curve}\label{sec:finite_spectral_curve}
In this section, we recall the definition of the spectral curve and its amoeba. This is a shortened version of~\cite[Section 3.2]{BB23}, and we refer to that section and references therein for more details.

Given the characteristic polynomial~$P_\beta$, the \emph{spectral curve}~$\mathcal R_\beta^\circ$ is the zero set in~$\left(\CC^*\right)^2$, where~$\CC^*=\CC\backslash \{0\}$,
\begin{equation}
\mathcal R_\beta^\circ=\{(z,w)\in \left(\CC^*\right)^2:P_\beta(z,w)=0\}.
\end{equation}
The spectral curve is naturally embedded in the toric variety associated with the set~$\mathcal N$, see, \emph{e.g.},~\cite{MR00}. Indeed, let 
\begin{equation}
\CC T_{\mathcal N}^\circ=\{(z^{0}w^{0}:z^{-1}w^0:\dots:z^{-\ell}w^k):(z,w)\in (\CC^*)^2\} \subset \CC P^{|\mathcal N|-1}, 
\end{equation}
where~$\CC P^{|\mathcal N|-1}$ is the~$(|\mathcal N|-1)$-dimensional projective space over~$\CC$, and let~$\CC T_{\mathcal N}$ be its closure. We embed the group~$(\CC^*)^2$ into~$\CC T_{\mathcal N}$, so that~$\mathcal R_\beta^\circ \subset \CC T_{\mathcal N}$. The closure of~$\mathcal R_\beta^\circ$ in~$\CC T_{\mathcal N}$ is denoted by~$\mathcal R_\beta$. We denote the points at infinity, known as \emph{angles}, by~$L(\mathcal R_\beta)=\mathcal R_\beta\backslash \left(\CC^*\right)^2$. Before we describe this set in more detail, we introduce the amoeba of the curve.

 \begin{figure}[t]
 \begin{center}
 \begin{subfigure}[c]{0.3\textwidth}
     \begin{tikzpicture}[scale=1]
    \draw (0,0) node {\includegraphics[angle=180, scale=.4]{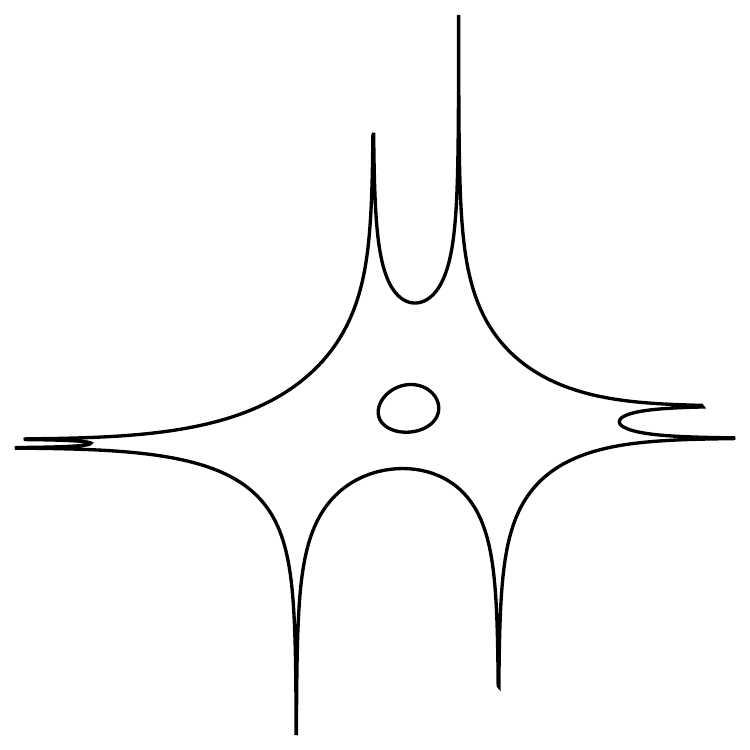}};
    \draw (-.14,.26) node {\includegraphics[scale=.281]{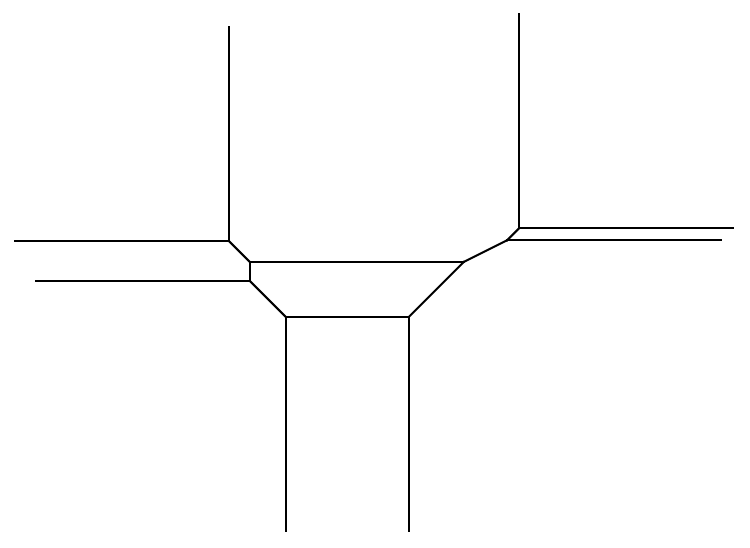}};
	\draw (1.2,1.5) node {\footnotesize{$\mathcal A_{a,\mu_6}$}};
	\draw (-.2,1.5) node {\footnotesize{$\mathcal A_{a,\mu_7}$}};
	\draw (-1.5,1.5) node {\footnotesize{$\mathcal A_{a,\mu_8}$}};
	\draw (2.1,.1) node {\footnotesize{$\mathcal A_{a,\mu_3}$}};
	\draw (-.44,.58) node {\footnotesize{$\mathcal A_{a,\mu_4}$}};
	\draw (-2.2,.0) node {\footnotesize{$\mathcal A_{a,\mu_5}$}};
	\draw (-1.2,-.5) node {\footnotesize{$\mathcal A_{a,\mu_2}$}};
	\draw (.0,-2.0) node {\footnotesize{$\mathcal A_{a,\mu_1}$}};
	\draw (.8,-.5) node {\footnotesize{$\mathcal A_{a,\mu_0}$}};
	\draw (-2.2,.8) node {\footnotesize{$L_1(\mathcal X)$}};
	\draw (-1.3,-2.2) node {\footnotesize{$L_2(\mathcal X)$}};	
	\draw (2.1,.8) node {\footnotesize{$L_3(\mathcal X)$}};
	\draw (1.2,2.2) node {\footnotesize{$L_4(\mathcal X)$}};	
	\draw [-latex, blue, thick] (-.2,.4)--(-.21,.405);
	\draw [-latex, blue, thick] (.75,.85)--(.74,.86);
  \end{tikzpicture}
  \end{subfigure}
  \qquad
  \qquad
  \qquad
\begin{subfigure}[c]{0.2\textwidth}
\begin{tikzpicture}[scale=1]
 \colorlet{Frozen}{orange}
 \colorlet{Qfrozen}{red}
 \colorlet{Smooth}{green}

  \draw (0,0) rectangle (2,2);

    \filldraw[Frozen] (2,0) circle (3pt);
    \draw (2,0) node[above right] {$\mu_0$}; 
    \filldraw[Qfrozen] (1,0) circle (3pt);
    \draw (1,0) node[above right] {$\mu_1$}; 
    \filldraw[Frozen] (0,0) circle (3pt);
    \draw (0,0) node[above right] {$\mu_2$}; 
    \filldraw[Qfrozen] (2,1) circle (3pt);
    \draw (2,1) node[above right] {$\mu_3$}; 
    \filldraw[Smooth] (1,1) circle (3pt);
    \draw (1,1) node[above right] {$\mu_4$}; 
    \filldraw[Qfrozen] (0,1) circle (3pt);
    \draw (0,1) node[above right] {$\mu_5$};     
    \filldraw[Frozen] (2,2) circle (3pt);
    \draw (2,2) node[above right] {$\mu_6$}; 
    \filldraw[Qfrozen] (1,2) circle (3pt);
    \draw (1,2) node[above right] {$\mu_7$}; 
    \filldraw[Frozen] (0,2) circle (3pt);
    \draw (0,2) node[above right] {$\mu_8$}; 

	\draw[Frozen] (3.5,2) node {$\mathcal F$}; 
	\draw[Qfrozen] (3.5,1) node {$\mathcal Q$};    
	\draw[Smooth] (3.5,0) node {$\mathcal S$};
\end{tikzpicture}     
\end{subfigure}
\end{center}
\caption{Left: An example of an amoeba~$\mathcal A_\beta$ and the corresponding tropical curve~$\mathcal A_t$, with~$k=2$ and~$\ell=2$. The subsets~$\mathcal A_{\beta,\mu}$,~$\mathcal A_{t,\mu}$,~$L_i(\mathcal R_\beta)$ and~$L_i(\mathcal A_t)$),~$i=1,\dots,4$, are indicated. In the image,~$a=\beta$ or~$a=t$ and~$\mathcal X=\mathcal R_\beta$ or~$\mathcal X=\mathcal A_t$, depending on whether it corresponds to the amoeba or the tropical curve. 
Right: The Newton polygon~$N(P)$ with the partition~$\mathcal N=\mathcal F\sqcup \mathcal Q\sqcup \mathcal S$ highlighted. In this example,~$\mathcal F=\{\mu_0,\mu_2,\mu_6,\mu_8\}$,~$\mathcal Q=\{\mu_1,\mu_3,\mu_5,\mu_7\}$ and~$\mathcal S=\{\mu_4\}$. 
In both images, the slopes~$\mu\in \mathcal N$ are given by~$\mu_{3j+i}=(-i,j)$ for~$i,j=0,1,2$. 
\label{fig:amoeba_tropical_amoeba}}
\end{figure}

Given the spectral curve, its amoeba is defined as follows. The function~$\Log_\beta:(\CC^*)^2\mapsto \RR^2$ is defined by 
\begin{equation}\label{eq:log_map}
\Log_\beta(z,w)=\beta^{-1}(\log|z|,\log|w|)=(\log_{e^\beta}|z|,\log_{e^\beta}|w|).
\end{equation}
We write~$(x,y)=(\log_{\e^\beta}|z|,\log_{\e^\beta}|w|)$. The \emph{amoeba}~$\mathcal A_\beta$ is the image of~$\mathcal R_\beta^\circ$ under the map~$\Log_\beta$,~$\mathcal A_\beta=\Log(\mathcal R_\beta^\circ)$. The amoeba defined here is scaled by the factor~$\beta^{-1}$ compared with the definition in~\cite{BB23}, which is consistent with the tropical literature. See the left image of Figure~\ref{fig:amoeba_tropical_amoeba} for an example of an amoeba. 

It was proven in~\cite[Theorem 5.1]{KOS06} that~$\mathcal R_\beta^\circ$ is a (possibly singular) \emph{Harnack curve}. The Harnack curves can be characterized in terms of their amoebas. Namely, the curve~$\mathcal R_\beta^\circ$ is a (possibly singular) Harnack curve if and only if the map~$\Log_\beta|_{\mathcal R_\beta^\circ}:\mathcal R_\beta^\circ \to \mathcal A_\beta$ is at most~$2$-to-$1$,~\cite[Theorem 1]{MR00}. Since the polynomial~$P_\beta$ is real, we have~$(z,w)\in \mathcal R_\beta^\circ$ if and only if~$(\bar z,\bar w)\in \mathcal R_\beta^\circ$, implying that the map is~$2$-to-$1$ away from the real part of~$\mathcal R_\beta^\circ$. Moreover, the map is~$1$-to-$1$ on the non-singular part of the real part of~$\mathcal R_\beta^\circ$, which equals~$\mathcal R_\beta^\circ\cap \Log_\beta^{-1}(\partial \mathcal A_\beta)$, see~\cite[Theorem 1 and Corollary 3]{MR00}. In particular, we may think of~$\mathcal R_\beta^\circ$ as the union of two copies of~$\mathcal A_\beta$ joined together along their boundaries.

The fact that~$\mathcal R_\beta^\circ$ is a Harnack curve implies that the real part of~$\mathcal R_\beta$ consists of~$(k-1)(\ell-1)+1$ disjoint topological circles, denoted by~$A_i$,~$i=0,\dots,(k-1)(\ell-1)$ (some of them might be points) where the angles~$L(\mathcal R_\beta)$ is a finite set contained in one of the real components, say~$A_0$. These points are grouped in four disjoint families, which we denote by~$L_i(\mathcal R_\beta)$,~$i=1,\dots,4$, such that when traversing~$A_0$, they appear one group at a time in a specific order. More precisely, the set~$L_i(\mathcal R_\beta)$ corresponds to the points for which~$z=0$ if~$i=1$,~$w=0$ if~$i=2$,~$z=\infty$ if~$i=3$, and~$w=\infty$ if~$i=4$.
\begin{remark}\label{rem:angles}
In~\cite{BB23} the angles where denoted by~$p_{0,j}$,~$q_{0,j}$,~$p_{\infty,j}$ and~$q_{\infty,j}$, and the sets~$L_i(\mathcal R_\beta)$ were such that
\begin{equation}
p_{0,j}\in L_1(\mathcal R_\beta), \quad q_{0,j}\in L_2(\mathcal R_\beta), \quad p_{\infty,j}\in L_3(\mathcal R_\beta), \quad q_{\infty,j}\in L_4(\mathcal R_\beta).
\end{equation}
Moreover
\begin{equation}\label{eq:angles_1}
q_{0,i+1}=\left((-1)^k\prod_{j=0}^{k-1}\frac{\nu(\mathrm b_{i,j+1}\mathrm w_{i,j})^\beta}{\nu(\mathrm b_{i,j}\mathrm w_{i,j})^\beta},0\right), \quad q_{\infty,i+1}=\left(\prod_{j=0}^{k-1}\nu(\mathrm b_{i+1,j+1}\mathrm w_{i,j})^\beta,\infty\right), \quad i=1,\dots,\ell,
\end{equation}
and
\begin{equation}\label{eq:angles_2}
p_{0,j+1}=\left(0,(-1)^\ell\prod_{i=0}^{\ell-1}\frac{\nu(\mathrm b_{i,j+1}\mathrm w_{i,j})^\beta}{\nu(\mathrm b_{i+1,j+1}\mathrm w_{i,j})^\beta}\right), \quad p_{\infty,j+1}=\left(\infty,\prod_{i=0}^{\ell-1}\nu(\mathrm b_{i,j}\mathrm w_{i,j})^\beta\right), \quad j=1,\dots,k.
\end{equation}
See~\cite[(3.3) and (3.4)]{BB23}.
\end{remark}

There is an injective map from the set of components of the complement of~$\mathcal A_\beta$ to~$\mathcal N$ see~\cite{FPT00, KOS06}. This map associates each component to the slope of the \emph{Ronkin function} (see Appendix~\ref{appendix:surface_tension} for a definition), which lies in~$\mathcal N$. Generically this map is a bijection. We denote the boundary of the component of the complement of~$\mathcal A_\beta$ corresponding to~$\mu\in\mathcal N$, via the above-mentioned map, by~$\mathcal A_{\beta,\mu}$. We orient~$\mathcal A_{\beta,\mu}$ in positive direction, seen as a closed loop in~$\RR^2$, if~$\mu$ is in the interior of~$\mathcal N$. For~$\mu\in \partial A$, we orient~$\mathcal A_{\beta,\mu}$ so that~$\cup_{\mu\in \partial A}\mathcal A_{\beta,\mu}$ is positively oriented. See the left image of Figure~\ref{fig:amoeba_tropical_amoeba}.

The real components~$A_i$ of~$\mathcal R_\beta$, that are not points, are mapped under the~$\Log_\beta$ map to~$\mathcal A_{\beta,\mu}$ with~$\mu$ in the interior of~$\mathcal N$ if~$i\neq 0$, and~$A_0\backslash L(\mathcal R_\beta)$ is mapped to the union~$\cup_{\mu\in \partial A} \mathcal A_{\beta,\mu}$. The union~$\cup_{\mu\in \partial A} \mathcal A_{\beta,\mu}$ consists of simple unbounded curves, with so-called tentacles going to infinity, forming the outer boundary of~$\mathcal A_\beta$. Each tentacle corresponds in this way to an angle. Note that the formulas in Remark~\ref{rem:angles} show that~$\Log_\beta(p)$ with~$p\in L(\mathcal R_\beta)$ is independent of~$\beta$. Going forward, we will instead denote the real components of~$\mathcal R_\beta$ by~$A_{\beta,\mu}=\Log_\beta^{-1}(\mathcal A_{\beta,\mu})$. The map from~$\mathcal A_{\beta,\mu}$ to~$A_{\beta,\mu}$ fixes an orientation on~$A_{\beta,\mu}$.

We end this section by recalling the notion of \emph{imaginary normalized differentials}. An imaginary normalized differential~$\omega$ on~$\mathcal R_\beta$ is a meromorphic differential~$1$-form, with at most simple poles at the angles and no other poles, such that
\begin{equation}
\re \left(\int_\gamma \omega\right)=0
\end{equation} 
for all simple closed curves~$\gamma$ in~$\mathcal R_\beta$. For any set of real numbers~$r_p$,~$p\in L(\mathcal R_\beta)$, with~$\sum_{p\in L(\mathcal R_\beta)} r_p=0$, there exists a unique imaginary normalized differential with residues~$r_p$ at the angle~$p\in L(\mathcal R_\beta)$, see, \emph{e.g.},~\cite[Theorem 2.3]{Lan20} or~\cite[Lemma 2.1]{Kri14}. This type of differential was probably known already to Maxwell, as discussed in~\cite{Kri14}.

\subsection{Previous results in the finite temperature regime}\label{sec:finite_beta_results}
In this section, we revisit the relevant objects and results from~\cite{BB23}. The results were obtained under the generic assumption that the component~$\mathcal A_{\beta,\mu}$ is nontrivial for all~$\mu\in\mathcal N$. Furthermore, it was assumed that the asymptotes of the tentacles of the amoeba corresponding to~$L_2(\mathcal R_\beta)$ lie to the right of the line~$x=0$, while those corresponding to~$L_4(\mathcal R_\beta)$ lie to the left. These assumptions were imposed for technical reasons and were recently removed in~\cite{BdT24} and~\cite{BB24}. To maintain consistency with the notation of~\cite{BB23}, we assume throughout this section that a component~$\mathcal A_{\beta,\mu}$ exists for all~$\mu\in\mathcal N$. The assumption that the line~$x=0$ separates the angles, however, can be removed without making any edits in the discussion and will not be assumed below.

A central object used to study the finite temperature regime in~\cite{BB23} is the action function~$F_\beta$. For~$(u,v)\in D_\text{Az}$, the differential of the action function~$\d F_\beta$ is a meromorphic~$1$-form on~$\mathcal R_\beta$ given by
\begin{equation}\label{eq:action_function_beta}
\d F_\beta(z,w;u,v)=k(1+\ell u)\d \log w-\ell(1+kv)\d \log z-\d \log f_\beta(z,w),
\end{equation}
where~$\d \log f_\beta$ is a (unique) imaginary normalized differential with residue~$-\ell$ at all points in~$L_1(\mathcal R_\beta)$, residue~$k$ at all points in~$L_2(\mathcal R_\beta)$, and no poles in~$L_3(\mathcal R_\beta)$ and~$L_4(\mathcal R_\beta)$.
\begin{remark}
In~\cite{BB23} the action function and its differential were defined in terms of prime forms, or theta functions, and it was proven that~$\d F_\beta$ is an imaginary normalized differential. Taking~\eqref{eq:action_function_beta} as its definition here aligns better with the tropical limit, as we will see in Section~\ref{sec:tropical_limit}. 
\end{remark}

The macroscopic regions in the finite temperature regime are defined via the zeros of~$\d F_\beta$. We denote the number of simple zeros of~$\d F_\beta$ in~$A_{\beta,\mu}$ by~$Z_{\beta,\mu}$. For~$(u,v)\in D_\text{Az}$,~$\d F_\beta$ has~$2k\ell$ zeros and 
\begin{equation}\label{eq:zeros_dfbeta}
Z_{\beta,\mu}\in \{0,2\}, \text{ if}\,\, \mu \in \mathcal F,  \quad Z_{\beta,\mu} \in \{1,3\}, \text{ if}\,\, \mu \in \mathcal Q,  \quad Z_{\beta,\mu} \in \{2,4\}, \text{ if}\,\, \mu \in \mathcal S.
\end{equation}
See~\cite[Lemma 4.6]{BB23}. Note that if~$Z_{\beta,\mu}$ attains its minimal possible value for all~$\mu$, then two of the zeros of~$\d F_\beta$ are double zeros or two zeros are not contained in any of the real components. 
\begin{definition}\label{def:finite_region}
Let~$(u,v)\in D_\text{Az}$. We say that~$(u,v)$ is in the frozen phase corresponding to~$\mu\in \mathcal F\cup \mathcal Q$, if~$Z_{\beta,\mu}=2$ and~$\mu\in \mathcal F$ or if~$Z_{\beta,\mu}=3$ and~$\mu \in \mathcal Q$. The point~$(u,v)$ is in the smooth phase corresponding to~$\mu \in \mathcal S$ if~$Z_{\beta,\mu}=4$. If~$(u,v)$ is such that~$\d F_\beta$ has non-real zeros (not contained in the real part of~$\mathcal R_\beta$), we say that~$(u,v)$ is in the rough region. If~$\d F_\beta$ has zeros of higher order, we say that~$(u,v)$ is in the arctic curve. We denote the region in~$D_\text{Az}$ consisting of the points in the phase corresponding to~$\mu$ by~$R_{\beta,\mu}$.
\end{definition}

It was proven in~\cite{BB23} that the closure of the rough region in~$D_\text{Az}$ is homeomorphic to the amoeba~$\mathcal A_\beta$, and the homeomorphism was denoted by~$\Log_\beta \circ \,\Omega_\beta$, see~\cite[Theorem 4.11]{BB23}.
   
The limit shape, the large~$N$ limit of the normalized height function, of the periodically weighted Aztec diamond, that is, under the probability measure~\eqref{eq:model_beta}, was explicitly obtained in~\cite{BB23}. The limit shape~$\bar h_\beta$ of the dimer model can be described in terms of the action function~\eqref{eq:action_function_beta}. Indeed, we set
\begin{equation}\label{eq:limit_shape}
\bar h_\beta(u,v)=\frac{1}{k\ell}\frac{1}{2\pi \i}\int_{\gamma_{u,v}}\d F_\beta+1,
\end{equation}  
where~$\gamma_{u,v}$ is a symmetric (with respect to conjugation) simple curve in~$\mathcal R_\beta$ defined as follows. By the symmetry, it is sufficient to define the curve in the amoeba. The curve in the amoeba is the oriented curve going from~$\Log_\beta\circ \, \Omega_\beta(u,v)$ to any point in the component~$\mathcal A_{\beta,\mu_0}$ with~$\mu_0=(0,k)\in \mathcal F$. In particular, if~$(u,v)\in R_{\beta,\mu}$ for some~$\mu$, then the curve in the amoeba is a curve from~$\mathcal A_{\beta,\mu}$ to~$\mathcal A_{\beta,\mu_0}$, and~$\gamma_{u,v}$ is a simple loop in~$\mathcal R_\beta$ crossing the real part of~$\mathcal R_\beta$ at~$A_{\beta,\mu_0}$ and~$A_{\beta,\mu}$. See Figure~\ref{fig:curves_height_function} for examples of the curve in the amoeba.

 \begin{figure}[t]
 \begin{center}
    \begin{tikzpicture}[scale=1]
  \tikzset{->-/.style={decoration={
  markings, mark=at position .5 with {\arrow{stealth}}},postaction={decorate}}}
    \tikzset{-->-/.style={decoration={
  markings, mark=at position .7 with {\arrow{stealth}}},postaction={decorate}}}
    \draw (0,0) node {\includegraphics[angle=180, scale=.4]{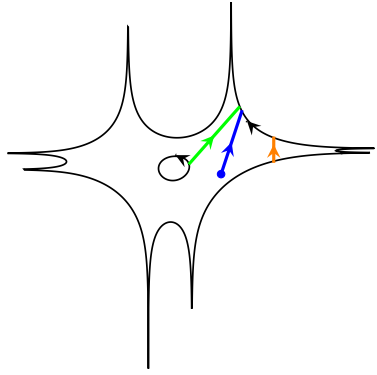}};
	\draw [->-, very thick] (-.2,.4)--(-.21,.405);
	\draw [->-, very thick] (.75,.85)--(.74,.86);
	
	\draw [->-, green, very thick] (-.03,.28)--(.65,1.05);
	\draw [->-, blue, very thick] (.4,.15)--(.68,1.);
	\draw (.4,.15) node[circle,draw=blue,fill=blue,inner sep=1pt]{};
	\draw [-->-, orange, very thick] (1.1,.3)--(1.1,.65);	
  \end{tikzpicture}
 \end{center}
\caption{Examples of the curve in the amoeba defining the contour~$\gamma_{u,v}$ in the definition of the limit shape. The curves corresponds to~$(u,v)\in R_{\beta,\mu}$ for~$\mu=(-1,1)$ (green),~$\mu=(0,0)$ (orange) and~$(u,v)$ in the rough region (blue).
\label{fig:curves_height_function}}
\end{figure}

\begin{theorem}[Proposition 4.22 of~\cite{BB23}]\label{thm:finite_limit_shape}
Let~$m$,~$n$,~$u_N$ and~$v_N$ be as in Section~\ref{sec:graph} and assume that~$(u_N,v_N)\to(u,v)\in D_\text{Az}$, as~$N\to \infty$, with~$(u,v)$ in the frozen, rough or smooth phase. Then
\begin{equation}
\lim_{N\to \infty}\frac{1}{k\ell N}\,\EE\left[h(2(\ell m+i),2(k n+j))\right]=\bar h_\beta(u,v),
\end{equation}
for~$i=0,\dots,\ell-1$,~$j=0,\dots,k-1$, where~$\bar h_\beta$ is given in~\eqref{eq:limit_shape} and~$h$ in Section~\ref{sec:height_function}. \end{theorem}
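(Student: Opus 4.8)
This is Proposition~4.22 of~\cite{BB23}, and a natural self-contained route is a saddle point analysis of the inverse Kasteleyn matrix combined with a soft compactness argument. The plan is to first reduce the expected height to single-edge statistics. By the definition in Section~\ref{sec:height_function}, $h(\mathrm f')-h(0,0)$ is a signed sum, over the edges $e=\mathrm w\mathrm b$ crossed by a dual path from the face at $(0,0)$ to $\mathrm f'$, of $\one_{e\in\mathcal D_\text{Az}}-\one_{e\in\text{North}}$. Taking expectations and applying~\eqref{eq:prob_beta} with $p=1$, each term becomes $(K_{\text{Az},\beta})_{\mathrm w\mathrm b}(K_{\text{Az},\beta}^{-1})_{\mathrm b\mathrm w}-\one_{e\in\text{North}}$, so the whole problem reduces to the asymptotics of the diagonal of $K_{\text{Az},\beta}\,K_{\text{Az},\beta}^{-1}$ along the path, normalized by $1/(k\ell N)$.

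Next I would feed in the double contour integral representation of $(K_{\text{Az},\beta}^{-1})_{\mathrm b\mathrm w}$, whose integrand is built from $(K_{G_1}(z,w))^{-1}$ together with a ratio of powers $z^\bullet w^\bullet$ recording the positions of $\mathrm b$ and $\mathrm w$. Under the scaling $u_N=-m/(k\ell N)$, $v_N=-n/(k\ell N)$ these powers coalesce into $\e^{k\ell N\,F_\beta(z,w;u,v)}$, whose exponent is exactly the action function~\eqref{eq:action_function_beta}; this is precisely why it is convenient to \emph{define} $F_\beta$ by~\eqref{eq:action_function_beta}. The large-$N$ behaviour is then governed by steepest descent with phase $F_\beta$, whose saddles are the zeros of $\d F_\beta$. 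By~\eqref{eq:zeros_dfbeta} their number and reality are dictated by the phase of $(u,v)$: in the frozen and smooth phases the governing saddle is the real critical point $\Omega_\beta(u,v)\in A_{\beta,\mu}$, whereas in the rough region it is a conjugate pair off the real locus. Reading off the residue of the kernel at the relevant saddle yields the limiting local edge densities, i.e.\ the local slope of the height function at $(u,v)$.

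To identify these data with the stated contour integral, I would use that the height function is uniformly Lipschitz, so $\tfrac{1}{k\ell N}\EE[h]$ is precompact in the uniform topology on $\overline{D_\text{Az}}$, and any subsequential limit is a Lipschitz function whose gradient, by the previous step, equals the local slope determined by $\Omega_\beta(u,v)$. It therefore suffices to check that the right-hand side $\bar h_\beta$ of~\eqref{eq:limit_shape} has the same gradient and the same value at one boundary face. Writing $G=\bar h_\beta$ and differentiating under the integral, the first-order variation from moving the endpoint $\Omega_\beta(u,v)$ of $\gamma_{u,v}$ contributes nothing because $\d F_\beta$ vanishes there (the endpoint is a zero of $\d F_\beta$), leaving only the explicit $(u,v)$-dependence of~\eqref{eq:action_function_beta}:
\begin{equation}
\partial_u G=\frac{1}{2\pi\i}\int_{\gamma_{u,v}}\d\log w,\qquad \partial_v G=-\frac{1}{2\pi\i}\int_{\gamma_{u,v}}\d\log z.
\end{equation}
These are the winding numbers of $w$ and $z$ along the loop $\gamma_{u,v}$, which are exactly the components of the slope read off from $\Omega_\beta(u,v)$; hence $\nabla G$ matches the limiting gradient. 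Fixing the additive constant (the $+1$) by evaluating at a corner face where $h$ is deterministic pins down the limit, and uniqueness of the Lipschitz function with the prescribed gradient and boundary value forces every subsequential limit to equal $\bar h_\beta$.

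The main obstacle is the saddle point analysis of $K_{\text{Az},\beta}^{-1}$: constructing the steepest descent contours globally on $\mathcal R_\beta$ and controlling them \emph{uniformly} across the frozen, rough, and smooth phases, where the saddle structure changes (real versus complex critical points, and transitions at the arctic curve). The Harnack property of $\mathcal R_\beta$, ensuring $\Log_\beta$ is at most $2$-to-$1$, is what makes $\Omega_\beta(u,v)$ and the homology class of $\gamma_{u,v}$ well defined, and the bookkeeping that turns the discrete path sum into the contour integral over $\gamma_{u,v}$ with the correct orientation is the delicate part.
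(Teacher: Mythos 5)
Your overall architecture --- reduce $\EE[h]$ to one-edge probabilities via~\eqref{eq:prob_beta} with $p=1$, analyze $K_{\text{Az},\beta}^{-1}$ by steepest descent with phase governed by $F_\beta$, then pin down the limit by matching gradients and one boundary value --- is indeed the strategy of the proof in~\cite{BB23}, which the present paper cites rather than reproves. Your gradient identification is also essentially correct: in the frozen/smooth phases $\gamma_{u,v}$ is closed so only the explicit $(u,v)$-dependence of~\eqref{eq:action_function_beta} contributes, and in the rough phase the endpoint variation vanishes because the endpoints $\Omega_\beta(u,v)$ and $\overline{\Omega_\beta(u,v)}$ are precisely the non-real zeros of $\d F_\beta$ (though in that phase the resulting integrals are not integer winding numbers; they reproduce the non-integer Gibbs slope).

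There is, however, a genuine gap at the heart of the argument: the claimed integral representation of $(K_{\text{Az},\beta}^{-1})_{\mathrm b\mathrm w}$. An integrand built from $(K_{G_1}(z,w))^{-1}$ times a ratio of powers $z^\bullet w^\bullet$ is exactly the translation-invariant full-plane (Gibbs) inverse of Section~\ref{sec:measures}: it depends only on the displacement between $\mathrm b$ and $\mathrm w$, not on their positions relative to the boundary, so it cannot be the inverse Kasteleyn matrix of the finite Aztec diamond --- if it were, the limiting edge densities would be independent of the macroscopic location $(u,v)$ and there would be no phases and no arctic curve at all. Relatedly, rescaling the powers can only ever produce the linear part $k(1+\ell u)\log w-\ell(1+kv)\log z$ of the action; the third term $-\log f_\beta(z,w)$, whose differential is the imaginary normalized differential with residues $-\ell$ on $L_1(\mathcal R_\beta)$, $k$ on $L_2(\mathcal R_\beta)$, and $0$ on $L_3(\mathcal R_\beta)\cup L_4(\mathcal R_\beta)$, encodes the Aztec boundary conditions and cannot arise from displacement powers. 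In~\cite{BB23} this term emerges from a matrix Wiener--Hopf factorization inside a nontrivial double-integral formula for the Aztec kernel (obtained through the non-intersecting-path/Schur-process structure of the model), and deriving that formula, together with making the steepest descent uniform across the phase transitions, is the main technical content of the proof. Your sketch assumes this away, so as written the saddle point analysis starts from the wrong kernel with the wrong phase, and the subsequent localization at $\Omega_\beta(u,v)$ has nothing to act on.
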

\begin{remark}
It is known that the scaled height function converges in probability to a deterministic limit, cf.~\cite{CKP00, Gor21, Kuc17}. The function~$\bar h_\beta$ in the previous theorem is therefore the limit shape. This statement was reproved with relaxed assumptions in~\cite[Theorem 2]{BB24}.
\end{remark}

In addition to the limit shape result, the convergence of local fluctuations was obtained as well (see~\cite[Proposition 28]{BdT24} for the same statement with relaxed assumptions). 
\begin{theorem}[Corollary 4.26 of~\cite{BB23}]\label{thm:finite_local_limit}
Let~$(u,v)\in D_\text{Az}$, not in the arctic curve, and let~$(x,y)=\Log_\beta\circ\,\Omega_\beta$ if~$(u,v)$ is in the rough region, and let~$(x,y)$ be in the interior of~$\mathcal A_{\beta,\mu}$ if~$(u,v)\in R_{\beta,\mu}$,~$\mu\in\mathcal N$. Let 
\begin{equation}
e_r^{(N)}=\mathrm{b}_{\ell m_r+i_r,kn_r+j_r}\mathrm{w}_{\ell m_r'+i_r',kn_r'+j_r'}, \quad r=1,\dots,p,
\end{equation}
be edges in~$G_\text{Az}$, with
\begin{equation}
 m_r=-k\ell N u_N+\kappa_r, \quad m_r'=-k\ell Nu_N+\kappa_r',
\quad n_r=-k\ell N v_N+\zeta_r, \quad n_r'=-k\ell Nv_N+\zeta_r',
\end{equation}
with~$(\kappa_r,\zeta_r),(\kappa_r',\zeta_r')\in \ZZ^2$, and set~$e_r=\mathrm{b}_{\ell \kappa_r+i_r,k\zeta_r+j_r}\mathrm{w}_{\ell \kappa_r'+i_r',k\zeta_r'+j_r'}$. Assume that~$D_\text{Az}\ni(u_N,u_N)\to(u,v)$, as~$N\to \infty$. Then
\begin{equation}
\lim_{N\to\infty}\PP_{\text{Az},\beta}\left[e_1^{(N)},\dots,e_p^{(N)}\in \mathcal D_\text{Az}\right]=\PP_{\beta,(x,y)}\Big[e_1,\dots,e_p\in \mathcal D\Big],
\end{equation}
where the probability measures on the left and right hand side are defined by~\eqref{eq:prob_beta} and~\eqref{eq:gibbs_beta}, respectively. 
\end{theorem}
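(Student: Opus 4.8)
The plan is to reduce the statement to convergence of the correlation kernel of the underlying determinantal process and then analyze the relevant entries of the inverse Kasteleyn matrix by steepest descent. Both sides of the claimed identity are correlation functions of determinantal point processes: the left-hand side is governed by the kernel $(K_{\text{Az},\beta})_{\mathrm w_{m'}\mathrm b_{m'}}(K_{\text{Az},\beta}^{-1})_{\mathrm b_m\mathrm w_{m'}}$ of~\eqref{eq:prob_beta}, and the right-hand side by the analogous kernel in~\eqref{eq:gibbs_beta}. Since the prefactors $(K_{\text{Az},\beta})_{\mathrm w\mathrm b}=(K_{G,\beta})_{\mathrm w\mathrm b}$ are $(k,\ell)$-periodic and coincide in the bulk, it suffices to prove that, for edges with the index scaling prescribed in the hypothesis, the entry $(K_{\text{Az},\beta}^{-1})_{\mathrm b_m\mathrm w_{m'}}$ converges to $(K_{G,\beta,(x,y)}^{-1})_{\mathrm b_m\mathrm w_{m'}}$ as $N\to\infty$. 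Convergence of all such finite collections of edge-occupancy determinants then yields weak convergence of the measures.

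First I would invoke the double contour integral representation of $(K_{\text{Az},\beta}^{-1})_{\mathrm b\mathrm w}$ obtained in~\cite{BB23}, in which the integrand splits as a slowly varying factor -- essentially $(K_{G_1,\beta}(z,w)^{-1})_{\mathrm b\mathrm w}$ times the monomial $z^{n'-n}/w^{m'-m}$ -- multiplied by an exponentially large factor whose exponent is, to leading order in $N$, an antiderivative of the action differential~\eqref{eq:action_function_beta}. Consequently the critical points of the exponent are exactly the zeros of $\d F_\beta$. The scaling of the indices in the hypothesis fixes the macroscopic point $(u,v)$ entering $\d F_\beta$, while the bounded displacements $(\kappa_r-\kappa_{r'},\zeta_r-\zeta_{r'})$ are carried entirely by the slowly varying factor and survive in the limit.

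Next I would carry out a two-dimensional steepest descent analysis. The classification of the zeros of $\d F_\beta$ recalled in~\eqref{eq:zeros_dfbeta}, together with Definition~\ref{def:finite_region}, identifies the relevant saddle $(z_*,w_*)\in\mathcal R_\beta$ in each phase, and the hypothesis that $(u,v)$ is \emph{not} on the arctic curve guarantees that $\d F_\beta$ has no higher-order zeros, so the dominant saddle is nondegenerate and ordinary (non-coalescing) steepest descent applies. The location through which the descent contours must pass is precisely $\Log_\beta(z_*,w_*)=(x,y)$: in the rough region this is the image $\Log_\beta\circ\,\Omega_\beta(u,v)$ under the homeomorphism of~\cite[Theorem 4.11]{BB23}, with the contour passing through a complex-conjugate pair of saddles, whereas in a frozen or smooth phase the deformed contour may be taken to be a fixed circle of radius $\e^{\beta x}$ (resp.\ $\e^{\beta y}$) lying inside the corresponding complementary component $\mathcal A_{\beta,\mu}$. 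After deforming both contours to these paths, the exponentially large factor localizes the integral near the saddle, the slowly varying factor may be frozen at its value on the circles $|z|=\e^{\beta x}$, $|w|=\e^{\beta y}$, and the surviving integral is exactly~\eqref{eq:gibbs_beta}.

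The main obstacle will be the bookkeeping of the contour deformation in two complex variables: one must deform from the contours dictated by the finite Aztec geometry to the torus $|z|=\e^{\beta x}$, $|w|=\e^{\beta y}$ without crossing the poles of the integrand (the spectral curve $P_\beta=0$ and the boundary poles), correctly collecting the residues picked up en route. These residue contributions are what produce the deterministic behaviour in the frozen and smooth phases, and one must verify that they assemble into the correct Gibbs kernel rather than spurious terms, that the final deformed contour is homotopic to the prescribed torus, and that the localization is uniform in the bounded displacements. This uniform contour analysis, rather than the determinantal reduction or the saddle-point computation itself, is the technical heart of the argument.
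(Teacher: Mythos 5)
A point of order first: the paper you are looking at does not prove this theorem at all. It is imported verbatim from the earlier work --- the theorem header literally reads ``Corollary 4.26 of~\cite{BB23}'' --- and is used here only as a black box (e.g.\ in the proof of Corollary~\ref{cor:local_limit}, which consists of ``apply Theorem~\ref{thm:tropical_limit_arctic_curve} to get $(u,v)\in R_{\beta,\mu}$ for $\beta>\beta_0$, then invoke Theorem~\ref{thm:finite_local_limit}''). So there is no internal proof to compare your attempt against; the only meaningful comparison is with~\cite{BB23} itself.

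Measured against that source, your plan reconstructs the actual strategy faithfully: reduce weak convergence to convergence of the edge-probability determinants~\eqref{eq:prob_beta}, hence of the inverse Kasteleyn entries; start from an exact double-contour-integral formula for $K_{\text{Az},\beta}^{-1}$ (in~\cite{BB23} this comes from the Eynard--Mehta theorem together with a Wiener--Hopf factorization of a matrix-valued symbol, and obtaining it is a substantial piece of work your sketch takes for granted); identify the large-$N$ exponent with the action function~\eqref{eq:action_function_beta}, so that the saddles are the zeros of $\d F_\beta$ and the classification~\eqref{eq:zeros_dfbeta} together with Definition~\ref{def:finite_region} dictates where the contours go; and conclude after deformation to $|z|=\e^{\beta x}$, $|w|=\e^{\beta y}$, with the bounded displacements $(\kappa_r,\zeta_r)$ carried by the slowly varying factor. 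Two remarks. First, one point in your sketch is slightly miscalibrated: the Gibbs kernel~\eqref{eq:gibbs_beta} is not obtained by ``localizing near the saddle and freezing the slowly varying factor''; rather, once the contours have been moved to the torus $|z|=\e^{\beta x}$, $|w|=\e^{\beta y}$, the resulting double integral \emph{is} the Gibbs kernel entry exactly, and steepest descent is needed only to show that everything else --- residues collected during the deformation and the terms encoding the finite Aztec boundary --- tends to zero as $N\to\infty$. Phrasing it this way removes any need to approximate the main term. Second, what you have is a correct plan rather than a proof: the pre-asymptotic kernel formula and the two-variable contour bookkeeping that you yourself flag as the technical heart are precisely the content of the corresponding sections of~\cite{BB23}, and neither is routine.
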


\subsection{Tropical geometry}\label{sec:tropical_geometery}
The goal of this section is to define tropical characteristic polynomials, tropical curves, and tropical harmonic\footnote{We say \emph{regular} below.} functions that we will need to describe the~$\beta \to \infty$ limit of the dimer model.

\subsubsection{The Newton polygon and the tropical curve}\label{sec:tropical_amoeba}
In this section, we define the tropical curve and discuss its duality to a subdivision of the Newton polygon.

The \emph{tropical characteristic polynomial} is defined as (with~$``x+y"=\max\{x,y\}$, and~$``xy"=x+y$)
\begin{equation}\label{eq:characteristic_polynomial_tropical}
P_t(x,y)=``\sum_{\mu=(\mu_1,\mu_2)\in\mathcal N} \mathcal E^*(\mu)x^{\mu_1}y^{\mu_2}"=\max_{\mu=(\mu_1,\mu_2)\in \mathcal N}\left\{\mu_1 x+\mu_2 y+\mathcal E^*(\mu)\right\},
\end{equation}
and its variety~$\mathcal A_t\subset \RR^2$ is defined as the set of points~$(x,y)\in \RR^2$ where~$P_t$ is not smooth\footnote{It is not uncommon in the tropical literature to use~$t=\e^\beta$ as the large parameter. The subscript in the notation is then~$t$ representing the finite temperature objects, while we use subscript~$t$ for the tropical (\emph{i.e.}, infinite temperature) objects.}. The function~$P_t$ is piecewise-linear and~$\mathcal A_t$ is the set of points where the maximum in the definition of~$P_t$ is not uniquely attained. See~\cite[Section 3]{Mik05}. We refer to~$\mathcal A_t$ as the \emph{tropical curve}. The \emph{extended polyhedral domain} is defined by
\begin{equation}\label{eq:extended_polyhedral_domain}
\tilde N(P_t)=\convHull \left\{(\mu,s) \in \RR^3:\mu \in \mathcal N, s\leq \mathcal E^*(\mu)\right\}.
\end{equation}
The extended polyhedral domain naturally projects to the Newton polygon~$N(P)$, and the projection defines a subdivision~$N_S(P_t)$ of~$N(P)$ such that each bounded face of~$\tilde N(P_t)$ projects to a face in the subdivision~$N_S(P_t)$.

 \begin{figure}[t]
 \begin{center}
\begin{subfigure}[c]{0.35\textwidth}
\includegraphics[scale=.3, angle=180]{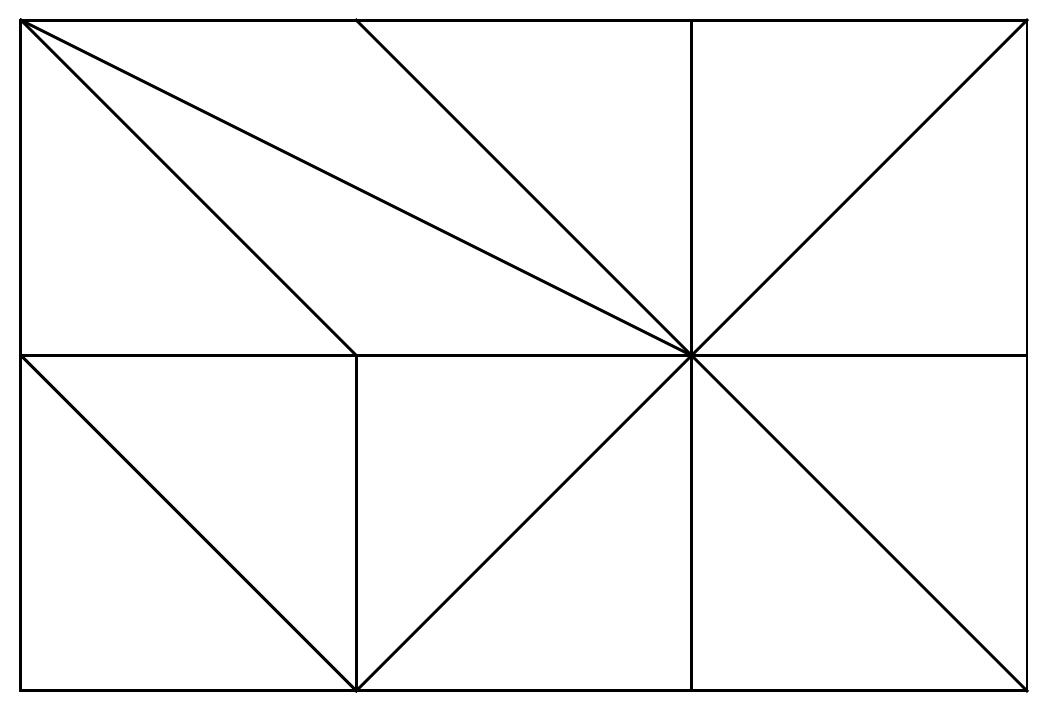}
 \end{subfigure}
\quad
\begin{subfigure}[c]{0.35\textwidth}
\includegraphics[scale=.28]{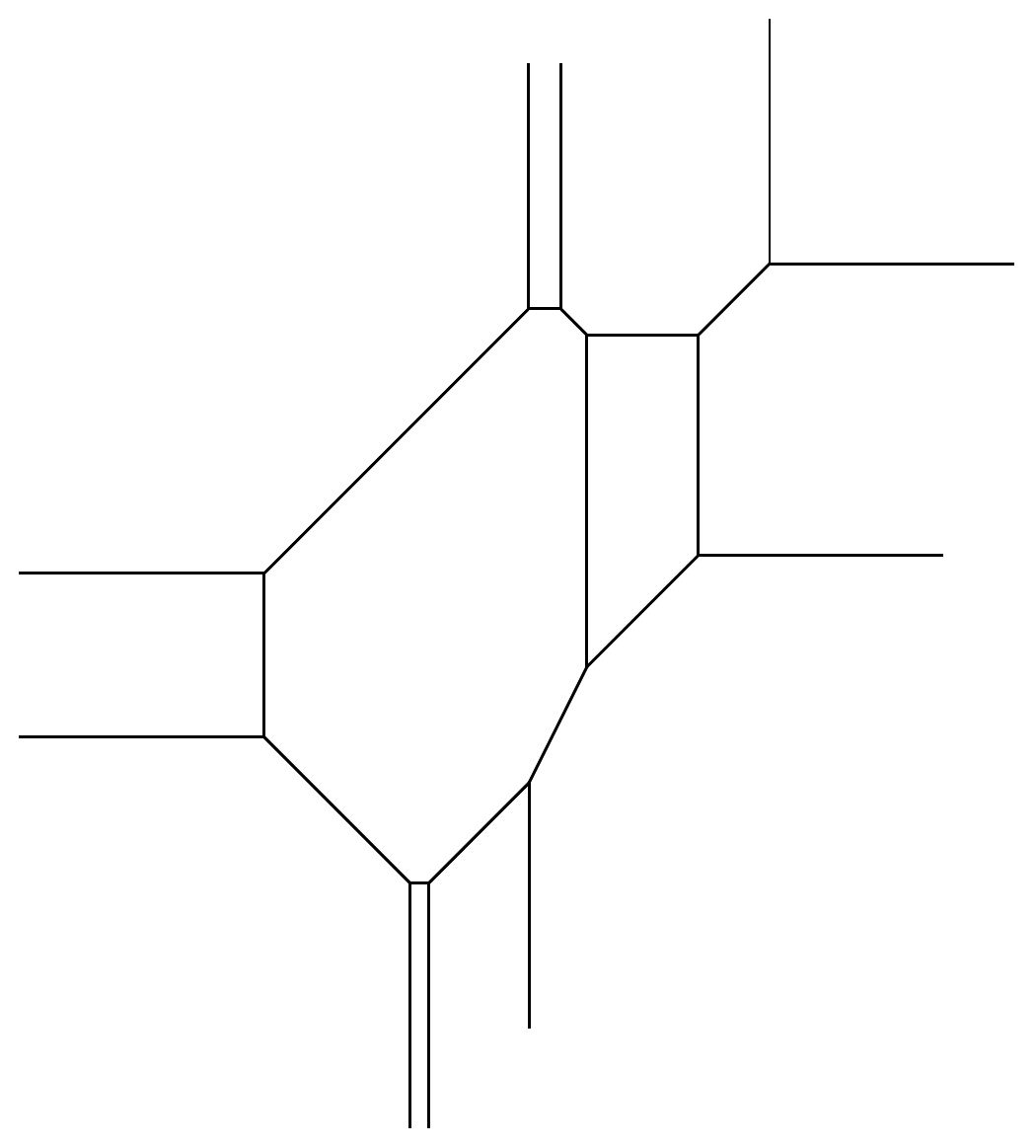}
 \end{subfigure}
 \end{center}
\caption{A subdivision~$N_S(P_t)$ of the Newton polygon (left) and the corresponding tropical curve~$\mathcal A_t$ (right). The periodicity is~$k=2$ and~$\ell=3$.
\label{fig:newton_amoeba}}
\end{figure}

The subdivision~$N_S(P_t)$ is dual to the tropical curve~$\mathcal A_t$ in the following sense (see~\cite[Proposition 2.1]{Mik04b} or~\cite[Proposition 3.11]{Mik05}): For each edge~$e^*\in N_S(P_t)$, there is a closed (possibly unbounded) line segment~$e\subset \mathcal A_t$, with non-empty interior, which is orthogonal to~$e^*$, and each face~$\mathrm v^*\in N_S(P)$ corresponds to a point~$\mathrm v\in \mathcal A_t$, with the degree of~$\mathrm v$ equal to the number of sides of the face~$\mathrm v^*$. The line segment is unbounded precisely if~$e^*$ is an edge in the boundary of~$N(P)$. Moreover, the interiors of all edges~$e$ and all points~$\mathrm v$ form a partition of~$\mathcal A_t$. See Figure~\ref{fig:newton_amoeba}. We denote the vertices of~$\mathcal A_t$ by~$V(\mathcal A_t)$, the infinite edges, called \emph{leaves}, by~$L(\mathcal A_t)$, the bounded edges by~$E(\mathcal A_t)$, and~$LE(\mathcal A_t)=L(\mathcal A_t)\cup E(\mathcal A_t)$. We divide the set~$L(\mathcal A_t)$ into four subsets~$L_i(\mathcal A_t)$ so that the edges in~$L_i(\mathcal A_t)$,~$i=1,2,3,4$, correspond to the left-most, bottom-most, right-most, and top-most edges of~$N_S(P_t)$, respectively. The above duality defines a natural bijection between the vertices of~$N_S(P_t)$ and the components of the complement of~$\mathcal A_t$. We denote the subset of~$\mathcal A_t$ bounding a component corresponding to a vertex~$\mu$ of~$N_S(P_t)$ by~$\mathcal A_{t,\mu}$, and the set of all~$\mathcal A_{t,\mu}$ by~$B(\mathcal A_t)$. See the left image of Figure~\ref{fig:amoeba_tropical_amoeba}.

By definition of the subdivision~$N_S(P_t)$, the vertices of~$N_S(P_t)$ is a subset of~$\mathcal N$ (hence the notation~$\mu$ in the previous paragraph). For our purposes, we will assume that the set of vertices of~$N_S(P_t)$ and~$\mathcal N$ coincide. This is generically true, as we prove in Section~\ref{sec:subdivision} below.

Each edge~$e\in LE(\mathcal A_t)$ has a primitive integer tangent (\emph{i.e.}, co-linear) vector~$\eta=(\eta_1,\eta_2)\in \ZZ^2$ with~$\eta_1$ and~$\eta_2$ relatively prime. For an oriented edge~$e\in LE(\mathcal A_t)$ we write~$\eta=\eta(e)$ where~$\eta$ points in the direction of~$e$. For a vertex~$\mathrm v\in V(\mathcal A_t)$ we say that~$\eta$ is an \emph{outward pointing primitive vector} of~$\mathrm v$ if~$\eta=\eta(e)$ is the primitive integer tangent vector of an edge~$e$ adjacent to~$\mathrm v$ directed away from~$\mathrm v$. For~$e\in E(\mathcal A_t)$, we define the \emph{length}~$l(e)$ to be the positive real number such that~$l(e)\eta(e)=e$.

\begin{figure}
\begin{center}
\begin{tikzpicture}
    \coordinate (v1) at (0,0);

    \draw[-latex] (v1) -- ($(v1)+(-1,1)$);
    \draw ($(v1)+(-1,1.1)$) node[right] {$\eta=(-1,1)$};
    \draw ($(v1)+(-.9,.9)$) -- ($(v1)+(-1.5,1.5)$);
    \draw[-latex] (v1) -- ($(v1)+(-1,0)$) node[below] {$\eta=(-1,0)$};
    \draw ($(v1)+(-.9,0)$) -- ($(v1)+(-1.5,0)$);
    \draw[-latex] (v1) -- ($(v1)+(2,-1)$);
    \draw ($(v1)+(2.7,-1)$) node[above] {$\eta=(2,-1)$};
    \draw ($(v1)+(1.9,-.95)$) -- ($(v1)+(2.5,-1.25)$);

    \node[above right] at (v1) {$\mathrm v$};
    
    \coordinate (v2) at (7.5,1.);

    \draw[-latex] (v2) -- ($(v2)+(-1,-1)$);
    \draw ($(v2)+(-1,-1)$) node[above left] {$(-1,-1)$};
    \draw[-latex] ($(v2)+(-1,-1)$) -- ($(v2)+(-1,-2)$) node[left] {$(0,-1)$};
    \draw[-latex] ($(v2)+(-1,-2)$) -- (v2) node[right] {$(1,2)$};

    \node[right] at ($(v2)+(-1,-1)$) {$\mathrm v^*$};
\end{tikzpicture}
\end{center}
\caption{An example of outward pointing primitive vectors of the vertex~$\mathrm v\in V(\mathcal A_t)$ corresponding to the face~$\mathrm v^*\in N_S(P_t)$ showing that they necessarily sum to zero.
\label{fig:balancing_property}}
\end{figure}
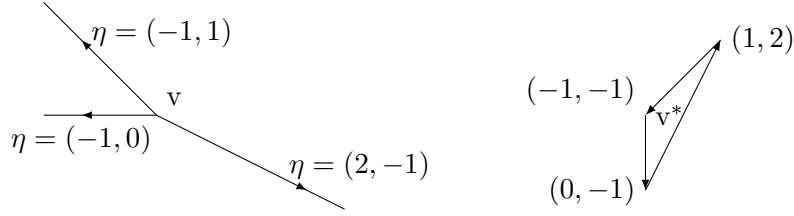

The tropical curve~$\mathcal A_t$ satisfies the following balancing property, see~\cite[Equation (3)]{Mik05}. For each vertex~$\mathrm v\in V(\mathcal A_t)$,
\begin{equation}\label{eq:balancing_primitive}
\sum_{e\sim \mathrm v} \eta(e)=0,
\end{equation} 
where the sum runs over all edges adjacent to~$\mathrm v$ oriented away from~$\mathrm v$. Without the assumption that the set of vertices of~$N_S(P_t)$ coincides with~$\mathcal N$, the linear combination~\eqref{eq:balancing_primitive} would need to be weighted by the integer length of the corresponding edge in~$N_S(P_t)$. Note that if we rotate the vectors~$\eta(e)$ with~$e\sim \mathrm v$ by~$\pi/2$, they would coincide with the (oriented) edges of the face in~$N_S(P_t)$ corresponding to the vertex~$\mathrm v$, and the balancing property simply says that the sum of the oriented edges of any face is zero, see, \emph{e.g.},~\cite[Section 2.3]{BIMS15} and Figure~\ref{fig:balancing_property}.

\subsubsection{Regular functions and differentials}\label{sec:tropical_functions}
In this section, we recall the notion of regular functions and~$1$-forms on~$\mathcal A_t$. For a more detailed discussion, we refer to~\cite[Section 2.5]{Lan20}, see also~\cite{MZ08}.

Let~$f$ be a continuous piecewise affine function on~$\mathcal A_t$, and let~$\mathrm v\in V(\mathcal A_t)$ be a vertex with an outward pointing primitive vector~$\eta$. The \emph{slope} of~$f$ at~$\mathrm v$ in the direction of~$\eta$, denoted by~$\d_\eta f(\mathrm v)$, is given by~$f(\mathrm v+\eta)-f(\mathrm v)$. A \emph{regular} function~$f:\mathcal A_t\to \RR$ is a continuous piecewise affine function such that the balancing property
\begin{equation}\label{eq:balancing_function}
\sum_{\eta \sim \mathrm v} \d_\eta f(\mathrm v)=0
\end{equation}
holds, where the sum runs over all outward pointing primitive vectors of~$\mathrm v$. If the degree of the vertex is three (which is the setting we will be interested in), the balancing property is equivalent to the fact that there is a neighborhood of~$\mathrm v$ such that the graph of~$f$ in that neighborhood lies in some plane.

A (tropical) \emph{$1$-form}~$\mathrm w$ on~$\mathcal A_t$ is a locally constant, real-valued 1-from on~$\mathcal A_t \backslash V(\mathcal A_t)$ satisfying the balancing condition 
\begin{equation}\label{eq:balancing_differential}
\sum_{\eta \sim \mathrm v}\mathrm w(\eta)=0
\end{equation}
at each vertex~$\mathrm v\in V(\mathcal A_t)$, where the sum runs over the outward pointing primitive vectors of~$\mathrm v$. We denote the vector space of~$1$-forms by~$\Omega(\mathcal A_t)$. A 1-form~$\mathrm w$ is determined by assigning a value to each oriented primitive integer tangent vector~$\eta=\eta(e)$, and we write~$\mathrm w(\eta)$ for that value. The \emph{residue} of a 1-form~$\mathrm w$ at a leaf in~$L(\mathcal A_t)$ is~$\mathrm w(\eta)$ where~$\eta$ is a primitive vector pointing toward the vertex adjacent to the leaf. 

A \emph{path} in~$\mathcal A_t$ is a continuous function~$\rho:[0,1]\to \mathcal A_t$ such that~$\rho(0),\rho(1)\in V(\mathcal A_t)$ and its restriction to~$(0,1)$ is injective. A \emph{loop} is a path~$\rho$ with~$\rho(0)=\rho(1)$. The integral of a~$1$-form~$\mathrm w$ along a path~$\rho$ joining two vertices of~$\mathcal A_t$ is defined as
\begin{equation}\label{eq:line_integral}
\int_\rho \mathrm w=\sum_{i=1}^nl(e_i)\mathrm w(\eta(e_i)),
\end{equation}
where~$e_i\in E(\mathcal A_t)$ are the edges in the image of~$\rho$. The space of \emph{exact~$1$-forms} is the subset~$\Omega_0(\mathcal A_t)\subset \Omega(\mathcal A_t)$ consisting of the~$1$-forms with~$\int_\rho \mathrm w=0$ for all loops~$\rho$. The space~$\Omega_0(\mathcal A_t)$ is characterized by the residues, more precisely: Let~$r_e\in \RR$,~$e\in L(\mathcal A_\beta)$, be the residues of~$\mathrm w\in \Omega_0(\mathcal A_t)$ at~$e$; then~$\sum_{e\in L(\mathcal A_\beta)} r_e=0$. Conversely, if~$r_e\in \RR$,~$e\in L(\mathcal A_\beta)$, with~$\sum_{e\in L(\mathcal A_\beta)} r_e=0$, then there exists a unique exact 1-form with residues~$r_e$ at the leaves~$e\in L(\mathcal A_\beta)$, see, \emph{e.g.}, {\cite[Proposition 2.27]{Lan20}}.

Given a regular function~$f$ we define~$\d f$ as the~$1$-form satisfying~$\d f(\eta)=\d_\eta f$ for all~$\eta=\eta(e)$,~$e\in LE(\mathcal A_t)$. By the balancing condition~\eqref{eq:balancing_function}, we have~$\d f\in \Omega_0(\mathcal A_t)$. Conversely, given a 1-form~$\mathrm w\in \Omega_0(\mathcal A_t)$, we can define~$f$ as a piecewise affine function on~$\mathcal A_t$ by the relation~$\mathrm w(\eta)=\d_\eta f$ for all~$\eta=\eta(e)$ with~$e\in LE(\mathcal A_t)$. This defines a continuous function~$f$, up to an additive constant, since~$\mathrm w$ is exact. Moreover, the balancing condition~\eqref{eq:balancing_differential} ensures that~$f$ is regular. This is a simple corollary of Kirchhoff's theorem, see also~\cite[Proposition 3.9]{Lan20}.

\subsection{Tropical limit}\label{sec:tropical_limit}

In this section, we connect Sections~\ref{sec:finite_spectral_curve} and~\ref{sec:tropical_geometery} by taking the large~$\beta$ limit. The results that are relevant for us were obtained in~\cite{Mik04b} and~\cite{Lan20}. See also~\cite{GKN19, Iwa10}. 

The tropical curve~$\mathcal A_t$ is constructed to be the limit of a sequence of amoebas. In our setting, this limit is especially easy. Indeed, the amoebas~$\mathcal A_\beta$, defined in Section~\ref{sec:finite_spectral_curve}, are constructed from the characteristic polynomial~$P_\beta$ given in~\eqref{eq:characteristic_polynomial_finite}. This polynomial has the form of a so-called patchworking polynomial, and therefore we can rely directly on~\cite[Corollary 6.4]{Mik04b} (see also~\cite[Theorem 2.12]{BIMS15}) which tells us that
\begin{equation}\label{eq:limit_amoeba}
\mathcal A_\beta \to \mathcal A_t \quad\text{as } \beta \to \infty,
\end{equation}
where the convergence is in the Hausdorff sense. The convergence can be strengthened according to~\cite[Theorem 5]{Mik04b}, but such a stronger result is not necessary for our purposes.

In addition to the limit of the spectral curve to its tropical version, imaginary normalized differentials, see Section~\ref{sec:finite_spectral_curve}, converge in the large~$\beta$ limit to the exact tropical 1-forms, as we now describe.

There is a natural bijection between the angles~$p\in L(\mathcal R_\beta)$ and the leaves~$e\in L(\mathcal A_t)$. Indeed, for each leaf, there is an angle so that the leaf is an asymptote of the tentacle corresponding to the angle, see Figure~\ref{fig:amoeba_tropical_tentacles}. Let us denote the leaf corresponding to the angle~$p\in L(\mathcal R_\beta)$ by~$e_p\in L(\mathcal A_t)$.

Let~$R$ be a collection of real numbers~$r_{p,j}\in \RR$,~$p\in L(\mathcal R_\beta)$,~$j=1,\dots,m$, for some positive integer~$m$, with~$\sum_{p\in L(\mathcal R_\beta)} r_{p,j}=0$ for all~$j$. Let~$\omega_{\beta,j}$ be the imaginary normalized differentials with residues~$r_{p,j}$ at~$p\in L(\mathcal R_\beta)$, and let~$\mathrm w_j\in \Omega_0(\mathcal A_t)$ be the exact~$1$-form with residues~$r_{p,j}$ at the leaf~$e_p\in L(\mathcal A_t)$, see Section~\ref{sec:tropical_functions}. Set
\begin{equation}\label{eq:log_map_generalized}
\Log_{\beta,R}(q_\beta)=\frac{1}{\beta}\left(\re\left(\int_{p_\beta}^{q_\beta}\omega_{\beta,1}\right),\dots,\re \left(\int_{p_\beta}^{q_\beta}\omega_{\beta,m}\right)\right), \quad q_\beta \in \mathcal R_\beta, 
\end{equation}
and
\begin{equation}\label{eq:log_map_generalized_tropical}
\Log_{t,R}(q_t)=\left(\int_{\mathrm v_t}^{q_t} \mathrm w_1,\dots,\int_{\mathrm v_t}^{q_t}\mathrm w_m\right), \quad q_t\in \mathcal A_t,
\end{equation}
for some~$\mathrm v_t\in V(\mathcal A_t)$ and some points~$p_\beta \in \mathcal R_\beta$. With this setup,~\cite[Theorem 1]{Lan20} tells us that the image of the map~$\Log_{\beta,R}$ converges to the image of~$\Log_{t,R}$, 
\begin{equation}\label{eq:limit_harmonic_amoeba}
\Log_{\beta,R}(\mathcal R_\beta)\to \Log_{t,R}(\mathcal A_t) \quad \text{as } \beta \to \infty,
\end{equation}
in Hausdorff distance on compact sets under the condition that the points~$p_\beta$ are picked so that $\Log_\beta (p_\beta)\to \mathrm v_t$ as~$\beta \to \infty$.\footnote{The convergence of the (normalized by~$\beta$ logarithms of the absolute values of the) coefficients of the characteristic polynomial~$P_\beta(z,w)$ given by~\eqref{eq:characteristic_polynomial_finite} to the ``coefficients''~$\mathcal E^*(\mu)$ of the tropical polynomial~$P_t(x,y)$ from \eqref{eq:characteristic_polynomial_tropical} implies the abstract tropical convergence of~\cite[Definition 1.1]{Lan20} of the Riemann surfaces~$\mathcal R_\beta$ to the tropical curve~$\mathcal A_t$ under the assumption that~$\mathcal A_t$ is smooth in the sense of Definition~\ref{def:t-curve} below~\cite{Lan24}.} The freedom in the choice of the points~$p_\beta$ is explained in the proof of~\cite[Theorem 1]{Lan20}. The convergence~\eqref{eq:limit_harmonic_amoeba} is used in the proof of Theorem~\ref{thm:arctic_curve_tropical} below to prove the convergence of functions defined by residues of their differentials to their tropical counterparts.

Next, we will discuss integrals of~$\omega_{\beta,j}$,~$j=1,\dots,m$, over shrinking closed loops. As~$\beta\to \infty$, we can think of the part of~$\mathcal R_\beta$ converging to an edge~$e\in LE(\mathcal A_t)$ as a shrinking tube. We define a set of disjoint simple closed loops~$\gamma_{\beta,e}$,~$e\in E(\mathcal A_t)$, such that~$\gamma_{\beta,e}$ circles around the shrinking tube around~$e$. These loops are constructed so that 
\begin{equation}
\mathcal R_\beta \backslash\left(\bigcup_{e\in E(\mathcal A_t)} \gamma_{\beta,e}\bigcup_{p\in L(\mathcal R_\beta)}\{p\}\right)
\end{equation}
consists of~$|V(\mathcal A_t)|$ connected components, called \emph{pairs-of-pants}. Each pair of pants corresponds naturally to a vertex~$v\in V(\mathcal A_t)$\footnote{In our (generic) situation, all vertices of~$\mathcal A_t$ have valency~$3$, in correspondence with triangulations of Section~\ref{sec:tropical_amoeba}, see also a discussion around Definition~\ref{def:t-curve} below.}. See, \emph{e.g.},~\cite[Section 2.3]{Lan20} for a more detailed definition. 

For an edge~$e\in E(\mathcal A_t)$ oriented towards~$\mathrm v\in V(\mathcal A_t)$, we orient~$\gamma_{\beta,e}$ so that the pair of pants corresponding to~$\mathrm v$ lies to the right of~$\gamma_{\beta,e}$. 
Additionally, we define simple closed curves~$\gamma_{\beta,e}$,~$e=e_p\in L(\mathcal A_t)$, enclosing the corresponding angle~$p\in L(\mathcal A_\beta)$ and oriented in positive direction around~$p$. These curves are taken so that~$\gamma_{\beta,e}$,~$e\in LE(\mathcal A_t)$, are disjoint. It now follows from~\cite[Theorem 3]{Lan20} that
\begin{equation}\label{eq:limit_b-cycles}
\lim_{\beta\to\infty}\frac{1}{2\pi\i}\int_{\gamma_{\beta,e}}\omega_{\beta,j}=\mathrm w_j(\eta(e)),
\end{equation}
for~$j=1,\dots,m$, and all~$e\in LE(\mathcal A_t)$.

\section{The tropical arctic curves and limit shape}\label{sec:tropical_global_results}
The aim of this section is to obtain the zero-temperature limit of the Aztec diamond dimer model~\eqref{eq:model_beta}. Tropical geometry provides a natural language to describe this limit. We first identify the tropical counterparts of essential objects from the finite~$\beta$ setting. This allows us to describe the zero-temperature limit in a manner analogous to the finite~$\beta$ case. To this end, we define the tropical action function, which plays the role of the action function~\eqref{eq:action_function_beta}. Subsequently, we introduce the appropriate (for our purposes) notion of zeros of the derivative of the tropical action function and use these zeros to define the limiting phases within the Aztec diamond. With those notions in place, we derive the large~$\beta$ limit of the limit shape~\eqref{eq:limit_shape} and obtain a limiting result for local statistics.

Recall the subdivision of the Newton polygon~$N_S(P_t)$. As discussed in the previous section, the tropical curve~$\mathcal A_t$ can take many different forms, depending on the subdivision. Generically, however, the tropical curve is smooth, as defined next.  
\begin{definition}\label{def:t-curve}
A tropical curve is said to be a \emph{smooth} if the subdivision~$N_S(P_t)$ is a triangulation consisting of triangles with area~$1/2$.
\end{definition}
Smooth tropical curves were discussed already in~\cite{IV96} (see also~\cite{Mik00}), where they were used to construct Harnack curves using a patchworking method. We will prove in Section~\ref{sec:subdivision} that~$\mathcal A_t$ indeed is, generically, a smooth tropical curve. Throughout this section, we will assume that~$\mathcal A_t$ is a smooth tropical curve.

\subsection{The tropical action function and the tropical arctic curve}\label{sec:tropical_action}
As described in Section~\ref{sec:finite_beta_results}, the asymptotic analysis of the doubly periodically weighted Aztec diamond was performed in~\cite{BB23} via the action function~$F_\beta$. In the large~$\beta$ limit, the central object for our analysis here will be the limit of~$\re F_\beta$. 

Let~$\d f_t\in \Omega_0(\mathcal A_t)$ be the unique exact~$1$-form with residues~$-\ell$ at each~$e\in L_1(\mathcal A_t)$, residues~$k$ at each~$e\in L_2(\mathcal A_t)$, and residues~$0$ in~$L_i(\mathcal A_t)$,~$i=3,4$, see the end of Section~\ref{sec:tropical_functions}. For~$(u,v)\in \RR^2$ we define the \emph{tropical action function}~$F_t$ for~$(x,y)\in\mathcal A_t$ by
\begin{equation}\label{eq:action_function_tropical}
F_t(x,y;u,v)=k(1+\ell u)y-\ell(1+kv)x-f_t(x,y).
\end{equation}
The differential~$\d f_t$ only defines~$f_t$ up to an additive constant, see Section~\ref{sec:tropical_functions}, however, the constant will not be relevant for our purposes. The residues of the~$1$-form~$\d F_t$ at the leaves are
\begin{multline}\label{eq:action_function_residues}
\d F_t(\eta(e);u,v)=-k\ell v, \,\, e\in L_1(\mathcal A_t), \quad \d F_t(\eta(e);u,v)=k\ell u, \,\, e\in L_2(\mathcal A_t), \\
\d F_t(\eta(e);u,v)=\ell(1+k v), \,\, e\in L_3(\mathcal A_t), \quad \text{and} \quad \d F_t(\eta(e);u,v)=-k(1+\ell u), \,\, e\in L_4(\mathcal A_t). 
\end{multline}
Recall that~$\eta(e)$ is the primitive tangent vector of~$e$ pointing towards the adjacent vertex of the leaf.

In~\cite{BB23}, limiting regions of the dimer model were determined through the zeros of the differential of the action function. As its replacement, we consider here the~$1$-form~$\d F_t\in \Omega_0(\mathcal A_t)$. After defining the right notion of zeros, the zeros of~$\d F_t$ will determine the macroscopic regions in the zero-temperature limit. 

We define four types of zeros of~$\d F_t$ which we divide into two groups, see Figure~\ref{fig:local_slope}.
\begin{definition}\label{def:v-zero}
For a vertex~$\mathrm v\in V(\mathcal A_t)$ in a component~$\mathcal A_{t,\mu}\in B(\mathcal A_t)$, we say that~$\d F_t$ has a \emph{v-zero at~$\mathrm v$ with respect to~$\mathcal A_{t,\mu}$} if~$\d F_t(\eta;u,v)$ and~$\d F_t(\eta';u,v)$ are either both~$0$ or have the same sign, where~$\eta$ and~$\eta'$ are outward pointing primitive vectors at~$\mathrm v$ of the adjacent edges in~$\mathcal A_{t,\mu}$. 

We say that the v-zero is \emph{simple} if~$\d F_t(\eta;u,v)\neq 0$ and we say it is a \emph{triple} v-zero if~$\d F_t(\eta;u,v)=\d F_t(\eta';u,v)=0$.\footnote{``Triple'' refers to the fact that~$\d F_t$ is zero on all three edges adjacent to~$\mathrm v$. The simple v-zeros and e-zeros play a role analogous to that of the simple critical points in~\cite{BB23}, while the double e-zeros and triple v-zeros serve a purpose similar to their double and triple counterparts in the finite~$\beta$ case, which is another reason for their names.}
\end{definition}
The zeros defined above naturally live at the vertices, therefore the name v-zeros. Let us now define zeros which naturally live on the edges.

\begin{definition}\label{def:e-zeros}
For~$e\in E(\mathcal A_t)$, we say that~$\d F_t$ has an \emph{e-zero} at~$e$ if~$\d F_t(\eta(e),u,v)=0$ and~$\d F_t(\eta(e');u,v)\neq 0$ for all adjacent edges~$e'$ of~$e$.

Consider a component~$\mathcal A_{t,\mu}\in B(\mathcal A_t)$ containing the edge~$e$ and let~$\mathrm v', \mathrm v'' \in V(\mathcal A_t)$ be the endpoints of~$e$ that are not tripple v-zeros. In addition, let~$\eta'\neq \pm\eta(e)$ and~$\eta''\neq \pm\eta(e)$ be outward pointing primitive vectors at~$\mathrm v'$ and~$\mathrm v''$, respectively, in~$\mathcal A_{t,\mu}$. We say that the e-zero~$e$ of~$\d F_t$ is \emph{simple} if the signs of~$\d F_t(\eta';u,v)$ and~$\d F_t(\eta'';u,v)$ are the same. If the signs of~$\d F_t(\eta;u,v)$ and~$\d F_t(\eta';u,v)$ are different, we say that~$e$ is a \emph{double} e-zero of~$\d F_t$.
\end{definition}
\begin{remark}\label{rem:adjacent_components}
In the above definition, we singled out one of the two adjacent components. However, the definition is independent of this choice, by the balancing condition~\eqref{eq:balancing_primitive}. We did not specify any orientation on the edge~$e$, which means~$\eta(e)$ is only specified up to a sign, however, such choice is irrelevant for the definition. 
\end{remark}

 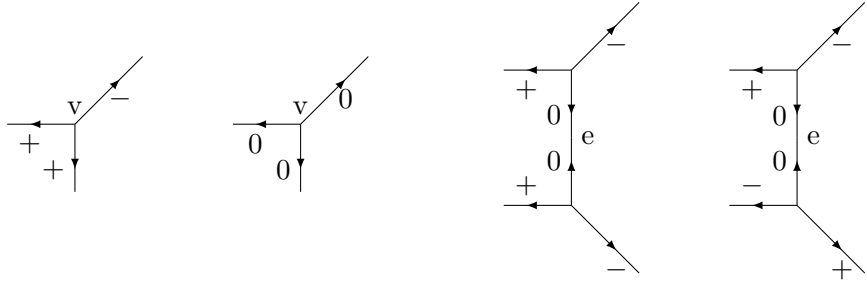
\begin{figure}[t]
 \begin{center}
\begin{tikzpicture}[scale=.6]
\begin{scope}
\draw[-latex] (0,0) -- (1,1) node[below] {$-$};
\draw (.9,.9) -- (1.5,1.5);
\draw[-latex] (0,0) -- (-1,0) node[below] {$+$};
\draw (-.9,0) -- (-1.5,0);
\draw[-latex] (0,0) -- (0,-1.) node[left] {$+$};
\draw (0,-.9) -- (0,-1.5);
\draw (0,0) node[above] {$\mathrm v$};
\end{scope}


\begin{scope}[xshift=5cm]
\draw[-latex] (0,0) -- (1,1) node[below] {$0$};
\draw (.9,.9) -- (1.5,1.5);
\draw[-latex] (0,0) -- (-1,0) node[below] {$0$};
\draw (-.9,0) -- (-1.5,0);
\draw[-latex] (0,0) -- (0,-1.) node[left] {$0$};
\draw (0,-.9) -- (0,-1.5);
\draw (0,0) node[above] {$\mathrm v$};
\end{scope}

\begin{scope}[xshift=11cm, yshift=1.2cm]
\draw[-latex] (0,0) -- (1,1) node[below] {$-$};
\draw (.9,.9) -- (1.5,1.5);
\draw[-latex] (0,0) -- (-1,0) node[below] {$+$};
\draw (-.9,0) -- (-1.5,0);
\draw[-latex] (0,0) -- (0,-1.) node[left] {$0$};
\draw (0,-.9) -- (0,-1.5);
\draw[-latex] (0,-3) -- (1,-4) node[below] {$-$};
\draw (.9,-3.9) -- (1.5,-4.5);
\draw[-latex] (0,-3) -- (-1,-3) node[above] {$+$};
\draw (-.9,-3) -- (-1.5,-3);
\draw[-latex] (0,-3) -- (0,-2) node[left] {$0$};
\draw (0,-2.1) -- (0,-1.5);
\draw (0,-1.5) node[right] {$\mathrm e$};
\end{scope}

\begin{scope}[xshift=16cm, yshift=1.2cm]
\draw[-latex] (0,0) -- (1,1) node[below] {$-$};
\draw (.9,.9) -- (1.5,1.5);
\draw[-latex] (0,0) -- (-1,0) node[below] {$+$};
\draw (-.9,0) -- (-1.5,0);
\draw[-latex] (0,0) -- (0,-1.) node[left] {$0$};
\draw (0,-.9) -- (0,-1.5);
\draw[-latex] (0,-3) -- (1,-4) node[below] {$+$};
\draw (.9,-3.9) -- (1.5,-4.5);
\draw[-latex] (0,-3) -- (-1,-3) node[above] {$-$};
\draw (-.9,-3) -- (-1.5,-3);
\draw[-latex] (0,-3) -- (0,-2) node[left] {$0$};
\draw (0,-2.1) -- (0,-1.5);
\draw (0,-1.5) node[right] {$\mathrm e$};
\end{scope}
\end{tikzpicture}
 \end{center}
\caption{The possible local configurations (up to an affine transformation at each vertex) of signs and zeros of~$\d F_t$. Two left-most figures show a simple v-zero and a triple v-zero, respectively. Two right-most figures show a simple e-zero and a double e-zero, respectively.
\label{fig:local_slope}}
\end{figure}

Let us describe how v-zeros and e-zeros of~$\d F_t$ depend on~$(u,v)$. Note, first of all, that for~$(u,v)\in D_\text{Az}$,~$\d F_t$ is never zero on a leaf, cf.~\eqref{eq:action_function_residues}. Next, we have the following lemma.
\begin{lemma}\label{lem:possible_zeros}
Assume~$(u,v)\in \RR^2$ is such that~$\d F_t$ has no triple v-zeros. If~$\mathrm v\in V(\mathcal A_t)$ then either~$\mathrm v$ is a v-zero with respect to one of the adjacent components~$\mathcal A_{t,\mu}$ or exactly one of the edges adjacent to~$\mathrm v$ is an e-zero of~$\d F_t$. 
\end{lemma}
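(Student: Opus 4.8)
The plan is to analyze the local configuration of signs of $\d F_t$ around a fixed vertex $\mathrm v\in V(\mathcal A_t)$ under the hypothesis that there are no triple v-zeros anywhere. Since $\mathcal A_t$ is a smooth tropical curve, every vertex has degree three, so $\mathrm v$ has exactly three adjacent edges $e_1,e_2,e_3$ with outward-pointing primitive vectors $\eta_1,\eta_2,\eta_3$ satisfying the balancing condition $\eta_1+\eta_2+\eta_3=0$ from~\eqref{eq:balancing_primitive}. The key fact I would use is that $\d F_t\in\Omega_0(\mathcal A_t)$ also satisfies the balancing condition~\eqref{eq:balancing_differential}, so the three real numbers $a_i:=\d F_t(\eta_i;u,v)$ sum to zero:
\begin{equation}
a_1+a_2+a_3=0.
\end{equation}
The entire lemma then reduces to a purely combinatorial statement about three real numbers summing to zero, none of which are assumed to vanish simultaneously in the forbidden "all three zero" pattern.

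**Case analysis on the zero set among $a_1,a_2,a_3$.**
First I would rule out the excluded case: the hypothesis of no triple v-zeros says we are never in the situation $a_1=a_2=a_3=0$ (by Definition~\ref{def:v-zero}, a triple v-zero is exactly when $\d F_t$ vanishes on all three adjacent edges). I then split according to how many of the $a_i$ are zero. If \emph{none} of the $a_i$ is zero, then since they sum to zero, two of them share a sign and the third has the opposite sign; the two with the common sign are the outgoing vectors of one of the adjacent components $\mathcal A_{t,\mu}$, so $\mathrm v$ is a simple v-zero with respect to that component (Definition~\ref{def:v-zero}), and no adjacent edge is an e-zero. If \emph{exactly one} $a_i$ is zero, say $a_3=0$, then $a_1+a_2=0$ with $a_1,a_2\neq 0$, so $a_1$ and $a_2$ have opposite signs; this is precisely the configuration where $e_3$ is an e-zero in the sense of Definition~\ref{def:e-zeros} (the edge carries value zero while both its vertex-neighbors at $\mathrm v$ carry nonzero values), and exactly one edge at $\mathrm v$ is an e-zero. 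The case of \emph{exactly two} $a_i$ zero is impossible: if $a_1=a_2=0$ then balancing forces $a_3=0$, contradicting the "exactly two" assumption (and landing in the excluded triple case).

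**The component-compatibility check, which is the main subtlety.**
The genuinely delicate point — and the step I expect to require the most care — is matching the sign pattern to the correct adjacent \emph{component} $\mathcal A_{t,\mu}$ rather than merely to the edges. In Definition~\ref{def:v-zero}, a v-zero is declared "with respect to $\mathcal A_{t,\mu}$," and the two relevant primitive vectors $\eta,\eta'$ are the outgoing vectors of the two edges of $\mathrm v$ that bound that particular component. At a degree-three vertex the three edges pairwise bound the three local components, so I must verify that the pair of edges sharing a common sign is exactly a pair bounding a single component, and that this is well-defined. I would argue this using the duality with $N_S(P_t)$: the three components around $\mathrm v$ correspond to the three vertices of the dual triangle $\mathrm v^*$, and each pair of edges bounds the component dual to the vertex of $\mathrm v^*$ opposite the dual edge; combining this with Remark~\ref{rem:adjacent_components}, which guarantees the e-zero definition is independent of the chosen component, closes the consistency issue. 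The remaining routine verification is that the described configurations are mutually exclusive and exhaustive, which follows immediately from the trichotomy on the number of vanishing $a_i$ established above.
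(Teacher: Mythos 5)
Your argument follows essentially the same route as the paper's own proof: the paper likewise reduces the lemma to the balancing condition~\eqref{eq:balancing_differential} at a trivalent vertex and splits into the case where all three values $a_1,a_2,a_3$ are nonzero (two of three numbers summing to zero must share a sign, giving a simple v-zero with respect to the component bounded by those two edges) and the case where one value vanishes. Your elimination of the ``exactly two vanish'' case and your component-compatibility paragraph are correct, though the latter is more than the paper bothers to spell out: at a trivalent vertex each pair of adjacent edges locally bounds exactly one of the three adjacent complement components, so the matching of the equal-sign pair to a component is immediate.

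There is, however, one imprecision in your e-zero case that needs repair. Definition~\ref{def:e-zeros} requires $\d F_t(\eta(e'))\neq 0$ for \emph{all} edges $e'$ adjacent to $e_3$, that is, at \emph{both} endpoints of $e_3$, not only at $\mathrm v$. Your parenthetical gloss (``the edge carries value zero while both its vertex-neighbors at $\mathrm v$ carry nonzero values'') is strictly weaker than the definition, so the configuration $a_3=0$, $a_1=-a_2\neq 0$ is not by itself ``precisely'' the statement that $e_3$ is an e-zero. The missing step is to run your argument once more at the other endpoint $\mathrm v'$ of $e_3$: by balancing at $\mathrm v'$, the values on the two other edges there sum to zero (since the value on $e_3$ vanishes), so if either of them vanished then both would, making $\mathrm v'$ a triple v-zero --- excluded by the hypothesis. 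This second application of the global ``no triple v-zeros'' assumption is exactly what the paper's one-line justification (``since there are no triple v-zeros, the edge is an e-zero'') is doing; without it, the conclusion of your middle case does not follow from the definition. With that sentence added, your proof is complete and coincides with the paper's.
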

\begin{proof}
If~$\d F_t$ is non-zero on all edges adjacent to~$\mathrm v$, the balancing condition~\eqref{eq:balancing_differential} implies that~$\mathrm v$ is a simple v-zero with respect to one of the adjacent components (two of the three numbers that add up to~$0$ must be of the same sign). If~$\d F_t$ is zero on one of the adjacent edges, then, since there are no triple v-zeros, the edge is an e-zero of~$\d F_t$. 
\end{proof}
The previous lemma tells us that if~$(u,v)$ is such that~$\d F_t(\eta(e);u,v)\neq 0$ for all edges~$e\in LE(\mathcal A_t)$, then each vertex~$\mathrm v\in V(\mathcal A_t)$ is a v-zero with respect to some~$\mathcal A_{t,\mu}$. This is indeed the case for all~$(u,v)$ away from a finite number of lines in the~$(u,v)$-plane.
\begin{lemma}\label{lem:line_e-zeros}
For a fixed edge~$e\in LE(\mathcal A_t)$ (with some fixed orientation), the equation~$\d F_t(\eta(e),u,v)=0$ defines a line in the~$(u,v)$-plane that is parallel to~$\eta(e)$.
\end{lemma}
\begin{proof}
Let~$\eta(e)=(\eta_1,\eta_2)$. The definition of the action function~\eqref{eq:action_function_tropical} implies that
\begin{equation}
\d F_t(\eta(e);u,v)=k(1+\ell u)\eta_2-\ell(1+kv)\eta_1-\d f_t(\eta(e)).
\end{equation}
Equating this expression to~$0$ defines a line which is parallel to~$\eta(e)$. 
\end{proof}

 \begin{figure}[t]
 \begin{center}
\begin{tikzpicture}[scale=.6]
\begin{scope}
\draw[-latex] (0,0) -- (1,1) node[below] {$-$};
\draw (.9,.9) -- (1.5,1.5);
\draw[-latex] (0,0) -- (-1,0) node[below] {$+$};
\draw (-.9,0) -- (-1.5,0);
\draw[-latex] (0,0) -- (0,-1.) node[left] {$0$};
\draw (0,-.9) -- (0,-1.5);
\draw[-latex] (0,-3) -- (1,-4) node[below] {$-$};
\draw (.9,-3.9) -- (1.5,-4.5);
\draw[-latex] (0,-3) -- (-1,-3) node[above] {$+$};
\draw (-.9,-3) -- (-1.5,-3);
\draw[-latex] (0,-3) -- (0,-2) node[left] {$0$};
\draw (0,-2.1) -- (0,-1.5);
\draw (0,-1.5) node[right] {$\mathrm e$};

\draw[->] (1.5,-1.5) -- (2.5,-1.5);
\end{scope}

\begin{scope}[xshift=5cm]
\draw[-latex] (0,0) -- (1,1) node[below] {$-$};
\draw (.9,.9) -- (1.5,1.5);
\draw[-latex] (0,0) -- (-1,0) node[below] {$+$};
\draw (-.9,0) -- (-1.5,0);
\draw[-latex] (0,0) -- (0,-1.) node[left] {$+$};
\draw (0,-.9) -- (0,-1.5);
\draw (0,0) node[below left] {$\mathrm v$};
\draw[-latex] (0,-3) -- (1,-4) node[below] {$-$};
\draw (.9,-3.9) -- (1.5,-4.5);
\draw[-latex] (0,-3) -- (-1,-3) node[above] {$+$};
\draw (-.9,-3) -- (-1.5,-3);
\draw[-latex] (0,-3) -- (0,-2) node[left] {$-$};
\draw (0,-2.1) -- (0,-1.5);
\draw (0,-3) node[right] {$\mathrm v'$};
\end{scope}

\begin{scope}[xshift=13cm]
\draw[-latex] (0,0) -- (1,1) node[below] {$+$};
\draw (.9,.9) -- (1.5,1.5);
\draw[-latex] (0,0) -- (-1,0) node[below] {$-$};
\draw (-.9,0) -- (-1.5,0);
\draw[-latex] (0,0) -- (0,-1.) node[left] {$0$};
\draw (0,-.9) -- (0,-1.5);
\draw[-latex] (0,-3) -- (1,-4) node[below] {$-$};
\draw (.9,-3.9) -- (1.5,-4.5);
\draw[-latex] (0,-3) -- (-1,-3) node[above] {$+$};
\draw (-.9,-3) -- (-1.5,-3);
\draw[-latex] (0,-3) -- (0,-2) node[left] {$0$};
\draw (0,-2.1) -- (0,-1.5);
\draw (0,-1.5) node[right] {$\mathrm e$};

\draw[->] (1.5,-1.5) -- (2.5,-1.5);
\end{scope}

\begin{scope}[xshift=18cm]
\draw[-latex] (0,0) -- (1,1) node[below] {$+$};
\draw (.9,.9) -- (1.5,1.5);
\draw[-latex] (0,0) -- (-1,0) node[below] {$-$};
\draw (-.9,0) -- (-1.5,0);
\draw[-latex] (0,0) -- (0,-1.) node[left] {$+$};
\draw (0,-.9) -- (0,-1.5);
\draw (0,0) node[right] {$\mathrm v$};
\draw[-latex] (0,-3) -- (1,-4) node[below] {$-$};
\draw (.9,-3.9) -- (1.5,-4.5);
\draw[-latex] (0,-3) -- (-1,-3) node[above] {$+$};
\draw (-.9,-3) -- (-1.5,-3);
\draw[-latex] (0,-3) -- (0,-2) node[left] {$-$};
\draw (0,-2.1) -- (0,-1.5);
\draw (0,-3) node[right] {$\mathrm v'$};
\end{scope}
\end{tikzpicture}
 \end{center}
\caption{The figures illustrate the signs and zeros of~$\d F_t$ for a simple e-zero (left) and a double e-zero (right).  Slightly changing the position of~$(u,v)$ transforms these e-zeros into v-zeros. The positions of~$\mathrm v$ and~$\mathrm v'$ indicate the specific components to which these v-zeros belong. The v-zeros are with respect to different components if the e-zero is simple, and with respect to the same component if the e-zero is double.
\label{fig:e-zeros}}
\end{figure}
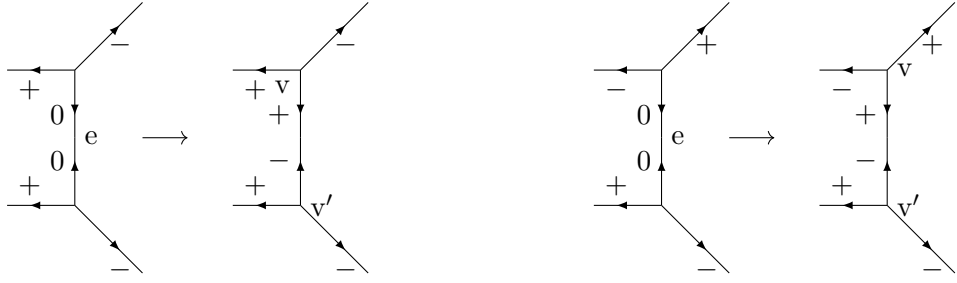

The balancing condition~\eqref{eq:balancing_differential} implies that an e-zero is either a simple or a double e-zero. These e-zeros are different and they will play different roles for us. Indeed, let~$e\in E(\mathcal A_t)$ have adjacent vertices~$\mathrm v$ and~$\mathrm v'$. It follows from Lemmas~\ref{lem:possible_zeros} and~\ref{lem:line_e-zeros} that if~$e$ is an e-zero of~$\d F_t$ for some~$(u,v)$, then if we vary~$(u,v)$ slightly away from the line of Lemma~\ref{lem:line_e-zeros},~$\mathrm v$ and~$\mathrm v'$ become v-zeros with respect to some of their adjacent components. The choice of the adjacent components depends on the type of e-zero. Assuming~$e\in \mathcal A_{t,\mu}\cap \mathcal A_{t,\mu'}$, for some~$\mu,\mu'\in\mathcal N$, it follows from~\eqref{eq:balancing_differential} that the v-zeros have to be with respect to one of these components, see Figure~\ref{fig:e-zeros}. If~$e$ is a simple e-zero then the v-zeros will be with respect to different components, say~$\mathrm v$ is a v-zero with respect to~$\mathcal A_{t,\mu}$ and~$\mathrm v'$ with respect to~$\mathcal A_{t,\mu'}$. If~$e$ instead is a double e-zero then, in contrast to if~$e$ is simple, both~$\mathrm v$ and~$\mathrm v'$ are simple v-zeros of~$\d F_t$ with respect to the same component. Moreover, moving~$(u,v)$ to one side of the line implies that the vertices are v-zeros with respect to, say~$\mathcal A_{t,\mu}$, and moving~$(u,v)$ to the other side implies that they are v-zeros with respect to~$\mathcal A_{t,\mu'}$.

Let~$\mu \in\mathcal N$ and let~$\mathrm v\in \mathcal A_{t,\mu}$ be a simple v-zero of~$\d F_t$ with respect to~$\mathcal A_{t,\mu}$. Since~$\d F_t$ is the slope of~$F_t$, it follows from the definition that the function~$F_t$ attains a local maximum or minimum at~$\mathrm v$ as a function on~$\mathcal A_{t,\mu}$. Similarly, if~$e\in \mathcal A_{t,\mu}$ is a simple e-zero, then~$F_t$ attains a local maximum or minimum along the edge~$e$, as a function on~$\mathcal A_{t,\mu}$. It is therefore natural to define~$Z_\mu=Z_\mu(u,v)$ as the number of simple v-zeros of~$\d F_t$ with respect to~$\mathcal A_{t,\mu}$ plus the number of simple e-zeros in~$\mathcal A_{t,\mu}$. In this way,~$Z_\mu$ captures the number of local maxima and minima of~$F_t$ as a function on~$\mathcal A_{t,\mu}$.

Recall that~$\mathcal F$,~$\mathcal Q$ and~$\mathcal S$ form a partition of~$\mathcal N$ with~$\mathcal S$ being the interior of~$\mathcal N$,~$\mathcal F$ being the corners and~$\mathcal Q$ consisting of the points on the boundary that are not corners.
\begin{lemma}\label{lem:critical_points}
For~$(u,v)\in \RR^2\backslash \partial D_\text{Az}$ and~$\mu\in\mathcal N$, let~$Z_\mu=Z_\mu(u,v)$ be the number of simple v-zeros of~$\d F_t$ with respect to~$\mathcal A_{t,\mu}$ plus the number of simple e-zeros in~$\mathcal A_{t,\mu}$. If~$(u,v)$ is such that~$\d F_t$ has no triple v-zeros, then
\begin{equation}\label{eq:number_zero}
2\cdot \#\{\text{double e-zeros}\}+\sum_{\mu\in \mathcal N}Z_\mu=2k\ell.
\end{equation}
Moreover, 
\begin{equation}
Z_\mu \in \{1,3\}, \text{ if}\,\, \mu \in \mathcal Q,  \quad Z_\mu \in \{2,4\}, \text{ if}\,\, \mu \in \mathcal S, 
\end{equation}
and 
\begin{equation}
Z_\mu \in \{0,2\}, \text{ if}\,\, \mu\in \mathcal F, (u,v)\in D_\text{Az} \quad \text{and} \quad Z_\mu \in \{0,1\}, \text{ if}\,\, \mu\in \mathcal F, (u,v)\in \RR^2\backslash \overline{D_\text{Az}}.
\end{equation}
In particular,
\begin{equation}
2k\ell-2\leq \sum_{\mu\in \mathcal N}Z_\mu\leq 2k\ell, \text{ if } (u,v)\in D_\text{Az} \quad \text{and} \quad \sum_{\mu\in \mathcal N}Z_\mu= 2k\ell \text{ if } (u,v)\in \RR^2\backslash \overline{D_\text{Az}}.
\end{equation}
\end{lemma}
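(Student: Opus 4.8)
I would structure the proof around three ingredients that hold throughout: since $\mathcal A_t$ is smooth (Definition~\ref{def:t-curve}), $N_S(P_t)$ triangulates the $\ell\times k$ rectangle $N(P)$ into triangles of area $1/2$, so by additivity of area there are exactly $2k\ell$ of them and hence $|V(\mathcal A_t)|=2k\ell$; each $\mathcal A_{t,\mu}\in B(\mathcal A_t)$ bounds a \emph{convex} region (the locus where the affine form $\mu_1x+\mu_2y+\mathcal E^*(\mu)$ wins the maximum defining $P_t$), so along $\partial\mathcal A_{t,\mu}$ the primitive edge directions rotate monotonically; and $Z_\mu$ equals the number of sign changes of $\d F_t$ along $\partial\mathcal A_{t,\mu}$, a simple v-zero (resp.\ simple e-zero) being a vertex (resp.\ flat edge) across which the traversal value of $\d F_t$ switches sign, while a double e-zero is a flat edge across which it does not.

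First I would prove the counting identity~\eqref{eq:number_zero}. By Lemma~\ref{lem:possible_zeros}, absent triple v-zeros every vertex is \emph{either} a v-zero with respect to exactly one adjacent component \emph{or} an endpoint of exactly one e-zero, and these are exclusive: a vertex meeting an e-zero has one incident edge with $\d F_t=0$ and, by~\eqref{eq:balancing_differential}, the other two of opposite (nonzero) sign, so it is a v-zero with respect to no component. Hence e-zeros are pairwise vertex-disjoint and each consumes two vertices, giving $2k\ell=n_{\mathrm v}+2\#\{\text{e-zeros}\}$ with $n_{\mathrm v}$ the number of v-zeros. Using Remark~\ref{rem:adjacent_components} and~\eqref{eq:balancing_differential} I would check that the simple/double type of an e-zero is the same seen from either adjacent component, that a simple e-zero is counted in $Z_\mu$ for \emph{both} its neighbouring components while a double e-zero is counted in neither. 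Therefore $\sum_\mu Z_\mu=n_{\mathrm v}+2\#\{\text{simple e-zeros}\}=2k\ell-2\#\{\text{double e-zeros}\}$, which is~\eqref{eq:number_zero}; in particular $\sum_\mu Z_\mu\le 2k\ell$, with equality iff there are no double e-zeros.

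For the per-component ranges I would first record, from~\eqref{eq:action_function_tropical} and Theorem~\ref{thm:intro:tropical_arctic_curve}, the computation
\begin{equation}
\d F_t(\eta(e);u,v)=k\ell\,\det\!\big((u,v)-\map_t(\mathrm v),\,\eta(e)\big)
\end{equation}
for each edge $e$ adjacent to a vertex $\mathrm v$, so that the sign of $\d F_t$ on $e$ records which side of the line through $\map_t(\mathrm v)$ parallel to $e$ the point $(u,v)$ lies on (recovering Lemma~\ref{lem:line_e-zeros}). The \emph{parities} and lower bounds then follow from the endpoints of $\partial\mathcal A_{t,\mu}$. For $\mu\in\mathcal S$ the boundary is a closed loop on which $F_t$ is nonconstant (no triple v-zeros), so it has at least one local maximum and minimum and an even number of sign changes, whence $Z_\mu\ge 2$ is even. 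For $\mu\in\mathcal Q$ the two leaves lie in the \emph{same} family $L_i(\mathcal A_t)$, so by~\eqref{eq:action_function_residues} the two ends of the path carry opposite traversal signs and $Z_\mu$ is odd, hence $\ge 1$ (for every $(u,v)$). For $\mu\in\mathcal F$ the two leaves lie in \emph{different} families, and checking the four corners against~\eqref{eq:action_function_residues} shows their residues have opposite sign precisely when $(u,v)\in D_\text{Az}$, making $Z_\mu$ even there.

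The remaining content — the matching \emph{upper} bounds $Z_\mu\le 4,3,2$ for $\mathcal S,\mathcal Q,\mathcal F\cap D_\text{Az}$ and $Z_\mu\le 1$ for $\mu\in\mathcal F$ outside $\overline{D_\text{Az}}$, and that at most one (resp.\ no) double e-zero occurs inside (resp.\ outside) $D_\text{Az}$ — is where the main difficulty lies, and I would extract it from convexity together with the containment $\map_t(V(\mathcal A_t))\subseteq\overline{D_\text{Az}}$ of Theorem~\ref{thm:intro:tropical_arctic_curve}. By the determinant formula the sequence of signs of $\d F_t$ along $\partial\mathcal A_{t,\mu}$ records the side of $(u,v)$ relative to the successive directed edge-lines of the image polygon $\map_t(\partial\mathcal A_{t,\mu})\subseteq\overline{D_\text{Az}}$, whose directions rotate monotonically, so the number of sign changes is governed by the winding of this polygon about $(u,v)$. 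When $(u,v)\notin\overline{D_\text{Az}}$ the point is exterior to the image polygon, which caps the sign changes at the small values above and rules out double e-zeros; when $(u,v)\in D_\text{Az}$ it may be enclosed, permitting the larger counts and a single double e-zero. The crux is to bound this winding, i.e.\ to show $\map_t$ sweeps $\partial\mathcal A_{t,\mu}$ with turning number at most one beyond the convex baseline; I expect to establish this from the orientation behaviour of $\map_t$ on each edge (the relation $\theta_{\mathrm v}=n_{\mathrm v}\pi-\theta_{\mathrm v}'$, $n_{\mathrm v}\in\{1,2\}$, of Section~\ref{sec:arctic_curve_properties}), or to cross-check it by transporting the finite-temperature count~\eqref{eq:zeros_dfbeta} through the convergence~\eqref{eq:limit_b-cycles} of $\d F_\beta$ to $\d F_t$ and verifying that the oval zeros limit to v- and e-zeros without creating new ones. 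Granting these bounds, the final inequalities are immediate from~\eqref{eq:number_zero}.
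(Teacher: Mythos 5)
Your first two paragraphs are correct and essentially reproduce the paper's argument: the identity~\eqref{eq:number_zero} follows from $|V(\mathcal A_t)|=2k\ell$ (the area count for the triangulation into area-$1/2$ triangles) together with the exclusive dichotomy of Lemma~\ref{lem:possible_zeros}, and the parities and lower bounds ($Z_\mu\ge 2$ even on $\mathcal S$, $Z_\mu\ge 1$ odd on $\mathcal Q$, $Z_\mu$ even on $\mathcal F$ when $(u,v)\in D_\text{Az}$) come from counting local extrema of $F_t$ along $\partial\mathcal A_{t,\mu}$ via the residues~\eqref{eq:action_function_residues}.

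The gap is your final paragraph. The upper bounds are not ``where the main difficulty lies'' and need no winding-number or convexity argument: they are an immediate arithmetic consequence of what you have already proved, and this is exactly how the paper concludes. Since $\#\{\text{double e-zeros}\}\ge 0$, the identity~\eqref{eq:number_zero} gives $\sum_\mu Z_\mu\le 2k\ell$, while summing your lower bounds over all components gives $\sum_\mu Z_\mu\ge 2|\mathcal S|+|\mathcal Q|=2k\ell-2$ for $(u,v)\in D_\text{Az}$. The total slack is thus at most $2$, so each individual $Z_\mu$ can exceed its minimum by at most $2$, and parity pins it to $\{2,4\}$, $\{1,3\}$, $\{0,2\}$ on $\mathcal S$, $\mathcal Q$, $\mathcal F$ respectively. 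You have the logic reversed: you try to deduce the sum bounds from per-component upper bounds, whereas the per-component upper bounds follow from the sum cap and the lower bounds. Relatedly, for $(u,v)\notin\overline{D_\text{Az}}$ you leave the corner analysis unfinished: one must check that exactly two corners $\mu\in\mathcal F$ then have same-sign residues, hence odd $Z_\mu\ge 1$, raising the lower bound to $2k\ell$; combined with $\sum_\mu Z_\mu\le 2k\ell$ this forces equality, rules out double e-zeros, puts every $Z_\mu$ at its minimum, and yields $Z_\mu\in\{0,1\}$ on $\mathcal F$.

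Moreover, the route you sketch for the upper bounds is circular within the paper's logical structure: the containment $\map_t(V(\mathcal A_t))\subseteq\overline{D_\text{Az}}$ (Proposition~\ref{prop:vertex_map_well_defined}, i.e.\ Theorem~\ref{thm:intro:tropical_arctic_curve}) and the angle relation $\theta_{\mathrm v}=n_{\mathrm v}\pi-\theta_{\mathrm v}'$ of Section~\ref{sec:arctic_curve_properties} are both proved \emph{using} Lemma~\ref{lem:critical_points}; in particular, the proof of Proposition~\ref{prop:vertex_map_well_defined} invokes precisely the fact from this lemma that no double e-zeros exist when $(u,v)\notin\overline{D_\text{Az}}$. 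So neither result may be assumed here. (Your determinant formula $\d F_t(\eta(e);u,v)=k\ell\det\bigl((u,v)-\map_t(\mathrm v),\eta(e)\bigr)$ is fine if $\map_t(\mathrm v)$ is taken as defined by the explicit gradient formula, but the containment in $\overline{D_\text{Az}}$ that your winding bound relies on is exactly what is not yet available.)
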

\begin{proof}
By Lemma~\ref{lem:possible_zeros}, each vertex is a simple v-zero with respect to one component or it is adjacent to an e-zero. A simple e-zero is counted twice in the sum in~\eqref{eq:number_zero} since any edge is contained in~$\mathcal A_\mu\cap \mathcal A_{\mu'}$ for some~$\mu\neq \mu'$ (cf. Remark~\ref{rem:adjacent_components}). This implies that the left hand side is equal to the number of vertices of~$\mathcal A_t$. Furthermore, the number of vertices of~$\mathcal A_t$ is equal to the number of triangles in~$N_S(P_t)$, and by Definition~\ref{def:t-curve} each triangle has area~$1/2$ while the area of the Newton polygon is~$k\ell$, so the number of triangles is~$2k\ell$. This yields~\eqref{eq:number_zero}.

If~$\mu\in \mathcal S$, then~$\mathcal A_{t,\mu}$ is a simple loop. As discussed after Lemma~\ref{lem:possible_zeros}, the number~$Z_\mu$ counts the number of local maxima and minima of~$F_t$ as a function of~$\mathcal A_{t,\mu}$. By continuity of~$F_t$,~$Z_\mu$ is even and~$Z_\mu \geq 2$. If~$\mu\in \mathcal Q$, the residues of~$\d F_t$, that is, the slopes of~$F_t$ at the two adjacent leaves, are equal, see~\eqref{eq:action_function_residues}. It follows, again by continuity of~$F_t$, that~$Z_\mu$ is odd and~$Z_\mu\geq 1$. Similarly, if~$(u,v)\in D_\text{Az}$ and~$\mu \in \mathcal F$, then the residues of~$\d F_t$ at the two adjacent leaves have different signs so~$Z_\mu$ is even (possibly zero). If~$(u,v)\in \RR^2\backslash \overline{D_\text{Az}}$ instead, then there are two~$\mu\in \mathcal F$ such that the residues of~$\d F_t$ at the two adjacent leaves have the same sign, so~$S_\mu$ is odd. This implies that
\begin{equation}
\sum_{\mu\in \mathcal N}Z_\mu\geq 2|\mathcal S|+|\mathcal Q|=2k\ell-2, \quad \text{and} \quad \sum_{\mu\in \mathcal N}Z_\mu\geq 2|\mathcal S|+|\mathcal Q|+2=2k\ell,
\end{equation} 
if~$(u,v)\in D_\text{Az}$ and~$(u,v)\in \RR^2\backslash \overline{D_\text{Az}}$, respectively. The upper bound on the possible value of~$Z_\mu$ follows from~\eqref{eq:number_zero} and the above inequalities.
\end{proof}
The previous statement shows that if there are no triple v-zeros, then there is at most one~$\mu \in\mathcal N$ such that~$Z_\mu$ is not equal to its minimal allowed value. If~$Z_\mu$ is equal to its minimal allowed value for all~$\mu\in\mathcal N$, then there is a single double e-zero. With this, we are ready to define the macroscopic regions in the tropical limit of the Aztec diamond. These are defined in terms of the values of~$Z_\mu$.  
\begin{definition}\label{def:regions}
For~$(u,v)\in D_\text{Az}$ such that~$\d F_t$ has no triple v-zeros, we say that~$(u,v)$ is in the \emph{frozen phase} corresponding to~$\mu\in \mathcal F\cup \mathcal Q$ if~$\mu \in \mathcal F$ and~$Z_\mu=2$ or~$\mu \in \mathcal Q$ and~$Z_\mu=3$. We say that~$(u,v)$ is in the \emph{smooth phase} corresponding to~$\mu \in \mathcal S$ if~$Z_\mu=4$ and~$\d F_t$ has no triple v-zeros. If~$(u,v)$ is neither in the frozen nor the smooth region, we say that~$(u,v)$ is in the \emph{tropical arctic curve}. We denote the frozen or smooth phase corresponding to~$\mu$ by~$R_\mu$.
\end{definition}
These phases are correctly defined, in the sense that the sets~$R_\mu$,~$\mu \in\mathcal N$, are disjoint. This follows from Lemma~\ref{lem:critical_points}. 
 
\begin{remark}\label{rem:phases_appearing}
The definition indicates that in contrast to the generic situation in the finite temperature regime~\cite[Corollary 4.12]{BB23}, there may not exist a (macroscopic) phase corresponding to any~$\mu \in\mathcal N$, that is, some~$R_\mu$ might be empty even if the interior of~$\mathcal A_{t,\mu}$ is non-empty. Indeed, if~$\mathcal A_{t,\mu}$ with~$\mu \in \mathcal F$ contains only two edges (which must then be leaves), then it does not correspond to a macroscopic region in~$D_\text{Az}$, since the number of edges gives an upper bound for~$Z_\mu$. Similarly, if~$\mu \in \mathcal Q$ and~$\mathcal A_{t,\mu}$ contains only three edges, or if~$\mu\in \mathcal S$ and~$\mathcal A_{t,\mu}$ contains only three edges,~$R_\mu$ must be empty.
\end{remark}

In the finite temperature regime, it was natural to define the map~$\Log_\beta \circ \, \Omega_\beta$ from the closure of the rough region to the amoeba, see Section~\ref{sec:finite_beta_results} and~\cite[Definition 4.8]{BB23}. In the zero-temperature limit, the rough region, as well as the interior of the amoeba, vanishes and such a map does not exist. As its replacement, we define a map from the vertices of the tropical curve to points in the Aztec diamond. We will see that the image of this map consists of the vertices of the tropical arctic curves.
\begin{definition}\label{def:vertex_map}
We define the map~$\map_t:V(\mathcal A_t)\to \overline{D_\text{Az}}$, from the vertices of~$\mathcal A_t$ to the closure of the Aztec diamond, as follows. If~$\mathrm v\in V(\mathcal A_t)$, we define~$\map_t(\mathrm v)=(u,v)\in \overline D_\text{Az}$, where~$(u,v)$ is the unique point such that~$\mathrm v$ is a triple v-zero.
\end{definition}
The map~$\map_t$ is well-defined, as shown in Proposition~\ref{prop:vertex_map_well_defined} below.
\begin{remark}
The map~$\map_t$ should be thought of as the tropical limit of~$\left(\Log_\beta\circ\,\Omega_\beta\right)^{-1}$ rather than the tropical limit of~$\Log_\beta\circ\,\Omega_\beta$ itself.
\end{remark}

The map~$\map_t$ can be expressed in terms of the regular function~$f_t$. Indeed, since any vertex~$\mathrm v\in V(\mathcal A_t)$ is of degree~$3$, the balancing condition~\eqref{eq:balancing_function} means that in some neighborhood of~$\mathrm v$ the graph of~$f_t$ is contained in a plane~$\Pi=\Pi(\mathrm v)$ in~$\RR^3$. We define~$\d_x f_t(\mathrm v)$ and~$\d_y f_t(\mathrm v)$ as the slopes of~$\Pi$ in the~$x$ and~$y$ directions, that is,
\begin{equation}
\d_x f_t(\mathrm v)=f_t(\mathrm v+(1,0))-f_t(\mathrm v), \quad \text{and} \quad \d_y f_t(\mathrm v)=f_t(\mathrm v+(0,1))-f_t(\mathrm v),
\end{equation} 
where the value of~$f_t$ away from~$\mathcal A_t$ is the value from the plane~$\Pi$. Note that for an outward pointing primitive vector~$\eta$ at~$\mathrm v$, 
\begin{equation}\label{eq:direction_derivative}
\d f_t(\eta)=(\d_x f_t(\mathrm v),\d_y f_t(\mathrm v))\cdot \eta,
\end{equation}
using the dot product notation. We define~$\d_x F_t$ and~$\d_y F_t$ similarly.

\begin{proposition}\label{prop:vertex_map_well_defined}
The function~$\map_t$, from Definition~\ref{def:vertex_map}, is well-defined. Moreover,
\begin{equation}
\map_t(\mathrm v)=\frac{1}{k\ell}(\d_y f_t(\mathrm v),-\d_x f_t(\mathrm v))-\frac{1}{k\ell}(k,\ell).
\end{equation} 
\end{proposition}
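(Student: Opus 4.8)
The plan is to reduce the triple-v-zero condition at $\mathrm v$ to the vanishing of the gradient of the supporting plane of $F_t(\,\cdot\,;u,v)$ at $\mathrm v$, which is a pair of linear equations in $(u,v)$ that can be solved explicitly. First I would record that a triple v-zero at $\mathrm v$ means $\d F_t(\eta;u,v)=0$ on all three edges adjacent to $\mathrm v$: Definition~\ref{def:v-zero} requires two of them to vanish, and the balancing condition~\eqref{eq:balancing_differential} then forces the third. Since $\mathrm v$ has degree $3$ and $\mathcal A_t$ is smooth, near $\mathrm v$ the graph of $f_t$ (hence of $F_t(\,\cdot\,;u,v)$) lies in a plane, so combining~\eqref{eq:direction_derivative} with~\eqref{eq:action_function_tropical} gives, for every outward primitive vector $\eta=(\eta_1,\eta_2)$ at $\mathrm v$,
\[
\d F_t(\eta;u,v)=\d_x F_t(\mathrm v)\,\eta_1+\d_y F_t(\mathrm v)\,\eta_2,\qquad
\begin{aligned}
\d_x F_t(\mathrm v)&=-\ell(1+kv)-\d_x f_t(\mathrm v),\\
\d_y F_t(\mathrm v)&=k(1+\ell u)-\d_y f_t(\mathrm v).
\end{aligned}
\]
The three outward primitive vectors sum to zero by~\eqref{eq:balancing_primitive} and are pairwise non-parallel, because the dual triangle $\mathrm v^*$ is nondegenerate; hence any two of them span $\RR^2$, and $\d F_t(\eta;u,v)=0$ on all three edges is equivalent to $\d_x F_t(\mathrm v)=\d_y F_t(\mathrm v)=0$.

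These two equations decouple, $\ell(1+kv)+\d_x f_t(\mathrm v)=0$ fixing $v$ and $k(1+\ell u)-\d_y f_t(\mathrm v)=0$ fixing $u$, each uniquely since $k,\ell\neq 0$. Solving yields $u=\tfrac1{k\ell}\d_y f_t(\mathrm v)-\tfrac1\ell$ and $v=-\tfrac1{k\ell}\d_x f_t(\mathrm v)-\tfrac1k$, which is precisely the claimed formula $\map_t(\mathrm v)=\tfrac1{k\ell}(\d_y f_t(\mathrm v),-\d_x f_t(\mathrm v))-\tfrac1{k\ell}(k,\ell)$. This already establishes the existence and uniqueness of the point in $\RR^2$ at which $\mathrm v$ is a triple v-zero.

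It remains to verify that this point lies in $\overline{D_\text{Az}}$, and I expect this to be the main obstacle. By the explicit formula, membership $(u,v)\in\overline{D_\text{Az}}=[-\tfrac1\ell,0]\times[-\tfrac1k,0]$ is equivalent to $\d_y f_t(\mathrm v)\in[0,k]$ and $\d_x f_t(\mathrm v)\in[-\ell,0]$, i.e. to $(\d_x f_t(\mathrm v),\d_y f_t(\mathrm v))\in N(P)$. For vertices adjacent to a leaf this is immediate from the defining residues of $\d f_t$ ($-\ell$ on $L_1(\mathcal A_t)$, $k$ on $L_2(\mathcal A_t)$, $0$ on $L_3(\mathcal A_t),L_4(\mathcal A_t)$) together with the residue convention of Section~\ref{sec:tropical_functions}: a left leaf forces $\d_x f_t(\mathrm v)=-\ell$, a right leaf forces $\d_x f_t(\mathrm v)=0$, a bottom leaf forces $\d_y f_t(\mathrm v)=k$, and a top leaf forces $\d_y f_t(\mathrm v)=0$, placing the relevant component of the gradient on the corresponding side of $\partial N(P)$.

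For the interior vertices I would argue by contradiction using Lemma~\ref{lem:critical_points}. If the triple-v-zero point $(u_0,v_0)$ lay in the open exterior of $\overline{D_\text{Az}}$, then on a neighborhood disjoint from $\overline{D_\text{Az}}$ and off the finitely many lines of Lemma~\ref{lem:line_e-zeros} one would have $\sum_\mu Z_\mu=2k\ell$, forcing the absence of double e-zeros there. One then analyzes the three lines $\{\d F_t(\eta(e_i);u,v)=0\}$ through $(u_0,v_0)$, one per edge $e_i$ at $\mathrm v$; their defining affine functions sum to zero by balancing, so along each such line near $(u_0,v_0)$ the two remaining edges at $\mathrm v$ carry opposite signs, and tracking the signs at the far endpoints produces a double e-zero at nearby parameters, contradicting the lemma. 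The delicate part here is the local classification of these e-zeros, since by Definition~\ref{def:e-zeros} the type depends on the signs at \emph{both} endpoints of the edge within a common component; alternatively, one may pass to the dual function $f_t^*$ and invoke a discrete maximum principle for the Kirchhoff problem~\eqref{eq:intro:kirchhoff}, whose boundary data place $\nabla f_t^*$, and hence $(\d_x f_t(\mathrm v),\d_y f_t(\mathrm v))$ after a $\pi/2$ rotation, inside $N(P)$.
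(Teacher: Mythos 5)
Your first half is correct and is essentially the paper's own computation: smoothness of $\mathcal A_t$ lets you write $\d F_t(\eta;u,v)=(\d_x F_t(\mathrm v),\d_y F_t(\mathrm v))\cdot \eta$ near the degree-three vertex $\mathrm v$, the three outward primitive vectors span $\RR^2$, so the triple v-zero condition is equivalent to the two decoupled equations $-\ell(1+kv)=\d_x f_t(\mathrm v)$ and $k(1+\ell u)=\d_y f_t(\mathrm v)$, whose unique solution is the stated formula. The genuine content of the proposition, however, is that this solution lies in $\overline{D_\text{Az}}$, and there your argument has a gap, at exactly the spot you flag as delicate. First, the leaf-adjacent case is not ``immediate from the residues'': a vertex adjacent to a single leaf has only one component of its gradient pinned (an $L_1$-leaf forces $\d_x f_t(\mathrm v)=-\ell$, hence the $v$-coordinate of $\map_t(\mathrm v)$ equals $0$), which places $\map_t(\mathrm v)$ on the \emph{line} containing one side of $\partial D_\text{Az}$, not in the closed rectangle; the other coordinate is not controlled by this argument. (In the paper, the statement that leaf-adjacent vertices land on the corresponding sides is a corollary deduced \emph{after} the proposition, using it.)

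Second, and more seriously, the contradiction argument is incomplete without the selection step that the paper performs first (``moving to neighboring vertices/edges if necessary''): one must run the sign-flip argument on an edge $e$ exactly one of whose endpoints is a triple v-zero at $(u_0,v_0)$. If both endpoints of $e$ are triple v-zeros there --- which does happen, e.g.\ when $\mathcal A_{t,\mu}$ is a triangle, since vanishing of the exact form $\d F_t$ on two sides forces vanishing on the third, so all three vertices map to the same point --- then, moving along the line $\d F_t(\eta(e);u,v)=0$, the signs of the two remaining edges at \emph{both} endpoints flip simultaneously as one crosses $(u_0,v_0)$; the two flips cancel in the classification of Definition~\ref{def:e-zeros}, so $e$ can be a simple e-zero on both sides, and no double e-zero (hence no contradiction with Lemma~\ref{lem:critical_points}) need appear. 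The missing ingredients are that a good edge always exists --- not every vertex can be a triple v-zero, because the four residues~\eqref{eq:action_function_residues} never vanish simultaneously, and the bounded part of $\mathcal A_t$ is connected --- and that, once run on such an edge, the argument applies to \emph{every} vertex, leaf-adjacent ones included, making your separate leaf case unnecessary. Finally, your fallback via a ``discrete maximum principle'' for the Kirchhoff problem does not patch this: the maximum principle for $\Delta_l$ controls the \emph{values} of $f_t^*$, whereas what is needed is the gradient bound $\nabla f_t^*(\mathrm v^*)\in[0,k]\times[0,\ell]$, which is precisely the statement under proof.
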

\begin{proof}
The condition that~$\mathrm v$ is a triple v-zero is equivalent to~$\d_x F_t(\mathrm v;u,v)=\d_y F_t(\mathrm v;u,v)=0$. Hence, by~\eqref{eq:action_function_tropical},~$(u,v)=\map_t(\mathrm v)$ if and only if
\begin{equation}
-\ell(1+kv)-\d_x f_t(\mathrm v)=0, \quad \text{and} \quad k(1+\ell u)-\d_y f_t(\mathrm v)=0.
\end{equation}
Solving for~$u$ and~$v$ provides the formula for~$\map_t$.

 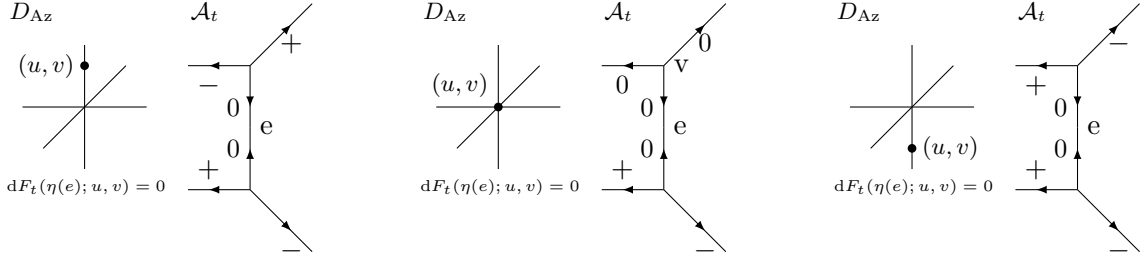
\begin{figure}[t]
 \begin{center}
\begin{tikzpicture}[scale=.55]

\begin{scope}
\draw (-1,-1) -- (1,1);
\draw (1.5,0) -- (-1.5,0);
\draw (0,1.5) -- (0,-1.5) node[below] {\tiny{$\d F_t(\eta(e);u,v)=0$}};
\fill (0,1) circle (3pt);
\draw (0,1) node[left] {\footnotesize{$(u,v)$}};
\draw (-.5,2.3) node[left] {\footnotesize{$D_\text{Az}$}};
\end{scope}

\begin{scope}[xshift=4cm, yshift=1cm]
\draw[-latex] (0,0) -- (1,1) node[below] {$+$};
\draw (.9,.9) -- (1.5,1.5);
\draw[-latex] (0,0) -- (-1,0) node[below] {$-$};
\draw (-.9,0) -- (-1.5,0);
\draw[-latex] (0,0) -- (0,-1.) node[left] {\small{$0$}};
\draw (0,-.9) -- (0,-1.5);
\draw[-latex] (0,-3) -- (1,-4) node[below] {$-$};
\draw (.9,-3.9) -- (1.5,-4.5);
\draw[-latex] (0,-3) -- (-1,-3) node[above] {$+$};
\draw (-.9,-3) -- (-1.5,-3);
\draw[-latex] (0,-3) -- (0,-2) node[left] {\small{$0$}};
\draw (0,-2.1) -- (0,-1.5);
\draw (0,-1.5) node[right] {$\mathrm e$};
\draw (-.5,1.3) node[left] {\footnotesize{$\mathcal A_t$}};
\end{scope}

\begin{scope}[xshift=10cm]
\draw (-1,-1) -- (1,1);
\draw (1.5,0) -- (-1.5,0);
\draw (0,1.5) -- (0,-1.5) node[below] {\tiny{$\d F_t(\eta(e);u,v)=0$}};
\fill (0,0) circle (3pt);
\draw (0,0) node[above left] {\footnotesize{$(u,v)$}};
\draw (-.5,2.3) node[left] {\footnotesize{$D_\text{Az}$}};
\end{scope}

\begin{scope}[xshift=14cm, yshift=1cm]
\draw[-latex] (0,0) -- (1,1) node[below] {\small{$0$}};
\draw (.9,.9) -- (1.5,1.5);
\draw[-latex] (0,0) -- (-1,0) node[below] {\small{$0$}};
\draw (-.9,0) -- (-1.5,0);
\draw[-latex] (0,0) -- (0,-1.) node[left] {\small{$0$}};
\draw (0,-.9) -- (0,-1.5);
\draw (0,0) node[right] {$\mathrm v$};
\draw[-latex] (0,-3) -- (1,-4) node[below] {$-$};
\draw (.9,-3.9) -- (1.5,-4.5);
\draw[-latex] (0,-3) -- (-1,-3) node[above] {$+$};
\draw (-.9,-3) -- (-1.5,-3);
\draw[-latex] (0,-3) -- (0,-2) node[left] {\small{$0$}};
\draw (0,-2.1) -- (0,-1.5);
\draw (0,-1.5) node[right] {$\mathrm e$};
\draw (-.5,1.3) node[left] {\footnotesize{$\mathcal A_t$}};
\end{scope}

\begin{scope}[xshift=20cm]
\draw (-1,-1) -- (1,1);
\draw (1.5,0) -- (-1.5,0);
\draw (0,1.5) -- (0,-1.5) node[below] {\tiny{$\d F_t(\eta(e);u,v)=0$}};
\fill (0,-1) circle (3pt);
\draw (0,-1) node[right] {\footnotesize{$(u,v)$}};
\draw (-.5,2.3) node[left] {\footnotesize{$D_\text{Az}$}};
\end{scope}

\begin{scope}[xshift=24cm, yshift=1cm]
\draw[-latex] (0,0) -- (1,1) node[below] {$-$};
\draw (.9,.9) -- (1.5,1.5);
\draw[-latex] (0,0) -- (-1,0) node[below] {$+$};
\draw (-.9,0) -- (-1.5,0);
\draw[-latex] (0,0) -- (0,-1.) node[left] {\small{$0$}};
\draw (0,-.9) -- (0,-1.5);
\draw[-latex] (0,-3) -- (1,-4) node[below] {$-$};
\draw (.9,-3.9) -- (1.5,-4.5);
\draw[-latex] (0,-3) -- (-1,-3) node[above] {$+$};
\draw (-.9,-3) -- (-1.5,-3);
\draw[-latex] (0,-3) -- (0,-2) node[left] {\small{$0$}};
\draw (0,-2.1) -- (0,-1.5);
\draw (0,-1.5) node[right] {$\mathrm e$};
\draw (-.5,1.3) node[left] {\footnotesize{$\mathcal A_t$}};
\end{scope}
\end{tikzpicture}
 \end{center}
\caption{The figure illustrates how the zeros of~$\d F_t$ change as~$(u,v)$ varies along the line~$\d F_t(\eta;u,v)=0$. The zeros transition from a double e-zero (left), to a triple v-zero (middle), and finally to a simple e-zero (right).
\label{fig:slope_zero_vary}}
\end{figure}

What remains is to prove that~$(u,v)\in \overline{D_\text{Az}}$. Assume for a contradiction that~$(u,v)\notin \overline{D_\text{Az}}$ and~$\d F_t$ has a triple v-zero at~$\mathrm v$. Let~$e$ and~$e'$ be adjacent edges to~$\mathrm v$, oriented away from~$\mathrm v$. We may assume (by moving to neighboring vertices/edges if necessary) that~$e$ is adjacent to one vertex at which~$\d F_t$ does not have a triple v-zero, meaning that~$\d F_t$ is non-zero on an edge adjacent to~$e$. The two lines~$\d F_t(\eta(e);u,v)=0$ and~$\d F_t(\eta(e');u,v)=0$ in the~$(u,v)$-plane have tangent vectors~$\eta(e)$ and~$\eta(e')$, respectively, see Lemma~\ref{lem:line_e-zeros}, so they intersect only at the point~$\map_t(\mathrm v)$. If we vary~$(u,v)$ slightly along the first line, then~$\d F_t(\eta(e');u,v)\neq 0$ and~$e$ turns into an e-zero. Moreover, if we vary~$(u,v)$ in one direction, then~$e$ is a simple e-zero, while it is a double e-zero if we vary~$(u,v)$ in the other direction. See Figure~\ref{fig:slope_zero_vary}. However, there are no double e-zeros for~$(u,v)\in \RR^2\backslash \overline{D_\text{Az}}$, according to Lemma~\ref{lem:critical_points}. Hence,~$(u,v)\in \overline{D_\text{Az}}$.
\end{proof}
The map~$\map_t$ is not, in general, a bijection. If, for instance,~$\mathcal A_{t,\mu}$, with~$\mu \in \mathcal Q$, only contains three edges, then if one of the vertices is a triple v-zero, so is the other vertex, and both are mapped to the same point. Similarly, if~$\mu\in \mathcal S$ and~$\mathcal A_{t,\mu}$ contain three edges, then the three vertices of the triangle are mapped to the same point. Indeed, if the exact~$1$-form~$\d F_t$ vanishes on two sides of the triangle, it must also vanish on the third one. Recall also, that in these situations~$\mathcal A_{t,\mu}$ does not correspond to a macroscopic region in the Aztec diamond, as discussed in Remark~\ref{rem:phases_appearing}.

The map~$\map_t$ can also be used to describe the tropical arctic curves.
\begin{theorem}\label{thm:arctic_curve_tropical}
Given two adjacent vertices~$\mathrm v,\mathrm v'\in V(\mathcal A_t)$, consider the (possibly empty) line segment in~$\overline{D_\text{Az}}$ with endpoints~$\map_t(\mathrm v)$ and~$\map_t(\mathrm v')$. The tropical arctic curve is the union of all such line segments with the union taken over all pairs of adjacent vertices of~$\mathcal A_t$. 

The line segment with endpoints at~$\map_t(\mathrm v)$ and~$\map_t(\mathrm v')$ is a subset of the line~$\d F_t(\eta(e);u,v)=0$ where~$e$ is the edge connecting~$\mathrm v$ and~$\mathrm v'$.
\end{theorem}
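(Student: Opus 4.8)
The plan is to first reduce Theorem~\ref{thm:arctic_curve_tropical} to a clean combinatorial characterization of the tropical arctic curve, and then to analyze how the zeros of $\d F_t$ behave as $(u,v)$ moves along a single line $L_e=\{(u,v):\d F_t(\eta(e);u,v)=0\}$ associated with an edge $e$.

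First I would record the characterization. Combining Definition~\ref{def:regions} with Lemma~\ref{lem:critical_points}, a point $(u,v)\in D_\text{Az}$ lies on the tropical arctic curve if and only if either $\d F_t$ has a triple v-zero, or $\d F_t$ has no triple v-zero and no $Z_\mu$ attains its maximal value; in the latter case Lemma~\ref{lem:critical_points} forces every $Z_\mu$ to equal its minimal value, so $\sum_\mu Z_\mu=2k\ell-2$ and there is exactly one double e-zero. Thus the arctic curve is precisely the union of the triple v-zero locus and the locus of points carrying a (necessarily unique) double e-zero. By Definition~\ref{def:vertex_map} and Proposition~\ref{prop:vertex_map_well_defined}, the triple v-zero locus is exactly $\{\map_t(\mathrm v):\mathrm v\in V(\mathcal A_t)\}$. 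Moreover, at a triple v-zero $\d F_t$ vanishes on all three adjacent edges; hence for adjacent $\mathrm v,\mathrm v'$ joined by $e$, both $\map_t(\mathrm v)$ and $\map_t(\mathrm v')$ satisfy $\d F_t(\eta(e);\cdot)=0$ and so lie on $L_e$. Since $L_e$ is a line (Lemma~\ref{lem:line_e-zeros}), the whole segment between them lies on $L_e$, which already yields the final assertion of the theorem.

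The crux is to locate the double e-zeros of $e$ along $L_e$. Parametrize $L_e$ by arclength $s$ in the direction $\eta(e)$ (pointing from $\mathrm v$ to $\mathrm v'$). Let $\mathcal A_{t,\mu}$ be one of the two components bordering $e$, let $a$ be the boundary edge of $\mathcal A_{t,\mu}$ at $\mathrm v$ other than $e$, and let $c$ be the boundary edge of $\mathcal A_{t,\mu}$ at $\mathrm v'$ other than $e$; these are the vectors $\eta',\eta''$ of Definition~\ref{def:e-zeros}. From \eqref{eq:action_function_tropical} one computes $\nabla_{(u,v)}\d F_t(\eta;u,v)=k\ell(\eta_2,-\eta_1)$, so the derivative of $s\mapsto \d F_t(\eta(a);\cdot)$ along $L_e$ equals $k\ell\,(\eta(e)\times\eta(a))$, and likewise with $\eta(c)$, where $\times$ denotes the $2\times 2$ determinant. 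These two affine functions of $s$ vanish at $s=s_1$ (where $(u,v)=\map_t(\mathrm v)$) and $s=s_2$ (where $(u,v)=\map_t(\mathrm v')$), respectively. Now $\mathcal A_{t,\mu}$ bounds a convex polygon, being a complement component of the convex piecewise-linear function $P_t$; traversing its boundary with the region on the left makes both turns $a\to e$ at $\mathrm v$ and $e\to c$ at $\mathrm v'$ strict left turns, which is exactly $\eta(e)\times\eta(a)>0$ and $\eta(e)\times\eta(c)>0$. Hence the two slopes have the same sign, so $\d F_t(\eta(a);\cdot)$ and $\d F_t(\eta(c);\cdot)$ take opposite signs precisely for $s$ strictly between $s_1$ and $s_2$. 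By Definition~\ref{def:e-zeros} (using balancing \eqref{eq:balancing_differential} to see the remaining adjacent slopes are nonzero there), this says $e$ is a double e-zero exactly on the open segment between $\map_t(\mathrm v)$ and $\map_t(\mathrm v')$, and a simple e-zero outside it; cf.\ Figures~\ref{fig:e-zeros} and~\ref{fig:slope_zero_vary}.

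With this in hand I would assemble both inclusions. For $\supseteq$: the endpoints of each segment are triple v-zeros, hence on the arctic curve; an interior point is either a triple v-zero or, by the previous paragraph, a double e-zero, and since at most one double e-zero can occur in the absence of a triple v-zero (Lemma~\ref{lem:critical_points}), such a point lies on the arctic curve. For $\subseteq$: a triple v-zero is some $\map_t(\mathrm v)$, an endpoint of the segment for any neighbour $\mathrm v'$, which exists since $\mathrm v$ is trivalent; a point with a unique double e-zero on an edge $e=\mathrm v\mathrm v'$ lies on $L_e$ and, by the sign analysis, strictly between $\map_t(\mathrm v)$ and $\map_t(\mathrm v')$. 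I expect the main obstacle to be exactly the sign computation of the third paragraph — pinning down that the two neighbouring slopes vary with the \emph{same} sign along $L_e$, where the convexity of the complement components of $P_t$ enters decisively. Once those signs are controlled, the rest is bookkeeping resting on Lemmas~\ref{lem:possible_zeros}--\ref{lem:line_e-zeros} and Proposition~\ref{prop:vertex_map_well_defined}.
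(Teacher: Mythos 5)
Your proof is correct, and it resolves the crux of the theorem by a genuinely different argument than the paper's. Both proofs share the same skeleton: Definition~\ref{def:regions} and Lemma~\ref{lem:critical_points} reduce the theorem to identifying the arctic curve with the locus of triple v-zeros (which is~$\map_t(V(\mathcal A_t))$ by Proposition~\ref{prop:vertex_map_well_defined}) together with the locus of double e-zeros, and both observe that~$\map_t(\mathrm v)$,~$\map_t(\mathrm v')$ lie on the line~$L_e=\{\d F_t(\eta(e);u,v)=0\}$ and split it into three parts on which the type of the e-zero at~$e$ is constant and flips at the division points. The difference is how the ambiguity between the patterns (simple, double, simple) and (double, simple, double) is resolved. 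The paper resolves it globally and softly: by Lemma~\ref{lem:critical_points} double e-zeros cannot occur for~$(u,v)\in\RR^2\backslash\overline{D_\text{Az}}$, and since the two unbounded ends of~$L_e$ leave the compact set~$\overline{D_\text{Az}}$, the middle part must be the double one. You resolve it locally and quantitatively: along~$L_e$ the neighbouring slopes~$\d F_t(\eta(a);\cdot)$ and~$\d F_t(\eta(c);\cdot)$ are affine with derivatives~$k\ell\,(\eta(e)\times\eta(a))$ and~$k\ell\,(\eta(e)\times\eta(c))$, which have the same nonzero sign because the complement component~$\mathcal A_{t,\mu}$ is convex (a standard fact about regions of linearity of the convex function~$P_t$ that the paper never needs), so the two slopes have opposite signs exactly strictly between their roots. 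Your route buys a self-contained computation that also records on which side of each~$\map_t(\mathrm v)$ the e-zero is double, and it re-derives, rather than uses, the confinement of double e-zeros to~$\overline{D_\text{Az}}$; the paper's route avoids all computation at the price of invoking the exterior ($\RR^2\backslash\overline{D_\text{Az}}$) case of Lemma~\ref{lem:critical_points}, which is precisely why that lemma is stated off~$\partial D_\text{Az}$ rather than only on~$D_\text{Az}$. Two cosmetic remarks: strictness of the turns, \emph{i.e.},~$\eta(e)\times\eta(a)\neq 0$, follows from balancing and primitivity at a trivalent vertex (no two adjacent edges can be parallel or anti-parallel), not from convexity alone; and in your final assembly the justification ``at most one double e-zero can occur'' should instead be that a double e-zero forces every~$Z_\mu$ to its minimal value via~\eqref{eq:number_zero}, hence the point lies in no~$R_\mu$ --- which is exactly the characterization you already established in your first paragraph.
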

\begin{proof}
If~$(u,v)\in D_\text{Az}$ is such that~$\d F_t$ has only simple v-zeros and simple e-zeros, then Lemma~\ref{lem:critical_points} implies that there exists a single~$\mu\in\mathcal N$ such that~$(u,v)\in R_\mu$. In particular,~$(u,v)$ is not in the tropical arctic curve. 

If~$(u,v)\in D_\text{Az}$ is such that~$\d F_t$ has a double e-zero, then the same lemma implies that~$(u,v)$ lies in the tropical arctic curve. Let~$\mathrm v, \mathrm v' \in V(\mathcal A_t)$ be adjacent to an edge~$e\in E(\mathcal A_t)$, oriented from, say,~$\mathrm v$ to~$\mathrm v'$. The solutions of the equation~$\d F_t(\eta(e);u,v)=0$ form a line in the~$(u,v)$-plane. For~$(u,v)$ in this line,~$e$ is either a simple or double e-zero or at least one of~$\mathrm v, \mathrm v'$ is a triple v-zero. In particular,~$\map_t(\mathrm v)$ and~$\map_t(\mathrm v')$ are contained in this line, and they divide the line into three parts. If~$\map_t(\mathrm v)=\map_t(\mathrm v')$, then the middle part has zero length. On neighboring parts,~$e$ changes from a simple e-zero to a double e-zero, or the other way around (if the length is zero this change happens twice at the same point), see Figure~\ref{fig:slope_zero_vary}. In other words, along this line, the e-zero at~$e$ is (simple, double, simple), or (double, simple, double). However, since~$\d F_t$ cannot have a double e-zero if~$(u,v)\in \RR^2\backslash \overline{D_\text{Az}}$, by Lemma~\ref{lem:critical_points}, the order has to be (simple, double, simple). That is, the part of the line~$\d F_t(\eta(e);u,v)=0$ that is contained in the tropical curve is the line segment with endpoints at~$\Omega(\mathrm v)$ and~$\Omega(\mathrm v')$.
\end{proof}

\subsection{The zero-temperature limit}\label{sec:tropical_limit_results}
In this section, we leverage the results for finite~$\beta$ established in~\cite{BB23}, specifically those discussed in Section~\ref{sec:finite_beta_results}, to obtain results in the large~$\beta$ limit. To ensure the assumption of Section~\ref{sec:finite_beta_results}, that~$\mathcal A_{\beta,\mu}$ exists for all~$\mu\in \mathcal N$ for large enough~$\beta$, it is sufficient to assume that~$\mathcal A_t$ is a smooth tropical curve (cf. Definition~\ref{def:t-curve}), which, indeed, is our running assumption in this section. the fact that this is sufficient follows from the convergence~\eqref{eq:limit_amoeba}.

To connect the finite~$\beta$ results with the large~$\beta$ limit, we first show that the macroscopic regions of Definition~\ref{def:finite_region} converge to those of Definition~\ref{def:regions}. Recall that the macroscopic region in the finite temperature corresponding to~$\mu\in\mathcal N$ is denoted by~$R_{\beta,\mu}$ and the region in the zero-temperature limit corresponding to~$\mu\in\mathcal N$ is denoted by~$R_\mu$. 
\begin{theorem}\label{thm:tropical_limit_arctic_curve}
Let~$(u,v)\in D_\text{Az}$ be in the region~$R_\mu$ with~$\mu \in \mathcal F\cup \mathcal Q\cup \mathcal S$. There is a~$\beta_0=\beta_0(u,v)$ such that~$(u,v)\in R_{\beta,\mu}$ if~$\beta>\beta_0$.\footnote{While convergence of the arctic curves themselves might be an interesting avenue for future investigation, this is not the focus of this paper.}
\end{theorem}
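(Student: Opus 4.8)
The plan is to connect the two definitions of the macroscopic regions via the differentials $\d F_\beta$ and $\d F_t$, using the convergence results from Section~\ref{sec:tropical_limit}. The finite-$\beta$ region $R_{\beta,\mu}$ is determined by the count $Z_{\beta,\mu}$ of simple zeros of $\d F_\beta$ in the real component $A_{\beta,\mu}$ (Definition~\ref{def:finite_region}), while the tropical region $R_\mu$ is determined by the count $Z_\mu$ of simple v-zeros and simple e-zeros of $\d F_t$ in $\mathcal A_{t,\mu}$ (Definition~\ref{def:regions}). The key is therefore to show that, for a \emph{fixed} $(u,v)$ with $\d F_t$ having only simple v-zeros and simple e-zeros (which is the generic situation inside each $R_\mu$), the zeros of $\d F_\beta$ on the real locus converge, as $\beta\to\infty$, to the zeros of $\d F_t$ in the sense that $Z_{\beta,\mu}=Z_\mu$ for all large $\beta$.

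First I would express $\d F_\beta$ via~\eqref{eq:action_function_beta} and observe that its $\re$-part, integrated along the real components $A_{\beta,\mu}$, is governed precisely by the imaginary normalized differential $\d\log f_\beta$ together with the two explicit terms $k(1+\ell u)\d\log w-\ell(1+kv)\d\log z$. Under the $\Log_\beta$ map (cf.~\eqref{eq:log_map}), the real parts of the periods of $\d\log z$ and $\d\log w$ are exactly the coordinate differentials $\d x$ and $\d y$ on the amoeba, which in the limit become the tropical $\d x,\d y$ appearing in~\eqref{eq:action_function_tropical}. Combined with~\eqref{eq:limit_b-cycles}, which states $\frac{1}{2\pi\i}\int_{\gamma_{\beta,e}}\omega_{\beta,j}\to \mathrm w_j(\eta(e))$ for the residue-normalized $1$-forms, this gives convergence of $\frac{1}{\beta}\re\,\d F_\beta$ along the shrinking tubes to $\d F_t$ evaluated on the corresponding primitive tangent vectors. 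Thus the sign of $\d F_t(\eta(e);u,v)$ on each edge $e$ records the sign of $\re\,\d F_\beta$ along the part of $A_{\beta,\mu}$ converging to $e$, at least when that sign is nonzero.

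The main work is a \emph{localization} argument: a simple zero of $\re\,\d F_\beta$ on $A_{\beta,\mu}$ occurs precisely where $\re\,\d F_\beta$ changes sign as one traverses the real component, and $A_{\beta,\mu}$ is a piecewise-smooth loop (for $\mu\in\mathcal S$) or arc (for $\mu\in\mathcal F\cup\mathcal Q$) whose image under $\Log_\beta$ converges to $\mathcal A_{t,\mu}$. I would partition $A_{\beta,\mu}$ into pieces converging to the individual edges of $\mathcal A_{t,\mu}$ and pieces converging to the vertices. On the edge-pieces the sign of $\re\,\d F_\beta$ stabilizes (by the previous paragraph) to the nonzero sign of $\d F_t(\eta(e);u,v)$, so no zeros accumulate there for large $\beta$; all sign changes of $\re\,\d F_\beta$ must therefore occur in small neighborhoods of the vertices. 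A simple v-zero of $\d F_t$ at $\mathrm v$ with respect to $\mathcal A_{t,\mu}$ means the two adjacent edges in $\mathcal A_{t,\mu}$ carry $\d F_t$-values of the same sign, which forces a local extremum of $\re F_\beta$ and hence exactly one simple zero of $\d F_\beta$ near $\mathrm v$; a simple e-zero, where $\d F_t$ vanishes on an edge $e$ but the two flanking edges share a sign, likewise produces one simple zero of $\d F_\beta$ along the corresponding tube. Counting, this yields $Z_{\beta,\mu}=Z_\mu$ for all large $\beta$, and by Lemma~\ref{lem:critical_points} the total matches the $2k\ell$ zeros of $\d F_\beta$, so the minimality/maximality pattern defining the phases is preserved. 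The hard part will be the uniformity near the vertices: I must rule out the appearance of spurious non-real zeros of $\d F_\beta$ (which would signal the rough region) and confirm that the single double e-zero scenario distinguishing the arctic curve from the interior of a facet is the only way $Z_{\beta,\mu}$ can fail to equal $Z_\mu$. Since $(u,v)\in R_\mu$ with only simple zeros is assumed, this degenerate case is excluded, and choosing $\beta_0=\beta_0(u,v)$ past which all sign patterns have stabilized completes the proof.
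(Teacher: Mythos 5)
Your skeleton — comparing the zero counts $Z_{\beta,\mu}$ and $Z_\mu$, and producing a critical point of $\re F_\beta$ on $A_{\beta,\mu}$ near each simple v-zero or simple e-zero of $\d F_t$ — is the right idea and is close in spirit to the paper's proof. However, the analytic input you cite cannot deliver the step everything hinges on. You claim that \eqref{eq:limit_b-cycles} gives convergence of $\beta^{-1}\re \d F_\beta$ \emph{along} the real arcs in the shrinking tubes to $\d F_t(\eta(e))$, hence sign stabilization on edge-pieces. That is not what \eqref{eq:limit_b-cycles} says: since $\d F_\beta$ is imaginary normalized, the period $\frac{1}{2\pi\i}\int_{\gamma_{\beta,e}}\d F_\beta$ is a real number that measures the increment of $\im F_\beta$ \emph{across} the tube (the conjugation-symmetric cycle $\gamma_{\beta,e}$ splits into an arc $\alpha$ crossing the tube and its conjugate, so the period equals $\tfrac{1}{\pi}\im\int_\alpha \d F_\beta$); it carries no information about the longitudinal variation of $\re F_\beta$ along the real components. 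The input you actually need is the Hausdorff-graph convergence \eqref{eq:limit_harmonic_amoeba} (Lang's Theorem 1) applied to the triple $\d\log z$, $\d\log w$, $\d\log f_\beta$, which gives that $\log_{\e^\beta}|f_\beta|$, and hence $\beta^{-1}\re F_\beta$ restricted to $A_{\beta,\mu}$, is uniformly close to $F_t$ on $\mathcal A_{t,\mu}$; this is precisely the paper's route, and you never invoke it. Once you have it, no pointwise sign control on edges is needed (uniform convergence of functions would not control derivatives anyway): the alternating values of $F_t$ at its $Z_\mu$ local maxima and minima on $\mathcal A_{t,\mu}$ persist under uniform approximation and force at least $Z_\mu$ critical points of $\re F_\beta$ on $A_{\beta,\mu}$, i.e.\ $Z_{\beta,\mu}\geq Z_\mu$.

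The second gap is that you leave the exclusion of non-real zeros and the exactness of the count as an acknowledged ``hard part,'' and your attempted dismissal (``since $(u,v)\in R_\mu$ with only simple zeros is assumed, this degenerate case is excluded'') is circular: assumptions on the tropical form $\d F_t$ do not by themselves constrain the finite-$\beta$ form $\d F_\beta$. The paper dispatches this with no work at all: because $(u,v)\in R_\mu$ means $Z_\mu$ equals the \emph{maximal} value permitted for $Z_{\beta,\mu}$ by the a priori constraint \eqref{eq:zeros_dfbeta} from~\cite{BB23} ($2$, $3$, or $4$ according to $\mu\in\mathcal F,\mathcal Q,\mathcal S$), the reverse inequality $Z_{\beta,\mu}\leq Z_\mu$ is automatic, so the lower bound above already gives $Z_{\beta,\mu}=Z_\mu$; and once $Z_{\beta,\mu}$ is maximal, the total count of $2k\ell$ zeros forces all other components to their minimal counts and leaves no room for non-real or higher-order zeros, so Definition~\ref{def:finite_region} yields $(u,v)\in R_{\beta,\mu}$. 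Your alternative closing — producing a zero of $\d F_\beta$ near every tropical zero in \emph{every} component and matching the global count from Lemma~\ref{lem:critical_points} — could also work, but it demands the production step in all components and still rests on the unproven convergence discussed above, so as written the proposal is not a proof.
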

\begin{proof}
If~$(u,v)\in R_\mu$, then Lemma~\ref{lem:critical_points} and Definition~\ref{def:regions} implies that~$Z_\mu=2$ if~$\mu\in \mathcal F$,~$Z_\mu=3$ if~$\mu \in \mathcal Q$, and~$Z_\mu=4$ if~$\mu \in \mathcal S$. The statement will follow from by showing that if~$\beta$ is large enough, then~$Z_{\beta,\mu}=Z_\mu$. The inequality~$Z_{\beta,\mu}\leq Z_\mu$ follows from~\eqref{eq:zeros_dfbeta}. The opposite inequality follows from~\eqref{eq:limit_harmonic_amoeba} as we will prove momentarily.

Let~$R=\{r_{p,j}\}$,~$p\in L(\mathcal R_\beta)$ and~$j=1,2,3$, be the following collection of real numbers: 
\begin{align}
 r_{p,1}&=1, & r_{p,2}&=0, & r_{p,3}&=-\ell, \quad &\text{for } p\in L_1(\mathcal R_\beta), \\
 r_{p,1}&=0, & r_{p,2}&=1, & r_{p,3}&=k, \quad &\text{for } p\in L_2(\mathcal R_\beta), \\
 r_{p,1}&=-1, & r_{p,2}&=0, & r_{p,3}&=0, \quad &\text{for } p\in L_3(\mathcal R_\beta), \\
 r_{p,1}&=0, & r_{p,2}&=-1, & r_{p,3}&=0, \quad &\text{for } p\in L_4(\mathcal R_\beta).  
\end{align}
As in Section~\ref{sec:tropical_limit}, we let~$\omega_{\beta,j}$ be the imaginary normalized differential with residues~$r_{p,j}$,~$p\in L(\mathcal R_\beta)$, and let~$\mathrm w_j$ be the exact 1-form with residues~$r_{p,j}$ at~$e_p\in L(\mathcal A_t)$, for~$j=1,2,3$, see Sections~\ref{sec:finite_spectral_curve} and~\ref{sec:tropical_functions}. By construction, 
\begin{equation}
\beta^{-1}\omega_{\beta,1}=\d \log_{\e^\beta}z, \quad \beta^{-1}\omega_{\beta,2}=\d \log_{\e^\beta}w, \quad \text{and} \quad \beta^{-1}\omega_{\beta,3}=\d \log_{\e^\beta}(f_\beta),
\end{equation}
and similarly,
\begin{equation}
\mathrm w_1=\d x, \quad \mathrm w_2=\d y, \quad \text{and} \quad \mathrm w_3=\d f_t.
\end{equation}
In particular,
\begin{equation}
\Log_{\beta,R}(q_\beta)=\left(\re\left(\int_{p_\beta}^{q_\beta}\d \log_{\e^\beta}z\right),\re\left(\int_{p_\beta}^{q_\beta}\d \log_{\e^\beta}w\right),\re \left(\int_{p_\beta}^{q_\beta}\d \log_{\e^\beta}(f_\beta)\right)\right),
\end{equation}
and
\begin{equation}
\Log_{t,R}(q_t)=\left(\int_{\mathrm v_t}^{q_t} \d x,\int_{\mathrm v_t}^{q_t} \d y,\int_{\mathrm v_t}^{q_t}\d f_t\right),
\end{equation}
where the left hand sides of the previous two equalities are defined in~\eqref{eq:log_map_generalized} and~\eqref{eq:log_map_generalized_tropical}, respectively. 

The result of~\cite[Theorem 1]{Lan20}, that is,~\eqref{eq:limit_harmonic_amoeba} tells us that we can take~$\Log_\beta p_\beta\to \mathrm v_t$ so that~$\Log_{\beta,R}(\mathcal R_\beta)\to \Log_{t,R}(\mathcal A_t)$ in Hausdorff distance on compact subsets, as~$\beta \to \infty$. Hence, the graph
\begin{equation}
\mathcal R_\beta^\circ\ni (z,w)\mapsto \left(\log_{\e^\beta}|z|,\log_{\e^\beta}|w|,\log_{\e^\beta}|f_\beta(z,w)|\right)
\end{equation}
converges in Hausdorff distance on compact subsets to the graph
\begin{equation}
\mathcal A_t\ni (x,y)\mapsto \left(x,y,f_t(x,y)\right),
\end{equation}
as~$\beta \to \infty$. Recall that in the definitions of~$f_t$ and~$\log_{\e^\beta}|f_\beta|$ we did not specify the arbitrary additive constant, and for the convergence above to be true we make sure to match those constants (for example, by matching the values at~$\mathrm v_t$ and~$p_\beta$). 

Since~$(u,v)\in R_\mu$, the function~$F_t$ has~$Z_\mu$ maxima and minima as a function on~$\mathcal A_{t,\mu}$, recall the discussion just before Lemma~\ref{lem:critical_points}. Moreover,
\begin{equation}
\beta^{-1}\re F_\beta(z,w;u,v)=k(1+\ell u)\log_{\e^\beta}|w|-\ell(1+kv)\log_{\e^\beta}|z|-\log_{\e^\beta}|f_\beta(z,w)|,
\end{equation}
and
\begin{equation}
F_t(x,y;u,v)=k(1+\ell u)y-\ell(1+kv)x-f_t(x,y),
\end{equation}
so the above convergence implies that there is a~$\beta_0=\beta_0(u,v)$ such that if~$\beta>\beta_0$, then~$\re F_\beta$ has at least~$Z_\mu$ critical points as a function on~$A_{\beta,\mu}$, that is,~$\d F_\beta$ has at least~$Z_\mu$ zeros in~$A_{\beta,\mu}$. Hence,~$Z_{\beta,\mu}\geq Z_\mu$. 
\end{proof}

With the previous theorem in our hands, we can use the results described in Section~\ref{sec:finite_beta_results} to obtain statements about the zero-temperature limit of the dimer model. We discuss below the zero-temperature limit of the limit shape and local fluctuations.

\begin{definition}\label{def:discrete_curve}
Let~$\mu_0=(0,k)\in \mathcal F$. For~$\mu \in \mathcal F\cup \mathcal Q\cup \mathcal S$ consider an oriented simple path~$\gamma_\mu$ in the dual graph of~$\mathcal A_t$ going from~$\mu$ to~$\mu_0$. The set~$\Gamma_\mu\subset LE(\mathcal A_t)$ consists of the edges crossed by~$\gamma_\mu$, oriented so that they cross~$\gamma_\mu$ from left to right.
\end{definition}
A slightly different perspective on the previous definition is the following. Pick a path along the edges of~$N_S(P_t)$ starting at~$\mu$ and ending at~$\mu_0$. Then~$\Gamma_\mu$ consists of the edges in~$\mathcal A_t$ corresponding, through the duality discussed in Section~\ref{sec:tropical_amoeba}, to the edges in that path in~$N_S(P_t)$. See the left image in Figure~\ref{fig:discrete_curve} for an example of the path~$\gamma_\mu$ and the set~$\Gamma_\mu$.

 \begin{figure}[t]
 \begin{center}
\begin{subfigure}[c]{0.35\textwidth}
    \begin{tikzpicture}[scale=1]
    \draw (0,0) node {\includegraphics[scale=.25]{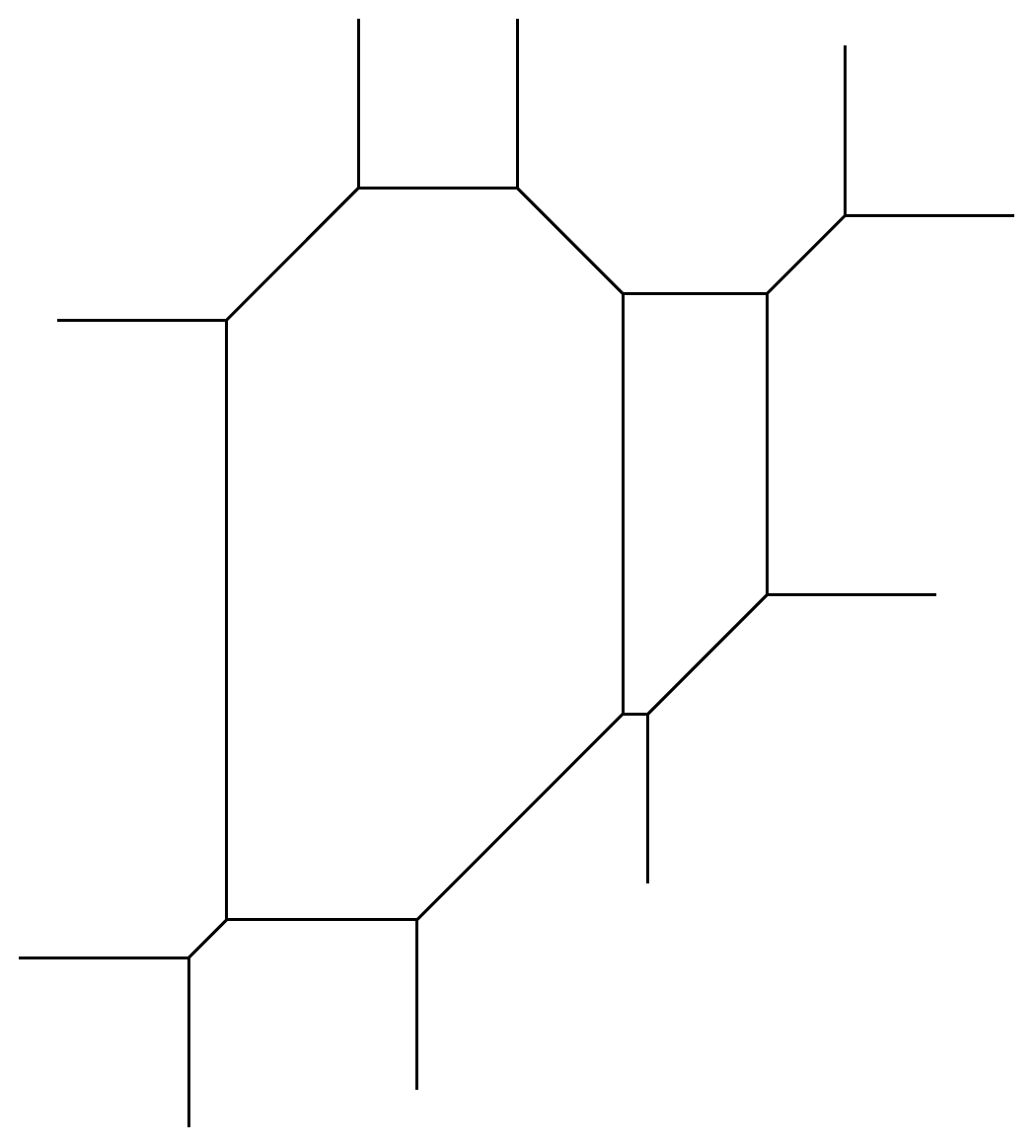}};
    \draw[-latex, gray] (1.8,.5) -- (1.8,2);
    \draw[-latex, gray] (.8,.5) -- (1.8,.5);
    \draw[-latex, gray] (-.2,.5) -- (.8,.5);

    \draw[-latex, very thick, red] (.45,1.2) -- (.45,-.6);
    \draw[-latex, very thick, red] (1.07,1.2) -- (1.07,-.1);
    \draw[-latex, very thick, red] (1.4,1.53) -- (2.2,1.53);

    \draw (2.4,2.2) node{\small{$\mathcal A_{t,\mu_0}$}};
    \draw (-.2,-.1) node{\small{$\mathcal A_{t,\mu}$}};
    \draw (1.3,.9) node{\small{$\mathrm e$}};
    \draw (-.2,.5) node[above] {\small{$\gamma_\mu$}};

   \end{tikzpicture}
 \end{subfigure}
 \quad
\begin{subfigure}[c]{0.35\textwidth}
    \begin{tikzpicture}[scale=1]
    \draw[very thick, red] (1.6,.75) -- (1.6,.4);
    \draw[-latex, thick, red] (1.6,.68) -- (1.6,.7);
    \draw[very thick, red] (1.55,.3) -- (.9,.3);
    \draw[-latex, thick, red] (1.33,.3) -- (1.35,.3);
    \draw[very thick, red] (.13,.16) -- (.43,.16);
    \draw[-latex, thick, red] (.38,.16) -- (.4,.16);

    \draw[very thick, blue] (.0,.43) -- (1.18,1);
    \draw[-latex, thick, blue] (.0,.43) -- (.59,.715);
    
    \draw (1.7,1.2) node{\small{$\mathcal A_{\beta,\mu_0}$}};
    \draw (-.24,.08) node{\small{$\mathcal A_{\beta,\mu}$}};
    \draw (1.33,.3) node[below] {\small{$\gamma_{\beta,\mathrm e}$}};
    \draw (.59,.715) node[above] {\small{$\gamma_{u,v}$}};
    
    \draw (0,0) node {\includegraphics[angle=180, scale=.5, trim={0 1.8cm 0 0}, clip]{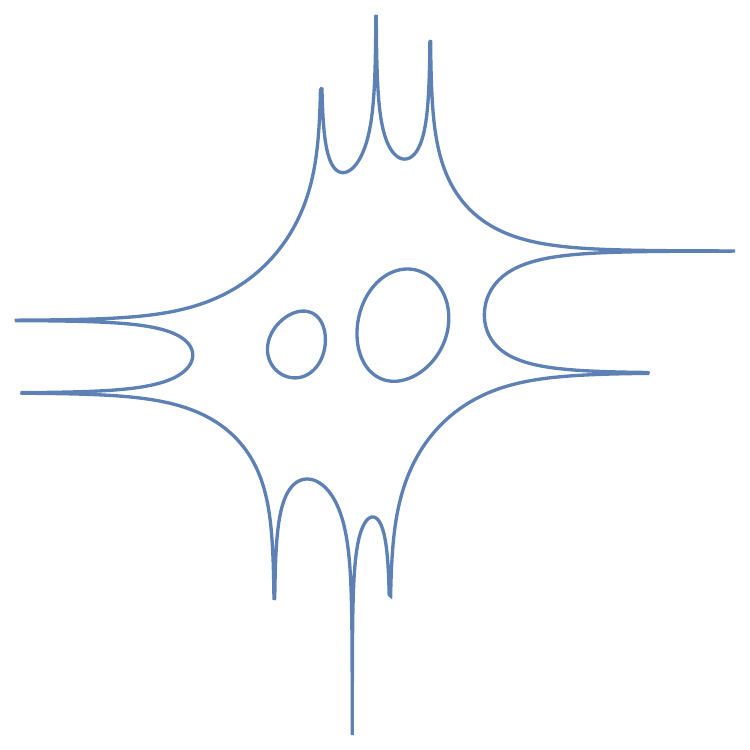}};
  \end{tikzpicture}
 \end{subfigure}
 \end{center}
\caption{Left: An example of the curve~$\gamma_\mu$ (gray) with the edges contained in~$\Gamma_\mu$ (red). Right: The curves~$\gamma_{\beta,\mathrm e}$ (red) with~$\mathrm e\in \Gamma_\mu$ from the left figure, and the curve~$\gamma_{u,v}$ (blue). For the sake of readability, the scales in the two pictures are different. 
\label{fig:discrete_curve}}
\end{figure}

\begin{corollary}\label{cor:tropical_limit_shape}
Let~$(u,v)\in R_\mu\subset D_\text{Az}$ for some~$\mu \in \mathcal F\cup \mathcal Q\cup \mathcal S$ and set
\begin{equation}
\bar h_t(u,v)=\frac{1}{k\ell}\sum_{e\in \Gamma_\mu}\d F_t(\eta(e);u,v)+1,
\end{equation}
where~$\Gamma_\mu$ is as in Definition~\ref{def:discrete_curve}. 
Then
\begin{equation}
\lim_{\beta\to\infty} \bar h_\beta(u,v)=\bar h_t(u,v),
\end{equation}
where~$\bar h_\beta$ is the limit shape for finite~$\beta$ from Theorem~\ref{thm:finite_limit_shape}.
\end{corollary}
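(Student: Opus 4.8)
The plan is to start from the finite-temperature formula~\eqref{eq:limit_shape} and to break the contour $\gamma_{u,v}$ into the tube-loops $\gamma_{\beta,e}$ for which the limit~\eqref{eq:limit_b-cycles} is already available. First, since $(u,v)\in R_\mu$, Theorem~\ref{thm:tropical_limit_arctic_curve} supplies a $\beta_0$ such that $(u,v)\in R_{\beta,\mu}$ for all $\beta>\beta_0$; for such $\beta$ formula~\eqref{eq:limit_shape} applies, and $\gamma_{u,v}$ is a symmetric simple loop in $\mathcal R_\beta$ meeting the real part exactly at the two ovals $A_{\beta,\mu}$ and $A_{\beta,\mu_0}$. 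It therefore suffices to prove
\begin{equation}
\lim_{\beta\to\infty}\frac{1}{2\pi\i}\int_{\gamma_{u,v}}\d F_\beta=\sum_{e\in \Gamma_\mu}\d F_t(\eta(e);u,v).
\end{equation}

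The main step is the contour decomposition
\begin{equation}
\frac{1}{2\pi\i}\int_{\gamma_{u,v}}\d F_\beta=\sum_{e\in\Gamma_\mu}\frac{1}{2\pi\i}\int_{\gamma_{\beta,e}}\d F_\beta,
\end{equation}
valid for all large $\beta$. Since $\d F_\beta$ is closed away from the angles $L(\mathcal R_\beta)$, this reduces to the homological identity $[\gamma_{u,v}]=\sum_{e\in\Gamma_\mu}[\gamma_{\beta,e}]$ in $H_1(\mathcal R_\beta\setminus L(\mathcal R_\beta))$. To verify it, project to the amoeba: each $\gamma_{\beta,e}$ crosses the real part of $\mathcal R_\beta$ at the two ovals bordering the tube over $e$, and concatenating the $\gamma_{\beta,e}$ along the dual path $\gamma_\mu$ of Definition~\ref{def:discrete_curve}, which runs from the region $\mu$ to the region $\mu_0$, telescopes all intermediate crossings and leaves a symmetric loop meeting the real part only at $A_{\beta,\mu}$ and $A_{\beta,\mu_0}$, the same two ovals as $\gamma_{u,v}$. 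For $\beta$ large enough that $\mathcal A_\beta$ is Hausdorff-close to $\mathcal A_t$ (by~\eqref{eq:limit_amoeba}) and the pair-of-pants decomposition of Section~\ref{sec:tropical_limit} is in force, this crossing data pins down the homology class; the left-to-right crossing convention fixed for $\Gamma_\mu$ matches the chosen orientation of each $\gamma_{\beta,e}$, so all signs agree. See the right panel of Figure~\ref{fig:discrete_curve}.

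It remains to take the limit in each summand. Writing $\omega_{\beta,1}=\d\log z$, $\omega_{\beta,2}=\d\log w$, and $\omega_{\beta,3}=\d\log f_\beta$ for the imaginary normalized differentials whose residues are recorded in the proof of Theorem~\ref{thm:tropical_limit_arctic_curve}, with tropical counterparts $\mathrm w_1=\d x$, $\mathrm w_2=\d y$, $\mathrm w_3=\d f_t$, the definition~\eqref{eq:action_function_beta} reads
\begin{equation}
\d F_\beta=k(1+\ell u)\,\omega_{\beta,2}-\ell(1+kv)\,\omega_{\beta,1}-\omega_{\beta,3}.
\end{equation}
Applying~\eqref{eq:limit_b-cycles} (from~\cite{Lan20}) to each $\omega_{\beta,j}$ and using linearity gives, for every fixed $e\in\Gamma_\mu$,
\begin{equation}
\lim_{\beta\to\infty}\frac{1}{2\pi\i}\int_{\gamma_{\beta,e}}\d F_\beta=k(1+\ell u)\,\d y(\eta(e))-\ell(1+kv)\,\d x(\eta(e))-\d f_t(\eta(e))=\d F_t(\eta(e);u,v),
\end{equation}
the last equality being~\eqref{eq:action_function_tropical}. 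As $\Gamma_\mu$ is finite, summing these finitely many limits, dividing by $k\ell$, and adding $1$ yields $\lim_{\beta\to\infty}\bar h_\beta(u,v)=\bar h_t(u,v)$.

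The main obstacle is the homological decomposition in the second paragraph: establishing the telescoping of the real-part crossings and matching the orientation of $\gamma_\mu$ with that of the tube-loops $\gamma_{\beta,e}$, while ensuring that for large $\beta$ the loop can be routed through the bulk of the amoeba without sweeping across an angle, so that no spurious residue of $\d F_\beta$ enters. Once this topological bookkeeping is settled, the remaining analytic content is entirely the off-the-shelf convergence~\eqref{eq:limit_b-cycles}.
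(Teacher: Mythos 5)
Your proposal is correct and follows essentially the same route as the paper's own proof: both deform $\gamma_{u,v}$ into the union of the tube-loops $\gamma_{\beta,e}$, $e\in\Gamma_\mu$ (justified by Theorem~\ref{thm:tropical_limit_arctic_curve} placing $(u,v)$ in $R_{\beta,\mu}$ for $\beta>\beta_0$), and then invoke the period convergence~\eqref{eq:limit_b-cycles} to pass to the limit in each summand. The paper treats the contour deformation as geometrically evident from Figure~\ref{fig:discrete_curve} and applies~\eqref{eq:limit_b-cycles} to $\d F_\beta$ directly, whereas you spell out the homological bookkeeping and decompose $\d F_\beta$ into the basic differentials $\omega_{\beta,j}$ before using linearity — a harmless elaboration of the same argument.
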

\begin{proof}
For each~$e\in \Gamma_\mu$, let~$\gamma_{\beta,e}$ be a curve in~$\mathcal R_\beta$ going around the tube approximating the edge~$e$, as defined in the end of Section~\ref{sec:tropical_limit}. Recall that the orientation of~$e$ imposes an orientation of~$\gamma_{\beta,e}$. If~$\beta>\beta_0$ where~$\beta_0=\beta_0(u,v)$ is as in Theorem~\ref{thm:tropical_limit_arctic_curve}, then~$(u,v)\in R_{\beta,\mu}$. Hence,~$\gamma_{u,v}$ in~\eqref{eq:limit_shape} can be deformed to the union~$\cup \gamma_{\beta,e}$ where the union runs over all edges in~$\Gamma_\mu$, see Figure~\ref{fig:discrete_curve}. Hence, cf.~\eqref{eq:limit_shape},
\begin{equation}
\bar h_\beta(u,v)=\frac{1}{k\ell}\sum_{e\in \Gamma_\mu}\frac{1}{2\pi \i}\int_{\gamma_{\beta,e}}\d F_\beta+1.
\end{equation}  
The corollary now follows from~\eqref{eq:limit_b-cycles}.
\end{proof}
\begin{remark}
The definition of~$\Gamma_\mu$ depends on the path~$\gamma_\mu$. However, the balancing condition~\eqref{eq:balancing_differential} implies that the tropical limit shape~$\bar h_t$ is independent of the choice of~$\gamma_\mu$. 
\end{remark}

The function~$\bar h_t$ is piecewise linear, and Corollary~\ref{cor:tropical_limit_shape} together with~\eqref{eq:action_function_tropical} implies that for~$(u,v)\in R_\mu$ with~$\mu\in \mathcal F\cup \mathcal Q\cup \mathcal S$,
\begin{equation}\label{eq:tropical_height_function}
\bar h_t(u,v)=(u,v)\cdot\nabla h_t(u,v)+H(u,v),
\end{equation}
with slope
\begin{equation}
\nabla h_t(u,v)=\left(\sum_{e\in \Gamma_\mu} \d y(\eta(e)),-\sum_{e\in \Gamma_\mu}\d x(\eta(e))\right)=\mu-\mu_0
\end{equation}
and intercept 
\begin{equation}
H(u,v)=\frac{1}{k\ell}(k,\ell)\cdot(\mu-\mu_0)+1-\frac{1}{k\ell}\sum_{e\in \Gamma_\mu}\d f_t(\eta(e)).
\end{equation}
We will see an alternative (but closely related) representation of the height function in Proposition~\ref{prop:vertex_map_gradient} below. 
\begin{remark}
For~$(u,v)$ in the boundary between two regions, say,~$R_\mu$ and~$R_{\mu'}$, Theorem~\ref{thm:arctic_curve_tropical} tells us that~$\d F_t(\eta(e);u,v)=0$ where~$e\in \mathcal A_{t,\mu}\cap \mathcal A_{t,\mu'}$. This shows that~$\bar h_t$, as given in Corollary~\ref{cor:tropical_limit_shape}, is continuous. Conversely, we know that the limit shape in the pre-limit, as given in~\eqref{eq:limit_shape}, is continuous with uniformly bounded gradient, uniform both in~$(u,v)$ and~$\beta$. Hence, its limit~$\bar h_t$ must be continuous, and we recover, from Corollary~\ref{cor:tropical_limit_shape}, the fact that~$\d F_t(\eta(e);u,v)=0$ for~$(u,v)$ on the boundary of~$R_\mu$ and~$R_{\mu'}$, and~$e\in \mathcal A_{t,\mu}\cap \mathcal A_{t,\mu'}$.
\end{remark}

The local correlations for the Aztec diamond dimer model were obtained in~\cite{BB23}. Applying those results to our setting gives us the following corollary. Recall that for~$\beta>0$ and~$(x,y)\in \RR^2$,~$\PP_{\text{Az},\beta}$ is the probability measure for the Aztec diamond and~$\PP_{(x,y),\beta}$ is the ergodic translation-invariant Gibbs measure, see Section~\ref{sec:measures}. 
\begin{corollary}\label{cor:local_limit}
Let~$(u,v)\in D_\text{Az}$ be in~$R_\mu$ for some slope~$\mu \in \mathcal F\cup \mathcal Q\cup \mathcal S$. There is a~$\beta_0=\beta_0(u,v)$, such that if~$\beta>\beta_0$ then
\begin{equation}
\PP_{\text{Az},\beta}\to \PP_{(x,y),\beta},
\end{equation}
weakly as~$N\to \infty$, where~$(x,y)\in \RR^2$ is any point in~$\mathcal A_{\beta,\mu}$.   
\end{corollary}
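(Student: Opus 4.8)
The plan is to deduce this statement directly from the finite-temperature local limit theorem, Theorem~\ref{thm:finite_local_limit}, after feeding it the information that the point $(u,v)$ falls into the correct finite-$\beta$ phase once $\beta$ is large enough. That information is exactly the content of Theorem~\ref{thm:tropical_limit_arctic_curve}. There is essentially no new analysis to perform here; the content is in lining up the hypotheses of the two earlier results.

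First I would fix $(u,v)\in R_\mu$ with $\mu\in\mathcal F\cup\mathcal Q\cup\mathcal S$ and invoke Theorem~\ref{thm:tropical_limit_arctic_curve} to produce a threshold $\beta_0=\beta_0(u,v)$ so that $(u,v)\in R_{\beta,\mu}$ for every $\beta>\beta_0$. In particular, for such $\beta$ the point $(u,v)$ lies in the frozen or smooth phase of the finite-$\beta$ model (Definition~\ref{def:finite_region}): it is neither in the rough region nor on the arctic curve. This is precisely the situation covered by the ``$(u,v)\in R_{\beta,\mu}$'' branch in the hypotheses of Theorem~\ref{thm:finite_local_limit}.

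Second, I would apply Theorem~\ref{thm:finite_local_limit} for this fixed $\beta$: choosing $(x,y)$ in the interior of $\mathcal A_{\beta,\mu}$, i.e. in the complement component of the amoeba labelled by $\mu$, the theorem gives, for any collection of edges $e_1^{(N)},\dots,e_p^{(N)}$ with the prescribed lattice displacements, the convergence of the Aztec edge correlations $\PP_{\text{Az},\beta}[e_1^{(N)},\dots,e_p^{(N)}\in\mathcal D_\text{Az}]$ to the Gibbs correlations $\PP_{\beta,(x,y)}[e_1,\dots,e_p\in\mathcal D]$ as $N\to\infty$. Since both $\PP_{\text{Az},\beta}$ and $\PP_{\beta,(x,y)}$ are determinantal measures on locally finite dimer configurations, convergence of all such joint edge-inclusion probabilities is equivalent to weak convergence of the measures at the lattice scale: every cylinder event is a finite Boolean combination of events of the form $\{e\in\mathcal D\}$, so by inclusion-exclusion its probability is a fixed polynomial in the correlation functions. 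This promotes the correlation convergence to the weak convergence $\PP_{\text{Az},\beta}\to\PP_{\beta,(x,y)}$ asserted in the statement.

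Finally, to match the statement exactly as phrased, I would invoke the fact recalled in Section~\ref{sec:measures} that $\PP_{\beta,(x,y)}$ is constant as $(x,y)$ ranges over a single component of the complement of the amoeba. Since $\mathcal A_{\beta,\mu}$ bounds precisely this component, the limiting measure is independent of the particular choice of $(x,y)$ associated with $\mathcal A_{\beta,\mu}$, and it coincides with the (frozen or smooth) Gibbs measure of slope $\mu$. The only point that requires genuine care is the interplay of quantifiers: the threshold $\beta_0$ depends on $(u,v)$, so the convergence in $N$ holds for each \emph{fixed} $\beta>\beta_0(u,v)$ and is not claimed to be uniform in $\beta$. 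This is harmless, because the statement itself fixes $\beta>\beta_0$ before sending $N\to\infty$, so no uniformity is needed and no genuine obstacle arises.
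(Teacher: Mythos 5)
Your proposal is correct and follows essentially the same route as the paper: apply Theorem~\ref{thm:tropical_limit_arctic_curve} to place $(u,v)$ in $R_{\beta,\mu}$ for all $\beta>\beta_0(u,v)$, then invoke Theorem~\ref{thm:finite_local_limit} for each such fixed $\beta$. The extra remarks you make (inclusion-exclusion promoting correlation convergence to weak convergence, and the constancy of $\PP_{\beta,(x,y)}$ over a component of the amoeba complement) are points the paper leaves implicit, but they are accurate and do not change the argument.
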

\begin{proof}
Let~$(u,v)\in R_\mu$ and~$\beta_0$ be as in Theorem~\ref{thm:tropical_limit_arctic_curve}. Then~$(u,v)\in \mathcal R_{\beta,\mu}$ if~$\beta>\beta_0$ and Theorem~\ref{thm:finite_local_limit} proves the result.
\end{proof}


\subsection{Further properties of the tropical arctic curve}\label{sec:arctic_curve_properties}
In this section, we explore the properties of the tropical arctic curve. Similar features of the arctic curve for finite~$\beta$ were derived in~\cite{BB23}. It is therefore likely that some of the statements below could be derived using the knowledge from the finite temperature regime. However, it is not necessary and in this section, we will only rely on the definitions given in Section~\ref{sec:tropical_action}.

The following claim readily follows from Proposition~\ref{prop:vertex_map_well_defined}.
\begin{corollary}
The map~$\map_t$ maps the vertices connected with leaves to the boundary of~$D_\text{Az}$. More precisely, let~$\mathrm v_i\in V(\mathcal A_t)$ be connected to an edge~$e_i\in L_i(\mathcal A_t)$,~$i=1,2,3,4$. Then~$\map_t(\mathrm v_i)$ is on the top, right, bottom, left boundary of~$D_\text{Az}$ for~$i=1,2,3,4$, respectively.
\end{corollary}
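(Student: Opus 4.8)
The statement says: if a vertex $\mathrm{v}_i \in V(\mathcal{A}_t)$ is connected to a leaf $e_i \in L_i(\mathcal{A}_t)$, then $\map_t(\mathrm{v}_i)$ lies on a specific side of the boundary of $D_{\text{Az}}$. Specifically, leaves in $L_1$ (pointing left, corresponding to $z=0$) map to the top boundary, $L_2$ (pointing down, $w=0$) to the right, $L_3$ ($z=\infty$) to the bottom, $L_4$ ($w=\infty$) to the left.

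**Setup.** Recall $D_{\text{Az}} = (-1/\ell, 0) \times (-1/k, 0)$, so its boundary consists of four segments: top $v=0$, bottom $v=-1/k$, left $u=-1/\ell$, right $u=0$. I need to determine which coordinate of $\map_t(\mathrm{v}_i) = (u,v)$ is pinned to a boundary value.

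Let me recall the key formulas. From Proposition~\ref{prop:vertex_map_well_defined},
$$\map_t(\mathrm{v}) = \frac{1}{k\ell}(\d_y f_t(\mathrm{v}), -\d_x f_t(\mathrm{v})) - \frac{1}{k\ell}(k,\ell).$$
More usefully, $\map_t(\mathrm{v})$ is the unique $(u,v)$ making $\mathrm{v}$ a triple v-zero, i.e. $\d F_t(\eta; u,v) = 0$ for all three edges $\eta$ adjacent to $\mathrm{v}$. The residue formulas~\eqref{eq:action_function_residues} give $\d F_t$ on the leaves:
- $L_1$: $\d F_t = -k\ell v$
- $L_2$: $\d F_t = k\ell u$
- $L_3$: $\d F_t = \ell(1+kv)$
- $L_4$: $\d F_t = -k(1+\ell u)$

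Here $\eta(e)$ points toward the adjacent vertex of the leaf.
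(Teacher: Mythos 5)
Your proposal stops before the proof actually starts: after restating the claim and listing the residue formulas~\eqref{eq:action_function_residues}, no conclusion is drawn for any of the four cases. As written, this is a setup rather than a proof --- the assertion that $\map_t(\mathrm v_i)$ lies on the stated side of $\partial D_\text{Az}$ is never derived from the formulas you quote.

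That said, the route you set up is correct, and the missing step is a single sentence. By Definition~\ref{def:vertex_map}, $(u,v)=\map_t(\mathrm v_i)$ is the point at which $\mathrm v_i$ is a triple v-zero, so $\d F_t(\eta(e);u,v)=0$ for \emph{every} edge $e$ adjacent to $\mathrm v_i$, in particular for the leaf $e_i\in L_i(\mathcal A_t)$. Combining this with~\eqref{eq:action_function_residues}: for $i=1$, $-k\ell v=0$ forces $v=0$ (top boundary); for $i=2$, $k\ell u=0$ forces $u=0$ (right); for $i=3$, $\ell(1+kv)=0$ forces $v=-1/k$ (bottom); for $i=4$, $-k(1+\ell u)=0$ forces $u=-1/\ell$ (left), matching the sides of $D_\text{Az}=(-1/\ell,0)\times(-1/k,0)$. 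Once this is added, your argument is complete, and it is in fact a slightly different route from the paper's: the paper evaluates the residue of $\d f_t$ at the leaf (e.g.\ $\d_x f_t(\mathrm v)=\d f_t(\eta(e))=-\ell$ for $e\in L_1(\mathcal A_t)$) and substitutes into the explicit formula for $\map_t$ from Proposition~\ref{prop:vertex_map_well_defined}, treating the remaining cases ``similarly.'' Your version, using the residues of $\d F_t$ rather than of $\d f_t$, handles all four cases uniformly by solving one linear equation each --- but you must actually carry out that step for the proof to exist.
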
 
\begin{proof}
If~$e\in L_1(\mathcal A_t)$ is adjacent to~$\mathrm v\in V(\mathcal A_t)$ and directed towards~$\mathrm v$, then, by definition of~$\d f_t$ and~\eqref{eq:direction_derivative},~$\d_x f_t(\mathrm v)=\d f_t(\eta(e))=-\ell$. Thus, by Proposition~\ref{prop:vertex_map_well_defined},
\begin{equation}
\map_t(\mathrm v)=\frac{1}{k\ell}(\d_y f_t(\mathrm v) ,\ell)-\frac{1}{k\ell}(k,\ell)=\frac{1}{k\ell}(\d_y f_t(\mathrm v)-k,0),
\end{equation}
where the right hand side is on the top boundary of~$D_\text{Az}$, recall~\eqref{eq:aztec_scaled}. The other parts of~$\partial D_\text{Az}$ follow similarly.
\end{proof}

 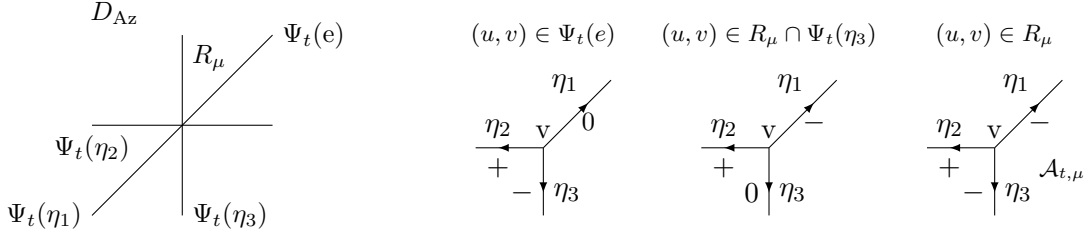
\begin{figure}[t]
 \begin{center}
\begin{tikzpicture}[scale=.6]
\begin{scope}[yshift=.5cm]
\draw (-2,-2) -- (2,2);
\draw (2,2) node[right] {\small{$\map_t(\mathrm e)$}};
\draw (-2,-2) node[left] {\small{$\map_t(\eta_1)$}};
\draw (2,0) -- (-2,0);
\draw (-2,0) node[below] {\small{$\map_t(\eta_2)$}};
\draw (0,2) -- (0,-2);
\draw (0,-2) node[right] {\small{$\map_t(\eta_3)$}};
\draw (-1.5,2.5) node {\small{$D_\text{Az}$}};
\draw (.6,1.5) node {\small{$R_\mu$}};
\end{scope}

\begin{scope}[xshift=8cm]
\draw[-latex] (0,0) -- (1,1) node[below] {\small{$0$}};
\draw (1,1) node[above left] {$\eta_1$};
\draw (.9,.9) -- (1.5,1.5);
\draw[-latex] (0,0) -- (-1,0) node[below] {$+$};
\draw (-1,0) node[above] {$\eta_2$};
\draw (-.9,0) -- (-1.5,0);
\draw[-latex] (0,0) -- (0,-1.) node[left] {$-$};
\draw (0,-1) node[right] {$\eta_3$};
\draw (0,-.9) -- (0,-1.5);
\draw (0,0) node[above] {$\mathrm v$};
\draw (0,2.5) node {\footnotesize{$(u,v)\in \map_t(e)$}};
\end{scope}

\begin{scope}[xshift=13cm]
\draw[-latex] (0,0) -- (1,1) node[below] {$-$};
\draw (1,1) node[above left] {$\eta_1$};
\draw (.9,.9) -- (1.5,1.5);
\draw[-latex] (0,0) -- (-1,0) node[below] {$+$};
\draw (-1,0) node[above] {$\eta_2$};
\draw (-.9,0) -- (-1.5,0);
\draw[-latex] (0,0) -- (0,-1.) node[left] {$0$};
\draw (0,-1) node[right] {$\eta_3$};
\draw (0,-.9) -- (0,-1.5);
\draw (0,0) node[above] {$\mathrm v$};
\draw (0,2.5) node {\footnotesize{$(u,v)\in R_\mu \cap \map_t(\eta_3)$}};
\end{scope}

\begin{scope}[xshift=18cm]
\draw[-latex] (0,0) -- (1,1) node[below] {$-$};
\draw (1,1) node[above left] {$\eta_1$};
\draw (.9,.9) -- (1.5,1.5);
\draw[-latex] (0,0) -- (-1,0) node[below] {$+$};
\draw (-1,0) node[above] {$\eta_2$};
\draw (-.9,0) -- (-1.5,0);
\draw[-latex] (0,0) -- (0,-1.) node[left] {$-$};
\draw (0,-1) node[right] {$\eta_3$};
\draw (0,-.9) -- (0,-1.5);
\draw (0,0) node[above] {$\mathrm v$};
\draw (0,2.5) node {\footnotesize{$(u,v)\in R_\mu$}};
\draw (1.5,-.5) node {\footnotesize{$\mathcal A_{t,\mu}$}};
\end{scope}
\end{tikzpicture}
 \end{center}
\caption{Moving~$(u,v)$ in~$R_\mu$ between the lines~$\map_t(\mathrm e)$ and~$\map_t(\eta_3)$ (see Lemma~\ref{lem:local_orientation} and its proof for definitions) only changes the sign of~$\d F_t$ at one edge connected to~$\mathrm v$ at a time. 
\label{fig:swapping_orientation}}
\end{figure} 

Theorem~\ref{thm:arctic_curve_tropical} together with Lemma~\ref{lem:line_e-zeros} show that~$\map_t$ preserves the tangent lines, in the sense that~$\mathrm v'-\mathrm v$ and~$\map_t(\mathrm v')-\map_t(\mathrm v)$ are parallel. At the same time, the map~$\map_t$ reverses the orientation in the following way.
\begin{lemma}\label{lem:local_orientation}
Let~$\mathrm v,\mathrm v'\in V(\mathcal A_t)$ be connected by an edge, and consider the oriented edge~$e$ going from~$\mathrm v$ to~$\mathrm v'$. Assume~$\map_t(\mathrm v)\neq \map_t(\mathrm v')$ and let~$\map_t(e)$ be the oriented line segment going from~$\map_t(\mathrm v)$ to~$\map_t(\mathrm v')$. If~$R_\mu$ for some~$\mu\in \mathcal F\cup \mathcal Q\cup \mathcal S$ lies to the left of~$\map_t(e)$, then the interior of~$\mathcal A_{t,\mu}$ lies to the right of~$e$. 
\end{lemma}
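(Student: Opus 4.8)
The plan is to reduce the statement to a direct bookkeeping of the signs of $\d F_t$ on the edges meeting $\mathrm v$, matching two orientation computations: one in the $(x,y)$-plane (which side of $e$ carries $\mathcal A_{t,\mu}$) and one in the $(u,v)$-plane (which side of $\map_t(e)$ carries $R_\mu$). First I would record the infinitesimal form of $\map_t$ along $e$. Writing $\eta=\eta(e)=(\eta_1,\eta_2)$ and $\nabla f_t(\mathrm v)=(\d_x f_t(\mathrm v),\d_y f_t(\mathrm v))$, Proposition~\ref{prop:vertex_map_well_defined} gives
\[
\map_t(\mathrm v')-\map_t(\mathrm v)=\tfrac{1}{k\ell}\bigl(\d_y f_t(\mathrm v')-\d_y f_t(\mathrm v),\,-(\d_x f_t(\mathrm v')-\d_x f_t(\mathrm v))\bigr).
\]
Since $f_t$ is affine along $e$, its directional derivative in the direction $\eta$ agrees at $\mathrm v$ and $\mathrm v'$, so $\nabla f_t(\mathrm v')-\nabla f_t(\mathrm v)$ is orthogonal to $\eta$, i.e. equal to $c\,(-\eta_2,\eta_1)$ for a scalar $c$. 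Substituting shows $\map_t(\mathrm v')-\map_t(\mathrm v)=\tfrac{c}{k\ell}\,\eta$, reproving that $\map_t(e)$ is parallel to $e$ and showing its orientation relative to $\eta$ is $\sgn(c)$; thus ``$R_\mu$ to the left of $\map_t(e)$'' means $R_\mu$ lies in the direction $\sgn(c)\,(-\eta_2,\eta_1)$ from the segment.

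Next I would locate $\mathcal A_{t,\mu}$. The edge $e$ is dual to the edge $e^*$ of $N_S(P_t)$ joining the lattice points $\mu,\mu'\in\mathcal N$ of the two adjacent components. On $\mathcal A_{t,\mu}$ the affine piece $\mu\cdot(x,y)+\mathcal E^*(\mu)$ realizes the maximum defining $P_t$, so across $e$ the quantity $(\mu-\mu')\cdot(x,y)$ increases into $\mathcal A_{t,\mu}$; hence $\mathcal A_{t,\mu}$ lies on the side of the line through $e$ toward which $\mu-\mu'$ points. Writing $\det(\eta,w)=\eta_1 w_2-\eta_2 w_1$, the interior of $\mathcal A_{t,\mu}$ is to the right of $e$ exactly when $\det(\eta,\mu-\mu')<0$. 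Then I would locate $R_\mu$ by moving $(u,v)$ transversally across the line $\d F_t(\eta;u,v)=0$. From~\eqref{eq:action_function_tropical}, $\d F_t(\eta;u,v)=k(1+\ell u)\eta_2-\ell(1+kv)\eta_1-\d f_t(\eta)$, whose $(u,v)$-gradient is $k\ell(\eta_2,-\eta_1)$, so $\d F_t(\eta;\,\cdot\,)$ grows in the direction to the right of $\eta$. On the middle segment $\map_t(e)$ of Theorem~\ref{thm:arctic_curve_tropical}, $e$ is a double e-zero, $\d F_t(\eta)=0$, and by the balancing condition~\eqref{eq:balancing_differential} the two remaining edges at $\mathrm v$ carry opposite nonzero signs, one of them bordering $\mathcal A_{t,\mu}$. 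By Definition~\ref{def:v-zero}, pushing $(u,v)$ off the line makes $\mathrm v$ a v-zero with respect to $\mathcal A_{t,\mu}$ precisely on the side where the sign of $\d F_t(\eta)$ agrees with that neighboring edge; this identifies the side of $\map_t(e)$ carrying $R_\mu$, cf. Figures~\ref{fig:e-zeros} and~\ref{fig:swapping_orientation}.

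Finally I would combine the two computations. The main obstacle is the reconciliation of all the orientation conventions — the counterclockwise $\pi/2$ rotation built into the duality $e\leftrightarrow e^*$, the clockwise rotation hidden in the formula for $\map_t$, the sign of the gradient jump $c$, and the sign carried by the $\mathcal A_{t,\mu}$-neighbor of $e$ — which together must conspire to make ``left of $\map_t(e)$'' correspond to ``right of $e$'' rather than to ``left of $e$''. I expect the cleanest way to settle this is to evaluate every sign on a single explicit local model of a degree-three vertex of a smooth tropical curve (for instance the standard corner of Figure~\ref{fig:balancing_property}, with $\eta$ and the two other outward primitive vectors summing to zero), and then to argue that $c$, the transverse sign of $\d F_t(\eta;\,\cdot\,)$, and $\det(\eta,\mu-\mu')$ are each locally constant as the vertex configuration is deformed within smooth curves — the edge $e$ never degenerating and $\mu,\mu'$ remaining fixed — so that the single verification fixes the correspondence in general. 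Assembling the three steps, together with the fact from Lemma~\ref{lem:critical_points} that exactly one of $R_\mu$, $R_{\mu'}$ sits on each side of $\map_t(e)$, then yields the claimed orientation reversal.
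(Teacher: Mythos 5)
Your preparatory steps are sound: the identity $\map_t(\mathrm v')-\map_t(\mathrm v)=\tfrac{c}{k\ell}\,\eta$ (with $c$ defined by $\nabla f_t(\mathrm v')-\nabla f_t(\mathrm v)=c(-\eta_2,\eta_1)$), the criterion that the interior of $\mathcal A_{t,\mu}$ is to the right of $e$ exactly when $\det(\eta,\mu-\mu')<0$ (writing $\det(a,b)=a_1b_2-a_2b_1$), the computation of the $(u,v)$-gradient of $\d F_t(\eta;\,\cdot\,)$ as $k\ell(\eta_2,-\eta_1)$, and the identification (via Lemma~\ref{lem:critical_points} and Definition~\ref{def:regions}) of the $R_\mu$-side of $\map_t(e)$ with the side on which $\mathrm v,\mathrm v'$ become v-zeros with respect to $\mathcal A_{t,\mu}$ are all correct. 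The gap is your final step, which is the entire content of the lemma: you never prove the sign identity, you propose to verify it on one local model and propagate it by deformation. That argument fails as stated, for two reasons. First, the data entering the signs are partly discrete: $\eta$, the neighboring primitive vector, and $\mu-\mu'$ are integer vectors with infinitely many inequivalent values, and these cannot be connected to the ``standard corner'' by any deformation ``with $e$ never degenerating and $\mu,\mu'$ fixed''; nor can you normalize by an integral affine change of coordinates for free, since the $(u,v)$-side of the statement is tied to the fixed coefficients $k(1+\ell u)$, $\ell(1+kv)$ in~\eqref{eq:action_function_tropical} and to the fixed rectangle $N(P)$, so equivariance of all the signs under such a transformation is itself something you would have to prove. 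Second, even with the discrete data frozen, local constancy of each sign only gives constancy of the sign \emph{relation} on connected components of the space of admissible configurations; you neither prove connectedness nor rule out that any path must cross the excluded degenerations ($c=0$, i.e.\ $\map_t(\mathrm v)=\map_t(\mathrm v')$, or triple v-zeros), where the individual signs do jump.

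The good news is that your framework closes by a short direct computation, with no model-checking at all. For $(u,v)$ on the line $\d F_t(\eta(e);u,v)=0$, both gradients $\nabla F_t(\mathrm v)$, $\nabla F_t(\mathrm v')$ are orthogonal to $\eta$, so write $\nabla F_t(\mathrm v)=\alpha(\eta_2,-\eta_1)$, $\nabla F_t(\mathrm v')=\alpha'(\eta_2,-\eta_1)$. Then $\alpha'-\alpha=c$ is constant along the line (because $\nabla F_t(\mathrm v')-\nabla F_t(\mathrm v)=-\bigl(\nabla f_t(\mathrm v')-\nabla f_t(\mathrm v)\bigr)$), while displacing $(u,v)$ by $t\eta$ changes both $\alpha$ and $\alpha'$ by $-k\ell t$. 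Since $\map_t(\mathrm v)$ and $\map_t(\mathrm v')$ are precisely the points where $\alpha=0$ and $\alpha'=0$ (Proposition~\ref{prop:vertex_map_well_defined}), on the open segment $\map_t(e)$ the numbers $\alpha,\alpha'$ straddle zero, with which one is negative determined by $\sgn c$. Moreover $\d F_t(\eta_\mu;u,v)=\nabla F_t(\mathrm v)\cdot\eta_\mu=-\alpha\det(\eta,\eta_\mu)$, where $\eta_\mu$ is the other outward vector at $\mathrm v$ bounding $\mathcal A_{t,\mu}$, and an elementary check at the trivalent vertex shows $\sgn\det(\eta,\eta_\mu)=\sgn\det(\eta,\mu-\mu')$ (both record which side of $e$ carries the $\mu$-component). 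Feeding these into your own criteria — $R_\mu$ lies left of $\map_t(e)$ iff the sign of $\d F_t(\eta)$ on that side, namely $-\sgn c$, equals $\sgn\d F_t(\eta_\mu)$ — yields $\det(\eta,\mu-\mu')<0$ in both cases $c>0$ and $c<0$, uniformly, which is the claim. For comparison, the paper's proof is genuinely different and coordinate-free: it shows that the three lines $\map_t(\eta_i)$ occur around $\map_t(\mathrm v)$ in the cyclically \emph{reversed} order, then slides $(u,v)$ from the sector between $\map_t(e)$ and $\map_t(\eta_3)$ onto the line $\map_t(\eta_3)$ and derives a contradiction from the balancing condition~\eqref{eq:balancing_differential} if $\mathcal A_{t,\mu}$ were on the left of $e$.
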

\begin{proof}
Recall that~$e$ is a double e-zero of~$\d F_t$ if~$(u,v)$ is in the interior of the line segment~$\map_t(e)$, see the proof of Theorem~\ref{thm:arctic_curve_tropical}. This implies, as discussed just after Lemma~\ref{lem:line_e-zeros}, that if we move~$(u,v)$ away from~$\map_t(e)$, then~$\mathrm v$ and~$\mathrm v'$ are simple v-zeros of~$\d F_t$ with respect to one of the adjacent components of~$e$ (and if we vary~$(u,v)$ to the other side of~$\map_t(e)$ they are simple v-zeros with respect to the other component). Lemma~\ref{lem:possible_zeros}, Lemma~\ref{lem:critical_points} and Definition~\ref{def:regions} imply that if~$(u,v)\in R_\mu$, then~$\mathrm v$ and~$\mathrm v'$ are simple v-zeros of~$\d F_t$ with respect to~$\mathcal A_{t,\mu}$. In particular, the interior of~$\mathcal A_{t,\mu}$ is directly to the right or directly to the left of~$e$. 

Let~$\eta_1$,~$\eta_2$ and~$\eta_3$ be the outward pointing primitive vectors at~$\mathrm v$ with~$\eta_1=\eta(e)$ and so that the order, as we go around~$\mathrm v$ in positive direction, is~$\eta_1$,~$\eta_2$,~$\eta_3$. Let~$\map_t(\eta_i)$ be the lines in the~$(u,v)$-plane defined by the equations~$\d F_t(\eta_i;u,v)=0$,~$i=1,2,3$. These lines have, according to Lemma~\ref{lem:line_e-zeros}, tangent vectors~$\eta_i$. The balancing condition~\eqref{eq:balancing_primitive} implies that~$\eta_{i+1}$ and~$\eta_{i+2}$ (with subscripts taken modulo~$3$) lie on different sides of the line~$\map_t(\eta_i)$,~$i=1,2,3$. Consequently, as we circle around~$\map_t(\mathrm v)$ in positive direction, the lines appear in the order~$\map_t(\eta_1)$,~$\map_t(\eta_3)$,~$\map_t(\eta_2)$,~$\map_t(\eta_1)$,~$\map_t(\eta_3)$,~$\map_t(\eta_2)$.  

Hence, we conclude that for~$(u,v)\in R_\mu$ in a neighborhood of~$\map_t(\mathrm v)$ between~$\map_t(e)$ (recall that~$\map_t(e)$ is a subset of the line~$\map_t(\eta_1)$) and~$\map_t(\eta_3)$, the vertex~$\mathrm v$ is a simple v-zero with respect to~$\mathcal A_{t,\mu}$ and the interior of~$\mathcal A_{t,\mu}$ lies just to the right or left of~$e$. If the interior of~$\mathcal A_{t,\mu}$ lies to the left of~$e$, then, by definition of simple v-zeros,~$\d F_t(\eta_1;u,v)$ and~$\d F_t(\eta_2;u,v)$ have the same sign, and if~$\mathcal A_{t,\mu}$ lies to the right, then~$\d F_t(\eta_1;u,v)$ and~$\d F_t(\eta_3;u,v)$ have the same sign. Since~$(u,v)$ lies between~$\map_t(e)$ and~$\map_t(\eta_3)$, taking the ordering of the lines into account, we see that taking~$(u,v)\in \map_t(\eta_3)$ does not change the sign of~$\d F_t(\eta_1;u,v)$ or~$\d F_t(\eta_2;u,v)$, while~$\d F_t(\eta_3;u,v)=0$. The balancing condition~\eqref{eq:balancing_primitive} then shows that~$\d F_t(\eta_1;u,v)$ and~$\d F_t(\eta_2;u,v)$ must have different signs, and~$\mathcal A_{t,\mu}$ cannot be to the left of~$e$, hence, it is on right. See Figure~\ref{fig:swapping_orientation}. 
\end{proof}

 \begin{figure}[t]
 \begin{center}
\begin{tikzpicture}[scale=1]
\begin{scope}[yshift=.8cm]
\draw (0,0) -- (1,1) node[left] {$\mathrm v_1$};
\draw (0,0) -- (-1,0) node[above] {$\mathrm v_2$};
\draw (0,0) -- (0,-1) node[left] {$\mathrm v_3$};
\draw (0,0) node[above] {$\mathrm v$};
\draw (0,-.3) arc[start angle=-90,end angle=45,radius=.3] node[midway, right] {\footnotesize{$\theta_{\mathrm v}'$}};
\draw (1,-.7) node {\footnotesize{$\mathcal A_{t,\mu}$}};
\end{scope}

\begin{scope}[xshift=5cm]
\draw (0,0) -- (1.5,1.5) node[below right] {\footnotesize{$\map_t(\mathrm v_1)$}};
\draw (0,0) -- (-1.5,0) node[above] {\footnotesize{$\map_t(\mathrm v_2)$}};
\draw (0,0) -- (0,1.5) node[left] {\footnotesize{$\map_t(\mathrm v_3)$}};
\draw (0,0) node[below] {\footnotesize{$\map_t(\mathrm v)$}};
\draw (0,.5) arc[start angle=90,end angle=45,radius=.5];
\draw (.3,.7) node {\footnotesize{$\theta_{\mathrm v}$}};
\draw (.6,1.5) node {\footnotesize{$R_{t,\mu}$}};
\end{scope}

\begin{scope}[xshift=9cm, yshift=1cm]
\draw (0,0) -- (1.5,1.5) node[below right] {\footnotesize{$\map_t(\mathrm v_1)$}};
\draw (0,0) -- (1.5,0) node[above] {\footnotesize{$\map_t(\mathrm v_2)$}};
\draw (0,0) -- (0,-1.5) node[left] {\footnotesize{$\map_t(\mathrm v_3)$}};
\draw (0,0) node[below right] {\footnotesize{$\map_t(\mathrm v)$}};
\draw (0,-.3) arc[start angle=270,end angle=45,radius=.3] node[midway, left] {\footnotesize{$\theta_{\mathrm v}$}};
\draw (.3,1) node {\footnotesize{$R_{t,\mu}$}};
\end{scope}

\begin{scope}[xshift=13cm]
\draw (0,0) -- (1.5,1.5) node[below right] {\footnotesize{$\map_t(\mathrm v_1)$}};
\draw (0,0) -- (1.5,0) node[above] {\footnotesize{$\map_t(\mathrm v_2)$}};
\draw (0,0) -- (0,1.5) node[left] {\footnotesize{$\map_t(\mathrm v_3)$}};
\draw (0,0) node[below] {\footnotesize{$\map_t(\mathrm v)$}};
\draw (0,.5) arc[start angle=90,end angle=45,radius=.5];
\draw (.3,.7) node {\footnotesize{$\theta_{\mathrm v}$}};
\draw (.6,1.5) node {\footnotesize{$R_{t,\mu}$}};
\end{scope}
\end{tikzpicture}
 \end{center}
\caption{The possible, up to a rotation by~$\pi$, directions of~$\map_t(\mathrm v_i)-\map_t(\mathrm v)$,~$i=1,2,3$, relative to the direction of~$\mathrm v_i-\mathrm v$. The angle at~$\mathrm v$ in~$\mathcal A_{t,\mu}$ and the angle at~$\map_t(\mathrm v)$ in~$R_\mu$ are related by~$\theta_{\mathrm v}=n_{\mathrm v}\pi-\theta_{\mathrm v}'$ with~$n_{\mathrm v}\in \{1,2\}$.
\label{fig:local_orientation}}
\end{figure}
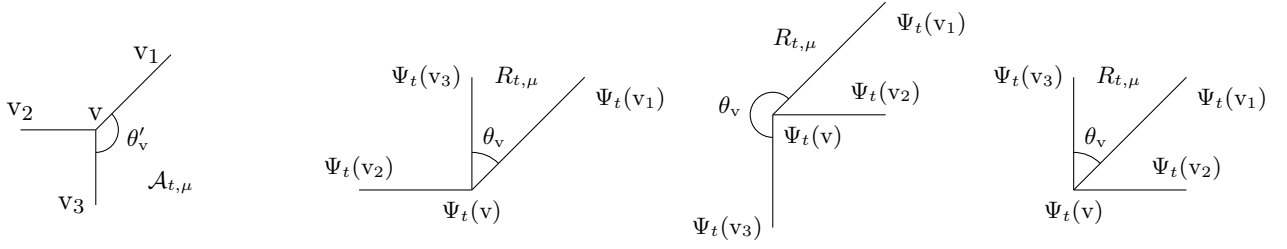

Below we present two consequences of the above observation. 
\begin{corollary}\label{cor:orientation}
For a vertex~$\mathrm v\in V(\mathcal A_t)$, let~$\mathrm v_i\in V(\mathcal A_t)$,~$i=1,2,3$, be adjacent to~$\mathrm v$ and let~$\eta_i$ be the respective outward pointing primitive vectors. We label them so that~$i\mapsto \eta_i$ is rotating in positive direction as~$i$ is increasing. Assume that~$\map_t(\mathrm v)\neq \map_t(\mathrm v_i)$ for~$i=1,2,3$. Then~$i\mapsto \map_t(\mathrm v_i)-\map_t(\mathrm v)$ is rotating in negative direction. In particular, there exists a line going through~$\map_t(\mathrm v)$ so that~$\map_t(\mathrm v_i)$,~$i=1,2,3$, are all on the same side of that line.
\end{corollary}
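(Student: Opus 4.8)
The plan is to reduce the whole corollary to a single question about three signs. Writing $d_i=\map_t(\mathrm v_i)-\map_t(\mathrm v)$, I would first record that $d_i$ is parallel to $\eta_i$: by Theorem~\ref{thm:arctic_curve_tropical} the segment between $\map_t(\mathrm v)$ and $\map_t(\mathrm v_i)$ lies in the line $\{(u,v):\d F_t(\eta_i;u,v)=0\}$, which by Lemma~\ref{lem:line_e-zeros} is parallel to $\eta_i$. Since $\map_t(\mathrm v)\neq\map_t(\mathrm v_i)$, this gives $d_i=s_i c_i\eta_i$ with $c_i>0$ and $s_i\in\{+1,-1\}$, so the entire statement becomes a statement about the triple $(s_1,s_2,s_3)$. (Via Proposition~\ref{prop:vertex_map_well_defined}, $s_i$ is the sign of the jump of the gradient of $f_t$ across the edge $e_i$.)

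Next I would show that the ``in particular'' clause is equivalent to the signs \emph{not} being all equal, using the balancing identity $\eta_1+\eta_2+\eta_3=0$ from~\eqref{eq:balancing_primitive}. If, say, $s_1\neq s_2=s_3$, then $s_1\eta_1=-s_2\eta_1=s_2(\eta_2+\eta_3)$ lies in the cone positively spanned by $s_2\eta_2$ and $s_2\eta_3$; since two consecutive primitive vectors at a trivalent vertex subtend an angle strictly less than $\pi$, all three of $s_1\eta_1,s_2\eta_2,s_3\eta_3$ then lie in an open half-plane through $\map_t(\mathrm v)$. Conversely, if all signs agree the vectors $s_i\eta_i$ are $\pm(\eta_1,\eta_2,\eta_3)$ and positively span $\RR^2$, so no such line exists. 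Granting that $(s_1,s_2,s_3)$ is not constant, the first assertion follows for free: by the computation in the proof of Lemma~\ref{lem:local_orientation} the three lines $\{\d F_t(\eta_i;\cdot)=0\}$ meet at $\map_t(\mathrm v)$ in the positive cyclic order $\eta_1,\eta_3,\eta_2$, and any open half-plane meets each of them in a single ray, inherited in that same cyclic order; hence $d_1,d_3,d_2$ occur counterclockwise, i.e.\ $d_1,d_2,d_3$ occur clockwise, which is the asserted negative rotation.

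It remains to exclude the constant-sign case, and this is where the real work lies, via the orientation reversal of Lemma~\ref{lem:local_orientation}. For each $i$ that lemma matches the region $R_\mu$ on a given side of the oriented segment $\map_t(e_i)$ with the component $\mathcal A_{t,\mu}$ on the \emph{opposite} side of $e_i$; hence the component $\mathcal A_{t,C_i}$ filling the sector between $\eta_i$ and $\eta_{i+1}$ at $\mathrm v$ (indices modulo $3$) is carried to the region $R_{C_i}$ filling the sector between $d_i$ and $d_{i+1}$ at $\map_t(\mathrm v)$. Writing $\theta_i'<\pi$ for the opening angle of $\mathcal A_{t,C_i}$ at $\mathrm v$ and $\theta_i$ for that of $R_{C_i}$ at $\map_t(\mathrm v)$, tangent preservation forces $\theta_i\equiv-\theta_i'\pmod\pi$, so $\theta_i=n_i\pi-\theta_i'$ with $n_i\in\{1,2\}$, and comparing the four sign combinations of $(s_i,s_{i+1})$ shows $n_i=2$ precisely when $s_i=s_{i+1}$. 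If the three image sectors tile a punctured neighborhood of $\map_t(\mathrm v)$, then $\sum_i\theta_i=2\pi$, whence $\sum_i n_i=4$; since the constant-sign case gives $n_1=n_2=n_3=2$ and $\sum_i n_i=6$, it is excluded, exactly one region becomes reflex, and the proof is complete.

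The main obstacle is therefore the tiling claim, namely that near the interior point $\map_t(\mathrm v)$ the three image regions cover a neighborhood once ($\sum\theta_i=2\pi$) rather than twice ($\sum\theta_i=4\pi$). Because $\map_t$ is not injective and folds along the arctic curve, the degenerate double cover $s_1=s_2=s_3$ cannot be ruled out by local bookkeeping and needs a separate argument. I expect this to come either from Lemma~\ref{lem:critical_points}, which forbids double e-zeros off $\overline{D_\text{Az}}$ and so constrains how the three double-e-zero segments $\map_t(e_i)$ may emanate from an interior point, or from the harmonicity of $f_t$ (equivalently of the Kirchhoff solution), a simultaneous outward, or inward, jump of $\nabla f_t$ across all three edges at $\mathrm v$ producing a strict local extremum forbidden by the discrete maximum principle. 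Turning one of these into a clean exclusion of the constant-sign pattern is the crux; everything else is the bookkeeping above.
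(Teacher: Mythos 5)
Your bookkeeping is correct as far as it goes: the reduction to the sign triple $(s_1,s_2,s_3)$ via tangent preservation (Theorem~\ref{thm:arctic_curve_tropical} plus Lemma~\ref{lem:line_e-zeros}), the equivalence of the half-plane statement with the signs not being all equal via~\eqref{eq:balancing_primitive}, the deduction of the negative rotation from the half-plane property using the cyclic order of the lines $\d F_t(\eta_i;\cdot,\cdot)=0$, and the angle relation $\theta_i=n_i\pi-\theta_i'$ with $n_i=2$ exactly when $s_i=s_{i+1}$ all check out. But the proposal is not a proof: the exclusion of the constant-sign pattern is exactly where you stop, and you say so yourself ("turning one of these into a clean exclusion \dots is the crux"). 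Leaving the declared crux as a guess about which lemma might eventually work is a genuine gap, and it is the only nontrivial content of the corollary -- everything you did prove is elementary plane geometry around a trivalent vertex.

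The gap closes with an ingredient you half-named but did not use correctly: the regions $R_\mu$ for distinct $\mu$ are pairwise \emph{disjoint}, which the paper records immediately after Definition~\ref{def:regions} as a consequence of the zero count in Lemma~\ref{lem:critical_points}. You do not need your ``tiling claim'' ($\sum_i\theta_i=2\pi$, i.e.\ covering exactly once); covering \emph{at most} once suffices. Indeed, by Lemma~\ref{lem:local_orientation} the component in the sector between $\eta_i$ and $\eta_{i+1}$ corresponds to a region filling, near $\map_t(\mathrm v)$, a sector of opening $\theta_i=n_i\pi-\theta_i'$ between the rays $d_{i+1}$ and $d_i$, and disjointness of the three regions gives $\sum_i\theta_i\leq 2\pi$, hence $\pi\sum_i n_i\leq 2\pi+\sum_i\theta_i'=4\pi$, so $\sum_i n_i\leq 4$, which kills $n_1=n_2=n_3=2$. (So your instinct that Lemma~\ref{lem:critical_points} is the source was right, but through disjointness of the phases, not through the absence of double e-zeros outside $\overline{D_\text{Az}}$; the maximum-principle suggestion for $f_t$ is undeveloped and not obviously implementable, since the constant-sign pattern concerns the folding of $\map_t$ rather than an extremum of $f_t$.) For comparison, the paper's proof runs in the opposite direction and never introduces the sign dichotomy: it applies Lemma~\ref{lem:local_orientation} directly to the component between $\eta_1$ and $\eta_2$, concluding that its region sits between $d_2$ and $d_1$ in \emph{reversed} cyclic order; concatenating over the three pairs (with the same implicit disjointness) gives the negative rotation at once, and the half-plane statement is then read off from~\eqref{eq:balancing_primitive}, whereas you tried to derive the rotation \emph{from} the half-plane statement and got stuck proving the latter.
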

\begin{proof}
Let the interior of~$\mathcal A_{t,\mu}$ be to the right of~$\eta_1$, then it is to the left of~$\eta_2$. By Lemma~\ref{lem:local_orientation},~$R_\mu$ is to the right of~$\map_t(\mathrm v_1)-\map_t(\mathrm v)$ and to the left of~$\map_t(\mathrm v_2)-\map_t(\mathrm v)$. Hence, the orientation is swapped. The existence of the line now follows from the balancing condition~\eqref{eq:balancing_primitive}. See Figure~\ref{fig:local_orientation}.
\end{proof}

The following corollary has its counterpart before the large~$\beta$ limit, see~\cite[Corollary 4.17]{BB23}.
\begin{corollary}\label{cor:inner_angles}
Let~$\mu\in \mathcal F\cup \mathcal Q\cup \mathcal S$ be such that~$R_\mu$ is non-empty and consider the interior angle~$\theta_{\mathrm v}$ at~$\map_t(\mathrm v)$ in~$R_\mu$ as well as the interior angle~$\theta_{\mathrm v}'$ at~$\mathrm v$ in~$\mathcal A_{t,\mu}$. If~$\map_t(\mathrm v)=\map_t(\mathrm v')$, we take~$\theta_{\mathrm v}'$ to be the sum of the angles at~$\mathrm v$ and~$\mathrm v'$ in~$\mathcal A_{t,\mu}$. Then~$\theta_{\mathrm v}=n_{\mathrm v}\pi-\theta_{\mathrm v}'$ where~$n_{\mathrm v}\in \{1,2\}$. Moreover, if~$\mu \in \mathcal S$, then~$n_{\mathrm  v}=1$ for four vertices~$\mathrm v\in\mathcal A_{t,\mu}$ and~$n_{\mathrm v}=2$ otherwise. Similarly, if~$\mu \in \mathcal Q$ and~$\mu\in \mathcal F$, then~$n_{\mathrm v}=1$ for three and two vertices~$\mathrm v$, respectively, and~$n_{\mathrm v}=2$ otherwise. See Figures~\ref{fig:local_orientation} and~\ref{fig:inner_angles}.
\end{corollary}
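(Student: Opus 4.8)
\emph{Proof proposal.} The plan is to establish the pointwise relation $\theta_{\mathrm v}=n_{\mathrm v}\pi-\theta_{\mathrm v}'$ first, and then to read off the vertex counts from a Gauss--Bonnet (turning-angle) computation applied to the polygon $R_\mu$.

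\emph{The local relation.} Fix $\mu$ and a vertex $\mathrm v\in \mathcal A_{t,\mu}$, and let $\eta_a,\eta_b$ be the outward primitive vectors of the two edges of $\mathcal A_{t,\mu}$ meeting at $\mathrm v$; thus $\theta_{\mathrm v}'$ is the opening of the sector of $\mathcal A_{t,\mu}$ between the rays $\RR_{\ge 0}\eta_a$ and $\RR_{\ge 0}\eta_b$. Since each component of the complement of a tropical curve is an intersection of half-planes, $\mathcal A_{t,\mu}$ is convex and $\theta_{\mathrm v}'\in(0,\pi)$. By slope preservation (Theorem~\ref{thm:arctic_curve_tropical} together with Lemma~\ref{lem:line_e-zeros}) the two edges of $R_\mu$ at $\map_t(\mathrm v)$ are directed along $s_a\eta_a$ and $s_b\eta_b$ for signs $s_a,s_b\in\{+1,-1\}$, and $\theta_{\mathrm v}$ is the opening of the interior sector of $R_\mu$ between these rays. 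First I would use Lemma~\ref{lem:local_orientation} to identify which of the two sectors is the interior one: since $\theta_{\mathrm v}'<\pi$, the region $\mathcal A_{t,\mu}$ lies to the left of the edge oriented along $\eta_a$ and to the right of the edge oriented along $\eta_b$, and the lemma then dictates on which side of each image edge $R_\mu$ must lie. A direct computation of the angular measure of the resulting sector gives $\theta_{\mathrm v}=(\rho_a-\rho_b)\bmod 2\pi$, where $\rho_a,\rho_b$ are the directional angles of $s_a\eta_a,s_b\eta_b$; running through the four sign patterns yields $\theta_{\mathrm v}=\pi-\theta_{\mathrm v}'$ when $s_a\neq s_b$ and $\theta_{\mathrm v}=2\pi-\theta_{\mathrm v}'$ when $s_a=s_b$, that is $n_{\mathrm v}\in\{1,2\}$. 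This is exactly the dichotomy drawn in Figure~\ref{fig:local_orientation}.

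\emph{Total angles and the count.} I would next record the total interior angle of $\mathcal A_{t,\mu}$ from the recession directions of its leaves. Writing $V$ for the number of vertices of $\mathcal A_{t,\mu}$ and using that the sum of exterior (turning) angles of its convex boundary equals $2\pi$ for $\mu\in\mathcal S$ (bounded), $\pi$ for $\mu\in\mathcal Q$ (two parallel leaves, dual to two collinear boundary edges of $N(P)$), and $\pi/2$ for $\mu\in\mathcal F$ (two orthogonal leaves, dual to the two orthogonal boundary edges at a corner of $N(P)$), one gets $\sum_{\mathrm v}\theta_{\mathrm v}'=V\pi-T_{\mathcal A}$ with $T_{\mathcal A}=2\pi,\pi,\pi/2$ respectively. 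On the other hand $R_\mu$ is a bounded simple polygon: its vertices are the distinct images $\map_t(\mathrm v)$ together with, for $\mu\in\mathcal F$ only, the corner of $D_\text{Az}$ at which the two boundary segments meet (interior angle $\pi/2$). Here I use the corollary that a leaf-adjacent vertex maps to the corresponding side of $\partial D_\text{Az}$ and that leaf directions are parallel to those sides, so the boundary segment(s) of $R_\mu$ lying on $\partial D_\text{Az}$ act as the ``images of the leaves'' and the relation $\theta_{\mathrm v}=n_{\mathrm v}\pi-\theta_{\mathrm v}'$ stays valid at the leaf-adjacent vertices. Computing $\sum_{\mathrm v}\theta_{\mathrm v}$ both from the polygon angle-sum (equal to $(P-2)\pi$ with $P$ the number of vertices of $R_\mu$, minus the $\pi/2$ contribution of the $D_\text{Az}$-corner in the $\mathcal F$ case) and from $\pi\sum_{\mathrm v}n_{\mathrm v}-\sum_{\mathrm v}\theta_{\mathrm v}'$, and setting $a=\#\{n_{\mathrm v}=1\}$, $b=\#\{n_{\mathrm v}=2\}$ with $a+b=V$, the $V$- and $b$-dependent terms cancel and one solves $a=4,3,2$ for $\mu\in\mathcal S,\mathcal Q,\mathcal F$.

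\emph{Degenerate vertices and the main difficulty.} When $\map_t(\mathrm v)=\map_t(\mathrm v')$ for an adjacent pair, the connecting edge degenerates to a point and two vertices of $\mathcal A_{t,\mu}$ collapse to a single vertex of $R_\mu$; taking $\theta_{\mathrm v}'$ to be the sum of the two angles, as in the statement, removes one vertex from each side of the angle-sum identity consistently, so the count above is unchanged and one obtains the asserted limiting version of the relation. I expect the main obstacle to be the local relation itself: one must pin down precisely which sector of $R_\mu$ is its interior, i.e.\ extract the exact side conditions from Lemma~\ref{lem:local_orientation}, so that the four sign cases deliver exactly $n_{\mathrm v}\in\{1,2\}$ and never an angle outside $(0,2\pi)$. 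Once this and the elementary recession/boundary data are in place, the vertex count is routine Gauss--Bonnet bookkeeping.
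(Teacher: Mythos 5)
Your proposal is correct and follows essentially the same route as the paper's proof: the local relation $\theta_{\mathrm v}=n_{\mathrm v}\pi-\theta_{\mathrm v}'$ is deduced from Lemma~\ref{lem:local_orientation} together with slope preservation (parallelism of $\mathrm v'-\mathrm v$ and $\map_t(\mathrm v')-\map_t(\mathrm v)$), and the counts $4,3,2$ come from the same interior/exterior angle-sum bookkeeping, with total turning $2\pi$, $\pi$, $\pi/2$ for $\mu\in\mathcal S,\mathcal Q,\mathcal F$ and the extra $\pi/2$ contribution of the Aztec-diamond corner in the $\mathcal F$ case. Your treatment of the degenerate vertices and of the boundary segments of $\partial D_\text{Az}$ as images of the leaves matches the conventions implicit in the paper's argument.
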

\begin{remark}
By the balancing condition~\eqref{eq:balancing_primitive},~$\theta_{\mathrm v}'\in (0,\pi)$, so~$\theta_{\mathrm v}\in (0,\pi)$ if~$n_{\mathrm v}=1$, and~$\theta_{\mathrm v}\in (\pi,2\pi)$ if~$n_{\mathrm v}=2$.
\end{remark}
\begin{proof}
The first part of the statement~$\theta_{\mathrm v}=n_{\mathrm v}\pi-\theta_{\mathrm v}'$ with~$n_{\mathrm v}\in \{1,2\}$ follows from Lemma~\ref{lem:local_orientation} and the fact that~$\mathrm v'-\mathrm v$ and~$\map_t(\mathrm v')-\map_t(\mathrm v)$ are parallel, see Figure~\ref{fig:local_orientation}.

For the second part of the statement we first assume that~$\mu\in \mathcal S$. The sum of the angles of the polygons~$\mathcal A_{t,\mu}$ and~$R_\mu$ satisfy 
\begin{equation}\label{eq:inner_angles}
\sum_{\mathrm v}(\pi-\theta_{\mathrm v})=2\pi, \quad \text{and} \quad \sum_{\mathrm v}(\pi-\theta_{\mathrm v}')=2\pi.
\end{equation} 
Combining these equations with~$\theta_{\mathrm v}=n_{\mathrm v}\pi-\theta_{\mathrm v}'$ implies 
\begin{equation}
2\pi=-2\pi+\pi \sum_{\mathrm v}(2-n_\mathrm v),
\end{equation}
which proves the statement for~$\mu\in \mathcal S$.

If~$\mu\in \mathcal Q$ instead, the angles~$\theta_{\mathrm v}'$ are in the unbounded component~$\mathcal A_{t,\mu}$ and the right hand side of the equality involving~$\theta_{\mathrm v}'$ in~\eqref{eq:inner_angles} is~$\pi$ instead of~$2\pi$. The right hand side of the equality for~$\theta_{\mathrm v}$ is still~$2\pi$. Similarly, if~$\mu\in \mathcal F$, then the right hand side of the equality for~$\theta_{\mathrm v}'$ in~\eqref{eq:inner_angles} is~$\pi/2$ instead of~$2\pi$ and there is an additional term~$\pi/2$ on the left hand side of the second equality, to compensate for the corner of the Aztec diamond. The result then follows as in the case of~$\mu\in \mathcal S$.
\end{proof}

 \begin{figure}[t]
 \begin{center}
\begin{subfigure}[c]{0.45\textwidth}
    \begin{tikzpicture}[scale=1]
    \draw (0,0) node {\includegraphics[scale=.3]{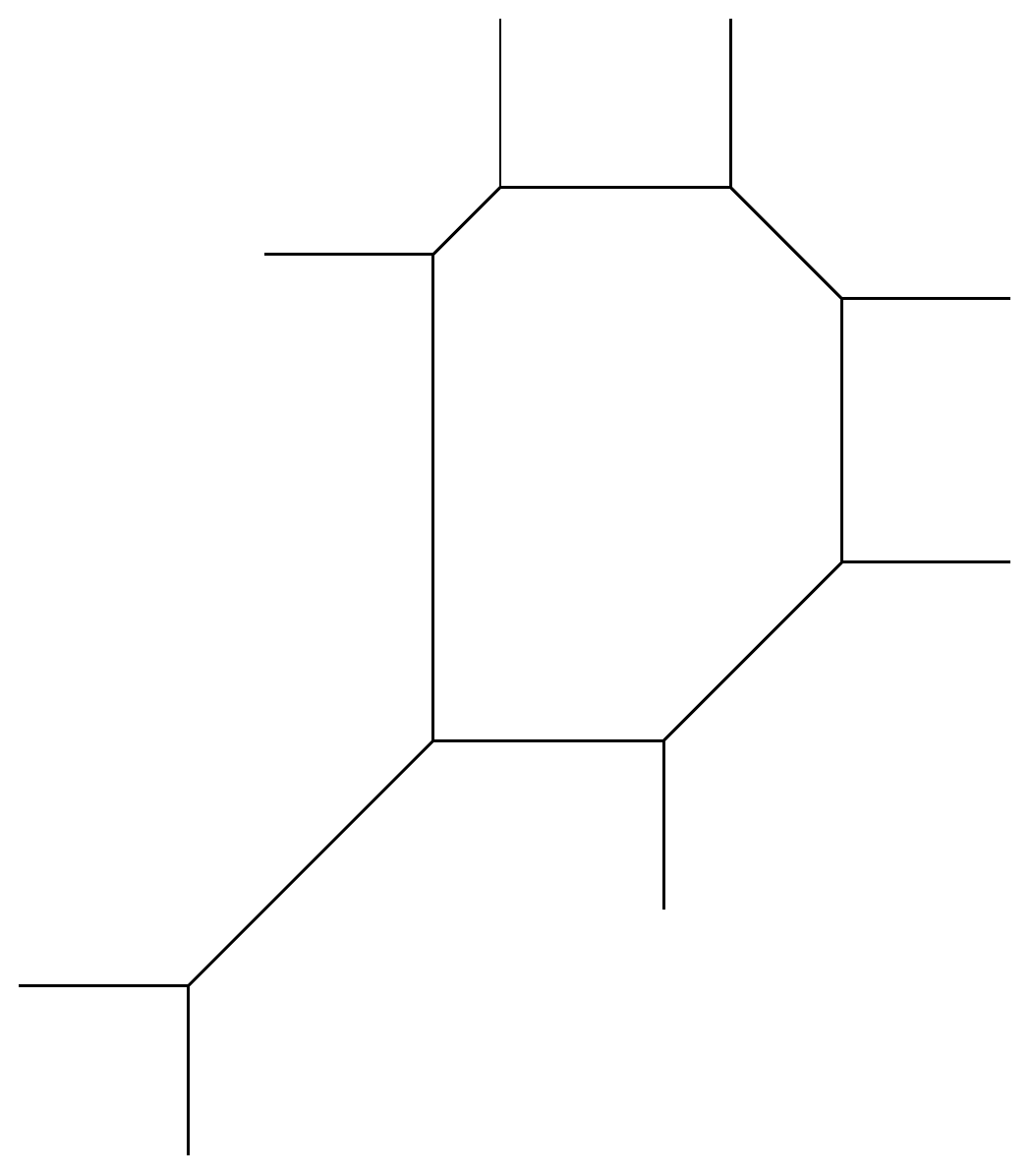}};
    \draw (1.11,2.065) circle (5pt);
    \draw (1.11,2.065) node[below left] {\tiny{$\theta_{61}'$}};
    \draw (1.11,2.065) node[above right] {\tiny{$\theta_{11}'$}};
    
    \draw (-.075,2.065) circle (5pt);
    \draw (-.075,2.065) node[above left] {\tiny{$\theta_{52}'$}};
    \draw (-.075,2.065) node[below right] {\tiny{$\theta_{61}'$}};
    
    \draw (-.425,1.715) circle (5pt);
    \draw (-.425,1.715) node[above left] {\tiny{$\theta_{51}'$}};
    \draw (-.425,1.715) node[below left] {\tiny{$\theta_{43}'$}};
    \draw (-.425,1.715) node[below right] {\tiny{$\theta_{65}'$}};
    
    \draw (-.425,-.795) circle (5pt);
    \draw (-.425,-.795) node[above left] {\tiny{$\theta_{42}'$}};
    \draw (-.425,-.795) node[below right ] {\tiny{$\theta_{32}'$}};
    \draw (-.425,-.795) node[above right] {\tiny{$\theta_{64}'$}};

    \draw (-1.7,-2.07) circle (5pt);
    \draw (-1.7,-2.07) node[above left ] {\tiny{$\theta_{41}'$}};
    \draw (-1.7,-2.07) node[below right] {\tiny{$\theta_{33}'$}};

    \draw (.76,-.795) circle (5pt);
    \draw (.76,-.795) node[above left] {\tiny{$\theta_{63}'$}};
    \draw (.76,-.795) node[below left] {\tiny{$\theta_{31}'$}};
    \draw (.76,-.795) node[below right] {\tiny{$\theta_{22}'$}};

    \draw (1.695,.14) circle (5pt);
    \draw (1.695,.14) node[above left] {\tiny{$\theta_{62}'$}};
    \draw (1.695,.14) node[below right] {\tiny{$\theta_{21}'$}};

    \draw (1.695,1.49) circle (5pt);
    \draw (1.695,1.49) node[above right] {\tiny{$\theta_{12}'$}};
    \draw (1.695,1.49) node[below left] {\tiny{$\theta_{62}'$}};
 	\pic[scale=.2, rotate=-45] at (3.5,2.5) {compassOp};
  \end{tikzpicture}
 \end{subfigure}
 \quad
\begin{subfigure}[c]{0.45\textwidth}
    \begin{tikzpicture}[scale=1]
    \draw (0,0) node {\includegraphics[scale=.3]{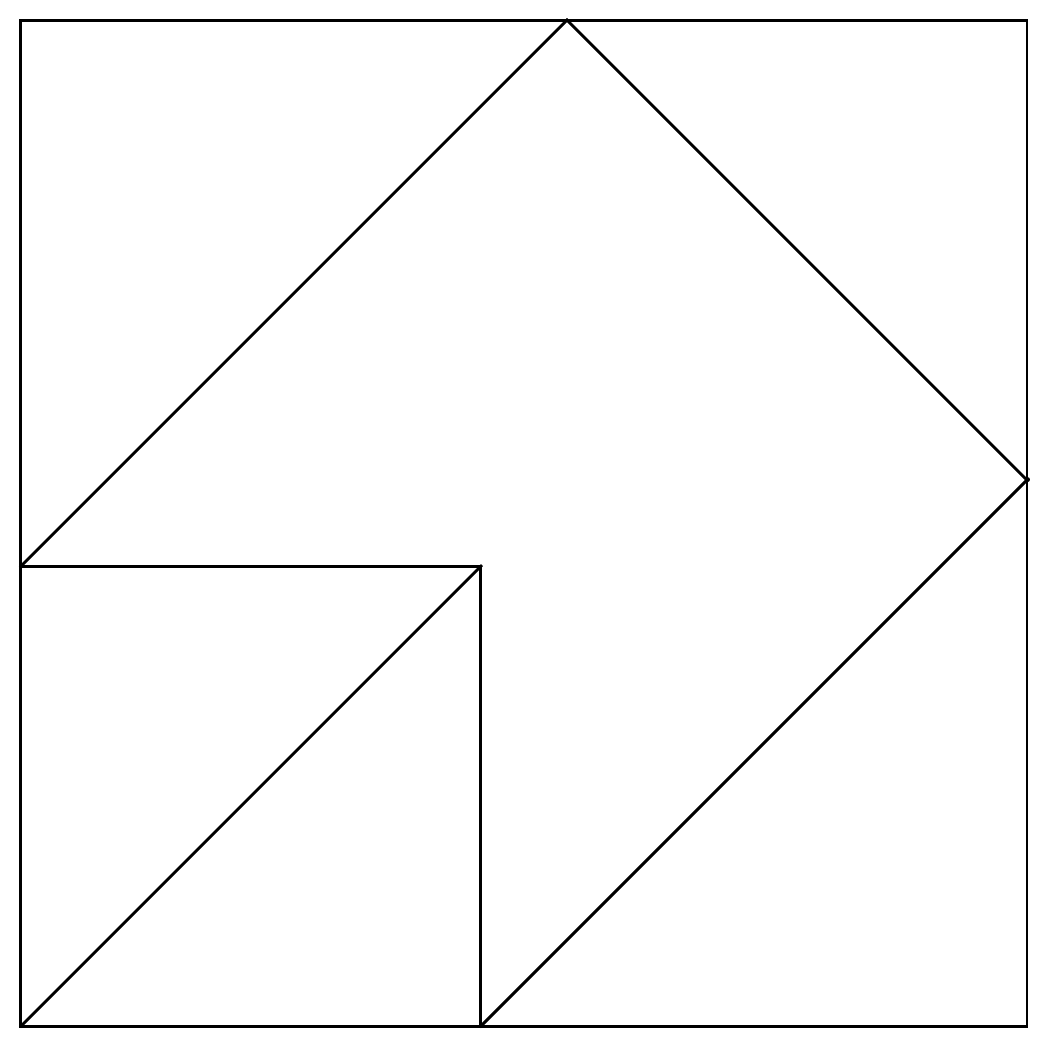}};
    
    \draw (2.05,2.55) arc[start angle=180,end angle=270,radius=.5];
                
    \draw (-.29,2.55) arc[start angle=180,end angle=360,radius=.5];
    \draw (-.25,2.55) node[below left] {\footnotesize{$\theta_{21}$}};
    \draw (.21,2.1) node[below] {\footnotesize{$\theta_{62}$}};
    \draw (.7,2.55) node[below right] {\footnotesize{$\theta_{12}$}};

    \draw (-2.05,2.55) arc[start angle=360,end angle=270,radius=.5];
                
    \draw (-2.55,.29) arc[start angle=90,end angle=-90,radius=.5];
    \draw (-2.55,.25) node[above right] {\footnotesize{$\theta_{22}$}};
    \draw (-2.1,-.21) node[above right] {\footnotesize{$\theta_{63}$}};
    \draw (-2.2,-.7) node[right] {\footnotesize{$\theta_{31}$}};

    \draw (-2.05,-2.55) arc[start angle=0,end angle=90,radius=.5];
    \draw (-2.05,-2.55) node[above right] {\footnotesize{$\theta_{41}$}};
    \draw (-2.55,-2.05) node[above right] {\footnotesize{$\theta_{33}$}};

    \draw (.29,-2.55) arc[start angle=0,end angle=180,radius=.5];
    \draw (.25,-2.55) node[above right] {\footnotesize{$\theta_{51}$}};
    \draw (-.21,-2.1) node[above right] {\footnotesize{$\theta_{65}$}};
    \draw (-.7,-2.55) node[above left] {\footnotesize{$\theta_{43}$}};

    \draw (2.05,-2.55) arc[start angle=180,end angle=90,radius=.5];

    \draw (2.55, -.29) arc[start angle=270,end angle=90,radius=.5];
    \draw (2.55,-.25) node[below left] {\footnotesize{$\theta_{52}$}};
    \draw (2.1,.21) node[left] {\footnotesize{$\theta_{61}$}};
    \draw (2.55,.7) node[above left] {\footnotesize{$\theta_{11}$}};
    
    \draw (-.215,-.215) circle (10pt);
    \draw (.2,.2) node {\footnotesize{$\theta_{64}$}};
    \draw (-.8,-.4) node {\footnotesize{$\theta_{32}$}};
    \draw (-.45,-.8) node {\footnotesize{$\theta_{42}$}};

 	\pic[scale=.2, rotate=-45] at (4,2.5) {compass};
  \end{tikzpicture}
 \end{subfigure}
 \end{center}
\caption{The relation between the angles in the tropical curve~$\theta_{\mathrm v}'$ and the angles in the tropical arctic curve~$\theta_{\mathrm v}$. In the figure, we write~$\theta_{\mathrm v}'=\theta_{ij}'$, where~$i$ indicates the component, and~$j$ the angle in the component. If~$\theta_{ij}'$ appears more than once, it should be interpreted as the sum of the angles that carry it. Cf. Figure~\ref{fig:local_orientation}. 
\label{fig:inner_angles}}
\end{figure}

\subsection{A dual representation of the limit shape}\label{sec:dual_representation}
In Proposition~\ref{prop:vertex_map_well_defined} and Corollary~\ref{cor:tropical_limit_shape} we expressed~$\map_t$ and~$\bar h_t$ in terms of the function~$f_t$, which is a function on~$\mathcal A_t$. In this section, we consider a type of a dual function~$f_t^*$ of~$f_t$ with which we give alternative expressions of~$\map_t$ and~$\bar h_t$.

Recall the duality between~$N_S(P_t)$ and~$\mathcal A_t$ from Section~\ref{sec:tropical_amoeba}. For a vertex~$\mathrm v$, an edge~$e$ and a face~$\mathrm f$ in~$\mathcal A_t$ we denote the corresponding face, edge and vertex by~$\mathrm v^*$,~$e^*$ and~$\mathrm f^*$, respectively. If~$e\in LE(\mathcal A_t)$ is an oriented edge, then we set~$\eta(e^*)$ to be the vector obtained by rotating~$\eta(e)$ by~$\pi/2$ in the positive direction. Given a regular function~$g_t$ on~$\mathcal A_t$, we define its dual 1-form~$(\d g_t)^*$ on the edges of~$N_S(P_t)$ by~$(\d g_t)^*(\eta(e^*))=\d g_t(\eta(e))$. The balancing condition~\eqref{eq:balancing_function} implies that 
\begin{equation}
\sum_{i=1}^3 (\d g_t)^*(\eta(e_i^*))=0,
\end{equation}
where~$e_i^*$ are the edges around a face in~$N_S(P_t)$. The 1-form~$(\d g_t)^*$ is therefore exact, and it defines, up to an additive constant, a function~$g_t^*$ on the vertices of~$N_S(P_t)$ so that~$\d (g_t^*)(\eta(e^*))=(\d g_t)^*(\eta(e^*))$.

We assign a weight~$l^*$ to the interior edges of~$N_S(P_t)$ defined by~$l^*(e^*)=l(e)$, where~$l(e)$ is the length of the edge~$e\in E(\mathcal A_t)$ as defined in Section~\ref{sec:tropical_amoeba}. The weighted discrete Laplacian~$\Delta_l$ is defined by
\begin{equation}\label{eq:balancing_dual}
(\Delta_lg_t^*)(\mathrm f^*)=\sum_{e^*\sim \mathrm f^*}l^*(e^*)\d g_t^*(\eta(e^*)),
\end{equation}
where the sum runs over all edges~$e^*$ adjacent to the vertex~$\mathrm f^*$ with~$\eta(e^*)$ oriented away from~$\mathrm f^*$. Since~$\d g_t\in \Omega_0(\mathcal A_t)$, the function~$g_t^*$ satisfies the condition~$\Delta_l(g_t^*)(\mathrm f^*)=0$ for all inner vertices~$\mathrm f^*$, cf.~\eqref{eq:line_integral}.

Let~$f_t$ be the function defined in Section~\ref{sec:tropical_action} and let~$f_t^*$ be its dual function as defined above. The boundary conditions for~$f_t$ imply the boundary conditions for~$f_t^*$, namely,~$\d f_t^*(\eta(e^*))=-\ell$ if~$e^*\in (L_1(\mathcal A_t))^*$,~$\d f_t^*(\eta(e^*))=k$ if~$e^*\in (L_2(\mathcal A_t))^*$, and~$\d f_t(\eta(e^*))=0$ if~$e^*\in (L_i(\mathcal A_t))^*$,~$i=3,4$. Here~$(L_i(\mathcal A_t))^*$ is defined so that~$e^*\in (L_i(\mathcal A_t))^*$ if and only if~$e\in L_i(\mathcal A_t)$.
\begin{remark}
Instead of starting with~$f_t$, we could have defined~$f_t^*$ (up to an additive constant) by~\eqref{eq:balancing_dual} together with its boundary conditions. By the above duality, we then recover~$f_t$.
\end{remark}

The benefit of considering~$f_t^*$ instead of~$f_t$ is that~$f_t^*$ naturally extends to a piecewise linear continuous function on~$N(P)$, see Figure~\ref{fig:newton_harmonic}. For any point~$(s,t)$ lying within a face~$\mathrm v^*$ of~$N_S(P_t)$, the gradient~$\nabla f^*(s,t)$ of (the extended function)~$f_t^*$ is constant. We denote this constant value by~$\nabla f_t^*(\mathrm v^*)$. Note that
\begin{equation}
\nabla f_t^*(\mathrm v^*)=(\d_y f_t(\mathrm v),-\d_x f_t(\mathrm v))
\end{equation}
due to the~$\pi/2$ rotation between~$e$ and~$e^*$. With this dual perspective, Proposition~\ref{prop:vertex_map_well_defined} turns into the following statement. 
\begin{proposition}\label{prop:vertex_map_gradient}
Let~$f_t^*$ be as defined above. The function~$\map_t$ from Definition~\ref{def:vertex_map} is given by
\begin{equation}
\map_t(\mathrm v)=\frac{1}{k\ell}\nabla f_t^*(\mathrm v^*)-\frac{1}{k\ell}(k,\ell).
\end{equation}  
\end{proposition}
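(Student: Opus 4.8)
The plan is to derive the formula for $\map_t$ in Proposition~\ref{prop:vertex_map_gradient} directly from the already-established formula in Proposition~\ref{prop:vertex_map_well_defined}, which states $\map_t(\mathrm v)=\frac{1}{k\ell}(\d_y f_t(\mathrm v),-\d_x f_t(\mathrm v))-\frac{1}{k\ell}(k,\ell)$. The entire content of the dual statement is therefore the identification of the pair $(\d_y f_t(\mathrm v),-\d_x f_t(\mathrm v))$ with the gradient $\nabla f_t^*(\mathrm v^*)$ of the extended dual function on the face $\mathrm v^*$. Since the setup just before the proposition already asserts this identity (``Note that $\nabla f_t^*(\mathrm v^*)=(\d_y f_t(\mathrm v),-\d_x f_t(\mathrm v))$''), the proof is essentially a one-line substitution once that identity is justified. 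So the substantive work is to verify that identity carefully, and the proof should spell it out.

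First I would recall the definitions: for a vertex $\mathrm v\in V(\mathcal A_t)$ of degree three, the balancing condition~\eqref{eq:balancing_function} guarantees the graph of $f_t$ near $\mathrm v$ lies in a plane $\Pi(\mathrm v)$, and $\d_x f_t(\mathrm v)$, $\d_y f_t(\mathrm v)$ are the slopes of that plane in the coordinate directions, so that $\d f_t(\eta)=(\d_x f_t(\mathrm v),\d_y f_t(\mathrm v))\cdot\eta$ for every outward primitive vector $\eta$ at $\mathrm v$, by~\eqref{eq:direction_derivative}. On the dual side, the face $\mathrm v^*\in N_S(P_t)$ is the region on which the extended function $f_t^*$ is linear with gradient $\nabla f_t^*(\mathrm v^*)$; by construction of the dual $1$-form, $\d f_t^*(\eta(e^*))=(\d f_t)^*(\eta(e^*))=\d f_t(\eta(e))$ for each edge $e$ adjacent to $\mathrm v$, where $\eta(e^*)$ is $\eta(e)$ rotated by $\pi/2$ counterclockwise.

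Next I would combine these two expressions. On one hand, $\d f_t^*(\eta(e^*))=\nabla f_t^*(\mathrm v^*)\cdot\eta(e^*)$ since $f_t^*$ is linear on $\mathrm v^*$ and the edges $e^*$ bound this face. On the other hand, $\d f_t(\eta(e))=(\d_x f_t(\mathrm v),\d_y f_t(\mathrm v))\cdot\eta(e)$. Writing $R$ for the $\pi/2$ counterclockwise rotation, so that $\eta(e^*)=R\,\eta(e)$, and using $a\cdot R b=(R^{-1}a)\cdot b=(R^T a)\cdot b$, the equality $\nabla f_t^*(\mathrm v^*)\cdot R\eta(e)=(\d_x f_t(\mathrm v),\d_y f_t(\mathrm v))\cdot\eta(e)$ must hold for the primitive vectors of all three edges adjacent to $\mathrm v$. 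Because a degree-three vertex of a smooth tropical curve has at least two linearly independent tangent vectors among $\eta(e_1),\eta(e_2),\eta(e_3)$, these equalities force $R^T\nabla f_t^*(\mathrm v^*)=(\d_x f_t(\mathrm v),\d_y f_t(\mathrm v))$, i.e. $\nabla f_t^*(\mathrm v^*)=R(\d_x f_t(\mathrm v),\d_y f_t(\mathrm v))=(-\d_y f_t(\mathrm v),\d_x f_t(\mathrm v))$ if $R=\begin{pmatrix}0&-1\\1&0\end{pmatrix}$. I would double-check the sign convention here against the paper's stated identity $\nabla f_t^*(\mathrm v^*)=(\d_y f_t(\mathrm v),-\d_x f_t(\mathrm v))$, which corresponds to the opposite rotation sign; the correct orientation of $R$ is dictated by the paper's convention that $\eta(e^*)$ is obtained by $\pi/2$ counterclockwise rotation from $\eta(e)$, and I would simply record whichever sign is consistent with that convention.

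Finally, substituting this identity into the formula of Proposition~\ref{prop:vertex_map_well_defined} yields
\begin{equation}
\map_t(\mathrm v)=\frac{1}{k\ell}(\d_y f_t(\mathrm v),-\d_x f_t(\mathrm v))-\frac{1}{k\ell}(k,\ell)=\frac{1}{k\ell}\nabla f_t^*(\mathrm v^*)-\frac{1}{k\ell}(k,\ell),
\end{equation}
which is the claim. I do not expect any serious obstacle, since all the heavy lifting (well-definedness of $\map_t$ and the gradient formula in terms of $f_t$) is already done in Proposition~\ref{prop:vertex_map_well_defined}, and the dual function $f_t^*$ is set up precisely so that its slopes match those of $f_t$ under the $\pi/2$ rotation. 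The only point requiring genuine care is bookkeeping the rotation and sign conventions so that the coordinates come out as $(\d_y f_t,-\d_x f_t)$ rather than $(-\d_y f_t,\d_x f_t)$; this is the \emph{main obstacle}, but it is a matter of pinning down orientations rather than of mathematical substance.
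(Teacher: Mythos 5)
Your proposal is correct and takes essentially the same route as the paper: the paper offers no separate proof of Proposition~\ref{prop:vertex_map_gradient}, treating it as an immediate consequence of Proposition~\ref{prop:vertex_map_well_defined} and the identity $\nabla f_t^*(\mathrm v^*)=(\d_y f_t(\mathrm v),-\d_x f_t(\mathrm v))$ stated just above it, and your argument is precisely that substitution together with the justification of the identity that the paper leaves implicit.

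One clarification on the sign issue you flagged, since it is the only point of uncertainty in your write-up: your computation is right, and the discrepancy sits in the paper, not in your algebra. Writing $R$ for the counterclockwise rotation by $\pi/2$, the stated convention $\eta(e^*)=R\eta(e)$ together with the relations $\nabla f_t^*(\mathrm v^*)\cdot R\eta(e_i)=(\d_x f_t(\mathrm v),\d_y f_t(\mathrm v))\cdot\eta(e_i)$ for the three edges at $\mathrm v$ forces $\nabla f_t^*(\mathrm v^*)=(-\d_y f_t(\mathrm v),\d_x f_t(\mathrm v))$, exactly as you found. Moreover, the counterclockwise convention is the one consistent with the rest of Section~\ref{sec:dual_representation}: the values of $f_t^*$ in the $k=\ell=1$ example and the telescoping identity $\sum_{e\in\Gamma_\mu}\d f_t(\eta(e))=f_t^*(\mu_0)-f_t^*(\mu)$ used in the proof of Corollary~\ref{cor:limit_shape_newton} both require it. So the displayed identity, and hence the proposition as literally printed, carries a sign slip: either the rotation in the duality should be clockwise, or the formula should read $\map_t(\mathrm v)=-\frac{1}{k\ell}\nabla f_t^*(\mathrm v^*)-\frac{1}{k\ell}(k,\ell)$. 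Your decision to pin the sign to whichever rotation makes the conventions coherent is the correct resolution, and it is exactly how the statement should be repaired.
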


 \begin{figure}[t]
 \begin{center}
\begin{subfigure}[c]{0.35\textwidth}
\includegraphics[scale=.3]{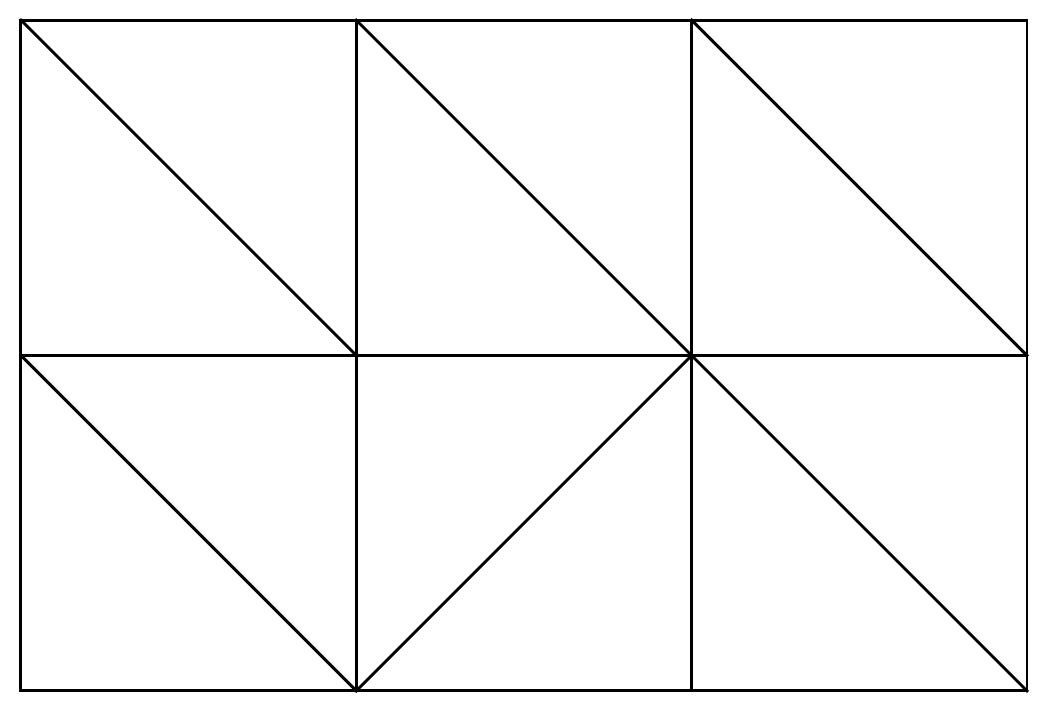}
 \end{subfigure}
\qquad \qquad
\begin{subfigure}[c]{0.35\textwidth}
\includegraphics[scale=.3,trim={10cm 0cm 10cm 0cm}, clip]{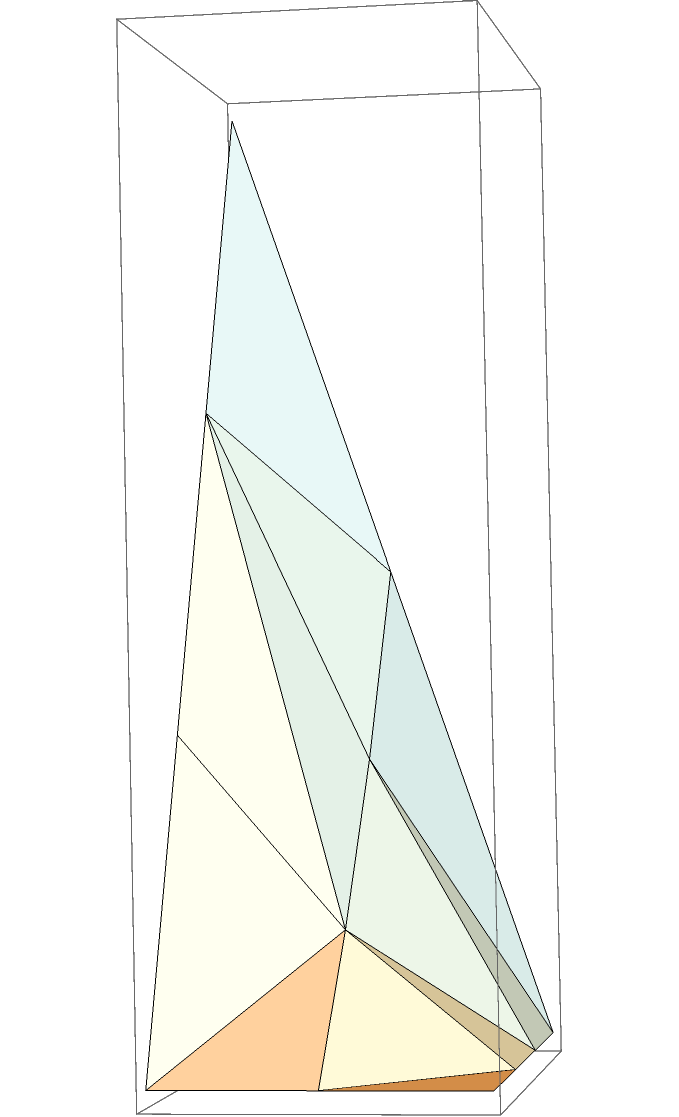}
 \end{subfigure}
 \end{center}
\caption{A subdivision~$N_S(P_t)$ of the Newton polygon (left) and the plot of the function~$f_t^*$ (right). The periodicity is~$k=2$ and~$\ell=3$.
\label{fig:newton_harmonic}}
\end{figure}

By extending the notion of the gradient, we can express not just the vertices of the tropical arctic curve as the gradient of~$f_t^*$, but the entire tropical arctic curve itself. The Clarke subdifferential~$\partial f^*(s,t)$ of the function~$f^*$ at~$(s,t)\in N(P)\subset \RR^2$ is the set
\begin{equation}
\partial f^*(s,t)=\convHull\left\{\lim_{i\to \infty}\nabla f^*(s_i,t_i):N(P)\ni(s_i,t_i)\to(s,t), \, f^* \text{ is differentiable at } (s_i,t_i)\right\},
\end{equation} 
see~\cite[Definition 1.1]{Cla75}. If~$(s,t)$ is in the face~$\mathrm v^*$, we recover the gradient of~$f^*$, namely,~$\partial f^* (s,t)=\{\nabla f^*(\mathrm v^*)\}$. If~$(s,t)$ is in the interior of an edge~$e^*$, with adjacent faces~$\mathrm v_1^*$ and~$\mathrm v_2^*$, then
\begin{equation}
\partial f^*(s,t)=\convHull\left\{\nabla f^*(v_1^*),\nabla f^*(v_2^*)\right\},
\end{equation} 
that is,~$\partial f^*(s,t)$ is the line segment between~$\nabla f^*(v_1^*)$ and~$\nabla f^*(v_2^*)$. Combining Proposition~\ref{prop:vertex_map_gradient} and Theorem~\ref{thm:arctic_curve_tropical} yields the following description of the tropical arctic curve.  
\begin{corollary}\label{cor:tropical_arctic_gradient}
The tropical arctic curve is the set
\begin{equation}
\bigcup_{(s,t)\in N(P)\backslash \mathcal N}\left(\frac{1}{k\ell} \partial f_t^*(s,t)-\frac{1}{k\ell}(k,\ell)\right).
\end{equation}  
\end{corollary}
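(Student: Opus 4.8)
The plan is to deduce the statement by combining Theorem~\ref{thm:arctic_curve_tropical} with Proposition~\ref{prop:vertex_map_gradient} and matching the two resulting unions stratum by stratum through the duality between $N_S(P_t)$ and $\mathcal A_t$ recalled in Section~\ref{sec:tropical_amoeba}. Theorem~\ref{thm:arctic_curve_tropical} already expresses the tropical arctic curve as the union, over all pairs of adjacent vertices $\mathrm v,\mathrm v'\in V(\mathcal A_t)$, of the segments with endpoints $\map_t(\mathrm v)$ and $\map_t(\mathrm v')$. Two vertices of $\mathcal A_t$ are adjacent exactly when they are joined by a bounded edge $e\in E(\mathcal A_t)$, and under the duality bounded edges correspond bijectively to the interior edges $e^*$ of $N_S(P_t)$, the endpoints of $e$ being dual to the two faces $\mathrm v^*,\mathrm v'^*$ that share $e^*$. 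So the first thing I would record is that it suffices to show that the image of each interior edge $e^*$ under $(s,t)\mapsto \frac{1}{k\ell}\partial f_t^*(s,t)-\frac{1}{k\ell}(k,\ell)$ is the segment dual to $e^*$, and that the remaining points of $N(P)\setminus\mathcal N$ contribute nothing new.

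The main computation I would then carry out is on the relative interior of an interior edge. Fix such an $e^*$ with adjacent faces $\mathrm v^*,\mathrm v'^*$ and take $(s,t)$ in its relative interior; as recalled just before the statement, $\partial f_t^*(s,t)=\convHull\{\nabla f_t^*(\mathrm v^*),\nabla f_t^*(\mathrm v'^*)\}$. Applying the affine map and using Proposition~\ref{prop:vertex_map_gradient}, namely $\map_t(\mathrm v)=\frac{1}{k\ell}\nabla f_t^*(\mathrm v^*)-\frac{1}{k\ell}(k,\ell)$, I get
\begin{equation}
\frac{1}{k\ell}\partial f_t^*(s,t)-\frac{1}{k\ell}(k,\ell)=\convHull\{\map_t(\mathrm v),\map_t(\mathrm v')\},
\end{equation}
the segment between $\map_t(\mathrm v)$ and $\map_t(\mathrm v')$. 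Since $\partial f_t^*$ is constant along the relative interior of $e^*$, the union over those $(s,t)$ is this single segment, and letting $e^*$ run over all interior edges reproduces, via the duality above, exactly the union in Theorem~\ref{thm:arctic_curve_tropical}.

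Finally I would dispose of the remaining strata of $N(P)\setminus\mathcal N$, namely the relative interiors of the two-dimensional faces and of the boundary edges. On a face interior $\partial f_t^*(s,t)=\{\nabla f_t^*(\mathrm v^*)\}$, and on a boundary-edge interior only the single adjacent face contributes a limiting gradient, so in both cases the corresponding set is the single point $\{\map_t(\mathrm v)\}$. The point of the argument is that such a point is redundant: since $\mathcal A_t$ is a smooth tropical curve, $N_S(P_t)$ has $2k\ell\ge 2$ triangles, so no triangle can have all three sides on $\partial N(P)$ (it would then have no neighbour and be the only triangle); hence every face $\mathrm v^*$ has at least one interior edge, and $\map_t(\mathrm v)$ is already an endpoint of the segment dual to it. I expect this last bookkeeping to be the main (if modest) obstacle: one must be careful that the union over $N(P)\setminus\mathcal N$ neither omits a segment nor introduces a spurious point, which is exactly where the smoothness assumption and the one-sided nature of the Clarke subdifferential on $\partial N(P)$ enter. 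With that in hand, the union over all of $N(P)\setminus\mathcal N$ collapses to the union of the interior-edge segments, which is the tropical arctic curve.
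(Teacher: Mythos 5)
Your proof is correct and follows essentially the same route as the paper: the paper obtains the corollary by simply combining Proposition~\ref{prop:vertex_map_gradient} with Theorem~\ref{thm:arctic_curve_tropical}, using exactly the face/edge description of the Clarke subdifferential that you invoke. Your extra bookkeeping --- checking that face interiors and boundary edges of $N(P)$ contribute only points $\map_t(\mathrm v)$ that are already endpoints of interior-edge segments, because smoothness forces every triangle of $N_S(P_t)$ to have an interior edge --- is a detail the paper leaves implicit, and you handle it correctly.
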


\begin{remark}\label{rem:arctic_tropical_curve}
It is worth noting the similarities between how we obtain the tropical arctic curve and the tropical curve~$\mathcal A_t$ from~$f_t^*$ and~$-\mathcal E^*$, respectively. Let us extend~$\mathcal E^*$ to the piecewise linear function whose graph coincides with the top boundary of~$\tilde N(P_t)$, making it smooth on the faces of~$N_S(P_t)$. Explicitly, for~$(s,t)$ in a face~$\mathrm v^*$ of~$N_S(P_t)$ corresponding via the duality to a vertex~$\mathrm v=(x,y)$ in~$\mathcal A_t$, we have
\begin{equation}\label{eq:tropical_surface_tension_extended}
-\mathcal E^*(s,t)=xs+yt-P_t(x,y).
\end{equation}
Indeed, each vertex~$\mu_i$,~$i=1,2,3$, of the face~$\mathrm v^*$ attains the maximum of the right hand side of~\eqref{eq:characteristic_polynomial_tropical}, and the values of~$\mathcal E^*$ at those vertices determine the linear function~\eqref{eq:tropical_surface_tension_extended} on~$\mathrm v^*$. Thus, the vertices~$(x,y)$ of~$\mathcal A_t$ are the images of~$-\nabla \mathcal E^*$ evaluated at the faces of~$N_S(P_t)$, and the bounded edges of~$\mathcal A_t$ form the set 
\begin{equation}
\bigcup_{(s,t)\in N(P)\backslash \mathcal N}\left(\partial(-\mathcal E^*(s,t))\right),
\end{equation}  
where~$\partial(-\mathcal E^*)$ is the Clarke subdifferential of~$-\mathcal E^*$ (which, since~$-\mathcal E^*$ is convex, is the set of subgradients,~\cite[Proposition 1.2]{Cla75}).
\end{remark}

We can also express the tropical limit shape~$\bar h_t$, described by Corollary~\ref{cor:tropical_limit_shape} and below, in terms of~$f_t^*$ instead of~$f_t$.
\begin{corollary}\label{cor:limit_shape_newton}
Let~$(u,v)\in R_\mu\subset D_\text{Az}$ for some~$\mu \in \mathcal F\cup \mathcal Q\cup \mathcal S$ and let~$\mu_0=(0,k)\in \mathcal F$. Then
\begin{equation}
\bar h_t(u,v)=\left(u+\frac{1}{\ell},v+\frac{1}{k}\right)\cdot (\mu-\mu_0)+\frac{1}{k\ell}(f_t^*(\mu)-f_t^*(\mu_0))+1.
\end{equation}
\end{corollary}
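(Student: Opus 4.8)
The plan is to begin from the representation of $\bar h_t$ that is already available in \eqref{eq:tropical_height_function} together with the two displays immediately following it. Combining them gives, for $(u,v)\in R_\mu$,
\[
\bar h_t(u,v)=(u,v)\cdot(\mu-\mu_0)+\frac{1}{k\ell}(k,\ell)\cdot(\mu-\mu_0)+1-\frac{1}{k\ell}\sum_{e\in \Gamma_\mu}\d f_t(\eta(e)).
\]
The target formula differs from this only in how the terms are grouped, so the proof reduces to two bookkeeping identities: a trivial regrouping of the linear part, and a telescoping identity for the constant part.

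First I would absorb the $\tfrac{1}{k\ell}(k,\ell)$ term into the linear part. Since $\tfrac{1}{k\ell}(k,\ell)=(\tfrac1\ell,\tfrac1k)$, the two inner products combine into $\bigl(u+\tfrac1\ell,\,v+\tfrac1k\bigr)\cdot(\mu-\mu_0)$, which is exactly the first term of the claimed expression. Nothing beyond arithmetic is needed here.

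Second — and this is the only step with genuine content — I would rewrite $\sum_{e\in\Gamma_\mu}\d f_t(\eta(e))$ as a telescoping sum of values of $f_t^*$. By the duality of Section~\ref{sec:dual_representation} one has $\d f_t(\eta(e))=\d f_t^*(\eta(e^*))$ for every $e\in LE(\mathcal A_t)$, and Definition~\ref{def:discrete_curve} identifies $\Gamma_\mu$ with the edges of $\mathcal A_t$ dual to the edges of a path $\mu=\nu_0,\nu_1,\dots,\nu_n=\mu_0$ along the edges of $N_S(P_t)$. The crux is to check that the orientation prescribed on $\Gamma_\mu$ (each $e$ oriented so as to cross $\gamma_\mu$ from left to right) matches, under the $\pi/2$ counterclockwise rotation $\eta(e)\mapsto\eta(e^*)$, the orientation of $e^*$ running from $\nu_{i-1}$ to $\nu_i$. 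I would verify this in a single local model: taking $\gamma_\mu$ to move upward and $e$ horizontal with $\eta(e)=(1,0)$, the left-to-right convention forces $\gamma_\mu$ to pass from the face of $\mathcal A_t$ lying to the right of $\eta(e)$ into the face lying to its left, while $\eta(e^*)=(0,1)$ points from the dual vertex of the right face to that of the left face, i.e.\ forward along the path. Granting this, each summand equals $\d f_t^*(\eta(e_i^*))=f_t^*(\nu_i)-f_t^*(\nu_{i-1})$, so the sum telescopes to $f_t^*(\mu_0)-f_t^*(\mu)$. Substituting back yields $-\tfrac{1}{k\ell}\sum_{e\in\Gamma_\mu}\d f_t(\eta(e))=\tfrac{1}{k\ell}\bigl(f_t^*(\mu)-f_t^*(\mu_0)\bigr)$, which completes the identification.

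The main obstacle is precisely this sign/orientation verification: one must be certain that ``crossing $\gamma_\mu$ from left to right'' combined with the counterclockwise rotation convention for $\eta(e^*)$ produces the \emph{forward} orientation of the dual path rather than its reverse, so that the telescoping delivers $f_t^*(\mu)-f_t^*(\mu_0)$ with the correct sign. Everything else is the elementary regrouping of the first step, and the independence of the answer from the choice of $\gamma_\mu$ is guaranteed by the remark following Corollary~\ref{cor:tropical_limit_shape}, equivalently by the exactness of $\d f_t^*$ (so that $f_t^*(\mu)-f_t^*(\mu_0)$ is path-independent and well defined up to the irrelevant additive constant).
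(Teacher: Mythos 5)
Your proof is correct and takes essentially the same route as the paper: the paper's own (one-line) proof also starts from~\eqref{eq:tropical_height_function} and invokes the duality together with Definition~\ref{def:discrete_curve} to get~$\sum_{e\in \Gamma_\mu}\d f_t(\eta(e))=f_t^*(\mu_0)-f_t^*(\mu)$, then regroups the linear term. Your local orientation check and the explicit telescoping merely fill in details the paper leaves implicit, and they are carried out with the correct sign conventions.
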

\begin{proof}
This follows from~\eqref{eq:tropical_height_function} noting also that the duality together with Definition~\ref{def:discrete_curve} implies that~$\sum_{e\in \Gamma_\mu}\d f_t(\eta(e))=f_t^*(\mu_0)-f_t^*(\mu)$.
\end{proof}

It is natural to ask if~$\bar h_t$ is determined by the values~$f_t^*(\mu)$ for~$\mu\in \mathcal N$. In other words, do the equations for the planes from which~$\bar h_t$ is constructed, fully determine~$\bar h_t$? The following simple example shows that they are not sufficient. We also need the subdivision~$N_S(P_t)$, which informs us how the planes are connected in the graph of~$\bar h_t$.
\begin{example}
Let~$k=1$ and~$\ell=1$. In this case, there are no smooth regions, so~$f_t^*$ is immediately determined (up to an additive constant) from its boundary conditions. We have
\begin{equation}
f_t^*((0,1))=0, \quad f_t^*((0,0))=0, \quad f_t^*((-1,0))=1, \quad f_t^*((-1,1))=0.
\end{equation}
The function~$\bar h_t$ is continuous, and its graph consists of pieces of the planes from Corollary~\ref{cor:limit_shape_newton}. There are two choices: The graph may consist of the planes corresponding to~$\mu=(0,1)$ and~$\mu=(-1,0)$, with the tropical arctic curve being the line segment between the west and east corners of the Aztec diamond. Alternatively, the graph may also consist of the other two planes, with the tropical arctic curve being the line segment between the north and south corners of the Aztec diamond. These different cases correspond to the two possible triangulations of~$N(P)$. 

The first choice is obtained, for instance, if~$\nu(e)=a>1$ for~$e$ being any South edge and~$\nu(e)=1$ otherwise. The second choice is obtained, for instance, if~$\nu(e)=a>1$ on the West edges and~$\nu(e)=1$ otherwise.
\end{example}

\section{The zero-temperature limit of the Gibbs measures}\label{sec:zero_gibbs}
This section shifts focus from the Aztec diamond to translation-invariant Gibbs measures defined in~\eqref{eq:gibbs_beta}. Intuitively, for large values of~$\beta$, the randomness in these measures should vanish in the limit. As we will see, this is true generically. However, for specific edge weights, randomness can persist in the limit. Our primary interest lies in the limits of the measures appearing in Corollary~\ref{cor:local_limit}, even though our assumption in the previous section is slightly stronger than what is needed here (see Lemma~\ref{lem:concavity_subdivision} below).

For the purpose of this section, we extend the definition of~$\mathcal E(\mathcal D)$ and~$\mu(\mathcal D)$ from Section~\ref{sec:characteristic_polynomial}, to include any subset of~$E_1$, not just dimer covers. Given a set of edges~$D\subset E_1$, we let 
\begin{equation}\label{eq:energy_slope_2}
\mathcal E(D)=\sum_{e\in D} \log \nu(e)\in \RR_{\geq 0} \quad \text{and} \quad \mu(D)=\sum_{e\in D}(-e\wedge \gamma_u,e\wedge \gamma_v)\in \ZZ^2,
\end{equation}
cf.~\eqref{eq:energy_slope}.

We start by discussing the assumption we will impose in this section, namely that the tropical surface tension~$\mathcal E^*$, defined by~\eqref{eq:tropical_surface_tension}, is concave.
\begin{definition}\label{def:concave}
The tropical surface tension~$\mathcal E^*$ is said to be \emph{concave} at~$\mu\in\mathcal N$ if
\begin{equation}
\mathcal E^*(\mu)\geq \sum_{i=1}^n t_i\mathcal E^*(\mu_i),
\end{equation}
for all~$t_i\in (0,1)$ with~$\sum_{i=1}^nt_i=1$ and~$\sum_{i=1}^n t_i\mu_i=\mu$. If the inequality is strict, we say that~$\mathcal E^*$ is \emph{strictly concave} at~$\mu\in \mathcal N$. 
\end{definition}
In the previous section, we considered the case when~$\mathcal A_t$ is a smooth tropical curve, that is, when the subdivision of the Newton polygon~$N_S(P_t)$ is a triangulation with triangles of area~$1/2$. In particular, that means that all points in~$\mathcal N$ are vertices of~$N_S(P_t)$. The following lemma tells us that the relevant setting from the point of view of the previous section is covered under the assumption that~$\mathcal E^*$ is strictly concave.
\begin{lemma}\label{lem:concavity_subdivision}
The tropical surface tension~$\mathcal E^*$ is strictly concave at~$\mu\in\mathcal N$ if and only if~$\mu$ is a vertex of~$N_S(P_t)$. 
\end{lemma}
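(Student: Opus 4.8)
The plan is to translate both conditions into statements about the polyhedron $\tilde N(P_t)$ and then identify them. Let $\pi:\RR^3\to\RR^2$ denote the projection forgetting the last coordinate, and recall that by construction $\tilde N(P_t)=\conv\{(\mu,s):\mu\in\mathcal N,\ s\le\mathcal E^*(\mu)\}$ is a polyhedron that is unbounded only in the downward vertical direction. Its bounded faces are exactly its \emph{upper faces} (those admitting a supporting hyperplane with upward-pointing normal), and $\pi$ maps the union of the upper faces homeomorphically onto $N(P)$ preserving the face structure and dimensions; this is precisely the map defining the subdivision $N_S(P_t)$. Consequently the faces of $N_S(P_t)$ are the $\pi$-images of the upper (equivalently, bounded) faces, and in particular $\mu$ is a vertex of $N_S(P_t)$ if and only if $(\mu,\mathcal E^*(\mu))$ is a vertex of $\tilde N(P_t)$. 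Here I also use that every vertex of $\tilde N(P_t)$ is the top endpoint $(\mu,\mathcal E^*(\mu))$ of one of the generating rays $\{\mu\}\times(-\infty,\mathcal E^*(\mu)]$, so that the vertices of $N_S(P_t)$ indeed form a subset of $\mathcal N$.

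It therefore suffices to prove that $(\mu,\mathcal E^*(\mu))$ is a vertex (equivalently, an extreme point) of $\tilde N(P_t)$ if and only if $\mathcal E^*$ is strictly concave at $\mu$. Let $g:N(P)\to\RR$ be the concave function whose graph is the upper boundary of $\tilde N(P_t)$; by definition $g(\mu)=\max\{\sum_i t_i\mathcal E^*(\mu_i):\mu_i\in\mathcal N,\ t_i\ge0,\ \sum_i t_i=1,\ \sum_i t_i\mu_i=\mu\}$, so that $g\ge\mathcal E^*$ on $\mathcal N$, and $\mathcal E^*$ is concave at $\mu$ exactly when $g(\mu)=\mathcal E^*(\mu)$, i.e. when $(\mu,\mathcal E^*(\mu))$ lies on the upper boundary.

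For the first implication I argue by contraposition: suppose $(\mu,\mathcal E^*(\mu))$ is not a vertex. If $g(\mu)>\mathcal E^*(\mu)$ then the optimal convex combination for $g(\mu)$, after discarding the term $\mu_i=\mu$ if present, violates even ordinary concavity, so $\mathcal E^*$ is not strictly concave at $\mu$. Otherwise $g(\mu)=\mathcal E^*(\mu)$ and $(\mu,\mathcal E^*(\mu))$ lies in the relative interior of an upper face $\Phi$ of positive dimension; writing this point as a genuine convex combination $\sum_i t_i(\mu_i,\mathcal E^*(\mu_i))$ of the distinct lattice vertices of $\Phi$ gives $\sum_i t_i\mu_i=\mu$ with $\mathcal E^*(\mu)=\sum_i t_i\mathcal E^*(\mu_i)$, again contradicting strict concavity. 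For the converse, suppose $\mathcal E^*$ is not strictly concave at $\mu$: there are distinct $\mu_i\in\mathcal N$ and $t_i\in(0,1)$ with $\sum_i t_i=1$, $\sum_i t_i\mu_i=\mu$ and $\mathcal E^*(\mu)\le\sum_i t_i\mathcal E^*(\mu_i)$. The point $\sum_i t_i(\mu_i,\mathcal E^*(\mu_i))\in\tilde N(P_t)$ projects to $\mu$ and has height $\ge\mathcal E^*(\mu)$; in either case this exhibits $(\mu,\mathcal E^*(\mu))$ as a non-extreme point of $\tilde N(P_t)$ (if the height exceeds $\mathcal E^*(\mu)$, the point lies strictly below the top of the fiber over $\mu$; if it equals $\mathcal E^*(\mu)$, it is a nontrivial convex combination of distinct generating points), hence not a vertex. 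Combining the two directions with the reformulation above proves the lemma.

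I expect the only real subtlety to be the bookkeeping in the first step—justifying that $\pi$ restricts to a face- and dimension-preserving bijection between the bounded faces of $\tilde N(P_t)$ and the faces of $N_S(P_t)$, and that all vertices of $\tilde N(P_t)$ are top endpoints of the generating rays—which are standard facts about regular (coherent) subdivisions induced by lifting heights, compatible with the duality discussion in Section~\ref{sec:tropical_amoeba}. The equivalence with strict concavity is then a direct separation of the two ways in which extremality can fail: lying strictly below the upper hull (which already breaks ordinary concavity) versus lying in the relative interior of a positive-dimensional upper face (which produces an equality in a genuine convex combination).
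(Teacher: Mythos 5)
Your proof is correct and follows essentially the same route as the paper: the paper's own proof is a one-line observation that $\mathcal E^*$ is strictly concave at $\mu$ if and only if $(\mu,\mathcal E^*(\mu))$ is a vertex of the extended polyhedral domain $\tilde N(P_t)$, which is exactly your Step 2, while your Step 1 (vertices of $N_S(P_t)$ correspond to vertices of $\tilde N(P_t)$) is implicit in the paper's definition of the subdivision. You have simply supplied the standard convexity details that the paper leaves to the reader.
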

\begin{proof}
This follows from the definition of~$N_S(P_t)$. Indeed, the tropical surface tension~$\mathcal E^*$ is strictly concave at~$\mu$ if and only if~$(\mu,\mathcal E^*(\mu))$ is a vertex of the extended polyhedral domain~\eqref{eq:extended_polyhedral_domain}. 
\end{proof}

The limit of the Gibbs measures will depend on the most likely dimer covers on the torus. We define a subgraph of~$G_1$ consisting of those dimer covers, and we will see (Theorem~\ref{thm:gibbs_limit}) that the dimer model on the corresponding graph on the plane is the limit of the Gibbs measure.
\begin{definition}\label{def:maximizer}
A dimer cover~$\mathcal D$ of~$G_1$ is a \emph{maximizer} with slope~$\mu$ if~$\mu(\mathcal D)=\mu$ and~$\mathcal E(\mathcal D)=\mathcal E^*(\mu)$. Given~$\mu\in\mathcal N$, the un-weighted subgraph~$G_{1,\mu}$ is defined as~$G_{1,\mu}=(B_{1,\mu}, W_{1,\mu},E_{1,\mu})\subset G_1$ with~$B_{1,\mu}=B_1$,~$W_{1,\mu}=W_1$ and~$E_{1,\mu}$ consisting of all edges in~$E_1$ that lie in at least one maximizer with slope~$\mu$.
\end{definition}

It is not difficult to see (compare with the discussion in~\cite[Section 4.3.1]{BB23}) that if~$\mu\in\mathcal Q$, then~$\mathcal E^*$ is strictly concave at~$\mu$ if and only if there is a unique maximizer with slope~$\mu$. If~$\mu\in \mathcal S$, however, this is more subtle. See Section~\ref{sec:subdivision} below for a detailed discussion.

Before we define the limiting dimer models, we need two lemmas. 

The set of maximizers with slope~$\mu$ can be viewed as an \emph{edge-$d$-coloring} of a \emph{$d$-multiweb} of~$G_{1,\mu}$, where~$d$ is the number of maximizers. A~$d$-multiweb of~$G_{1,\mu}$ is a multiset of edges with degree~$d$ at each vertex of~$G_{1,\mu}$. An edge-$d$-coloring of a~$d$-multiweb is a partition of the edges such that each part is a dimer cover of~$G_{1,\mu}$. See, \emph{e.g.},~\cite{DKS24}.
\begin{lemma}\label{lem:maximizers_all}
If~$\mathcal E^*$ is strictly concave at~$\mu\in \mathcal N$, then all dimer covers of~$G_{1,\mu}$ are maximizers with slope~$\mu$. 
\end{lemma}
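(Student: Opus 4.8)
The statement is that if $\mathcal E^*$ is strictly concave at $\mu\in\mathcal N$, then every dimer cover of the graph $G_{1,\mu}$ is a maximizer with slope $\mu$. Recall that $G_{1,\mu}$ is, by Definition~\ref{def:maximizer}, built precisely from the edges appearing in at least one maximizer of slope $\mu$. So a dimer cover $\mathcal D'$ of $G_{1,\mu}$ uses only edges that each belong to some maximizer, but \emph{a priori} $\mathcal D'$ itself need not have slope $\mu$ nor energy $\mathcal E^*(\mu)$. The content of the lemma is that strict concavity rules out this possibility. Let me think through what goes wrong if $\mathcal D'$ is not a maximizer.

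**The plan.** The natural approach is to exploit the structure of differences of dimer covers on the torus $G_1$. Fix a genuine maximizer $\mathcal D_0$ with slope $\mu$ and energy $\mathcal E^*(\mu)$, and let $\mathcal D'$ be an arbitrary dimer cover of $G_{1,\mu}$. Consider the symmetric difference $\mathcal D_0\,\triangle\,\mathcal D'$, which decomposes into disjoint cycles on the torus (alternating between $\mathcal D_0$-edges and $\mathcal D'$-edges). The key idea: every edge of $\mathcal D'$ lies in \emph{some} maximizer, so I want to argue that along each such cycle one can swap to exhibit configurations whose energies and slopes I can compare. First I would establish that $\mathcal E(\mathcal D')\le\mathcal E^*(\mu')$ where $\mu'=\mu(\mathcal D')$, simply because $\mathcal E^*(\mu')$ is the \emph{max} over slope-$\mu'$ covers. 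Then I would bound $\mathcal E(\mathcal D')$ from \emph{below} by using that each edge of $\mathcal D'$ sits in a maximizer. The cleanest route is to average: since each edge $e\in\mathcal D'$ satisfies $e\in\mathcal D_e$ for some maximizer $\mathcal D_e$ (with $\mathcal E(\mathcal D_e)=\mathcal E^*(\mu)$), one can hope to write $\mathcal D'$ as an appropriate "average'' of maximizers in the multiweb language that the paper has just set up.

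**Using the multiweb/coloring structure.** The paragraph preceding the lemma deliberately introduces $d$-multiwebs and edge-$d$-colorings: the set of all maximizers of slope $\mu$ is a collection of dimer covers, and superimposing them gives a multiweb. I would use the following averaging identity. Let $\mathcal D_1,\dots,\mathcal D_d$ enumerate all maximizers of slope $\mu$; their superposition $M=\sum_i \mathbf 1_{\mathcal D_i}$ is a $d$-multiweb, and each edge of $G_{1,\mu}$ has positive multiplicity in $M$ (that is exactly the definition of $E_{1,\mu}$). Now for the cover $\mathcal D'$ of $G_{1,\mu}$: its symmetric difference with $\mathcal D_0$ is a union of alternating cycles $C_1,\dots,C_r$. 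For each cycle, I claim the $\mathcal D'$-edges along it can be matched, cycle-by-cycle, with $\mathcal D_0$-edges in a way that both slope and energy are controlled. Concretely, I would show
\begin{equation}
\mathcal E(\mathcal D')=\mathcal E(\mathcal D_0)+\sum_{j=1}^r\bigl(\mathcal E(\mathcal D'\cap C_j)-\mathcal E(\mathcal D_0\cap C_j)\bigr),
\end{equation}
and similarly $\mu(\mathcal D')=\mu+\sum_j\delta_j$ where $\delta_j\in\ZZ^2$ is the slope contribution of cycle $C_j$. The strict concavity at $\mu$ says $\mu$ is a vertex of $N_S(P_t)$ (Lemma~\ref{lem:concavity_subdivision}), i.e.\ a strict local—hence global—maximum of $\mathcal E^*(\,\cdot\,)-(\text{any supporting linear functional})$. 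I would pick the supporting functional $\lambda$ with $\mathcal E^*(\mu')\le\mathcal E^*(\mu)+\lambda\cdot(\mu'-\mu)$ for all $\mu'$, with equality only at $\mu'=\mu$. Each edge of $\mathcal D'$ being in a maximizer forces, after swapping along a single cycle $C_j$, a slope-$\mu$-or-nearby cover whose energy defect is nonnegative; summing and comparing against the strict inequality from $\lambda$ forces every $\delta_j=0$ and every cycle energy-defect to vanish, giving $\mu(\mathcal D')=\mu$ and $\mathcal E(\mathcal D')=\mathcal E^*(\mu)$.

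**The main obstacle.** The delicate point is the cycle-by-cycle swapping: I need that for each alternating cycle $C_j$, replacing $\mathcal D_0\cap C_j$ by $\mathcal D'\cap C_j$ inside $\mathcal D_0$ yields \emph{another maximizer} (so that its energy is exactly $\mathcal E^*(\mu)$ and I can isolate the sign of each cycle's contribution), rather than merely bounding the global difference. This requires knowing that the $\mathcal D'$-edges of a single cycle, together with the $\mathcal D_0$-edges off that cycle, form a legitimate slope-$\mu$ cover—which hinges on each cycle being slope-neutral, and that is exactly what I am trying to prove. The resolution is to run the argument in the correct logical order: first use the global energy bound $\mathcal E(\mathcal D')\ge\sum_j(\text{something})$ coming from each edge lying in \emph{a} maximizer (not necessarily $\mathcal D_0$) to get $\mathcal E(\mathcal D')\ge\mathcal E^*(\mu)+\lambda\cdot(\mu(\mathcal D')-\mu)$ via convex-duality/averaging, then combine with $\mathcal E(\mathcal D')\le\mathcal E^*(\mu(\mathcal D'))\le\mathcal E^*(\mu)+\lambda\cdot(\mu(\mathcal D')-\mu)$ and strict concavity to squeeze $\mu(\mathcal D')=\mu$ and $\mathcal E(\mathcal D')=\mathcal E^*(\mu)$. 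I expect the averaging inequality $\mathcal E(\mathcal D')\ge\mathcal E^*(\mu)+\lambda\cdot(\mu(\mathcal D')-\mu)$ to be the crux, and I would prove it by expressing the indicator of $\mathcal D'$ as a convex combination of maximizer indicators along the multiweb decomposition, which is where the edge-$d$-coloring viewpoint does the real work.
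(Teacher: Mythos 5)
Your final strategy---sandwiching $\mathcal E(\mathcal D')$ between a lower bound extracted from the maximizers and the upper bound $\mathcal E(\mathcal D')\le\mathcal E^*(\mu(\mathcal D'))\le\mathcal E^*(\mu)+\lambda\cdot(\mu(\mathcal D')-\mu)$, then using strictness of the supporting functional to force $\mu(\mathcal D')=\mu$---is sound, and it is essentially a repackaging of the paper's argument. However, the step you yourself single out as the crux is stated in a form that cannot be executed: the indicator $\mathbf 1_{\mathcal D'}$ of a dimer cover of $G_{1,\mu}$ can \emph{never} be written as a convex combination $\sum_i c_i\mathbf 1_{\mathcal D_i}$ of maximizer indicators unless $\mathcal D'$ is itself a maximizer. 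Indeed, if $e\notin\mathcal D'$, then every maximizer $\mathcal D_i$ with $c_i>0$ must avoid $e$, so each such $\mathcal D_i$ satisfies $\mathcal D_i\subseteq\mathcal D'$; since both are perfect matchings, this forces $\mathcal D_i=\mathcal D'$. So that route presupposes exactly what the lemma asserts and is circular.

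The decomposition has to go the other way, and this is precisely the paper's first move. Since every edge of $G_{1,\mu}$ has multiplicity at least $1$ in the multiweb $M=\sum_{i=1}^d\mathbf 1_{\mathcal D_i}$, the difference $M-\mathbf 1_{\mathcal D'}$ is a $(d-1)$-multiweb, and by the edge-coloring theorem for bipartite multiwebs (cited in the paper as \cite[Section 3.4.1]{DKS24}) it splits into dimer covers $\mathcal D_2',\dots,\mathcal D_d'$, so that $\mathbf 1_{\mathcal D'}+\sum_{j=2}^d\mathbf 1_{\mathcal D_j'}=M$. This yields
\begin{equation}
\mathcal E(\mathcal D')+\sum_{j=2}^d\mathcal E(\mathcal D_j')=d\,\mathcal E^*(\mu),
\qquad
\mu(\mathcal D')+\sum_{j=2}^d\mu(\mathcal D_j')=d\,\mu,
\end{equation}
from which your lower bound follows by bounding $\mathcal E(\mathcal D_j')\le\mathcal E^*(\mu(\mathcal D_j'))\le\mathcal E^*(\mu)+\lambda\cdot(\mu(\mathcal D_j')-\mu)$ and summing; your squeeze then finishes the proof. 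Two smaller points: you should record why the strictly supporting $\lambda$ exists (strict concavity at $\mu$ makes $(\mu,\mathcal E^*(\mu))$ a vertex of the extended polyhedral domain by Lemma~\ref{lem:concavity_subdivision}, and vertices of polyhedra are exposed), and you should note that the existence of the coloring is a citation-worthy fact, not a triviality. The paper reaches the same conclusion more directly: it applies Definition~\ref{def:concave} with $t_i=1/d$ to the coloring $\{\mathcal D',\mathcal D_2',\dots,\mathcal D_d'\}$, which forces all slopes to equal $\mu$ and then all energies to equal $\mathcal E^*(\mu)$, with no supporting functional needed.
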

\begin{proof}
We begin by noting that any dimer cover~$\mathcal D$ of~$G_{1,\mu}$, is contained in an edge-$d$-coloring. Indeed, we remove the edges in~$\mathcal D$ from the~$d$-muliweb and obtain a~$(d-1)$-multiweb. This can be colored, since any~$(d-1)$-multiweb admits an edge-$(d-1)$-coloring, see~\cite[Section 3.4.1]{DKS24}. 

Let~$D_i$,~$i=1,\dots,d$, be an edge-$d$-coloring of~$G_{1,\mu}$. Then, since all edges in the~$d$-multiweb are used once, we have 
\begin{equation}\label{eq:energy_slope_coloring}
\sum_{i=1}^d \mathcal E(\mathcal D_i)=d \!\cdot\!\mathcal E^*(\mu), \quad \text{and} \quad \sum_{i=1}^d \mu(\mathcal D_i)=d\!\cdot\! \mu.
\end{equation}
This contradicts the assumption that~$\mathcal E^*$ is concave, with~$t_i=1/d$, unless,~$\mu(\mathcal D_i)=\mu$ for all~$i$. Furthermore, since~$\mu(\mathcal D_i)=\mu$, we get by definition of~$\mathcal E^*$ that~$\mathcal E(\mathcal D_i)\leq \mathcal E^*(\mu)$. We then conclude by~\eqref{eq:energy_slope_coloring} that~$\mathcal E(\mathcal D_i)=\mathcal E^*(\mu)$ for all~$i$. Hence,~$\mathcal D_i$ is a maximizer with slope~$\mu$ for all~$i$. 
\end{proof}

For~$\mathrm b,\mathrm w\in G_1$, set~$G_{\mathrm b,\mathrm w}=G_1\backslash \{\mathrm b,\mathrm w\}$, that is, this is the weighted graph obtained by removing~$\mathrm b$ and~$\mathrm w$ and their adjacent edges from~$G_1$. In a similar way, we define~$G_{\mu,\mathrm b,\mathrm w}=G_{1,\mu}\backslash\{\mathrm b,\mathrm w\}$.

\begin{lemma}\label{lem:sum_dimers}
Let~$e=\mathrm b\mathrm w\in E_1$,~$\mathrm b_0\in B_1$ and~$\mathrm w_0'\in W_1$. For any dimer covers~$\mathcal D_{\mathrm b_0,\mathrm w}$ and~$\mathcal D_{\mathrm b,\mathrm w_0'}$ of~$G_{\mathrm b_0,\mathrm w}$ and~$G_{\mathrm b,\mathrm w_0'}$, respectively, there exists dimer covers~$\mathcal D_1$ and~$\mathcal D_{\mathrm b_0,\mathrm w_0'}$ of~$G_1$ and~$G_{\mathrm b,\mathrm w_0'}$ such that
\begin{equation}
\mathcal D_{\mathrm b_0,\mathrm w}\cup\mathcal D_{\mathrm b,\mathrm w_0'}\cup \{e\}=\mathcal D_1\cup\mathcal D_{\mathrm b_0,\mathrm w_0'},
\end{equation}
where the equality is an equality of multisets, that is, it includes multiplicity. In particular, if~$\mathrm b_0\mathrm w_0'=e'\in E_1$, then there is a dimer cover~$\mathcal D_1'$ of~$G_1$ such that 
\begin{equation}
\mathcal D_{\mathrm b_0,\mathrm w}\cup\mathcal D_{\mathrm b,\mathrm w_0'}\cup \{e,e'\}=\mathcal D_1\cup\mathcal D_1'.
\end{equation}
\end{lemma}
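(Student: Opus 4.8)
The plan is to view the left-hand side as a multigraph and decompose it into paths and cycles, which is the standard superposition-of-matchings argument adapted to the present boundary conditions.

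First I would form the multiset $M=\mathcal D_{\mathrm b_0,\mathrm w}\cup\mathcal D_{\mathrm b,\mathrm w_0'}\cup\{e\}$ and regard it as a multigraph on the vertex set $B_1\cup W_1$, a repeated edge contributing a doubled edge. Since $\mathcal D_{\mathrm b_0,\mathrm w}$ covers every vertex except $\mathrm b_0$ and $\mathrm w$, and $\mathcal D_{\mathrm b,\mathrm w_0'}$ covers every vertex except $\mathrm b$ and $\mathrm w_0'$, while $e=\mathrm b\mathrm w$ raises the degrees of $\mathrm b$ and $\mathrm w$ by one, a direct count shows that in $M$ every vertex has degree $2$ except $\mathrm b_0$ and $\mathrm w_0'$, which have degree $1$; this remains true in the degenerate cases $\mathrm b_0=\mathrm b$ or $\mathrm w_0'=\mathrm w$. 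Note that $e$ occurs in $M$ with multiplicity one, as $e\notin\mathcal D_{\mathrm b_0,\mathrm w}$ (there $\mathrm w$ is uncovered) and $e\notin\mathcal D_{\mathrm b,\mathrm w_0'}$ (there $\mathrm b$ is uncovered).

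Next, since every vertex of $M$ has degree at most $2$, $M$ decomposes as an edge-disjoint union of simple cycles together with a single simple path whose endpoints are the two degree-one vertices $\mathrm b_0$ and $\mathrm w_0'$ (a doubled edge being a $2$-cycle). Because $G_1$ is bipartite, every cycle has even length; and the path joins the black vertex $\mathrm b_0$ to the white vertex $\mathrm w_0'$, hence has odd length. I would then $2$-color the edges of each cycle alternately, assigning one color class to $\mathcal D_1$ and the other to $\mathcal D_{\mathrm b_0,\mathrm w_0'}$, so that each class covers every cycle vertex exactly once. On the odd path $\mathrm b_0=v_0,v_1,\dots,v_{2k+1}=\mathrm w_0'$, the odd-indexed edges form a matching covering all of $v_0,\dots,v_{2k+1}$, including both endpoints, whereas the even-indexed edges cover every $v_i$ except $v_0=\mathrm b_0$ and $v_{2k+1}=\mathrm w_0'$; I assign the odd-indexed edges to $\mathcal D_1$ and the even-indexed edges to $\mathcal D_{\mathrm b_0,\mathrm w_0'}$. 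Collecting the pieces, $\mathcal D_1$ has degree one at every vertex, so it is a dimer cover of $G_1$, while $\mathcal D_{\mathrm b_0,\mathrm w_0'}$ has degree one everywhere except at $\mathrm b_0,\mathrm w_0'$, so it is a dimer cover of $G_{\mathrm b_0,\mathrm w_0'}$; and by construction $\mathcal D_1\cup\mathcal D_{\mathrm b_0,\mathrm w_0'}=M$ as multisets, which is the first claim. For the ``in particular'' claim, if $e'=\mathrm b_0\mathrm w_0'\in E_1$, set $\mathcal D_1'=\mathcal D_{\mathrm b_0,\mathrm w_0'}\cup\{e'\}$; then $\mathcal D_1'$ covers $\mathrm b_0$ and $\mathrm w_0'$ via $e'$ and is a dimer cover of $G_1$, and adjoining $e'$ to both sides of the multiset identity yields the second equality.

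The only delicate point, and the step I would treat most carefully, is the parity and boundary bookkeeping: verifying that the degree-one vertices of $M$ are exactly $\mathrm b_0$ and $\mathrm w_0'$, so that the decomposition produces precisely one path with these endpoints, and that this path has odd length. It is the odd length that forces one alternating color class to cover both endpoints and the other to cover neither, which is what lets the two color classes split off into a cover of $G_1$ and a cover of $G_{\mathrm b_0,\mathrm w_0'}$. The remainder is the routine path-and-cycle decomposition of a graph of maximum degree two, with the small caveat that repeated edges are handled as $2$-cycles.
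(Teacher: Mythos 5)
Your proof is correct and follows essentially the same route as the paper: both decompose the multiset union into doubled edges/cycles plus a single open path from $\mathrm b_0$ to $\mathrm w_0'$, use bipartiteness to get the odd path length, and split alternately into a perfect matching of $G_1$ and a matching missing exactly $\mathrm b_0,\mathrm w_0'$ (the paper phrases the splitting via orientations, black-to-white versus white-to-black, which is the same as your alternating $2$-coloring). Your explicit degree bookkeeping and treatment of the degenerate cases $\mathrm b_0=\mathrm b$ or $\mathrm w_0'=\mathrm w$ is a welcome extra level of care, not a departure.
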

\begin{proof}
The union of the left hand side of the first equality is a set of edges that covers~$\mathrm b_0$ and~$\mathrm w_0'$ once, and all other vertices twice. This means that the union forms double edges and disjoint simple paths where all except one are closed. The endpoints of the open path are~$\mathrm b_0$ and~$\mathrm w_0'$, and the path therefore contains an odd number of edges. We may construct~$\mathcal D_1$ and~$\mathcal D_{\mathrm b_0,\mathrm w_0'}$ as follows. Pick an orientation on each closed path, including double edges, and orient the open path from~$\mathrm b_0$ and~$\mathrm w_0'$. Let~$\mathcal D_1$ be the set of edges oriented from a black vertex to a white vertex, and let~$\mathcal D_{\mathrm b_0,\mathrm w_0'}$ be all edges oriented from white vertices to black vertices. 

The second equality follows by noting that~$\mathcal D'=\mathcal D_{\mathrm b_0,\mathrm w_0'}\cup \{e'\}$ is a dimer cover of~$G_1$. 
\end{proof}

Equipped with these two lemmas, we can now define the limiting dimer model.

We denote the universal cover of~$G_{1,\mu}$ by~$G_\mu$, that is,~$G_\mu\subset G$ is the doubly periodic subgraph of~$G$ constructed by periodically extending~$G_{1,\mu}$. Going forward, we denote elements in~$G$ by~$\tilde e$,~$\tilde{\mathrm b}$ and~$\tilde{\mathrm w}$, and their projections in~$G_1$ by~$e$,~$\mathrm b$ and~$\mathrm w$.
\begin{remark}\label{rem:connected_components}
While we could not find a counterexample exhibiting an unbounded connected component of~$G_\mu$ when~$\mathcal E^*$ is strictly concave at~$\mu$, we likewise could not establish a proof that such unbounded components are precluded under the strict concavity condition. See Corollary~\ref{cor:bounded_components} below for the case when there are two maximizers.
\end{remark}

To define the Kasteleyn matrix on the graphs~$G_{1,\mu}$ and~$G_\mu$ with uniform edge weights, we need to introduce a Kasteleyn sign. It turns out that the Kasteleyn sign~$\sigma$ form Section~\ref{sec:measures} is sufficient.  
\begin{lemma}
The Kasteleyn sign~$\sigma$ defined in Section~\ref{sec:measures} is a Kasteleyn sign for the graph~$G_\mu$.
\end{lemma}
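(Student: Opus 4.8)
The plan is to reduce the claim to the fact, established in Section~\ref{sec:measures} (following~\cite{BB23}), that $\sigma$ is already a Kasteleyn sign for the full square-lattice graph $G$, and to exploit the key structural feature of $G_\mu$ recorded in Definition~\ref{def:maximizer}: in passing from $G$ to $G_{1,\mu}$ (and hence to $G_\mu$) we delete edges but keep \emph{all} vertices, since $B_{1,\mu}=B_1$ and $W_{1,\mu}=W_1$. Recall that $\sigma$ being Kasteleyn means that around every bounded face of length $2n$ one has $\prod_{e\in\partial f}\sigma(e)=(-1)^{n+1}$; for $G$ all bounded faces are the quadrilateral cells of the lattice ($n=2$), so this reads $\prod_{e\in\partial f}\sigma(e)=-1$. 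I would verify the same face condition directly for $G_\mu$.

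I would fix a bounded face $F$ of $G_\mu$ and first treat the case where its boundary $C=\partial F$ is a simple cycle of length $2n$ enclosing a disk $\overline D$. The crucial point is that the open face $F$ contains no vertex of $G_\mu$; since $G_\mu$ and $G$ share the same vertex set, $F$ contains no vertex of $G$ either. Hence $\overline D$ is tiled by quadrilateral faces $f_1,\dots,f_m$ of $G$ whose interior edges are exactly the deleted edges lying inside $\overline D$, say $\iota$ of them, with no interior vertices. Each such interior edge is shared by two of the $f_i$ and so contributes $\sigma(e)^2=1$ to the product below, while each edge of $C$ occurs once, giving
\begin{equation}
\prod_{e\in C}\sigma(e)=\prod_{i=1}^m\prod_{e\in\partial f_i}\sigma(e)=(-1)^m .
\end{equation}
Counting edge–face incidences yields $4m=2\iota+2n$, and Euler's formula for the disk (with no interior vertices) yields $m-\iota=1$; together these force $m=n-1$, so $\prod_{e\in C}\sigma(e)=(-1)^{n-1}=(-1)^{n+1}$, which is precisely the Kasteleyn condition for $F$.

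The main obstacle is purely topological: as noted in Remark~\ref{rem:connected_components}, $G_\mu$ may be disconnected, in which case a bounded face need not be a disk (it can be annular, with nested components of $G_\mu$ inside), and $\partial F$ need not be a simple cycle, so the Euler computation above fails verbatim. I would circumvent this by using the cycle form of Kasteleyn's criterion rather than the face form: a sign on a planar bipartite graph is Kasteleyn if and only if every simple cycle $C$ of length $2n$ enclosing $V_C$ vertices satisfies $\prod_{e\in C}\sigma(e)=(-1)^{n+1+V_C}$, and for the dimer model only such cycles—those arising as superpositions of two matchings—are relevant. Any simple cycle of $G_\mu$ is a simple cycle of $G$, and because the two graphs share the same vertex set, the number $V_C$ of enclosed vertices is the same whether computed in $G_\mu$ or in $G$. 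Since $\sigma$ satisfies this identity for all cycles of $G$, it satisfies it for all cycles of $G_\mu$ (the single-face computation above being the special case $V_C=0$), which proves that $\sigma$ is a Kasteleyn sign for $G_\mu$.
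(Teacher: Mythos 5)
Your proposal is correct, and its decisive ingredients are the same two the paper uses: the cycle form of the Kasteleyn criterion (around a simple loop of length $2n$ enclosing $m$ vertices of a Kasteleyn-signed graph, the alternating product of signs is $(-1)^{n+m+1}$), and the structural fact that $G_\mu$ retains every vertex of $G$, so a simple cycle of $G_\mu$ is a simple cycle of $G$ enclosing exactly the same vertices. The two arguments close differently, however. The paper shows the product equals $(-1)^{n+1}$ by invoking the cycle lemma on $G$ and then arguing that $m$ is even ``since the interior of the loop can be covered by dimers''; that coverability is automatic only for loops arising as superpositions of two dimer covers of $G_\mu$ (each cover matches the enclosed vertices among themselves), which are indeed the only loops that matter, but it is not obvious for an \emph{arbitrary} simple loop of $G_\mu$. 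Your final paragraph sidesteps this point: transferring the criterion in the form $\prod_{e\in C}\sigma(e)=(-1)^{n+1+V_C}$, with $V_C$ counted identically in $G$ and in $G_\mu$, requires no evenness at all; the evenness of $V_C$ for superposition cycles is needed only inside the (standard) sufficiency direction of the cycle criterion, which you correctly flag as the relevant reduction --- so both proofs ultimately rest on the same fact, packaged differently. Your Euler-formula computation for disk faces is correct ($m=n-1$ quadrilaterals, hence boundary product $(-1)^{n-1}=(-1)^{n+1}$) and is a nice self-contained verification, but it is logically subsumed by your cycle-criterion argument: it re-proves the $V_C=0$ case of the lemma the paper simply cites. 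In short, your route buys independence from the coverability claim for arbitrary loops; the paper's buys brevity.
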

\begin{proof}
Let~$e_1,\dots,e_{2n}$ be edges forming a simple loop in~$G_{1,\mu}$. Assume the interior of the loop contains~$m$ vertices when viewed as a subset of~$G$. Since~$\sigma$ is a Kasteleyn sign of the graph~$G$, the alternating product
\begin{equation}
\frac{\sigma(e_1)\dots\sigma(e_{2n-1})}{\sigma(e_2)\dots\sigma(e_{2n})}=(-1)^{n+m+1}.
\end{equation}
See \emph{e.g.},~\cite[Lemma 3.2]{Joh17} or~\cite[Lemma 1]{Ken09}. Since the interior of the loop can be covered by dimers,~$m$ must be even, which proves the statement.
\end{proof}
\begin{remark}
The previous lemma simply says that a Kasteleyn sign of a graph is still a Kasteleyn sign if we remove a bounded subset of the graph which admits a dimer cover.
\end{remark}

We define the Kasteleyn matrix~$K_{G_\mu}:\CC^{B_\mu} \to \CC^{W_\mu}$ of the graph~$G_\mu$ by
\begin{equation}\label{eq:kasteleyn_gibbs_zero}
\left(K_{G_\mu}\right)_{\tilde{\mathrm w}\tilde{\mathrm b}}=\one_{\tilde{\mathrm w}\tilde{\mathrm b}\in E_\mu} \sigma(\tilde{\mathrm w}\tilde{\mathrm b}),
\end{equation}
and the Kasteleyn matrix~$K_{G_{1,\mu}}:\CC^{B_{1,\mu}} \to \CC^{W_{1,\mu}}$ of~$G_{1,\mu}$ by
\begin{equation}\label{eq:kasteleyn_component}
\left(K_{G_{1,\mu}}(z,w)\right)_{\mathrm w\mathrm b}=\one_{\mathrm w\mathrm b\in E_{1,\mu}} \sigma(\mathrm w\mathrm b)\frac{w^{\mathrm w\mathrm b\wedge \gamma_u}}{z^{\mathrm w\mathrm b\wedge \gamma_v}},
\end{equation}
where~$\sigma$ is the Kasteleyn sign from Section~\ref{sec:measures}, cf.~\eqref{eq:kasteleyn_aztec} and~\eqref{eq:kasteleyn_gibbs}. The characteristic polynomial~$P_\mu$ is defined as
\begin{equation}\label{eq:characteristic_polynomial_mu}
P_\mu(z,w)=\det K_{G_{1,\mu}}(z,w),
\end{equation}
cf.~\eqref{eq:characteristic_polynomial_def}.
\begin{proposition}\label{prop:non-zero_polynomial}
Let~$\mu=(\mu_1,\mu_2)\in\mathcal N$ and let~$P_\mu$ be the characteristic polynomial~\eqref{eq:characteristic_polynomial_mu}. If~$\mathcal E^*$ is strictly concave at~$\mu$, then
\begin{equation}
P_\mu(z,w)=\tau Z_{1,\mu} z^{\mu_1}w^{\mu_2},
\end{equation}
for some~$\tau\in \{\pm 1\}$, where~$Z_{1,\mu}$ is the partition function of the dimer covers of~$G_{1,\mu}$.
\end{proposition}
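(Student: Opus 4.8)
The plan is to expand $P_\mu$ through Leibniz's formula exactly as was done for $P_\beta$ in~\eqref{eq:characteristic_polynomial_finite}, to use Lemma~\ref{lem:maximizers_all} to collapse the resulting sum to a single monomial, and then to invoke the sign computation of~\cite[Proposition 3.1]{KOS06} to evaluate the coefficient.

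First I would apply Leibniz's formula to the determinant~\eqref{eq:characteristic_polynomial_mu}. The nonzero terms correspond precisely to the dimer covers $\mathcal D$ of $G_{1,\mu}$, since $(K_{G_{1,\mu}}(z,w))_{\mathrm w\mathrm b}$ vanishes unless $\mathrm w\mathrm b\in E_{1,\mu}$. Carrying out the same computation as in the derivation of~\eqref{eq:characteristic_polynomial_finite}, but now with uniform edge weights so that the energy factor $\e^{\beta\mathcal E(\mathcal D)}$ is replaced by $1$, gives
\begin{equation}
P_\mu(z,w)=\sum_{\mathcal D}\sgn(s(\mathcal D))\left(\prod_{e\in \mathcal D}\sigma(e)\right) z^{\mu_1(\mathcal D)}w^{\mu_2(\mathcal D)},
\end{equation}
where the sum runs over all dimer covers $\mathcal D$ of $G_{1,\mu}$ and $\mu(\mathcal D)=(\mu_1(\mathcal D),\mu_2(\mathcal D))$ is the slope from~\eqref{eq:energy_slope}.

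Next I would collapse the monomials. By Lemma~\ref{lem:maximizers_all}, strict concavity of $\mathcal E^*$ at $\mu$ forces every dimer cover of $G_{1,\mu}$ to be a maximizer with slope $\mu$; in particular $\mu(\mathcal D)=\mu=(\mu_1,\mu_2)$ for every such $\mathcal D$. Hence every monomial above equals $z^{\mu_1}w^{\mu_2}$ and
\begin{equation}
P_\mu(z,w)=\left(\sum_{\mathcal D}\sgn(s(\mathcal D))\prod_{e\in \mathcal D}\sigma(e)\right) z^{\mu_1}w^{\mu_2}.
\end{equation}
It remains to evaluate the bracketed coefficient. Here I would invoke~\cite[Proposition 3.1]{KOS06}, which asserts that $\sgn(s(\mathcal D))\prod_{e\in \mathcal D}\sigma(e)$ is a function of $\mu(\mathcal D)$ alone. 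Since all dimer covers of $G_{1,\mu}$ share the slope $\mu$, this sign takes a single common value $\tau\in\{\pm1\}$, so the coefficient equals $\tau$ times the number of dimer covers of $G_{1,\mu}$, which is exactly the uniformly weighted partition function $Z_{1,\mu}$. This yields $P_\mu(z,w)=\tau Z_{1,\mu}z^{\mu_1}w^{\mu_2}$.

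The crux of the argument, and the only point that is not bookkeeping, is the constancy of the sign in the last step. The input I rely on is that on the torus the combinatorial sign $\sgn(s(\mathcal D))\prod_e\sigma(e)$ of a dimer cover is determined by its homology class, equivalently its slope — the content of~\cite[Proposition 3.1]{KOS06}. Alternatively, one could argue this directly: the lemma preceding this proposition shows $\sigma$ is a Kasteleyn sign on $G_\mu$, so flipping $\mathcal D$ along any contractible cycle preserves the term's sign; combined with the fact that two covers of equal slope differ by cycles of vanishing total homology and a parity count of the non-contractible contributions, all same-slope covers receive the same sign. Either route closes the proof.
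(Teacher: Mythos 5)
Your proof is correct, but it settles the one nontrivial point---constancy of the sign---by a genuinely different mechanism than the paper. The two arguments agree on the bookkeeping: both expand $\det K_{G_{1,\mu}}$ by Leibniz's formula and both use Lemma~\ref{lem:maximizers_all} to conclude that every dimer cover of $G_{1,\mu}$ has slope $\mu$, so that all monomials collapse to $z^{\mu_1}w^{\mu_2}$. For the sign, you invoke \cite[Proposition 3.1]{KOS06}: the sign $\sgn(s(\mathcal D))\prod_{e\in\mathcal D}\sigma(e)$ is a function of $\mu(\mathcal D)$ alone, and since covers of $G_{1,\mu}$ are in particular covers of $G_1$ (same permutations, same $\sigma$) all sharing slope $\mu$, they all carry one sign $\tau$. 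This is legitimate---the paper itself records exactly this fact, with the same citation, right after~\eqref{eq:characteristic_polynomial_finite}---and it makes the proof shorter. The paper instead proves a structural fact from scratch: orienting two covers $\mathcal D_1,\mathcal D_2$ of $G_{1,\mu}$ oppositely and superposing them, a loop in a nonzero homology class $(m,n)$ could be swapped to produce a cover of $G_{1,\mu}$ of slope $\mu-(-n,m)\neq\mu$, contradicting Lemma~\ref{lem:maximizers_all}; hence \emph{every individual loop} is contractible, the configuration lifts to the planar graph $G_\mu$, and sign constancy follows from the classical planar Kasteleyn argument (using the preceding lemma that $\sigma$ remains a Kasteleyn sign on $G_\mu$). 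What the paper's longer route buys is reuse: the intermediate fact that superpositions contain no loops winding around the torus is precisely what yields Corollary~\ref{cor:bounded_components} (boundedness of the components of $G_\mu$ when there are two maximizers), so the detour is not wasted. One caution about your fallback sketch: the phrase ``a parity count of the non-contractible contributions'' glosses over a real issue, since flipping along a non-contractible loop genuinely can change the sign; the correct repair is the paper's observation that under strict concavity no non-contractible loops occur at all. Since your primary route via \cite{KOS06} does not need this, the proof stands.
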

\begin{proof}
For a loop in~$G_1$, we express its homology class in the basis~$\{[\gamma_u],[\gamma_v]\}$, where~$[\gamma_u]$ and~$[\gamma_v]$ are the homology classes of the loops~$\gamma_u$ and~$\gamma_v$, respectively. Let~$\mathcal D_1$ and~$\mathcal D_2$ be dimer covers of~$G_1$, and orient the edges in~$\mathcal D_1$ from a black vertex to a white vertex and the edges in~$\mathcal D_2$ from a white vertex to a black vertex. Their union consists of, say,~$d$ oriented loops. Note that if~$(m,n)$ is the homology class of a loop~$\gamma$, then~$(-n,m)=\mu(\gamma_1)-\mu(\gamma_2)$, where~$\gamma_i=\gamma\cap \mathcal D_i$,~$i=1,2$. 

Assume now that~$\mathcal D_1$ and~$\mathcal D_2$ are dimer covers of~$G_{1,\mu}$, by Lemma~\ref{lem:maximizers_all} they have slope~$\mu$. If their union contains a loop~$\gamma$ in the homology class~$(m,n)\neq (0,0)$, then we can construct a dimer cover of~$G_{1,\mu}$ with slope different from~$\mu$, which proves that all loops lie in the homology class~$(0,0)$. Indeed, change the orientation of the loop~$\gamma$, and let~$\mathcal D_1'$ be the dimer cover consisting of all edges in the union of~$\mathcal D_1$ and~$\mathcal D_2$ oriented from a black vertex to a white vertex, that is, we swap the edges in~$\gamma$. Then~$\mu(\mathcal D_1')=\mu-(-n,m)\neq \mu$.

The fact that the homology class of all loops constructed by taking the union of two dimer covers of~$G_{1,\mu}$ is~$(0,0)$, implies that the loops lift to loops in~$G_\mu$ and we can compute~$P_\mu$ using the standard approach for planar graphs. See, \emph{e.g.}, the proof of~\cite[Theorem 3.1]{Joh17}. More concretely, expanding the determinant~\eqref{eq:characteristic_polynomial_mu}, we obtain, using that~$\mu(\mathcal D)=\mu=(\mu_1,\mu_2)$ for all dimer covers~$\mathcal D$ of~$G_{1,\mu}$,
\begin{equation}
\PP_\mu(z,w)=\sum_{\mathcal D}\sgn(s(\mathcal D))\prod_{e\in \mathcal D}\sigma(e)\frac{w^{e\wedge \gamma_u}}{z^{e\wedge \gamma_v}}
=\sum_{\mathcal D}\tau(\mathcal D)z^{\mu_1}w^{\mu_2},
\end{equation}
where the sum runs over all dimer covers of~$G_{1,\mu}$,~$s(\mathcal D)\in S_{k\ell}$ is the permutation corresponding to the dimer cover~$\mathcal D$, and
\begin{equation}
\tau(\mathcal D)=\sgn(s(\mathcal D))\prod_{e\in \mathcal D}\sigma(e).
\end{equation}
What remains is to show that~$\tau(\mathcal D)=\tau(\mathcal D')$ for any dimer covers~$\mathcal D$ and~$\mathcal D'$ of~$G_{1,\mu}$. This follows from the corresponding argument for planer graphs, as given, \emph{e.g.}, in the proof of~\cite[Theorem 3.1]{Joh17}.
\end{proof}
In the proof of the previous proposition, we saw that there are no loops in the union of two dimer covers of~$G_{1,\mu}$ that wind around the torus. In particular, this immediately implies the following statement, cf. Remark~\ref{rem:connected_components}.
\begin{corollary}\label{cor:bounded_components}
If~$\mathcal E^*$ is strictly concave at~$\mu\in \mathcal N$ and the number of maximizers with slope~$\mu$ is two, then all connected components of~$G_\mu$ are bounded. 
\end{corollary}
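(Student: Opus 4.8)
The plan is to convert the homological statement already obtained inside the proof of Proposition~\ref{prop:non-zero_polynomial}---that the union of two dimer covers of $G_{1,\mu}$ contains only loops of homology class $(0,0)$---into boundedness of the connected components of the lift $G_\mu$, using the hypothesis of exactly two maximizers to pin down the structure of $G_{1,\mu}$ completely.

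First I would describe $E_{1,\mu}$ explicitly. By Lemma~\ref{lem:maximizers_all}, every dimer cover of $G_{1,\mu}$ is a maximizer with slope $\mu$, and since we assume there are precisely two such maximizers, say $\mathcal D_1$ and $\mathcal D_2$, Definition~\ref{def:maximizer} yields $E_{1,\mu}=\mathcal D_1\cup \mathcal D_2$. Superposing $\mathcal D_1$ and $\mathcal D_2$ as a multiset of edges, each vertex of $G_{1,\mu}$ is covered once by each cover and hence has degree two in the superposition. Therefore the superposition decomposes, in the standard way, into doubled edges (edges belonging to both covers) and simple cycles alternating between edges of $\mathcal D_1$ and of $\mathcal D_2$. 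As subgraphs of $G_{1,\mu}$, the doubled edges are single edges (two-vertex trees) and the alternating cycles are even cycles; these are exactly the connected components of $G_{1,\mu}$.

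The key homological input is then invoked: the argument in the proof of Proposition~\ref{prop:non-zero_polynomial} shows that every alternating cycle in $\mathcal D_1\cup\mathcal D_2$ has homology class $(0,0)$ in $H_1(\text{torus})=\ZZ^2$, and the single-edge components are trivially contractible. Since $\pi_1$ of the torus is abelian, vanishing homology class is equivalent to being null-homotopic, so every connected component $C$ of $G_{1,\mu}$ is null-homotopic in $G_1$.

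The last step is the covering-space argument, which I expect to be the only genuine subtlety. The graph $G_\mu$ is the preimage of $G_{1,\mu}$ under the regular covering $G\to G_1$ whose deck group is the periodicity lattice $\ZZ^2$. By covering theory, the preimage of a connected component $C$ splits into components indexed by the cosets of the image of $\pi_1(C)\to \pi_1(G_1)=\ZZ^2$, each such component being the cover of $C$ corresponding to the kernel of this map. Because $C$ is null-homotopic, that image is trivial, so every component of the preimage maps homeomorphically onto $C$; in particular each is a finite edge or a finite even cycle, hence bounded. The point requiring care is exactly this equivalence between the vanishing of a cycle's homology class and its lift closing up rather than escaping to infinity in the winding direction $(m,n)$, and it is precisely here that the assumption of two maximizers is essential: it guarantees $G_{1,\mu}=\mathcal D_1\cup\mathcal D_2$ has only degree-one and degree-two vertices, so there is no branching that could assemble the bounded pieces into an unbounded component, which is the obstruction left open for general multiplicities in Remark~\ref{rem:connected_components}.
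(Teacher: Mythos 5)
Your proposal is correct and follows essentially the same route as the paper: the paper deduces the corollary directly from the observation, made in the proof of Proposition~\ref{prop:non-zero_polynomial}, that loops in the union of two dimer covers of~$G_{1,\mu}$ have trivial homology class on the torus. You simply make explicit what the paper treats as immediate — that with exactly two maximizers~$E_{1,\mu}=\mathcal D_1\cup\mathcal D_2$ decomposes into doubled edges and alternating cycles, and that trivial homology class means each such component lifts homeomorphically (rather than to an infinite periodic path) under the~$\ZZ^2$-covering~$G\to G_1$ — which are exactly the right details.
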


For~$\tilde{\mathrm b}=\tilde{\mathrm b}_{\ell m+i,k n+j}$ and~$\tilde{\mathrm w}=\tilde{\mathrm w}_{\ell m'+i',k n'+j'}$ in~$G_\mu$ and with~$\mathrm b$ and~$\mathrm w$ as their projections in~$G_{1,\mu}$, we define
\begin{equation}\label{eq:inverse_kasteleyn_zero}
\left(K_{G_\mu}^{-1}\right)_{\tilde{\mathrm b}\tilde{\mathrm w}}=\frac{1}{(2\pi\i)^2}\int_{|z|=1}\int_{|w|=1}\left(K_{G_{1,\mu}}(z,w)^{-1}\right)_{\mathrm b\mathrm w}\frac{z^{n'-n}}{w^{m'-m}}\frac{\d w}{w}\frac{\d z}{z}.
\end{equation}
If the connected components of~$G_\mu$ are all bounded, see Remark~\ref{rem:connected_components} and Corollary~\ref{cor:bounded_components}, then
\begin{equation}
\left(K_{G_\mu}^{-1}\right)_{\tilde{\mathrm b}\tilde{\mathrm w}}
=
\begin{cases}
\left(K_{G_{1,\mu}'}^{-1}\right)_{\tilde{\mathrm b}\tilde{\mathrm w}} & \text{if } \tilde{\mathrm b} \text{ and } \tilde{\mathrm w} \text{ are in the same component of } G_\mu, \\
0 & \text{if } \tilde{\mathrm b} \text{ and } \tilde{\mathrm w} \text{ are not in the same component of } G_\mu,
\end{cases}
\end{equation}
where~$K_{G_{1,\mu}'}^{-1}$ is the inverse of the finite Kasteleyn matrix~$K_{G_{1,\mu}'}$ of the connected component of~$G_\mu$ that contains~$\tilde{\mathrm b}$ and~$\tilde{\mathrm w}$ . 

Let us define a probability measure~$\PP_\mu$ on dimer covers~$\mathcal D$ of~$G_\mu$ as follows. Given edges~$\tilde e_s=\tilde{\mathrm w}_s\tilde{\mathrm b}_s\in E_\mu$,~$s=1,\dots,p$, the edge probabilities are given by
\begin{equation}\label{eq:gibbs_zero}
\PP_\mu\left[\tilde e_1,\dots,\tilde e_p\in \mathcal D\right]=\det \left(\left(K_{G_\mu}\right)_{\tilde{\mathrm w}_{s'}\tilde{\mathrm b}_{s'}}\left(K_{G_\mu}^{-1}\right)_{\tilde{\mathrm b}_s\tilde{\mathrm w}_{s'}}\right)_{1\leq s,s'\leq p},
\end{equation}
cf.~\eqref{eq:prob_beta} and~\eqref{eq:gibbs_beta}. The fact that~\eqref{eq:gibbs_zero} indeed defines a random point process is a part of the theorem below. 

We are now ready to state and prove the main result of this section.
\begin{theorem}\label{thm:gibbs_limit}
Let~$\mu\in\mathcal N$ and assume~$\mathcal E^*$ is strictly concave at~$\mu$. For~$(x,y)$ in the interior of~$\mathcal A_{t,\mu}$ let~$\PP_{\beta,(x,y)}$ be the probability measure~\eqref{eq:gibbs_beta}. Equation~\eqref{eq:gibbs_zero} (uniquely) defines a probability measure~$\PP_\mu$, and for any~$\tilde e_s=\tilde{\mathrm b}_s\tilde{\mathrm w}_s\in G$, for~$s=1,\dots,p$, there is an~$\eps>0$ such that,
\begin{equation}\label{eq:gibbs_limit}
\PP_{\beta,(x,y)}\left[\tilde e_1,\dots,\tilde e_p\in \mathcal D\right]=\PP_\mu\left[\tilde e_1,\dots,\tilde e_p\in \mathcal D\right]+\Ordo\left(\e^{-\eps\beta}\right),
\end{equation}
as~$\beta \to \infty$.
\end{theorem}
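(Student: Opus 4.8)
The plan is to reduce the statement to a uniform, exponentially fast asymptotic analysis of the finite-temperature inverse Kasteleyn matrix on the integration contours, and then to assemble the joint edge probabilities. Writing $(K_{G_1,\beta}(z,w)^{-1})_{\mathrm b\mathrm w}=Q_{\mathrm b\mathrm w}(z,w)/P_\beta(z,w)$, where $P_\beta$ is the characteristic polynomial~\eqref{eq:characteristic_polynomial_finite} and $Q_{\mathrm b\mathrm w}$ is the associated cofactor, I would make the change of variables $z=\e^{\beta x}\zeta$, $w=\e^{\beta y}\omega$ with $|\zeta|=|\omega|=1$, so that the contour integral in~\eqref{eq:inverse_kasteleyn_zero} over $|z|=\e^{\beta x}$, $|w|=\e^{\beta y}$ becomes an integral over the unit torus (up to the explicit scaling powers $\e^{\beta x(n'-n)-\beta y(m'-m)}$ produced by $z^{n'-n}/w^{m'-m}$). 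The goal is then to show that the integrand converges, uniformly in $(\zeta,\omega)$ and with rate $\Ordo(\e^{-\beta\delta})$, to the integrand defining $(K_{G_\mu}^{-1})$.

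For the denominator, substituting $z=\e^{\beta x}\zeta$, $w=\e^{\beta y}\omega$ turns each term $\e^{\beta\mathcal E(\mathcal D)}z^{\mu_1(\mathcal D)}w^{\mu_2(\mathcal D)}$ of $P_\beta$ into $\e^{\beta(\mathcal E(\mathcal D)+\mu_1(\mathcal D)x+\mu_2(\mathcal D)y)}\zeta^{\mu_1(\mathcal D)}\omega^{\mu_2(\mathcal D)}$, whose exponential rate is at most $P_t(x,y)$ by the definitions~\eqref{eq:tropical_surface_tension} and~\eqref{eq:characteristic_polynomial_tropical}. Since $(x,y)$ lies in the interior of $\mathcal A_{t,\mu}$, the maximum defining $P_t$ is attained uniquely at $\mu$, and since $\mathcal E^*$ is strictly concave at $\mu$ (equivalently $\mu$ is a vertex of $N_S(P_t)$, by Lemma~\ref{lem:concavity_subdivision}), the only dimer covers reaching this rate are the maximizers of slope $\mu$. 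Hence $\e^{-\beta P_t(x,y)}P_\beta(\e^{\beta x}\zeta,\e^{\beta y}\omega)=P_\mu(\zeta,\omega)+\Ordo(\e^{-\beta\delta})$ uniformly on the torus, where $P_\mu$ is the monomial of Proposition~\ref{prop:non-zero_polynomial}; being a \emph{nonvanishing} monomial, it keeps the denominator bounded away from zero and removes any small-divisor obstruction.

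The crux is the analogous analysis of the cofactor $Q_{\mathrm b\mathrm w}$, which expands as a signed weighted sum over near-covers of $G_1\setminus\{\mathrm b,\mathrm w\}$ with energy and slope as in~\eqref{eq:energy_slope_2}. I would identify the exponential rate of $Q_{\mathrm b\mathrm w}$ and show that the configurations attaining it are supported on the maximizer subgraph $G_{1,\mu}$. Lemma~\ref{lem:sum_dimers} is the key device: gluing an energy-maximal near-cover to complementary data through the edge $\mathrm b\mathrm w$ produces genuine dimer covers, which forces the maximal rate of $Q_{\mathrm b\mathrm w}$ to equal $P_t(x,y)$ minus a quantity that depends on $\mathrm b,\mathrm w$ only through a ``potential'' $\varphi(\mathrm b)+\psi(\mathrm w)$, that is, a gauge factor; Lemma~\ref{lem:maximizers_all} then ensures the surviving near-covers are precisely those living on $G_{1,\mu}$. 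Consequently $\e^{-\beta(P_t(x,y)-\varphi(\mathrm b)-\psi(\mathrm w))}Q_{\mathrm b\mathrm w}$ converges at rate $\Ordo(\e^{-\beta\delta})$ to the corresponding cofactor of $K_{G_{1,\mu}}(\zeta,\omega)$, and dividing by the denominator gives $(K_{G_1,\beta}(\e^{\beta x}\zeta,\e^{\beta y}\omega)^{-1})_{\mathrm b\mathrm w}=\e^{-\beta(\varphi(\mathrm b)+\psi(\mathrm w))}\bigl((K_{G_{1,\mu}}(\zeta,\omega)^{-1})_{\mathrm b\mathrm w}+\Ordo(\e^{-\beta\delta})\bigr)$ uniformly on the torus.

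To finish, I would insert this uniform asymptotic into the contour representation, pass to the limit under the integral (justified by uniform convergence on the compact torus and the nonvanishing of $P_\mu$), and assemble the joint probability~\eqref{eq:gibbs_beta}. The prefactors $\varphi(\mathrm b)+\psi(\mathrm w)$, the scaling powers, and the edge weights $\nu(\,\cdot\,)^\beta$ in $K_{G,\beta}$ all enter the determinant as a diagonal conjugation and therefore cancel by the gauge invariance of determinantal correlations, leaving exactly the determinant~\eqref{eq:gibbs_zero} up to $\Ordo(\e^{-\beta\delta})$, which is~\eqref{eq:gibbs_limit}. Existence and uniqueness of $\PP_\mu$ then come for free: the numbers in~\eqref{eq:gibbs_zero} are limits of the genuine inclusion probabilities $\PP_{\beta,(x,y)}[\,\cdot\,]$, so they form a nonnegative, consistent family of cylinder probabilities and determine a unique measure on dimer covers of $G_\mu$. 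I expect the main obstacle to be exactly the cofactor step: setting up the energy/slope bookkeeping for near-covers of the punctured graph, proving the energy-maximal near-covers are supported on $G_{1,\mu}$, and extracting the gauge structure from Lemma~\ref{lem:sum_dimers}. A secondary technical point is that $G_\mu$ may have unbounded components (Remark~\ref{rem:connected_components}), so one cannot reduce to inverting finite Kasteleyn matrices and must work throughout with the torus contour integrals and uniform estimates.
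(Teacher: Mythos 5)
You take a genuinely different route from the paper, and the two halves of it that you can actually control --- the denominator asymptotics $\e^{-\beta P_t(x,y)}P_\beta(\e^{\beta x}\zeta,\e^{\beta y}\omega)=P_\mu(\zeta,\omega)+\Ordo(\e^{-\beta\delta})$ with $P_\mu$ the nonvanishing monomial of Proposition~\ref{prop:non-zero_polynomial}, and the final gauge-cancellation bookkeeping in the determinant --- are sound. The gap sits exactly at the step you call the crux, and it is a real one. Your scheme requires potentials $\varphi:B_1\to\RR$ and $\psi:W_1\to\RR$ such that, writing $\tilde w(e)=\log\nu(e)+x\mu_1(e)+y\mu_2(e)$ with the notation of~\eqref{eq:energy_slope_2}, one has (a) $\tilde w(e)=\varphi(\mathrm b)+\psi(\mathrm w)$ for every $e=\mathrm b\mathrm w\in E_{1,\mu}$, and (b) $\tilde w(e)<\varphi(\mathrm b)+\psi(\mathrm w)$ for every $e\in E_1\backslash E_{1,\mu}$. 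Property (a) is what makes $\e^{-\beta(P_t(x,y)-\varphi(\mathrm b)-\psi(\mathrm w))}Q_{\mathrm b\mathrm w}$ converge to the cofactor of the \emph{uniformly weighted} matrix $K_{G_{1,\mu}}$; property (b) is what forces the occupation probability of any edge outside $E_\mu$ to vanish --- without it the limit need not be~\eqref{eq:gibbs_zero} at all. Lemma~\ref{lem:sum_dimers} cannot deliver this splitting: it only applies when the two punctured vertices are joined by an actual edge of $E_1$, and even then it produces one-sided, cycle-type inequalities for the maximal cofactor rates (precisely the inequalities the paper exploits), not an additive decomposition of each individual rate as $P_t(x,y)-\varphi(\mathrm b)-\psi(\mathrm w)$. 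Note also that (a) does not imply (b): when the maximizer is unique, for instance, (a) only pins down $\varphi(\mathrm b)+\psi(\mathrm w)$ on the $k\ell$ matched pairs, leaving many free constants, most choices of which violate (b); so (b) is genuinely extra content, and it is the heart of the matter.

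The splitting you want is nevertheless true, but the natural proof requires an ingredient absent from both your proposal and the paper: linear-programming duality for maximum-weight perfect matchings on the bipartite graph $G_1$ (Egerv\'ary), combined with strict complementary slackness (Goldman--Tucker). For $(x,y)$ in the interior of $\mathcal A_{t,\mu}$ and $\mathcal E^*$ strictly concave at $\mu$, the matchings maximizing $\sum_{e\in\mathcal D}\tilde w(e)$ are exactly the dimer covers of $G_{1,\mu}$ (Lemma~\ref{lem:maximizers_all}); since the perfect-matching polytope of a bipartite graph is integral, strong duality yields potentials with $\varphi(\mathrm b)+\psi(\mathrm w)\geq\tilde w(\mathrm b\mathrm w)$ on all edges and $\sum_{\mathrm b}\varphi(\mathrm b)+\sum_{\mathrm w}\psi(\mathrm w)=P_t(x,y)$, and strict complementarity lets you choose them so that equality holds exactly on $E_{1,\mu}$. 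Then for any near-cover $\mathcal D_{\mathrm b,\mathrm w}$ of $G_1\backslash\{\mathrm b,\mathrm w\}$, summing the dual constraints gives $\tilde w(\mathcal D_{\mathrm b,\mathrm w})\leq P_t(x,y)-\varphi(\mathrm b)-\psi(\mathrm w)$, with equality if and only if $\mathcal D_{\mathrm b,\mathrm w}\subset E_{1,\mu}$ --- which is precisely your claimed asymptotics for $Q_{\mathrm b\mathrm w}$, and with that lemma your proof closes. It is worth seeing why the paper never needs any of this: it Leibniz-expands the determinant into cyclic products $\prod_s(K_{G,\beta,(x,y)})_{\tilde{\mathrm w}_s\tilde{\mathrm b}_s}(K_{G,\beta,(x,y)}^{-1})_{\tilde{\mathrm b}_{s+1}\tilde{\mathrm w}_s}$ \emph{before} taking asymptotics, so all gauge ambiguity cancels identically, and the iterated application of Lemma~\ref{lem:sum_dimers} around the closed cycle (a telescoping identity, with no potentials needed) is then exactly sufficient to bound the rates and identify the surviving terms. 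Your approach buys stronger, entry-wise information about $K^{-1}$ at the cost of the LP-duality input; the paper's buys the theorem with less machinery at the cost of only ever controlling gauge-invariant combinations.
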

Informally, Theorem~\ref{thm:gibbs_limit} says that the~$\beta \to \infty$ limit of the Gibbs measures is the product of the Gibbs measures corresponding to uniformly weighted dimer models on the (finite or infinite) connected components of~$G_\mu$.
\begin{proof}
It is enough to prove the limiting relation. The fact that~$\PP_\mu$ indeed defines a probability measure follows from~\eqref{eq:gibbs_limit}, because these correlations uniquely determine probabilities of any dimer configuration in a finite window (which is a cylindrical subset for our point process), and the set of probability measures on a finite set is compact. Coherency of probabilities of cylindrical subsets is also clearly preserved by the limit.\footnote{Uniqueness of point process with given correlations in the case of a discrete state space as we have here, is trivial.}

The probabilities are expressed in terms of determinants, so by Leibniz's formula for determinants,
\begin{equation}
\PP_{\beta,(x,y)}\left[\tilde e_1,\dots,\tilde e_p\in \mathcal D\right]=\sum_{\sigma\in S_p}\sgn(\sigma)\prod_{s=1}^p\left(K_{G,\beta,(x,y)}\right)_{\tilde{\mathrm w}_s\tilde{\mathrm b}_s}\left(K_{G,\beta,(x,y)}^{-1}\right)_{\tilde{\mathrm b}_{\sigma(s)}\tilde{\mathrm w}_s}.
\end{equation}
It is therefore enough to compute the limit of finite products of the cyclic form
\begin{equation}\label{eq:correlation_kernel_product}
\prod_{s=1}^p\left(K_{G,\beta,(x,y)}\right)_{\tilde{\mathrm w}_s\tilde{\mathrm b}_s}\left(K_{G,\beta,(x,y)}^{-1}\right)_{\tilde{\mathrm b}_{s+1}\tilde{\mathrm w}_s},
\end{equation}
where all vertices are different and we take~$\tilde{\mathrm b}_{p+1}=\tilde{\mathrm b}_1$. 

By definition of~$K_{G,\beta,(x,y)}^{-1}$, we have
\begin{multline}\label{eq:correltaion_kernal_integral}
\left(K_{G,\beta,(x,y)}\right)_{\tilde{\mathrm w}_s\tilde{\mathrm b}_s}\left(K_{G,\beta,(x,y)}^{-1}\right)_{\tilde{\mathrm b}_{s+1}\tilde{\mathrm w}_s} \\
=\frac{1}{(2\pi\i)^2}\int_{|z|=\e^{\beta x}}\int_{|w|=\e^{\beta y}}\frac{K_{G_1}(z,w)_{\mathrm w_s\mathrm b_s}\adj K_{G_1}(z,w)_{\mathrm b_{s+1} \mathrm w_s}}{P(z,w)}\frac{z^{n_s-n_{s+1}}}{w^{m_s-m_{s+1}}}\frac{\d w}{w}\frac{\d z}{z},
\end{multline}
where~$\tilde{\mathrm b}_s=\tilde{\mathrm b}_{\ell m_s+i_s,k n_s+j_s}$. Recall that the~$(i,j)$-entry of the adjugate of a matrix~$A$ is~$(-1)^{i+j}M_{ji}$, where~$M_{ij}$ is the determinant of the submatrix constructed from~$A$ by deleting its~$i$th row and~$j$th column. By Leibniz's formula for determinants, 
\begin{equation}\label{eq:adjugate_expansion}
\adj K_{G_1}(z,w)_{\mathrm b_{s+1}\mathrm w_s}=\sum_{\mathcal D_{\mathrm b_{s+1},\mathrm w_s}}\tau(\mathcal D_{\mathrm b_{s+1},\mathrm w_s})\e^{\beta \mathcal E(\mathcal D_{\mathrm b_{s+1},\mathrm w_s})} z^{\mu_1(\mathcal D_{\mathrm b_{s+1},\mathrm w_s})}w^{\mu_2(\mathcal D_{\mathrm b_{s+1},\mathrm w_s})},
\end{equation}
where the sum runs over all dimer covers~$\mathcal D_{\mathrm b_{s+1},\mathrm w_s}$ of the graph~$G_{\mathrm b_{s+1},\mathrm w_s}$ and~$\tau(\mathcal D_{\mathrm b_{s+1},\mathrm w_s})\in \{\pm 1\}$ is the combined sign, which we will not make explicit. It follows from iteratively  applying Lemma~\ref{lem:sum_dimers} that if~$\mathcal D_{\mathrm b_{s+1},\mathrm w_s}$,~$s=1,\dots,p$, are dimer covers of~$G_{\mathrm b_{s+1},\mathrm w_s}$ and~$e_s=\mathrm b_s\mathrm w_s\in E_1$, then
\begin{equation}
\sum_{s=1}^p\left(\mathcal E(\mathcal D_{\mathrm b_{s+1},\mathrm w_s})+\log \nu(e_s)\right)=\sum_{s=1}^p\mathcal E(\mathcal D_s),
\end{equation}
for some dimer covers~$\mathcal D_s$,~$s=1,\dots,p$, of~$G_1$. Hence, the product 
\begin{equation}\label{eq:adjugate_product}
\prod_{s=1}^pK_{G_1}(z_s,w_s)_{\mathrm w_s\mathrm b_s}\adj K_{G_1}(z_s,w_s)_{\mathrm b_{s+1} \mathrm w_s}\frac{z_s^{n_s-n_{s+1}}}{w_s^{m_s-m_{s+1}}}
\end{equation}
is a signed sum of terms of the form
\begin{equation}\label{eq:terms_product_first}
\left(\prod_{s=1}^p\e^{\beta \mathcal E(\mathcal D_s)}\right) \prod_{s=1}^pz_s^{\mu_1(\mathcal D_{\mathrm b_{s+1},\mathrm w_s})+\mu_1(e_s)}w_s^{\mu_2(\mathcal D_{\mathrm b_{s+1},\mathrm w_s})+\mu_2(e_s)}\frac{z_s^{n_s-n_{s+1}}}{w_s^{m_s-m_{s+1}}}.
\end{equation}
With the variable change~$(z_s,w_s)\mapsto (z_s\e^{\beta x},w_s\e^{\beta y})$, so that~$|z_s|=|w_s|=1$ in ($p$ instances of)~\eqref{eq:correltaion_kernal_integral}, the quantity~\eqref{eq:terms_product_first} becomes
\begin{equation}\label{eq:terms_product}
\left(\prod_{s=1}^p\e^{\beta \mathcal E(\mathcal D_s)+\beta x\mu_1(\mathcal D_s)+\beta y\mu_2(\mathcal D_s)}\right) \prod_{s=1}^pz_s^{\mu_1(\mathcal D_{\mathrm b_{s+1},\mathrm w_s})+\mu_1(e_s)}w_s^{\mu_2(\mathcal D_{\mathrm b_{s+1},\mathrm w_s})+\mu_2(e_s)}\frac{z_s^{n_s-n_{s+1}}}{w_s^{m_s-m_{s+1}}}.
\end{equation}

When~$|z_s|=|w_s|=1$, the absolute value of the second product of~\eqref{eq:terms_product} is equal to~$1$, and the first factor is bounded above by
\begin{equation}\label{eq:leading_term}
\e^{p\beta \mathcal E^*(\mu)+p\beta x\mu_1+p\beta y\mu_2}.
\end{equation}
Indeed, by definition of~$\mathcal A_{t,\mu}$,
\begin{equation}
\mathcal E(\mathcal D_s)+x\mu_1(\mathcal D_s)+y\mu_2(\mathcal D_s)\leq \mathcal E^*(\mu)+x\mu_1+y\mu_2,
\end{equation}
if~$(x,y)$ is in the interior of~$\mathcal A_{t,\mu}$, cf.~\eqref{eq:characteristic_polynomial_tropical}, and the inequality is strict unless~$\mathcal D_s$ is a maximizer with slope~$\mu$. In fact, the quotient of the first product in~\eqref{eq:terms_product} and~\eqref{eq:leading_term} does not tend to zero as~$\beta\to\infty$ if and only if~$\mathcal D_s$ is a maximizer with slope~$\mu$ for all~$s=1,\dots,p$, which is equivalent to~$\mathcal D_s$ being a dimer cover of~$G_{1,\mu}$, by Lemma~\ref{lem:maximizers_all}. Moreover,~$\mathcal D_s$ is a dimer cover of~$G_{1,\mu}$ for all~$s=1,\dots,p$ if and only if~$\mathcal D_{\mathrm b_{s+1},\mathrm w_s}$ is a dimer cover of~$G_{\mu,\mathrm b_{s+1},\mathrm w_s}$ and~$\tilde{\mathrm e}_s\in E_{1,\mu}$ for all~$s=1,\dots,p$. This implies, via term-by-term convergence, that summing over the terms of~\eqref{eq:terms_product}, including the signs, that do not tend to zero as~$\beta \to \infty$ when divided by~\eqref{eq:leading_term}, after the variable change~$(z_s,w_s)\mapsto (z_s\e^{\beta x},w_s\e^{\beta y})$, is given by
\begin{equation}\label{eq:adjugate_product_expansion}
\e^{p\beta \mathcal E^*(\mu)+p\beta x\mu_1+p\beta y\mu_2}\prod_{s=1}^pK_{G_{1,\mu}}(z_s,w_s)_{\mathrm w_s\mathrm b_s}\adj K_{G_{1,\mu}}(z_s,w_s)_{\mathrm b_{s+1} \mathrm w_s}\frac{z_s^{n_s-n_{s+1}}}{w_s^{m_s-m_{s+1}}}.
\end{equation}

We continue by considering the characteristic polynomial~$P_\beta$. By expanding the determinant, 
\begin{equation}\label{eq:charateristic_polynomial_sum}
P_\beta(z\e^{\beta x},w\e^{\beta y})=\sum_{\mathcal D}\tau(\mathcal D)\e^{\beta \mathcal E(\mathcal D)+\beta x\mu_1(\mathcal D)+\beta y\mu_2(\mathcal D)}z^{\mu_1(\mathcal D)}w^{\mu_2(\mathcal D)},
\end{equation} 
where the sum runs over all dimer covers of~$G_1$ and~$\tau$ is the sign, not the same as in~\eqref{eq:adjugate_expansion}, which we will not make explicit here. Since~$(x,y)$ is in the interior of~$\mathcal A_{t,\mu}$, the leading term of~\eqref{eq:charateristic_polynomial_sum} comes from the terms with~$\mathcal D$ being a maximizer with slope~$\mu$, that is, a dimer cover of~$G_{1,\mu}$. Hence, as~$\beta \to \infty$, 
\begin{equation}\label{eq:charateristic_polynomial_expansion}
P_\beta(z\e^{\beta x},w\e^{\beta y})=\e^{\beta \mathcal E^*(\mu)+\beta x\mu_1+\beta y\mu_2}\left(P_\mu(z,w)+\Ordo\left(\e^{-\eps' \beta}\right)\right),
\end{equation}
for some~$\eps'>0$. Here~$P_\mu$ is the polynomial~\eqref{eq:characteristic_polynomial_mu} and it is non-zero for~$|z|=|w|=1$ by Proposition~\ref{prop:non-zero_polynomial}.

Combining~\eqref{eq:inverse_kasteleyn_zero},~\eqref{eq:correltaion_kernal_integral},~\eqref{eq:adjugate_product_expansion} and~\eqref{eq:charateristic_polynomial_expansion} yields
\begin{equation}
\prod_{s=1}^p\left(K_{G,\beta,(x,y)}\right)_{\tilde{\mathrm w}_s\tilde{\mathrm b}_s}\left(K_{G,\beta,(x,y)}^{-1}\right)_{\tilde{\mathrm b}_{s+1}\tilde{\mathrm w}_s}=\prod_{s=1}^p\left(K_{G,\beta,(x,y)}\right)_{\tilde{\mathrm w}_s\tilde{\mathrm b}_s}\left(K_{G,\beta,(x,y)}^{-1}\right)_{\tilde{\mathrm b}_{s+1}\tilde{\mathrm w}_s}+\Ordo\left(\e^{-\eps \beta}\right),
\end{equation}
as~$\beta \to \infty$, for some~$\eps>0$, which proves the theorem.
\end{proof}

\begin{remark}\label{rem:multiple_maximizers}
As we will see in Section~\ref{sec:subdivision}, the set of maximizers defining the graph~$G_{1,\mu}$ in Definition~\ref{def:maximizer} consists generically of only one dimer cover. However, it is not difficult to construct examples by hand where there are multiple maximizers, and the graph~$G_{1,\mu}$ is constructed from more than one dimer cover. For instance, choose your graph~$G_{1,\mu}$ and define the edge weight function~$\nu$ so that it is small for all edges except those in~$G_{1,\mu}$, which are set to~$1$. Figure~\ref{fig:multiple_maximizers} illustrates two examples constructed in this way. 
\end{remark}

 \begin{figure}[t]
 \begin{center}
  \begin{subfigure}[c]{0.22\textwidth}
\includegraphics[scale=.3]{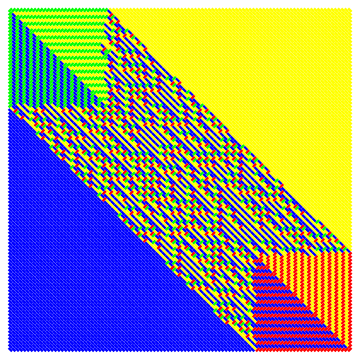}
\end{subfigure}
\quad
  \begin{subfigure}[c]{0.22\textwidth}
\includegraphics[scale=.3]{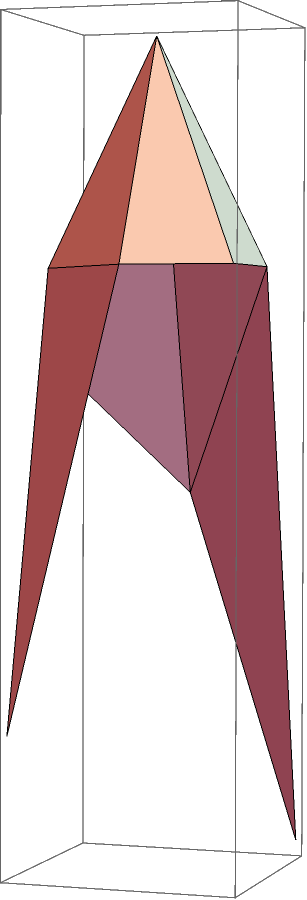}
\end{subfigure}
\quad
  \begin{subfigure}[c]{0.22\textwidth}
\includegraphics[scale=.3]{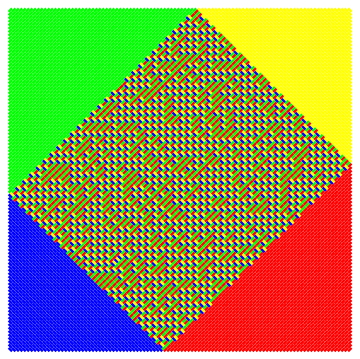}
\end{subfigure}
\quad
  \begin{subfigure}[c]{0.22\textwidth}
\includegraphics[scale=.3]{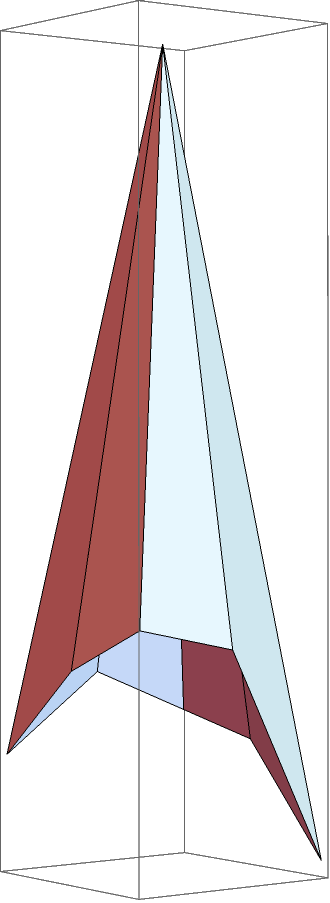}
\end{subfigure}
 \end{center}
\caption{
Two examples of random samples along with (the top boundary of) the associated extended polyhedral domain~$\tilde N(P_t)$. In both examples, the tropical surface tension~$\mathcal E^*$ is strictly convex, yet there are two maximizers with slope~$\mu$ corresponding to the smooth region. Notably, the smooth regions display randomness.
\label{fig:multiple_maximizers}}
\end{figure}

We conclude this section by pointing out that if there is a unique maximizer with slope~$\mu\in \mathcal N$, then~$\mathcal E^*$ is strictly concave at~$\mu$, see Corollary~\ref{cor:concave} below. Consequently, by Theorem~\ref{thm:gibbs_limit},
\begin{equation}
\lim_{\beta\to \infty}\PP_{\beta,(x,y)}\left[\tilde e_1,\dots,\tilde e_p\in \mathcal D\right]=\one_{e_1,\dots,e_p\in \mathcal D},
\end{equation}
where~$\mathcal D$ denotes the unique maximizer with slope~$\mu$. In other words, the probability measures~$\PP_{\beta,(x,y)}$ converge to the delta measure on the graph~$G_\mu$, the graph constructed by periodically extending~$\mathcal D$ to the plane. 

\section{Concavity of the tropical surface tension and generic subdivisions}\label{sec:subdivision}
The goal of this section is to prove that, generically, the tropical curve~$\mathcal A_t$ is a smooth tropical curve, cf. Definition~\ref{def:t-curve}. In the process, we prove that the tropical surface tension~$\mathcal E^*$ is always concave at all~$\mu\in \mathcal N$, cf. Definition~\ref{def:concave}.

Recall the definition of a~$d$-multiweb and an edge-$d$-coloring given above Lemma~\ref{lem:maximizers_all}. Our first statement provides a way to color a~$d$-multiweb into dimer covers with the same slope if such an edge-$d$-coloring exists. This statement is what allows us to prove that~$\mathcal E^*$ is concave.
\begin{proposition}\label{prop:coloring}
Let~$\mu \in\mathcal N$ and, for some~$d\in \ZZ_{>0}$, assume~$\mu_i\in \mathcal N$,~$i=1,\dots,d$, with~$\mu_i\neq \mu$, are such that
\begin{equation}\label{eq:slope_linear_combination}
\sum_{i=1}^d \mu_i=d\!\cdot \! \mu.
\end{equation}
For any dimer covers~$\mathcal D_i$ of~$G_1$ with~$\mu(\mathcal D_i)=\mu_i$, let~$G_{1,d \cdot \mu}\subset G_1$ be the subgraph consisting of all edges contained in the union of~$\mathcal D_i$. Then the~$d$-multiweb of~$G_{1,d\cdot\mu}$ constructed as the union, as multisets, of~$\mathcal D_i$, admits an edge-$d$-coloring of dimer covers~$\mathcal D_i'$ with~$\mu(\mathcal D_i')=\mu$.
\end{proposition}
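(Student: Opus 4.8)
The plan is to reduce the statement to a \emph{slope-homogenization} argument on unions of dimer covers, exploiting the structure of loops formed when one superimposes two dimer covers of $G_1$. The key observation, already used in the proof of Proposition~\ref{prop:non-zero_polynomial}, is that if $\mathcal D$ and $\mathcal D'$ are two dimer covers of $G_1$ with slopes $\mu(\mathcal D)$ and $\mu(\mathcal D')$, then orienting the edges of $\mathcal D$ from black to white and those of $\mathcal D'$ from white to black, their union decomposes into doubled edges and disjoint oriented loops; if a loop $\gamma$ has homology class $(m,n)$ in the basis $\{[\gamma_u],[\gamma_v]\}$, then swapping the edges of $\mathcal D$ and $\mathcal D'$ along $\gamma$ produces a new pair of dimer covers whose slopes differ from the originals by $\pm(-n,m)$. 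Thus slope can be transferred between two superimposed covers in quantized increments by flipping homologically nontrivial loops, while leaving their union (as a multiset of edges) unchanged.

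First I would set up the inductive scheme. The guiding quantity is the total ``slope spread'' $\sum_{i=1}^d \lVert \mu(\mathcal D_i') - \mu\rVert$ (or, more robustly, a sum of deviations measured against $\mu$), which one wants to drive to zero while preserving the multiset union $\bigcup_i \mathcal D_i'$. The constraint $\sum_i \mu_i = d\cdot\mu$ guarantees that the average slope is already $\mu$, so if not all slopes equal $\mu$ there must exist at least one index $i$ with $\mu(\mathcal D_i')\neq\mu$ and at least one index $j$ with a ``complementary'' deviation, in the sense that some coordinate of $\mu(\mathcal D_i')-\mu$ and of $\mu(\mathcal D_j')-\mu$ have opposite signs. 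The heart of the argument is then: given two covers $\mathcal D_i'$ and $\mathcal D_j'$ in the collection with $\mu(\mathcal D_i')\neq\mu(\mathcal D_j')$, superimpose them, find a loop $\gamma$ in their union with nonzero homology class contributing to the slope difference, flip along $\gamma$, and thereby replace the pair by a new pair $(\mathcal D_i'', \mathcal D_j'')$ whose slopes are strictly closer to each other (hence with strictly smaller spread), without altering the pairwise union and hence without altering the global multiset union. Iterating drives every slope to $\mu$.

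The main obstacle I anticipate is guaranteeing that the loop-flip can always be chosen to \emph{reduce} the spread rather than merely permute it, and that the process terminates. Two covers with distinct slopes $\mu(\mathcal D_i')\neq\mu(\mathcal D_j')$ have a union containing at least one homologically nontrivial loop, since the homology classes of the loops sum (with the orientation convention above) to $\mu(\mathcal D_i')-\mu(\mathcal D_j')\neq(0,0)$; flipping a single such loop $\gamma$ of class $(m,n)$ moves $\mu(\mathcal D_i')$ by $(-n,m)$ toward $\mu(\mathcal D_j')$ and $\mu(\mathcal D_j')$ by $(n,-m)$ toward $\mu(\mathcal D_i')$. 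The delicate point is a monovariant: one must verify that iterated pairwise equalization terminates in finitely many steps with all slopes equal to $\mu$. I would control this by choosing at each step a pair maximizing the slope discrepancy and arguing that the integer-valued quantity $\sum_i \lvert\mu(\mathcal D_i')-\mu\rvert_1$ is nonincreasing and can be made to strictly decrease by an appropriate loop selection; since slopes lie in the finite set $\mathcal N$ and the total is pinned to $d\cdot\mu$, strict decrease of a nonnegative integer monovariant forces termination. Finally I would record that throughout, each $\mathcal D_i'$ remains a genuine dimer cover of $G_{1,d\cdot\mu}$ (flipping a loop preserves the perfect-matching property) and that the multiset union is invariant under each flip, so the output is indeed an edge-$d$-coloring of the original $d$-multiweb by dimer covers all of slope $\mu$.
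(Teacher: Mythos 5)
Your reduction to pairwise loop flips has a genuine gap: the claimed monovariant fails, and the process can get permanently stuck. When you superimpose two covers $\mathcal D_i'$, $\mathcal D_j'$ with distinct slopes, the disjoint homologically nontrivial loops on the torus are all parallel, so every flip changes the pair of slopes by $\mp(-n,m)$ for a single common primitive class $(m,n)$, and $\mu(\mathcal D_i')-\mu(\mathcal D_j')=c\,(-n,m)$ for some integer $c$. If $|c|=1$, the only available nontrivial move for that pair is flipping the one forced loop, which merely transposes the two slopes; no loop selection strictly decreases $\sum_i\lvert\mu(\mathcal D_i')-\mu\rvert_1$ or any other symmetric function of the slopes. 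Concretely, take $k=\ell=2$, $\mu=(-1,1)$, $d=3$, and covers with slopes $(0,1)$, $(-1,2)$, $(-2,0)$ (all in $\mathcal N$, summing to $3\mu$). All three pairwise differences $(1,-1)$, $(2,1)$, $(1,2)$ are primitive, and in each superposition there cannot exist an oppositely oriented loop, since flipping it alone would produce a dimer cover whose slope lies outside $\mathcal N$ (e.g. $(0,1)+(1,-1)=(1,0)\notin\mathcal N$). Hence every available move is a swap, the multiset of slopes is invariant under your dynamics, and the target $\{\mu,\mu,\mu\}$ is unreachable. So pairwise equalization cannot prove the proposition for $d\ge 3$; it does happen to work for $d=2$, where the difference of the two slopes is automatically an even multiple of a primitive vector.

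The paper's proof avoids this by a global rather than pairwise construction. It defines a height function $h_d$ on the faces of $G_{1,d\cdot\mu}$ with values in $\ZZ/d\ZZ$, jumping by $n$ across an edge of multiplicity $n$ (with the black endpoint on the left); the hypothesis $\sum_{i}\mu_i=d\cdot\mu$ is precisely what makes $h_d$ well defined on the torus, i.e. single valued after traversing $\gamma_u$ or $\gamma_v$. Coloring the $n$ copies of such an edge by the consecutive residues $h_d(\mathrm f')+1,\dots,h_d(\mathrm f')+n$ then partitions the multiweb into $d$ perfect matchings, and counting crossings of $\gamma_u$ and $\gamma_v$ shows each color class has slope exactly $\mu$. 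This is a torus version of Frohman's edge-coloring construction, and it recolors all $d$ covers simultaneously — which is essentially the global coordination that any repaired version of your flip argument (using moves involving three or more covers at once) would have to encode.
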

\begin{proof}
Given the~$d$-multiweb of~$G_{1,d\cdot\mu}$ in the statement, we define a height function~$h_d$ from the faces of~$G_{1,d\cdot\mu}$ to~$\ZZ/d\ZZ$ recursively as follows. Let~$\mathrm f_0$ be the face in~$G_{1,d\cdot\mu}$ adjacent to the black vertices~$\mathrm b_{\ell-1,0}$ and~$\mathrm b_{0,0}$, see Figure~\ref{fig:magnetic_weights}, and set~$h_d(\mathrm f_0)=0$. Let~$\mathrm f$ and~$\mathrm f'$ be two adjacent faces with a common edge of multiplicity~$n$ and such that the black vertex is to the left as we cross from~$\mathrm f'$ to~$\mathrm f$. Then we require~$h_d(\mathrm f)=h_d(\mathrm f')+n$. Since there are~$d$ edges, counting with multiplicity, connected to each vertex,~$h_d$ is well-defined locally. The assumption~\eqref{eq:slope_linear_combination} implies that the function~$h_d$ is well-defined globally as well. Indeed, let~$h_d(\mathrm f_v)$ be the value of~$h_d$ at the face~$\mathrm f_0$ after going along~$\gamma_v$ once. By definition of~$\mu(\mathcal D_i)=(\mu_1(\mathcal D_i),\mu_2(\mathcal D_i))$, see~\eqref{eq:energy_slope}, the curve~$\gamma_v$ intersects~$\mu_2(\mathcal D_i)$ edges of~$\mathcal D_i$. Hence, since~$\gamma_v$ crosses all edges with the black vertex to the left, see Figure~\ref{fig:magnetic_weights},
\begin{equation}
h_d(\mathrm f_v)=\sum_{i=1}^d \mu_2(\mathcal D_i)= d\mu_2\equiv 0 \!\! \mod d.
\end{equation}
Similarly, going along~$\gamma_u$ once does not change the value of~$h_d$. 

Let us color every edge with multiplicity~$n\geq 1$ and with adjacent faces~$\mathrm f$ and~$\mathrm f'$ such that the black vertex of the edge is to the left as we go from~$\mathrm f'$ to~$\mathrm f$, that is,~$h_d(\mathrm f)=h_d(\mathrm f')+n$, by~$h_d(\mathrm f')+1,h_d(\mathrm f')+2,\dots,h_d(\mathrm f')+n=h_d(\mathrm f)$. We denote the collection of edges colored by~$i$ by~$\mathcal D_i'$. By construction,~$\mathcal D_i'$ contains exactly one edge adjacent to each vertex, so it is a dimer cover of~$G_{1,d\cdot\mu}$.

What remains is to show that~$\mu(\mathcal D_i')=\mu=(\mu_1,\mu_2)$ for all~$i$. Since~$\gamma_v$ crosses all edges with the black vertex to the left, the value of~$h_d$ is increasing along~$\gamma_v$. Moreover, by definition of~$h_d$, we cross all colors of the edge-$d$-coloring before repeating any color, and, as observed above,~$\gamma_v$ crosses~$d \mu_2$ edges. In particular, there are~$\mu_2$ edges in~$\mathcal D_i'$ for each~$i$ that crosses~$\gamma_u$. Hence,~$\mu_2(\mathcal D_i')=\mu_2$. Similarly,~$\mu_1(\mathcal D_i')=\mu_1$.
\end{proof}
\begin{remark}
The proof of the previous statement not only establishes the existence of the coloring~$\mathcal D_i'$,~$i=1,\dots,d$, but also provides an explicit construction of the coloring. This is a generalization of C. Frohman's construction (see~\cite[Section 3.6.1]{DKS24}) to the torus. The proof also tells us, if we think of each edge with multiplicity, say,~$n$ as~$n$ edges together with~$n-1$ faces between them, that there is a partition of the faces of~$G_{1,d\cdot\mu}$, constructed so that~$h_d$ is constant on each part of the partition, and the boundary of each part of the partition is equal to the union of two consecutive dimer covers. This is, in fact, how we were led to the proof given above.
\end{remark}

Let~$\mathcal M\simeq \RR^{|E_1|}$ be our parameter space, where the parameters are given by~$\{\log \nu(e)\}_{e\in E_1}\in \RR^{|E_1|}$, and~$|E_1|$ is the number of edges in~$G_1$. We will say that a property holds generically in~$\mathcal M$ if the property is true, outside of a (subset of a) finite number of hyperplanes. We have the following corollary.
\begin{corollary}\label{cor:concave}
The tropical surface tension~$\mathcal E^*$ is concave at~$\mu$ for all~$\mu\in \mathcal N$ and for every point in~$\mathcal M$, and it is strictly concave at~$\mu \in\mathcal N$ generically in~$\mathcal M$. In particular, if there is a unique maximizer with slope~$\mu\in \mathcal N$, then~$\mathcal E^*$ is strictly concave at~$\mu$. 
\end{corollary}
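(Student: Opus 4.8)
The plan is to deduce everything from the coloring Proposition~\ref{prop:coloring} together with the elementary fact that recoloring a multiweb preserves total energy. First I would establish plain concavity at every $\mu\in\mathcal N$ and every point of $\mathcal M$. Fix a combination $\sum_{i=1}^n t_i\mu_i=\mu$ with $\sum_i t_i=1$, $t_i\in(0,1)$, and $\mu_i\neq\mu$ (a term with $\mu_i=\mu$ can be absorbed, reducing to a shorter combination). The set $T$ of admissible weight vectors is a rational polytope and $t\mapsto\sum_i t_i\mathcal E^*(\mu_i)$ is affine, so by density and continuity it suffices to treat rational $t$. Writing $t_i=a_i/d$ and taking a maximizer $\mathcal D_i$ of slope $\mu_i$ repeated $a_i$ times, the hypothesis $\sum_i a_i\mu_i=d\mu$ lets me apply Proposition~\ref{prop:coloring}: the union $d$-multiweb recolors into dimer covers $\mathcal D'_1,\dots,\mathcal D'_d$ all of slope $\mu$. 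Since recoloring uses each multiweb edge exactly once, $\sum_j\mathcal E(\mathcal D'_j)=\sum_i a_i\mathcal E^*(\mu_i)$, while $\mathcal E(\mathcal D'_j)\le\mathcal E^*(\mu)$ for each $j$ by definition of $\mathcal E^*$; dividing $\sum_i a_i\mathcal E^*(\mu_i)\le d\,\mathcal E^*(\mu)$ by $d$ gives concavity (Definition~\ref{def:concave}).

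For the unique-maximizer implication I would argue by contradiction: if $\mathcal E^*$ is not strictly concave at $\mu$, then equality $\mathcal E^*(\mu)=\sum_i t_i\mathcal E^*(\mu_i)$ holds for some admissible combination, and clearing denominators and rerunning the argument forces every $\mathcal D'_j$ to satisfy $\mathcal E(\mathcal D'_j)=\mathcal E^*(\mu)$, hence to be a maximizer of slope $\mu$, hence equal to the unique maximizer $\mathcal D$. Then the multiweb is $d$ copies of $\mathcal D$, so each $\mathcal D_i$ is a perfect matching whose edges all lie in $\mathcal D$ and therefore $\mathcal D_i=\mathcal D$, contradicting $\mu_i\neq\mu$.

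For generic strict concavity I would first reformulate it geometrically: with concavity in hand, $\mathcal E^*$ is strictly concave at $\mu$ precisely when $(\mu,\mathcal E^*(\mu))$ is \emph{not} a convex combination of $\{(\nu,\mathcal E^*(\nu)):\nu\in\mathcal N\setminus\{\mu\}\}$. A failure is witnessed, after reducing to a minimal affinely independent representing subset (Carathéodory, necessarily of size at most three since the projection $\mu\in\convHull\{\nu:\nu\in I\}$ lives in $\RR^2$), by a subset $I\subset\mathcal N\setminus\{\mu\}$ with $\mu$ in the relative interior of $\convHull\{\nu:\nu\in I\}$; the barycentric weights are then forced, and rational because $\mu$ and the $\nu_i$ are integral, say $t_i=a_i/d$. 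Failure for this fixed $I$ is the single equation $g_I(\mathbf x):=d\,\mathcal E^*(\mu)-\sum_{i\in I}a_i\mathcal E^*(\nu_i)=0$ in $\mathbf x=(\log\nu(e))_e\in\mathcal M$. Concavity gives $g_I\ge 0$, and on each chamber where the relevant maximizers are fixed matchings $\mathcal D_\mu,\mathcal D_{\nu_i}$ we have $g_I(\mathbf x)=\langle d\,\mathbf 1_{\mathcal D_\mu}-\sum_i a_i\mathbf 1_{\mathcal D_{\nu_i}},\mathbf x\rangle$; this functional is nonzero, for its vanishing would force $\mathcal D_{\nu_i}\subseteq\mathcal D_\mu$, hence $\mathcal D_{\nu_i}=\mathcal D_\mu$ and $\nu_i=\mu$. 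Thus $\{g_I=0\}$ meets each of the finitely many chambers in a hyperplane slice and lies in a finite union of hyperplanes; taking the union over the finitely many subsets $I$ and the finitely many $\mu\in\mathcal N$ leaves finitely many hyperplanes outside of which $\mathcal E^*$ is strictly concave at every $\mu$.

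The main obstacle is exactly this last finiteness. A priori the denominators $d$ in admissible rational combinations are unbounded, so there could be infinitely many equations $g_I=0$ and no control over the exceptional set. The Carathéodory reduction is what tames it, replacing ``all convex combinations'' by the finitely many affinely independent subsets of $\mathcal N$, each carrying a single forced rational weight vector; once that is done, the nonvanishing of the associated linear functionals---the same matching-inside-a-matching argument used in the unique-maximizer case---completes the count.
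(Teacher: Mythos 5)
Your proposal is correct, and its core coincides with the paper's proof: concavity comes, exactly as in the paper, from applying Proposition~\ref{prop:coloring} to maximizers $\mathcal D_i$ of slopes $\mu_i$ (repeated $a_i$ times) and noting that an edge-$d$-coloring preserves the total energy of the multiweb, giving $\sum_i a_i\mathcal E^*(\mu_i)=\sum_j\mathcal E(\mathcal D_j')\leq d\,\mathcal E^*(\mu)$; likewise, your unique-maximizer argument (equality forces each $\mathcal D_j'$ to equal the unique maximizer $\mathcal D$, so the multiweb is $d$ copies of $\mathcal D$, so each $\mathcal D_i\subseteq\mathcal D$, hence $\mathcal D_i=\mathcal D$, contradicting $\mu_i\neq\mu$) is exactly the step the paper compresses into ``the statement follows.'' Where you genuinely diverge is the generic statement: the paper gets it in one line, since away from the finitely many hyperplanes $\{\mathcal E(\mathcal D)=\mathcal E(\mathcal D')\}$, taken over pairs of distinct equal-slope dimer covers of $G_1$, every slope has a unique maximizer and the implication just proved applies; you instead perform a Carath\'eodory reduction to finitely many subsets $I\subset\mathcal N\setminus\{\mu\}$ carrying forced rational weights, and show that each failure locus $\{g_I=0\}$ lies in finitely many hyperplanes via chamber-wise linearity. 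Your route is heavier than needed---given your second step, generic uniqueness of maximizers finishes the proof immediately---but it buys something both the paper and your own second step gloss over: Definition~\ref{def:concave} quantifies over real weights, so ``clearing denominators'' is only legitimate once one knows that a failure of strict concavity is witnessed by a rational combination, and your reduction supplies exactly that. One caveat on the reduction itself: the bound $|I|\leq 3$ with weights depending only on the lattice points does not follow from Carath\'eodory in $\RR^3$ alone, since affinely independent lifted points $(\nu_i,\mathcal E^*(\nu_i))$ can have affinely dependent projections $\nu_i$; to discard a point while preserving the equality of energies you need concavity, e.g.\ via the remark that an affine function on the polytope of admissible weight vectors which attains its maximum $\mathcal E^*(\mu)$ at a relative interior point must be constant on that polytope---an observation that, incidentally, also settles the rational-versus-real issue directly and would let you shorten the whole argument.
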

\begin{proof}
Let~$\mathcal D_i$ in the statement of Proposition~\ref{prop:coloring} be a maximizer (Definition~\ref{def:maximizer}) with slope~$\mu_i$,~$i=1,\dots,d$. Then, since~$\mathcal D_i$ and~$\mathcal D_i'$ are edge-$d$-colorings of the same~$d$-multiweb,
\begin{equation}\label{eq:energy_inequality}
\sum_{i=1}^d\mathcal E^*(\mu_i)=\sum_{i=1}^d\mathcal E(\mathcal D_i')\leq d\!\cdot \! \mathcal E^*(\mu).
\end{equation}
Hence,~$\mathcal E^*$ satisfies the inequality in Definition~\ref{def:concave} for all rational~$t_i\in (0,1)$ and hence,~$\mathcal E^*$ is concave. Note that the inequality~\eqref{eq:energy_inequality} is strict unless~$\mathcal D_i'$ is a maximizer for all~$i=1,\dots,d$. Since generically in~$\mathcal M$ there is a unique maximizer for a given slope, the statement follows. 
\end{proof}

We saw in Lemma~\ref{lem:concavity_subdivision} that strict concavity of~$\mathcal E^*$ implies that~$\mu$ is a vertex of the subdivision~$N_S(P_t)$ for all~$\mu\in \mathcal N$. As we will quickly see in the proof below, this implies that all faces in the subdivision are triangles or parallelograms of area~$1/2$ and~$1$, respectively. So the main part of the argument will be to prove that, generically, there are no parallelograms.

\begin{proposition}\label{prop:triangulation_generic}
Generically in~$\mathcal M$, the tropical curve~$\mathcal A_t$ is a smooth tropical curve.
\end{proposition}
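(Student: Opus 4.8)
The plan is to prove that, generically in $\mathcal M$, the subdivision $N_S(P_t)$ is a unimodular triangulation (all faces are triangles of area $1/2$). I would split this into two parts. First, I establish that strict concavity of $\mathcal E^*$ at every $\mu\in\mathcal N$ (which holds generically by Corollary~\ref{cor:concave}) forces every face of $N_S(P_t)$ to be a lattice polygon whose \emph{only} lattice points are its vertices, and whose vertices all lie in $\mathcal N=N(P)\cap\ZZ^2$ (by Lemma~\ref{lem:concavity_subdivision}). Since the subdivision refines $N(P)$ and every integer point of $N(P)$ is already a vertex, each face is a lattice polygon with no interior or boundary lattice points other than its corners. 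By a standard argument using Pick's theorem, a convex lattice polygon with no lattice points besides its vertices is either a triangle of area $1/2$ or a parallelogram of area $1$ (one checks that a primitive lattice polygon with $n\ge 4$ vertices and no extra lattice points must be a parallelogram; triangles and parallelograms are the only options). Thus the only obstruction to smoothness is the presence of a unit-area parallelogram face.

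The second, and main, part is to show that parallelograms occur only on a finite union of hyperplanes in $\mathcal M$. Here is where the combinatorial input of Proposition~\ref{prop:coloring} enters. A parallelogram face with vertices $\mu_1,\mu_2,\mu_3,\mu_4$ (say $\mu_1+\mu_3=\mu_2+\mu_4=:2\mu$, with $\mu$ the center, which need not be a lattice point) means that the function $\mathcal E^*$ restricted to these four points is affine, i.e. $\mathcal E^*(\mu_1)+\mathcal E^*(\mu_3)=\mathcal E^*(\mu_2)+\mathcal E^*(\mu_4)$, and that this common value realizes the top boundary of $\tilde N(P_t)$ over the whole parallelogram. Each $\mathcal E^*(\mu_i)$ is, by definition~\eqref{eq:tropical_surface_tension}, the maximum of the linear functional $\mathcal D\mapsto\sum_{e\in\mathcal D}\log\nu(e)$ over dimer covers of slope $\mu_i$; hence $\mathcal E^*(\mu_i)$ is a piecewise-linear (indeed max-of-finitely-many-linear) function of the parameters $\{\log\nu(e)\}$. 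On the open region where each maximizer is unique, $\mathcal E^*(\mu_i)$ is a single linear functional $\ell_{i}(\log\nu)$ determined by the maximizing dimer cover, so the relation $\mathcal E^*(\mu_1)+\mathcal E^*(\mu_3)-\mathcal E^*(\mu_2)-\mathcal E^*(\mu_4)=0$ becomes a linear equation in the parameters, i.e. a hyperplane, \emph{unless} it is an identity. The crux is therefore: this linear relation is \emph{not} identically zero on $\mathcal M$. To see this, I would invoke Proposition~\ref{prop:coloring} with $d=2$: given maximizers $\mathcal D_2,\mathcal D_4$ of slopes $\mu_2,\mu_4$ with $\mu_2+\mu_4=2\mu=\mu_1+\mu_3$, their $2$-multiweb recolors into dimer covers $\mathcal D_2',\mathcal D_4'$ of slope $\mu$, and the energy is preserved:
\begin{equation}
\mathcal E^*(\mu_2)+\mathcal E^*(\mu_4)=\mathcal E(\mathcal D_2')+\mathcal E(\mathcal D_4')\le 2\mathcal E^*(\mu);
\end{equation}
combined with the analogous inequality for the pair $(\mu_1,\mu_3)$ and the affineness forced by the parallelogram, one extracts that all four recolorings must themselves be maximizers. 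The existence of a parallelogram thus entails the existence of \emph{multiple} maximizers of slope $\mu$ (the recolorings of the two diagonals give genuinely different dimer covers of slope $\mu$ attaining $\mathcal E^*(\mu)$), and this coincidence of energies among distinct dimer covers is exactly a nontrivial linear condition on $\log\nu$.

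I would then finish by a genericity/dimension count. The set of all dimer covers of $G_1$ is finite, so there are only finitely many pairs of distinct covers $(\mathcal D,\mathcal D')$ of the same slope; the condition $\mathcal E(\mathcal D)=\mathcal E(\mathcal D')$ with $\mathcal D\ne\mathcal D'$ defines a hyperplane $\{(\log\nu(\mathcal D)-\log\nu(\mathcal D'))\cdot(\,\cdot\,)=0\}$ in $\mathcal M\simeq\RR^{|E_1|}$, which is a proper hyperplane precisely because $\mathcal D\ne\mathcal D'$ means the two edge-indicator vectors differ. Off the finite union of all such hyperplanes, every slope has a unique maximizer, so by the argument above no parallelogram can appear and $N_S(P_t)$ is a unimodular triangulation; by Definition~\ref{def:t-curve}, $\mathcal A_t$ is smooth. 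The main obstacle I anticipate is the middle step: cleanly deducing from a parallelogram face that the four boundary slopes force coincidentally-many maximizers, and in particular ensuring the resulting energy identity is a \emph{nontrivial} relation rather than a tautology. This is where Proposition~\ref{prop:coloring} does the real work, by converting the geometric degeneracy (four coplanar points on the top of $\tilde N(P_t)$) into an equality of energies between distinct combinatorial objects, which is what cuts out a proper hyperplane.
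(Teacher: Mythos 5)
Your first step (strict concavity generically makes every point of $\mathcal N$ a vertex of $N_S(P_t)$, after which Pick's theorem reduces every face to a triangle of area $1/2$ or a parallelogram of area $1$) matches the paper's proof and is fine. The gap is in your treatment of the parallelogram case. You want to apply Proposition~\ref{prop:coloring} with $d=2$ to the opposite pairs $(\mu_2,\mu_4)$ and $(\mu_1,\mu_3)$, recoloring their $2$-multiwebs into dimer covers of slope $\mu=\tfrac12(\mu_1+\mu_3)$. But Proposition~\ref{prop:coloring} requires $\mu\in\mathcal N\subset\ZZ^2$, and the center of a unit-area lattice parallelogram is \emph{never} a lattice point: writing $u,v$ for the primitive edge vectors, the area condition $|u_1v_2-u_2v_1|=1$ forces $u\not\equiv v\pmod 2$, hence $u+v\not\equiv(0,0)\pmod 2$, and $\mu_1+\mu_3=2\mu_1+u+v$ is odd in at least one coordinate. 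So there is no lattice slope $\mu$ with $2\mu=\mu_1+\mu_3$, ``dimer covers of slope $\mu$'' do not exist, and the conclusion you want to extract (that a parallelogram face forces multiple maximizers of a common slope $\mu$) is vacuous. This also breaks your final dimension count: uniqueness of the maximizer for every lattice slope does not by itself exclude parallelogram faces, so excising only the hyperplanes $\{\mathcal E(\mathcal D)=\mathcal E(\mathcal D')\}$ for distinct covers of equal slope does not finish the proof.

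What is missing is exactly the step the paper supplies: one must show that the relation $\mathcal E(\mathcal D_1)+\mathcal E(\mathcal D_3)=\mathcal E(\mathcal D_2)+\mathcal E(\mathcal D_4)$ (unique maximizers at the four vertices, opposite pairs grouped) is a \emph{nontrivial} linear equation in $\{\log\nu(e)\}_{e\in E_1}$, i.e., that the edge multisets $\mathcal D_1\sqcup\mathcal D_3$ and $\mathcal D_2\sqcup\mathcal D_4$ differ. The paper proves this by superposing the two maximizers within each opposite pair: the union $\mathcal D_1\cup\mathcal D_3$ decomposes into doubled edges and loops, at least one loop lies in a nonzero homology class, and all such classes are orthogonal to the diagonal $\mu_1-\mu_3$; similarly the union $\mathcal D_2\cup\mathcal D_4$ contains a loop with nonzero homology class orthogonal to the other diagonal $\mu_2-\mu_4$. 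Since the two diagonals are not parallel, the two unions cannot consist of the same edges, so the energy relation cuts out a proper hyperplane in $\mathcal M$. If you want to keep the structure of your argument, you must replace the invalid recoloring step by an argument of this kind; Proposition~\ref{prop:coloring} is the right tool for the concavity statement (Corollary~\ref{cor:concave}), but it cannot detect the parallelogram degeneracy, because the relevant average slope leaves the lattice.
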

\begin{proof}
By Lemma~\ref{lem:concavity_subdivision} and Corollary~\ref{cor:concave}, all points in~$\mathcal N$ are vertices of the subdivision~$N_S(P_t)$ generically in the parameter space~$\mathcal M$. Furthermore, by the definition of~$N_S(P_t)$, see~\eqref{eq:extended_polyhedral_domain}, all faces of~$N_S(P_t)$ are convex. Consequently, each face must be either a triangle with an area of~$1/2$ or a parallelogram with an area of~$1$. 

Indeed, by Pick's theorem, any triangle in~$N_S(P_t)$ has area~$1/2$ and any quadrilateral has area~$1$. Moreover, any convex quadrilateral with the property that drawing any of the diagonals results in two triangles with the same area has to be a parallelogram. To see this, pick two opposite vertices, and let~$d_1$ be the diagonal going between them and~$d_2$ be the other diagonal. The two vertices have to have the same distance to~$d_2$ since the areas of the corresponding triangles are the same. That implies that~$d_2$ divides~$d_1$ in the middle. The same is true for~$d_2$, and, hence, the quadrilateral is a parallelogram. The fact that there cannot be any~$n$-gon with~$n\geq 5$ is readily reduced to the fact that there cannot be any pentagons. If the boundary of a face is a pentagon, we divide it into a triangle and a quadrilateral (which has to be a parallelogram by the above) in two different ways and conclude that two adjacent edges have to be parallel, which cannot happen.

Let~$\mu_i$,~$i=1,\dots,4$ be vertices of a parallelogram in~$N_S(P_t)$, with~$\mu_1$ and~$\mu_2$ being opposite to each other and~$\mu_3$ and~$\mu_4$ being opposite to each other. Generically, we may assume that there is a unique maximizer~$\mathcal D_i$ with slope~$\mu_i$ for~$i=1,\dots,4$. Then
\begin{equation}\label{eq:parallelogram}
\mathcal E(\mathcal D_1)+\mathcal E(\mathcal D_2)=\mathcal E(\mathcal D_3)+\mathcal E(\mathcal D_4).
\end{equation}
We will see that the set where this occurs is contained in a hyperplane in~$\mathcal M$. 

As in the proof of Proposition~\ref{prop:non-zero_polynomial}, we orient the edges in~$\mathcal D_1$ from black to white vertices and the edges in~$\mathcal D_2$ from white to black vertices. The union of~$\mathcal D_1$ and~$\mathcal D_2$ then consists of double edges and oriented loops in~$G_1$. Let~$\gamma_i$,~$i=1,\dots,d$, be all such loops. Note that~$\gamma_i$ belongs to the homology class~$(m_i,n_i)$ in the basis~$\{[\gamma_u],[\gamma_v]\}$ where
\begin{equation}\label{eq:homology_orthogonal}
(-n_i,m_i)=\mu(\gamma_{i,1})-\mu(\gamma_{i,2}),
\end{equation} 
and~$\gamma_{i,j}=\gamma_i\cap \mathcal D_j$,~$j=1,2$, with the notation~\eqref{eq:energy_slope_2}. Compare with the proof of Proposition~\ref{prop:non-zero_polynomial}. Moreover,
\begin{equation}\label{eq:homology_total}
\sum_{i=1}^d\left(\mu(\gamma_{i,1})-\mu(\gamma_{i,2})\right)=\mu(\mathcal D_1)-\mu(\mathcal D_2).
\end{equation}
In particular, since the loops are disjoint, the loops in a non-zero homology class are parallel to each other, and by~\eqref{eq:homology_orthogonal} and~\eqref{eq:homology_total} their homology classes are orthogonal to~$\mu(\mathcal D_1)-\mu(\mathcal D_2)$. Moreover, there exists at least one loop in a non-zero homology class orthogonal to~$\mu(\mathcal D_1)-\mu(\mathcal D_2)$.

Similarly, there is a loop in the union of~$\mathcal D_3$ and~$\mathcal D_4$ in a non-zero homology class orthogonal to~$\mu(\mathcal D_3)-\mu(\mathcal D_4)$. Since~$\mu(\mathcal D_1)-\mu(\mathcal D_2)=\mu_1-\mu_2$ and~$\mu(\mathcal D_3)-\mu(\mathcal D_4)=\mu_3-\mu_4$ are not parallel, we conclude that there are edges in the union of~$\mathcal D_1$ and~$\mathcal D_2$ that are not in the union of~$\mathcal D_3$ and~$\mathcal D_4$, and \emph{vise versa}. Hence,~\eqref{eq:parallelogram} becomes a linear equation on the edge weighs which are not in both unions, counting with multiplicity,
\begin{equation}\label{eq:parallelogram_equation}
\sum_{e\in \mathcal D_{1,2}}\log \nu(e)=\sum_{e\in \mathcal D_{3,4}}\log \nu(e),
\end{equation}
where~$\mathcal D_{1,2}\neq \emptyset$ are the edges in~$\mathcal D_1$ or~$\mathcal D_2$ that are not in~$\mathcal D_3$ or~$\mathcal D_4$, and~$\mathcal D_{3,4}\neq \emptyset$ are the ones in~$\mathcal D_3$ or~$\mathcal D_4$ but not in~$\mathcal D_1$ or~$\mathcal D_2$. In both unions, we are counting with multiplicity. As~\eqref{eq:parallelogram_equation} is an equation of a hyperplane in~$\mathcal M$, the proof is complete.
\end{proof}

\appendix

\section{The zero-temperature limit of the surface tension}\label{appendix:surface_tension}
We prove in this section that the limit of the surface tension as~$\beta \to \infty$ indeed is equal to (minus) the tropical surface tension. We begin by recalling the definition of the \emph{Ronkin function} and the \emph{surface tension}. For further properties and details, we refer to~\cite{KOS06, Mik04a} and references therein. 

The Ronkin function~$R_\beta:\RR^2\to \RR$ of the polynomial~$P_\beta$ is defined by
\begin{equation}
R_\beta(x,y)=\frac{1}{(2\pi\i)^2}\int_{|z|=1}\int_{|w|=1}\log\left|P_\beta\left(\e^{\beta x} z,\e^{\beta y} w\right)\right|\frac{\d w}{w}\frac{\d z}{z},
\end{equation}
and the surface tension~$\sigma_\beta:N(P)\to \RR$ is the Legendre transform of the Ronkin function,
\begin{equation}\label{eq:surface_tension_legendre}
\sigma_\beta(\mu)=\max_{(x,y)\in \RR^2}(-R_\beta(x,y)+\mu_1\beta x+\mu_2 \beta y).
\end{equation}
The scaling~$\beta x$ and~$\beta y$ is done to match the scaling we used in the definition of the amoeba~$\mathcal A_\beta$, and it is the appropriate scaling in the zero-temperature limit.  

\begin{proposition}
If the tropical surface tension~$\mathcal E^*$ is strictly concave at~$\mu\in \mathcal N$, then
\begin{equation}
\beta^{-1}\sigma_\beta(\mu)\to-\mathcal E^*(\mu)
\end{equation}
as~$\beta \to \infty$.
\end{proposition}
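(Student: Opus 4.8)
The plan is to read off the limit from the Ronkin function, reducing $\beta^{-1}\sigma_\beta(\mu)$ to the intercept of $R_\beta$ on the complement component of the amoeba indexed by $\mu$, and then to evaluate that intercept at a single well-chosen point where the characteristic polynomial is dominated by one monomial. Write $g_\beta(x,y)=\beta^{-1}R_\beta(x,y)$; dividing \eqref{eq:surface_tension_legendre} by $\beta$ gives $\beta^{-1}\sigma_\beta(\mu)=\max_{(x,y)\in\RR^2}\bigl(\mu\cdot(x,y)-g_\beta(x,y)\bigr)$, the Legendre transform of the convex function $g_\beta$ (Ronkin functions are convex, and $g_\beta$ is a linear reparametrization of one). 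On the complement component of $\mathcal A_\beta$ associated with $\mu$ one has $\nabla g_\beta\equiv\mu$, so there $g_\beta$ is affine with slope $\mu$; by convexity this affine function supports $g_\beta$ from below on all of $\RR^2$, so the maximum is attained on that component and
\[
\beta^{-1}\sigma_\beta(\mu)=\mu\cdot(x,y)-g_\beta(x,y)\qquad\text{for any }(x,y)\text{ in its interior.}
\]

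Next I would choose the evaluation point. Since $\mathcal E^*$ is strictly concave at $\mu$, Lemma~\ref{lem:concavity_subdivision} shows $\mu$ is a vertex of $N_S(P_t)$, so by the duality of Section~\ref{sec:tropical_amoeba} the complement of $\mathcal A_t$ has a nonempty open component $U_\mu$ on which the maximum defining $P_t$ in \eqref{eq:characteristic_polynomial_tropical} is uniquely attained at $\mu$. Fix $(x_0,y_0)\in U_\mu$ and set $\delta=\min_{\nu\in\mathcal N,\,\nu\neq\mu}\bigl(P_t(x_0,y_0)-\mathcal E^*(\nu)-\nu\cdot(x_0,y_0)\bigr)>0$. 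By the Hausdorff convergence $\mathcal A_\beta\to\mathcal A_t$ of \eqref{eq:limit_amoeba}, for all large $\beta$ the point $(x_0,y_0)$ (having positive distance to $\mathcal A_t$) lies in the complement of $\mathcal A_\beta$; since $g_\beta\to P_t$ uniformly on a ball around $(x_0,y_0)$ (shown next), and $g_\beta$ is convex with locally constant lattice gradients on complement components while $\nabla P_t=\mu$ there, that component is the $\mu$-component, so the identity above applies with $(x,y)=(x_0,y_0)$.

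The core estimate is one-monomial domination. Write $P_\beta(z,w)=\sum_{\nu\in\mathcal N}a_\nu(\beta)z^{\nu_1}w^{\nu_2}$. By \eqref{eq:characteristic_polynomial_finite} and the fact that $\sgn(s(\mathcal D))\prod_{e}\sigma(e)$ depends only on $\mu(\mathcal D)$ (Proposition~3.1 of~\cite{KOS06}), all covers of a fixed slope enter with a common sign, whence $|a_\nu(\beta)|=\sum_{\mu(\mathcal D)=\nu}\e^{\beta\mathcal E(\mathcal D)}$; in particular $|a_\mu(\beta)|\ge\e^{\beta\mathcal E^*(\mu)}$. For $|z|=|w|=1$ the triangle inequality gives $|P_\beta(\e^{\beta x_0}z,\e^{\beta y_0}w)|\le\sum_{\mathcal D}\e^{\beta(\mathcal E(\mathcal D)+\mu(\mathcal D)\cdot(x_0,y_0))}\le M\,\e^{\beta P_t(x_0,y_0)}$, where $M$ is the number of dimer covers of $G_1$, while isolating the slope-$\mu$ monomial and bounding the rest by $M\,\e^{\beta(P_t(x_0,y_0)-\delta)}$ yields
\[
\bigl|P_\beta(\e^{\beta x_0}z,\e^{\beta y_0}w)\bigr|\ge\e^{\beta P_t(x_0,y_0)}\bigl(1-M\,\e^{-\beta\delta}\bigr)\ge\tfrac12\,\e^{\beta P_t(x_0,y_0)}
\]
for large $\beta$, uniformly in $(z,w)$. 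Taking logarithms and integrating over $|z|=|w|=1$ gives $g_\beta(x_0,y_0)\to P_t(x_0,y_0)=\mathcal E^*(\mu)+\mu\cdot(x_0,y_0)$ (the same bounds, with a $\beta$-uniform gap, hold throughout a compact neighborhood of $(x_0,y_0)$ in $U_\mu$, justifying the uniform convergence invoked above). Combining with the intercept identity, $\beta^{-1}\sigma_\beta(\mu)=\mu\cdot(x_0,y_0)-g_\beta(x_0,y_0)\to-\mathcal E^*(\mu)$.

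The main obstacle is the upper bound on $\sigma_\beta(\mu)$: one cannot bound $g_\beta$ from below by $\mathcal E^*(\mu)+\mu\cdot(x,y)$ uniformly in $(x,y)$, because (unlike at the corners of $N(P)$) the surface tension is \emph{not} $-\log|a_\mu(\beta)|$ for an interior or edge slope $\mu$. The role of convexity is precisely to move the maximization onto the complement component, where a single monomial dominates the characteristic polynomial uniformly over the torus and the sign coherence of~\cite{KOS06} rules out cancellation; the rest is the routine two-sided exponential estimate above.
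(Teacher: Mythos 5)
Your proposal is correct and takes essentially the same route as the paper: both reduce $\beta^{-1}\sigma_\beta(\mu)$ to the value $\mu\cdot(x,y)-\beta^{-1}R_\beta(x,y)$ at a point $(x,y)$ interior to the $\mu$-component of the amoeba complement (available for large $\beta$ by the Hausdorff convergence~\eqref{eq:limit_amoeba}), and then compute $\beta^{-1}R_\beta\to P_t$ there via one-monomial domination of $P_\beta$ on the unit torus. The only difference is that you spell out two steps the paper leaves implicit, namely the convexity argument placing the Legendre maximizer on that complement component, and the sign coherence from~\cite{KOS06} that rules out cancellation among the maximizers in the leading coefficient.
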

\begin{remark}
If we extend~$\mathcal E^*$ to a piecewise linear continuous function on~$N(P)\subset \RR^2$ so that its graph coincides with the top boundary of the extended polyhedral domain~$\tilde N(P_t)$ (cf. Remark~\ref{rem:arctic_tropical_curve}), we expect that
\begin{equation}
\beta^{-1}\sigma_\beta\to-\mathcal E^*
\end{equation}
uniformly as~$\beta\to\infty$, and this limit holds without any restrictions on the edge weights. While a stronger result of this nature could be of interest, it is not essential for this paper, and we will therefore not provide a proof.  
\end{remark}
\begin{proof}
Let~$(x,y)\in \RR^2$ be in the interior of~$\mathcal A_{t,\mu}$. By definition of~$P_t$~\eqref{eq:characteristic_polynomial_tropical}, 
\begin{equation}
P_t(x,y)=\mathcal E^*(\mu)+x\mu_1+y\mu_2,
\end{equation}
where~$\mu=(\mu_1,\mu_2)$, and the maximum in the definition is uniquely attained. If~$\beta$ is large enough, then, by~\eqref{eq:limit_amoeba},
\begin{equation}
\sigma_\beta(\mu)=-R_\beta(x,y)+\mu_1\beta x+\mu_2 \beta y,
\end{equation}
since the maximum in~\eqref{eq:surface_tension_legendre} is attained by all~$(x,y)$ in the interior of~$\mathcal A_{\beta,\mu}$. Moreover, from~\eqref{eq:characteristic_polynomial_finite}, we have
\begin{equation}
\log|P_\beta(\e^{\beta x} z,\e^{\beta y} w)|=\beta P_t(x,y)+n_\mu + \Ordo\left(\e^{-\beta \eps}\right),
\end{equation}
as~$\beta\to \infty$, for some~$\eps>0$ uniformly for~$|z|=|w|=1$, and where~$n_\mu$ is the number of maximizers of~$G_1$ with slope~$\mu$. Hence,
\begin{equation}
\beta^{-1}R_\beta(x_\beta,y_\beta)\to P_t(x,y).
\end{equation}
We conclude that
\begin{equation}
\beta^{-1}\sigma_\beta(\mu)\to-P_t(x,y)+\mu_1 x+\mu_2 y=-\mathcal E^*(\mu).
\end{equation}
as~$\beta \to \infty$. 
\end{proof}

\bibliographystyle{plain}
\bibliography{bibliotek}

\end{document}